\keywords{Relation algebra, Kleene algebra, Derivative, Relational intersection, Series-Parallel Graph, Pathwidth}
\tikzset{apply style/.code={\tikzset{#1}}}
\DeclarePairedDelimiterX\tog[1]{\langle\!\langle}{\rangle\!\rangle}{#1}
\let\orgdescriptionlabel\descriptionlabel
\renewcommand*{\descriptionlabel}[1]{%
  \let\orglabel\label
  \let\label\@gobble
  \phantomsection
  \edef\@currentlabel{#1}%
  \let\label\orglabel
  \orgdescriptionlabel{#1}%
}
\newcounter{mylabelcounter}
\newcommand{\labeltext}[2]{%
#1\refstepcounter{mylabelcounter}%
\immediate\write\@auxout{%
  \string\newlabel{#2}{{1}{\thepage}{{\unexpanded{#1}}}{mylabelcounter.\number\value{mylabelcounter}}{}}%
}%
}
\crefname{thm}{Theorem}{Theorems}
\crefname{thmC}{Theorem}{Theorems}
\crefname{prop}{Proposition}{Propositions}
\crefname{propC}{Proposition}{Propositions}
\crefname{property}{Property}{Properties}
\crefname{lem}{Lemma}{Lemmas}
\crefname{lemC}{Lemma}{Lemmas}
\crefname{cor}{Corollary}{Corollaries}
\crefname{corC}{Corollary}{Corollaries}
\crefname{defi}{Definition}{Definitions}
\crefname{defiC}{Definition}{Definitions}
\crefname{section}{Section}{Sections}
\crefname{figure}{Figure}{Figures}
\crefname{exa}{Example}{Examples}
\crefname{rem}{Remark}{Remarks}
\crefname{remC}{Remark}{Remarks}
\DeclarePairedDelimiter\set{\{}{\}}
\DeclarePairedDelimiter\tuple{\langle}{\rangle}
\DeclarePairedDelimiter\range{[}{]}
\newcommand{\card}{\#}
\newcommand{\const}[1]{\mathsf{#1}}
\newcommand{\bl}{\_}
\newcommand{\defeq}{\mathrel{\ensurestackMath{\stackon[1pt]{=}{\scriptscriptstyle\triangle}}}}
\newcommand{\defiff}{\mathrel{\ensurestackMath{\stackon[1pt]{\iff}{\scriptscriptstyle\triangle}}}}
\newcommand{\nat}{\mathbb{N}}
\DeclarePairedDelimiter\univ{|}{|}
\DeclarePairedDelimiter\ljump{[}{]}
\NewDocumentCommand\word{O{1}}{%
    \ifcase#1
        undefined
    \or w
    \or v
    \or u
    \else undefined

    \fi
}
\NewDocumentCommand\la{O{1}}{%
    \ifcase#1
        undefined
    \or \mathcal{L}
    \or \mathcal{K}
    \else undefined

    \fi
}
\NewDocumentCommand\rel{O{1}}{%
    \ifcase#1
        undefined
    \or R
    \or S
    \else undefined

    \fi
}
\NewDocumentCommand\struc{O{1}}{%
    \ifcase#1
        undefined
    \or \mathfrak{A}
    \or \mathfrak{B}
    \else undefined

    \fi
}
\NewDocumentCommand\strucclass{O{1}}{%
    \ifcase#1
        undefined
    \or \mathcal{C}
    \or \mathcal{D}
    \else undefined

    \fi
}
\NewDocumentCommand\graph{O{1}}{%
    \ifcase#1
        undefined
    \or G
    \or H
    \or J
    \else undefined

    \fi
}
\NewDocumentCommand\dgraph{O{1}}{\dot{\graph[#1]}}
\NewDocumentCommand\ddgraph{O{1}}{\ddot{\graph[#1]}}
\newcommand{\src}{\const{1}}
\newcommand{\tgt}{\const{2}}
\newcommand{\homo}{\longrightarrow}
\NewDocumentCommand\aterm{O{1}}{%
    \ifcase#1
        undefined
    \or a
    \or b
    \or c
    \or d
    \else undefined

    \fi
}
\NewDocumentCommand\term{O{1}}{%
    \ifcase#1
        undefined
    \or t
    \or s
    \or {u}
    \else undefined

    \fi
}
\newcommand{\eps}{\varepsilon}
\newcommand{\union}{\mathbin{+}}
\newcommand{\intersection}{\mathbin{\cap}}
\newcommand{\id}{\const{1}}
\newcommand{\emp}{\const{0}}
\newcommand{\compo}{\mathbin{;}}
\DeclareMathOperator*{\bigcompo}{\scalerel*{\compo}{\sum}}
\NewDocumentCommand\tset{O{1}}{%
    \ifcase#1
        undefined
    \or T
    \or S
    \or {U}
    \else undefined

    \fi
}
\newcommand{\Lab}{L}
\NewDocumentCommand\lab{O{1}}{%
    \ifcase#1
        undefined
    \or x
    \or y
    \or z
    \else undefined

    \fi
}
\NewDocumentCommand\glang{O{1}}{%
    \ifcase#1
        undefined
    \or \mathcal{G}
    \or \mathcal{H}
    \else undefined

    \fi
}
\newcommand{\REL}{\mathsf{REL}}
\newcommand{\LANG}{\mathsf{LANG}}
\NewDocumentCommand\run{O{1}}{%
    \ifcase#1
        undefined
    \or \mathcal{R}
    \else undefined

    \fi
}
\NewDocumentCommand\tr{O{1}}{%
    \ifcase#1
        undefined
    \or \mathrm{Tr}
    \else undefined

    \fi
}
\NewDocumentCommand\ptr{O{1}}{%
    \ifcase#1
        undefined
    \or \mathrm{PTr}
    \else undefined

    \fi
}
\NewDocumentCommand\ttr{O{1}}{%
    \ifcase#1
        undefined
    \or \mathrm{TTr}
    \else undefined

    \fi
}
\NewDocumentCommand\trace{O{1}}{%
    \ifcase#1
        undefined
    \or \tau
    \or \sigma
    \or \rho
    \else undefined
    \fi
}
\NewDocumentCommand\lterm{O{1}}{\tilde{\term[#1]}}
\NewDocumentCommand\ltset{O{1}}{\tilde{\tset[#1]}}
\tikzset{
    every edge/.append style = {
            line width = .3pt,
        },
    plab/.style={line width = 0.1pt, fill=#1, inner sep = .025cm, anchor=center, font = \fontsize{6pt}{0}},
    plab/.default= white,
    elab/.style={draw, rectangle, line width = 0.1pt, fill=#1, inner sep = .035cm, anchor=center, font = \footnotesize},
    elab/.default= white,
    tlab/.style={line width = 0.1pt, fill=#1, inner sep = .025cm, anchor=center, font = \fontsize{6pt}{0}\selectfont},
    tlab/.default= white
}
\tikzset{
    png export/.style={
            external/system call/.add={}{; convert -density 300 -transparent white "\image.pdf" "\image.png"},
            /pgf/images/external info,
            /pgf/images/include external/.code={
                    \includegraphics[width=\pgfexternalwidth,height=\pgfexternalheight]{##1.png}
                },
        }
}
\tikzstyle{mynode} = [inner sep = 1.5pt, fill= gray!20, font=\footnotesize]
\tikzstyle{mysmallnode} = [inner sep = 1.pt, fill= gray!20]
\tikzstyle{vert} = [draw, circle, mynode]
\tikzset{earrow/.style={>={{[flex] Latex[length=.1cm, width=2.5pt]}}}}
\tikzset{homoarrow/.style={earrow, line width = .3pt, color = olive, opacity=0.8}}
\tikzstyle{elabel} = [inner sep = 1.pt, font = \scriptsize, opacity = 1]
\tikzstyle{dvert} = [vert, color = gray!20]
\tikzstyle{dearrow} = [earrow, color = gray!20]
\def\widebreve{\mathpalette\wide@breve}
\def\wide@breve#1#2{\sbox\z@{$#1#2$}%
    \mathop{\vbox{\m@th\ialign{##\crcr
                \kern0.08em\brevefill#1{0.8\wd\z@}\crcr\noalign{\nointerlineskip}%
                $\hss#1#2\hss$\crcr}}}\limits}
\def\brevefill#1#2{$\m@th\sbox\tw@{$#1($}%
        \hss\resizebox{#2}{\wd\tw@}{\rotatebox[origin=c]{90}{\upshape(}}\hss$}
\newcommand{\jump}[1]{\llbracket #1 \rrbracket}
\newcommand\xrsquigarrow[1]{%
    \mathrel{%
        \begin{tikzpicture}[%
                baseline={(current bounding box.south)}
            ]
            \node[%
                ,inner sep=.44ex
                ,align=center
            ] (tmp) {$\scriptstyle #1$};
            \path[%
                ,draw,<-
                ,decorate,decoration={%
                        ,zigzag
                        ,amplitude=0.7pt
                        ,segment length=1.2mm,pre length=3.5pt
                    }
            ]
            (tmp.south east) -- (tmp.south west);
        \end{tikzpicture}
    }
}
\newcommand{\longleadsto}{\xrsquigarrow{\hspace{1.em}}}
\newcommand{\ty}{\mathrm{\mathop{ty}}}
\newcommand{\vsig}{\Sigma}
\newcommand{\fork}{\mathtt{f}}
\newcommand{\join}{\mathtt{j}}
\DeclareMathOperator{\EPS}{\mathsf{E}}
\DeclareMathOperator{\D}{\mathsf{D}}
\DeclareMathOperator{\cl}{\mathrm{cl}}
\NewDocumentCommand\automaton{O{1}}{%
    \ifcase#1
        undefined
    \or \mathcal{A}
    \or \mathcal{B}
    \or \mathcal{C}
    \else undefined
    \fi
}
\DeclareMathOperator{\iw}{\mathrm{iw}}
\DeclareMathOperator{\pw}{\mathrm{pw}}
\DeclareMathOperator{\tw}{\mathrm{tw}}
\newcommand{\series}{\diamond}
\newcommand{\LT}{\tilde{\mathbf{T}}}
\newcommand{\lop}{\circlearrowleft}
\newcommand{\sub}{\mathsf{sub}}
\newcommand{\subl}{\mathsf{subl}}
\newcommand{\subr}{\mathsf{subr}}
\newcommand{\SPR}{\mathsf{SPR}}
\newcommand{\subSPR}{\mathsf{subSPR}}
\newcommand{\subSPRl}{\mathsf{subSPRl}}
\newcommand{\subSPRr}{\mathsf{subSPRr}}
\newcommand{\lSPR}{\mathsf{lSPR}}
\newcommand{\lsubSPR}{\mathsf{lsubSPR}}
\newcommand{\lsubSPRl}{\mathsf{lsubSPRl}}
\newcommand{\lsubSPRr}{\mathsf{lsubSPRr}}
\DeclarePairedDelimiter\quo{[}{]}
\newcommand{\STR}{\REL}
\newcommand{\isoc}{\mathsf{I}} %
\theoremstyle{plain} %
\def\eg{{\em e.g.}}
\def\cf{{\em cf.}}
\begin{document}

\title[Derivatives on Graphs for PCoR*]{Derivatives on Graphs for the Positive Calculus of Relations with Transitive Closure}
\titlecomment{This paper is a revised and extended version of \cite{nakamuraPartialDerivativesGraphs2017}.}
\author[Y.~Nakamura]{Yoshiki Nakamura\lmcsorcid{0000-0003-4106-0408}}

\address{Institute of Science Tokyo, Japan}	%
\email{nakamura.yoshiki.ny@gmail.com}  %

\begin{abstract}
We prove that the equational theory of the positive calculus of relations with transitive closure (PCoR*) is EXPSPACE-complete.
Here, PCoR* terms consist of the following standard operators on binary relations: identity, empty, universality, union, intersection, composition, converse, and reflexive transitive closure (so, PCoR* terms subsume Kleene algebra and allegory terms as fragments).
Additionally, we show that the equational theory of PCoR* extended with tests and nominals (in hybrid logic) is still EXPSPACE-complete;
moreover, it is PSPACE-complete for its intersection-free fragment.

To this end, we design derivatives on \emph{graphs} by extending derivatives on words for regular expressions.
The derivatives give a finite automata construction on path decompositions, like those on words.
Because the equational theory has a linearly bounded pathwidth model property, we can decide the equational theory of PCoR* using these automata.
 \end{abstract}

\maketitle

\section{Introduction} \label{section: introduction}
We consider \emph{the positive calculus of relations with transitive closure} (PCoR*) \cite{pousPositiveCalculusRelations2018}:
the algebraic system on binary relations,
with the operators $\set{\id, \emp, \top, \compo, \union, \intersection, \bl^{\smile}, \bl^{*}}$ of identity ($\id$), empty ($\emp$), universality ($\top$), composition ($\compo$), union ($\union$), intersection ($\intersection$), converse ($\bl^{\smile}$), and reflexive transitive closure ($\bl^{*}$);
so, PCoR* consist of the operators of Kleene algebra $\set{\id, \emp, \compo, \union, \bl^{*}}$ \cite{kleeneRepresentationEventsNerve1956,conwayRegularAlgebraFinite1971,kozenCompletenessTheoremKleene1991}, those of allegory $\set{\id, \compo, \intersection, \bl^{\smile}}$ \cite{freydCategoriesAllegories1990},
and $\top$.
PCoR* terms without $\top$ are sometimes called \emph{Kleene allegory} terms \cite{brunetPetriAutomataKleene2015,brunetPetriAutomata2017,nakamuraPartialDerivativesGraphs2017}.

If we added the complement operator ($\bl^{-}$) to PCoR*, then we would obtain the calculus of relations \cite{tarskiCalculusRelations1941,tarskiFormalizationSetTheory1987} with transitive closure \cite{ngRelationAlgebrasTransitive1984}.
However, the \kl{equational theory} of the calculus of relations is well-known undecidable \cite{tarskiCalculusRelations1941,tarskiFormalizationSetTheory1987}.
The undecidability result holds even for the terms with the three operators $\set{\compo, \union, \bl^{-}}$ \cite{hirschDecidabilityEquationalTheories2018} and moreover even with one variable \cite{nakamuraUndecidabilityFO3Calculus2019}.
Also, in PCoR* with $\bl^{-}$, even if the complement only applies to atomic terms (variables or constants), the \kl{equational theory} is still undecidable \cite{nakamuraExistentialCalculiRelations2023}.
In this paper, by excluding complement, we focus on \emph{positive} fragments.
For Kleene algebras (with respect to binary relations), the \kl{equational theory} coincides with the language equivalence problem of regular expressions (see, \eg, \cite[Theorem 4]{pousPositiveCalculusRelations2018}), and thus the \kl{equational theory} of Kleene algebras is decidable and PSPACE-complete, by the results on regular expressions \cite{meyerEquivalenceProblemRegular1972}.
For representable allegories (with respect to binary relations), the \kl{equational theory} coincides with the graph homomorphic equivalence problem via an encoding of terms as graphs \cite{freydCategoriesAllegories1990,andrekaEquationalTheoryUnionfree1995,chandraOptimalImplementationConjunctive1977},
and thus the \kl{equational theory} of representable allegories is also decidable.
For the \kl{equational theory} of PCoR*, Brunet and Pous extended the \kl{graph homomorphism} characterizations above using graph languages \cite{brunetPetriAutomataKleene2015,brunetPetriAutomata2017}.
By introducing \kl{Petri automata}, which is an automata model for expressing \kl{graph languages},
they have shown that the equational theory of \intro*\kl{identity-free Kleene lattices} $\set{\emp, \compo, \union, \intersection, \bl^{+}}$ with respect to binary relations \cite{andrekaEquationalTheoryKleene2011} is decidable and EXPSPACE-complete \cite{brunetPetriAutomataKleene2015,brunetPetriAutomata2017} where $(\bl^{+})$ denotes the transitive closure operator.
However, the decidability and complexity for Kleene allegories (and PCoR*) were left open at LICS 2015 \cite{brunetPetriAutomataKleene2015}.

Our main contribution is to prove that the \kl{equational theory} of PCoR* is still decidable and EXPSPACE-complete.
We first note that the \kl{equational theory} has the \kl{treewidth at most $2$ model property} (\Cref{proposition: tw 2 property}) and the \kl{linearly bounded pathwidth model property} (\Cref{proposition: bounded pw property}).
Thus, the decidability can be derived from the decidability of MSO over bounded treewidth structures \cite{courcelleMonadicSecondorderLogic1988,courcelleMonadicSecondorderLogic1990,courcelleGraphStructureMonadic2012}.
However, the naive algorithm obtained from this fact has a non-elementary complexity (see \Cref{section: conclusion}, for more comparisons to other systems related to PCoR*).
The EXPSPACE-hardness is shown by giving a reduction from the universality problem of regular expressions with intersection, which is EXPSPACE-complete \cite[Theorem 2]{furerComplexityInequivalenceProblem1980}.
(This was independently shown in \cite{nakamuraPartialDerivativesGraphs2017, brunetPetriAutomata2017}.)

To obtain the EXPSPACE upper bound of the \kl{equational theory} of PCoR*,
we use the idea of derivatives on words for regular expressions, \eg, Brzozowski's derivatives \cite{brzozowskiDerivativesRegularExpressions1964} and Antimirov's (partial) derivatives \cite{antimirovPartialDerivativesRegular1996,bastosStateComplexityPartial2016},
that are tools to obtain automata from regular expressions syntactically (we refer to the book \cite{sakarovitchElementsAutomataTheory2009}, for more details of \kl{derivatives} on words).
In this paper, we extend the derivatives from words to graphs.
To this end, we consider \kl{Kleene lattices} $\set{\id, \emp, \compo, \union, \intersection, \bl^{*}}$, which is a core fragment of PCoR*,
and we extend the syntax of terms with \emph{\kl{labels}} for pointing multiple \kl{vertices} on \kl{graphs} (\Cref{section: lKL}).
Thanks to this extension, similar to those on words, we can give \kl{derivatives} on \kl{graphs}/\kl{structures} (\Cref{section: derivative}).
\kl[derivatives]{Derivatives} of \kl{labeled Kleene lattice terms} can simulate \kl{left quotients} of \kl[run languages]{languages of runs (having a special form of DAG)} (\Cref{section: runs}),
as with that derivatives of regular expressions can simulate left quotients of word languages.
Moreover, we show that our \kl{derivatives} are decomposable on \kl{path decompositions} (\Cref{theorem: decomposition}).
Namely, from derivatives on (small) pre-glued graphs, we can compute the derivative on the (large) glued graph (in this sense, our approach can be viewed as a variant of the Feferman--Vaught decomposition theorem \cite{makowskyAlgorithmicUsesFeferman2004,courcelleGraphStructureMonadic2012}).
Thanks to this decomposition with the \kl{linearly bounded pathwidth model property} (\Cref{proposition: bounded pw property}),
we can obtain an automata construction for \kl{Kleene lattice terms} (\Cref{section: automata construction}).
Using this construction, we have that the \kl{equational theory} of \kl{Kleene lattices} with respect to relations is EXPSPACE-complete (\Cref{corollary: KL EXPSPACE-complete}).
Moreover, we show that on our automata construction, we can give encodings of the following operators: universality ($\top$), converse ($\bl^{\smile}$), \kl{tests} in Kleene algebra with tests (KAT), and \kl{nominals} in hybrid logic.
Thus, the \kl{equational theory} of \kl{PCoR*} is EXPSPACE-complete (\Cref{corollary: PCoR* EXPSPACE-complete}), and moreover,
the \kl{equational theory} of \kl{PCoR* with tests and nominals} is EXPSPACE-complete (\Cref{corollary: PCoR* with tests EXPSPACE-complete}).
Additionally, if the \kl{intersection width} \cite{gollerPDLIntersectionConverse2009} is fixed (particularly, if $\cap$ does not occur), 
the \kl{equational theory} of \kl{PCoR* with tests and nominals} is PSPACE-complete (\Cref{corollary: PCoR* with tests v iw fixed PSPACE-complete}).

\subsection*{Differences with the version at LICS 2017}
This is a revised and extended version of the paper \cite{nakamuraPartialDerivativesGraphs2017}, presented at the 32nd Annual ACM/IEEE Symposium on Logic in Computer Science (LICS 2017).
The differences from the conference version are as follows.
\begin{itemize}
    \item 
    To express bounded \kl{pathwidth} \kl{structures},
    we use a graph gluing operator ($\odot$) \cite{bojanczykDefinabilityEqualsRecognizability2016}, instead of ``Sequential Graph Constructing Procedures'' (SGCPs) \cite{nakamuraPartialDerivativesGraphs2017}.
    Roughly speaking,
    SGCPs enumerate ``normalized'' (but complete to generate all \kl{structures} of \kl{pathwidth} at most $k$) \kl{path decompositions} of width at most $k$,
    but combinations of the gluing operator enumerate \emph{all} \kl{path decompositions} of width at most $k$.
    Due to this,
    the alphabet size, here, the number of \kl{structures} of size at most $k+1$, is large, naively, but we can reduce it (see \Cref{section: alphabet size}).
    Thanks to this separation of concerns, our formalizations are significantly simplified without changing the essence.

    \item Thanks to the above, we can easily encode some additional operators, \eg, top ($\top$), converse ($\bl^{\smile}$), tests in Kleene algebra with tests, and nominals in hybrid logic (\Cref{section: encoding PCoR*,section: encoding more PCoR*}).
    In the body of this paper, relying on these encodings, instead of \kl{PCoR* terms},
    we mainly consider \kl{Kleene lattice terms}---\kl{PCoR* terms} with neither $\top$ nor $\bl^{\smile}$.
    This separation also slightly simplifies our formalizations.

    \item We use \kl{runs} (defined in \Cref{section: runs}) instead of ``simple graphs'' in \cite{nakamuraPartialDerivativesGraphs2017}.
    Because each \kl{vertex} of \kl{runs} precisely has one fun-in and one fun-out,
    we can more easily define \kl{left quotients}, \cf\ \cite{nakamuraPartialDerivativesGraphs2017}.
    Our \kl{runs} are essentially those of \kl{branching automata} by Lodaya and Weil \cite{lodayaSeriesparallelPosetsAlgebra1998,lodayaSeriesParallelLanguages2000,lodayaRationalityAlgebrasSeries2001} without minor changes.

    \item We point out an error of \cite[Section IV]{nakamuraPartialDerivativesGraphs2017} in \Cref{section: incomplete}.
    We give a counterexample that
    the decomposition rules corresponding to the argument in \cite[Lemma IV.10]{nakamuraPartialDerivativesGraphs2017} are not complete (\cf\ \Cref{lemma: decomposition lsubSPR}), due to one-way reading.
    To avoid this problem, we use two-way alternating finite automata (\kl{2AFAs}), not one-way.
\end{itemize}\noindent %FIXED indent

\subsection*{Outline}
In \Cref{section: preliminaries},  we give basic definitions, including \kl{PCoR* terms} and \kl{Kleene lattice terms} (\kl{KL terms}).
In \Cref{section: lKL}, we extend \kl{KL terms} with \kl{labels}.
In \Cref{section: derivative}, we define \kl{derivatives} on \kl{structures} (\kl{graphs} without interfaces) for \kl{labeled KL terms}.
In \Cref{section: automata construction}, we give an automata construction using the \kl{derivatives}.
We then have that the \kl{equational theory} of \kl{KL terms} is EXPSPACE-complete. 
In \Cref{section: encoding variants}, we give encodings of additional operators in our automata construction.
Consequently, the \kl{equational theory} of \kl{PCoR* terms with tests and nominals} is also EXPSPACE-complete. 
In \Cref{section: theorem: decomposition}, we prove the decomposition theorem for the \kl{derivatives}, which is the key to our automata construction in \Cref{section: automata construction}.
In \Cref{section: conclusion}, we conclude this paper with related and future work. 

\section{Preliminaries}\label{section: preliminaries}
We write $\nat$ for the set of non-negative integers.
For $l, r \in \nat$, we write $\range{l, r}$ for the set $\{i \in \nat \mid l \le i \le r\}$.
For $n \in \nat$, we abbreviate $[1, n]$ to $[n]$.
For a set $X$, we write
$\card X$ for the cardinality of $X$ and $\wp(X)$ for the power set of $X$.
We use $\uplus$ to denote that the union $\cup$ is disjoint.

\subsection*{Word languages}
We write $\word[1] \word[2]$ for the \kl[word]{concatenation} of \kl{words} $\word[1]$ and $\word[2]$.
We write $\eps$ for the \intro*\kl{empty word}.
For a set $X$ of \intro*\kl{letters}, we write $X^{*}$ for the set of \intro*\kl{words} over $X$.
A \intro*\kl{language} over $X$ is a subset of $X^{*}$.
We use $\word[1], \word[2]$ to denote \kl{words} and use $\la[1], \la[2]$ to denote \kl{languages}.
For \kl{languages} $\la[1], \la[2] \subseteq X^{*}$, the \intro*\kl{concatenation} $\la[1] \compo \la[2]$,
the \intro*\kl(lang){$n$-th iteration} $\la[1]^n$ (where $n \in \nat$),
the \intro*\kl{Kleene plus} $\la[1]^{+}$, and
the \intro*\kl{Kleene star} $\la[1]^{*}$ are defined by:
\begin{align*}
    \la[1] \compo \la[2] & \defeq \set{\word[1] \word[2] \mid \word[1] \in \la[1] \land \word[2] \in \la[2]},\hspace{-.1em}                        &
    \la[1]^{n} & \defeq \begin{cases}
        \la[1] \compo \la[1]^{n-1} & (n \ge 1)\\
        \set{\eps} & (n = 0)
    \end{cases},\hspace{-.1em} & 
    \la[1]^{+}  & \defeq \bigcup_{n \ge 1} \la[1]^{n},\hspace{-.1em}&
    \la[1]^{*}  & \defeq \bigcup_{n \ge 0} \la[1]^{n}.
\end{align*}

\subsection*{Binary relations}
We write $\triangle_{A}$ for the identity relation on a set $A$: $\triangle_{A} \defeq \set{\tuple{x, x} \mid x \in A}$.
For \kl{binary relations} $\rel[1], \rel[2]$ on a set $B$, the \intro*\kl{composition} $\rel[1] \compo \rel[2]$, the \intro*\kl{converse} $\rel[1]^{\smile}$, the \intro*\kl{$n$-th iteration} $\rel[1]^{n}$ (where $n \in \nat$), the \intro*\kl{transitive closure} $\rel[1]^{+}$, and the \intro*\kl{reflexive transitive closure} $\rel[1]^*$ are defined by:
\begin{align*}
    \rel[1] \compo \rel[2] & \defeq \set{\tuple{x, z} \mid \exists y,\ \tuple{x, y} \in \rel[1] \land \tuple{y, z} \in \rel[2]}, & \rel[1]^{\smile} & \defeq \set{\tuple{x, y} \mid \tuple{y, x} \in \rel[1]}, \\
    \rel[1]^n             & \defeq \begin{cases}
                                       \rel[1] \compo \rel[1]^{n-1} & (n \ge 1) \\
                                       \triangle_{B}               & (n = 0)
                                   \end{cases}, & \rel[1]^{+}        & \defeq \bigcup_{n \ge 1} \rel[1]^n,  \qquad \rel[1]^*  \defeq \bigcup_{n \ge 0} \rel[1]^n.
\end{align*}

\subsection*{Graphs with bi-interface}
We consider \kl{graphs} with bi-interface, inspired by \cite[bi-interface graphs]{bojanczykDefinabilityEqualsRecognizability2016}\cite[symmetric monoidal structures]{bonchiRewritingModuloSymmetric2016}.
Interfaces are used to define the \kl{series composition} (\Cref{definition: series composition and parallel composition}).
Let $A$ be a set with a map $\ty \colon A \to \nat^2$; we also let $\ty_{\src}$ and $\ty_{\tgt}$ be such that $\ty(a) = \tuple{\ty_{\src}(a), \ty_{\tgt}(a)}$ for each $a \in A$.
For $n, m \in \nat$, a \intro*\kl{graph} $\graph$ over $A$ (with $\tuple{n, m}$-interface) is a tuple
$\tuple{\univ{\graph}, \set{a^{\graph}}_{a \in A},
\src_{1}^{\graph}, \dots, \src_{n}^{\graph},
\tgt_{1}^{\graph}, \dots, \tgt_{m}^{\graph}}$, where
\begin{itemize}
    \item $\univ{\graph}$ is a (possibly empty) set;
    \item $a^{\graph} \subseteq \univ{\graph}^{\ty_{1}(a) + \ty_{2}(a)}$ is a $(\ty_{1}(a) + \ty_{2}(a))$-ary relation for each $a \in A$;
    \item $\src_{l}^{\graph} \in \univ{\graph}$ is the $l$-th \intro*\kl{source} \kl{vertex} for each $l \in \range{n}$;
    \item $\tgt_{r}^{\graph} \in \univ{\graph}$ is the $r$-th \intro*\kl{target} \kl{vertex} for each $r \in \range{m}$.
\end{itemize}\noindent %FIXED indent
Particularly when $n = m = 0$, we say that $\graph$ is a \kl{graph} without interfaces or $\graph$ has no interfaces.
Note that, if $\univ{\graph} = \emptyset$, then $n = m = 0$.
We let $\ty(\graph) \defeq \tuple{\ty_{1}(\graph), \ty_{2}(\graph)} \defeq \tuple{n, m}$.
We often abbreviate $\src_{1}^{\graph}$ to $\src^{\graph}$ and $\tgt_{1}^{\graph}$ to $\tgt^{\graph}$, respectively.

Let $\graph[1]$ and $\graph[2]$ be \kl{graphs} over a set $A$ with $\tuple{n, m}$-interface.
We say that a map $f \colon \univ{\graph[1]} \to \univ{\graph[2]}$ is a \intro*\kl{graph homomorphism} from $\graph[1]$ to $\graph[2]$, written $f \colon \graph[1] \homo \graph[2]$, if
(1) for each $a \in A$ and $x_1, \dots, x_{p} \in \univ{\graph[1]}$,
if $\tuple{x_1, \dots, x_p} \in a^{\graph[1]}$, then $\tuple{f(x_1), \dots, f(x_p)} \in a^{\graph[2]}$,
(2) for each $l \in \range{n}$, $f(\src_{l}^{\graph[1]}) = \src_{l}^{\graph[2]}$, and
(3) for each $r \in \range{m}$, $f(\tgt_{r}^{\graph[1]}) = \tgt_{r}^{\graph[2]}$.
Particularly, we say that $f$ is a \intro*\kl{graph isomorphism} from $\graph[1]$ to $\graph[2]$, written $f \colon \graph[1] \cong \graph[2]$, if
$f$ is a bijective \kl{graph homomorphism} from $\graph[1]$ to $\graph[2]$, and 
for each $a \in A$ and $x_1, \dots, x_{p} \in \univ{\graph[1]}$, 
$\tuple{x_1, \dots, x_{p}} \in a^{\graph[1]}$ iff $\tuple{f(x_1), \dots, f(x_{p})} \in a^{\graph[2]}$.
We write $\graph[1] \homo \graph[2]$ (resp.\ $\graph[1] \cong \graph[2]$) if there is some $f \colon \graph[1] \homo \graph[2]$
(resp.\ $f \colon \graph[1] \cong \graph[2]$).
In the sequel, we identify two \kl{graphs} if there is a \kl{graph isomorphism} between them (except for \kl{structures}; see \Cref{section: lKL semantics});
we write $\graph[1] = \graph[2]$ when $\graph[1] \cong \graph[2]$.

For instance, when $A$ is the set $\set{a, \fork, \join}$ with $\ty(a) = \tuple{1, 1}$, $\ty(\fork) = \tuple{1, 2}$, and $\ty(\join) = \tuple{2, 1}$
and the \kl{graph} $\graph$ is given by $\univ{\graph} = \set{1, 2, 3, 4}$, $a^{\graph} = \set{\tuple{1, 2}}$, $\fork^{\graph} = \set{\tuple{1, 2, 3}}$, $\join^{\graph} = \set{\tuple{ 2, 3, 4}}$, $\ty(\graph) = \tuple{1, 1}$,
$\src^{\graph}_{1} = 1$, and $\tgt^{\graph}_{1} = 4$, 
we depict $\graph$ as the figure on the left-hand side.
Here, we will use the symbols $\fork$ and $\join$ as special constants (\Cref{section: runs}).
For notational simplicity, we often omit \kl{vertex} labels and some ``$1$'' labels as the right-hand side.
\begin{center}
    \begin{tabular}{ccc}
        \begin{tikzpicture}[baseline = -3.5ex]
            \graph[grow right = 1.5cm, branch down = .5cm, nodes={}]{
            {/, 1/{$1$}[mynode,draw,circle]} -!-
            {2/{$2$}[mynode,draw,circle],/, 3/{$3$}[mynode,draw,circle]} -!-
            {/, 4/{$4$}[mynode,draw,circle]}
            };
            \node[left = 4pt of 1](s){\tiny $1$};
            \node[right = 4pt of 4](t){\tiny $1$};
            \path (1) edge[pos =.5, opacity = 0, bend left = 10] node[opacity = 1, inner sep = 1pt, outer sep = 1pt](a){\tiny  $a$} (2);
            \path (1) edge[pos =.3, opacity = 0] node[opacity = 1, inner sep = 1pt, outer sep = 1pt](f){\tiny  $\fork$} (4);
            \path (1) edge[pos =.7, opacity = 0] node[opacity = 1, inner sep = 1pt, outer sep = 1pt](j){\tiny  $\join$} (4);
            \graph[use existing nodes, edges={color=black, pos = .5, earrow}, edge quotes={fill=white, inner sep=1pt,font= \scriptsize}]{
            s -> 1; 4 -> t;
            1 --["$1$"{auto, above= 1.pt, inner sep = 0}, pos = .3, bend left = 5] a; a->["$1$"{auto, above = 1.pt, inner sep = 0}, pos = .3, bend left = 5] 2;
            1 --["$1$"{auto,below}] f; f->["$1$"{auto, below right = .3pt}] 2; f->["$2$"{auto, below left = .3pt}] 3;
            2 --["$1$"{auto}] j; 3--["$2$"{auto, below right = .3pt}] j; j->["$1$"{auto}] 4;
            };
        \end{tikzpicture}
        & \quad$\mid$\quad &
        \begin{tikzpicture}[baseline = -3.5ex]
            \graph[grow right = 1.5cm, branch down = .5cm, nodes={}]{
            {/, 1/{}[mynode,draw,circle]} -!-
            {2/{}[mynode,draw,circle],/, 3/{}[mynode,draw,circle]} -!-
            {/, 4/{}[mynode,draw,circle]}
            };
            \node[left = 4pt of 1](s){};
            \node[right = 4pt of 4](t){};
            \path (1) edge[pos =.5, opacity = 0, bend left = 10] node[opacity = 1, inner sep = 1pt, outer sep = 1pt](a){\tiny  $a$} (2);
            \path (1) edge[pos =.3, opacity = 0] node[opacity = 1, inner sep = 1pt, outer sep = 1pt](f){\tiny  $\fork$} (4);
            \path (1) edge[pos =.7, opacity = 0] node[opacity = 1, inner sep = 1pt, outer sep = 1pt](j){\tiny  $\join$} (4);
            \graph[use existing nodes, edges={color=black, pos = .5, earrow}, edge quotes={fill=white, inner sep=1pt,font= \scriptsize}]{
            s -> 1; 4 -> t;
            1 --[bend left = 5] a; a->[bend left = 5] 2;
            1 -- f; f->["$1$"{auto, below right = .3pt}] 2; f->["$2$"{auto, below left = .3pt}] 3;
            2 --["$1$"{auto}] j; 3--["$2$"{auto, below right = .3pt}] j; j-> 4;
            };
        \end{tikzpicture} {\footnotesize (simplified)}
    \end{tabular}
\end{center}
We may depict \kl{graphs} $\graph$ with $\ty(\graph) = \tuple{n, m}$ as one edge:
\begin{tikzpicture}[baseline = -3.5ex]
    \graph[grow right = 1.2cm, branch down = 3.ex, nodes={}]{
    {L1/{}[mynode,draw,circle], L12/{\tiny $\vdots$}, L2/{}[mynode,draw,circle]} -!-
    {R1/{}[mynode,draw,circle], R12/{\tiny $\vdots$}, R2/{}[mynode,draw,circle]}
    };
    \path (L12) edge[pos =.5, opacity = 0] node[opacity = 1, inner sep = 1pt, outer sep = 1pt](G){\tiny  $\graph$} (R12);         
    \node[left = 4pt of L1](s1){\scriptsize $1$}; \path (s1) edge[earrow, ->] (L1);
    \node[left = 4pt of L2](s2){\scriptsize $n$}; \path (s2) edge[earrow, ->] (L2);
    \node[right = 4pt of R1](t1){\scriptsize $1$}; \path (R1) edge[earrow, ->] (t1);
    \node[right = 4pt of R2](t2){\scriptsize $m$}; \path (R2) edge[earrow, ->] (t2);
    \graph[use existing nodes, edges={color=black, pos = .5, earrow}, edge quotes={fill=white, inner sep=1pt,font= \scriptsize}]{
        L1 --["$1$"{auto}] G;
        L2 --["$n$"{auto, below right = 1pt}] G;
        G ->["$1$"{auto}] R1;
        G ->["$m$"{auto, below left = 1pt}] R2;
        };
\end{tikzpicture} or \begin{tikzpicture}[baseline = -3.5ex]
    \graph[grow right = 1.2cm, branch down = 3.ex, nodes={}]{
    {L1/{}[mynode,draw,circle], L12/{\tiny $\vdots$}, L2/{}[mynode,draw,circle]} -!-
    {R1/{}[mynode,draw,circle], R12/{\tiny $\vdots$}, R2/{}[mynode,draw,circle]} -!-
    };
    \node[left = 4pt of L1](s1){\scriptsize $1$}; \path (s1) edge[earrow, ->] (L1);
    \node[left = 4pt of L2](s2){\scriptsize $n$}; \path (s2) edge[earrow, ->] (L2);
    \node[right = 4pt of R1](t1){\scriptsize $1$}; \path (R1) edge[earrow, ->] (t1);
    \node[right = 4pt of R2](t2){\scriptsize $m$}; \path (R2) edge[earrow, ->] (t2);
    \path (L1) edge[earrow, =] (R1);
    \path (L2) edge[earrow, =] (R2);
    \draw[fill = white] ($(L1)+(.25,.15)$) -- ($(L2)+(.25,-.15)$) -- ($(R2)+(-.25,-.15)$) -- ($(R1)+(-.25,.15)$) -- cycle;
    \path (L12) edge[pos =.5, opacity = 0] node[opacity = 1]{\scriptsize  $\graph[1]$} (R12);
\end{tikzpicture}.

\vskip\baselineskip %FIXED Added Spacing
Additionally, for an equivalence relation $E$ on $\univ{\graph[1]}$,
the \intro*\kl{quotient graph} of $\graph[1]$ with respect to $E$ is defined as the \kl{graph} $\graph[1]/E \defeq \tuple{\univ{\graph[1]}/E, \set{\tuple{X, Y} \mid \exists x \in X, y \in Y, \tuple{x, y} \in a^{\graph[1]}}_{a \in A}, \quo{\src_1^{\graph}}_{E}, \dots, \quo{\src_n^{\graph}}_{E}, \quo{\tgt_1^{\graph}}_{E}, \dots, \quo{\tgt_m^{\graph}}_{E}}$ where 
$\univ{\graph[1]}/E$ denotes the set of equivalence classes of $\univ{\graph[1]}$ by $E$ and $\quo{x}_{E}$ denotes the equivalence class of $x$.

\subsubsection*{Series and parallel composition}
For \kl{graphs}, we use the \kl{series composition} and the \kl{parallel composition}, defined as follows.
\begin{defi}\label{definition: series composition and parallel composition}
    For \kl{graphs}, $\graph[1] =
        \left(\hspace{-.3em}\begin{tikzpicture}[baseline = -3.5ex]
            \graph[grow right = 1.2cm, branch down = 3.ex, nodes={}]{
            {L1/{}[mynode,draw,circle], L12/{\tiny $\vdots$}, L2/{}[mynode,draw,circle]} -!-
            {R1/{}[mynode,draw,circle], R12/{\tiny $\vdots$}, R2/{}[mynode,draw,circle]} -!-
            };
            \node[left = 4pt of L1](s1){\scriptsize $1$}; \path (s1) edge[earrow, ->] (L1);
            \node[left = 4pt of L2](s2){\scriptsize $n$}; \path (s2) edge[earrow, ->] (L2);
            \node[right = 4pt of R1](t1){\scriptsize $1$}; \path (R1) edge[earrow, ->] (t1);
            \node[right = 4pt of R2](t2){\scriptsize $m$}; \path (R2) edge[earrow, ->] (t2);
            \path (L1) edge[earrow, =] (R1);
            \path (L2) edge[earrow, =] (R2);
            \draw[fill = white] ($(L1)+(.25,.15)$) -- ($(L2)+(.25,-.15)$) -- ($(R2)+(-.25,-.15)$) -- ($(R1)+(-.25,.15)$) -- cycle;
            \path (L12) edge[pos =.5, opacity = 0] node[opacity = 1]{\scriptsize  $\graph[1]$} (R12);
        \end{tikzpicture}\hspace{-.3em}\right)$ and $\graph[2] =
        \left(\hspace{-.3em}\begin{tikzpicture}[baseline = -3.5ex]
            \graph[grow right = 1.2cm, branch down = 3.ex, nodes={}]{
            {L1/{}[mynode,draw,circle], L12/{\tiny $\vdots$}, L2/{}[mynode,draw,circle]} -!-
            {R1/{}[mynode,draw,circle], R12/{\tiny $\vdots$}, R2/{}[mynode,draw,circle]} -!-
            };
            \node[left = 4pt of L1](s1){\scriptsize $1$}; \path (s1) edge[earrow, ->] (L1);
            \node[left = 4pt of L2](s2){\scriptsize $n'$}; \path (s2) edge[earrow, ->] (L2);
            \node[right = 4pt of R1](t1){\scriptsize $1$}; \path (R1) edge[earrow, ->] (t1);
            \node[right = 4pt of R2](t2){\scriptsize $m'$}; \path (R2) edge[earrow, ->] (t2);
            \path (L1) edge[earrow, =] (R1);
            \path (L2) edge[earrow, =] (R2);
            \draw[fill = white] ($(L1)+(.25,.15)$) -- ($(L2)+(.25,-.15)$) -- ($(R2)+(-.25,-.15)$) -- ($(R1)+(-.25,.15)$) -- cycle;
            \path (L12) edge[pos =.5, opacity = 0] node[opacity = 1]{\scriptsize  $\graph[2]$} (R12);
        \end{tikzpicture}\hspace{-.3em}\right)$,
    the \intro*\kl{series composition} $\graph[1] \series \graph[2]$
    and the \intro*\kl{parallel composition} $\graph[1] \parallel \graph[2]$ are defined as follows, respectively:
    \begin{align*}
        \graph[1] \series \graph[2] &\defeq \begin{cases}
            \left(\hspace{-.3em}\begin{tikzpicture}[baseline = -3.5ex]
                \graph[grow right = 1.2cm, branch down = 3.ex, nodes={}]{
                {L1/{}[mynode,draw,circle], L12/{\tiny $\vdots$}, L2/{}[mynode,draw,circle]} -!-
                {C1/{}[mynode,draw,circle], C12/{\tiny $\vdots$}, C2/{}[mynode,draw,circle]} -!-
                {R1/{}[mynode,draw,circle], R12/{\tiny $\vdots$}, R2/{}[mynode,draw,circle]} -!-
                };
                \node[left = 4pt of L1](s1){\scriptsize $1$}; \path (s1) edge[earrow, ->] (L1);
                \node[left = 4pt of L2](s2){\scriptsize $n$}; \path (s2) edge[earrow, ->] (L2);
                \node[right = 4pt of R1](t1){\scriptsize $1$}; \path (R1) edge[earrow, ->] (t1);
                \node[right = 4pt of R2](t2){\scriptsize $m'$}; \path (R2) edge[earrow, ->] (t2);
                \path (L1) edge[earrow, =] (C1);
                \path (L2) edge[earrow, =] (C2);
                \path (C1) edge[earrow, =] (R1);
                \path (C2) edge[earrow, =] (R2);
                \draw[fill = white] ($(L1)+(.25,.15)$) -- ($(L2)+(.25,-.15)$) -- ($(C2)+(-.25,-.15)$) -- ($(C1)+(-.25,.15)$) -- cycle;
                \path (L12) edge[pos =.5, opacity = 0] node[opacity = 1]{\scriptsize  $\graph[1]$} (C12);
                \draw[fill = white] ($(C1)+(.25,.15)$) -- ($(C2)+(.25,-.15)$) -- ($(R2)+(-.25,-.15)$) -- ($(R1)+(-.25,.15)$) -- cycle;
                \path (C12) edge[pos =.5, opacity = 0] node[opacity = 1]{\scriptsize  $\graph[2]$} (R12);
            \end{tikzpicture}\hspace{-.3em}\right)
        & (m  = n')\\
        \mbox{undefined} & (\mbox{otherwise})
    \end{cases},&
    \graph[1] \parallel \graph[2] &\defeq \left(\hspace{-.3em}\begin{tikzpicture}[baseline = -8.5ex]
        \graph[grow right = 1.2cm, branch down = 3.ex, nodes={}]{
        {L1/{}[mynode,draw,circle], L12/{\tiny $\vdots$}, L2/{}[mynode,draw,circle], L'1/{}[mynode,draw,circle], L'12/{\tiny $\vdots$}, L'2/{}[mynode,draw,circle]} -!-
        {R1/{}[mynode,draw,circle], R12/{\tiny $\vdots$}, R2/{}[mynode,draw,circle], R'1/{}[mynode,draw,circle], R'12/{\tiny $\vdots$}, R'2/{}[mynode,draw,circle]} -!-
        };
        \node[left = 4pt of L1](s1){\scriptsize $1$}; \path (s1) edge[earrow, ->] (L1);
        \node[left = 4pt of L2](s2){\scriptsize $n$}; \path (s2) edge[earrow, ->] (L2);
        \node[left = 4pt of L'1](s'1){\scriptsize $n+1$}; \path (s'1) edge[earrow, ->] (L'1);
        \node[left = 4pt of L'2](s'2){\scriptsize $n+n'$}; \path (s'2) edge[earrow, ->] (L'2);
        \node[right = 4pt of R1](t1){\scriptsize $1$}; \path (R1) edge[earrow, ->] (t1);
        \node[right = 4pt of R2](t2){\scriptsize $m$}; \path (R2) edge[earrow, ->] (t2);
        \node[right = 4pt of R'1](t'1){\scriptsize $m+1$}; \path (R'1) edge[earrow, ->] (t'1);
        \node[right = 4pt of R'2](t'2){\scriptsize $m+m'$}; \path (R'2) edge[earrow, ->] (t'2);
        \path (L1)  edge[earrow, =] (R1);
        \path (L2)  edge[earrow, =] (R2);
        \path (L'1) edge[earrow, =] (R'1);
        \path (L'2) edge[earrow, =] (R'2);
        \draw[fill = white] ($(L1)+(.25,.15)$) -- ($(L2)+(.25,-.15)$) -- ($(R2)+(-.25,-.15)$) -- ($(R1)+(-.25,.15)$) -- cycle;
        \path (L12) edge[pos =.5, opacity = 0] node[opacity = 1]{\scriptsize  $\graph[1]$} (R12);
        \draw[fill = white] ($(L'1)+(.25,.15)$) -- ($(L'2)+(.25,-.15)$) -- ($(R'2)+(-.25,-.15)$) -- ($(R'1)+(-.25,.15)$) -- cycle;
        \path (L'12) edge[pos =.5, opacity = 0] node[opacity = 1]{\scriptsize  $\graph[2]$} (R'12);
    \end{tikzpicture}\hspace{-.3em}\right).
    \end{align*}
    ($\graph[1] \series \graph[2]$ is the \kl{graph} obtained from $\graph[1]$ and $\graph[2]$ by merging $\tgt^{\graph[1]}_{i}$ and $\src^{\graph[2]}_{i}$ for $i \in \range{m}$.)
\end{defi}
Additionally, we may denote $\graph[1] \parallel \graph[2]$ vertically as $\substack{\graph[1]\\\parallel\\ \graph[2]}$.
For instance,
\begin{align*}
    \left(\hspace{-.3em}\begin{tikzpicture}[baseline = -.5ex]
        \graph[grow right = 1.cm, branch down = 6ex, nodes={mynode}]{
        {0/{}[draw, circle]}
        };
        \node[left = .5em of 0](l){\tiny };
        \node[above right = .5em of 0](r1){\tiny $1$};
        \node[below right = .5em of 0](r2){\tiny $2$};
        \graph[use existing nodes, edges={color=black, pos = .5, earrow}, edge quotes={fill=white, inner sep=1pt,font= \scriptsize}]{
            l -> 0; 0 -> {r1, r2};
        };
    \end{tikzpicture}\hspace{-.3em}\right) \series \left(\hspace{-.3em}\begin{tikzpicture}[baseline = -2.5ex]
        \graph[grow right = 1.cm, branch down = 3ex, nodes={mynode}]{
        {0/{}[draw, circle],0d/[draw, circle]} -!- {1/{}[draw, circle],1d/[draw, circle]}
        };
        \node[left = .5em of 0](l1){\tiny $1$};
        \node[left = .5em of 0d](l2){\tiny $2$};
        \node[right = .5em of 1](r1){\tiny $1$};
        \node[right = .5em of 1d](r2){\tiny $2$};
        \graph[use existing nodes, edges={color=black, pos = .5, earrow}, edge quotes={fill=white, inner sep=1pt,font= \scriptsize}]{
            l1 -> 0 -> ["$a$"] 1-> r1;
            l2 -> 0d -> ["$b$"] 1d -> r2;
        };
    \end{tikzpicture}\hspace{-.3em}\right) \series \left(\hspace{-.3em}\begin{tikzpicture}[baseline = -2.5ex]
        \graph[grow right = 1.cm, branch down = 3ex, nodes={}]{
        {0/{}[mynode, draw, circle],0d/[mynode, draw, circle]} -!- {/, 1d/[mynode, draw, circle]}
        };
        \node[left = .5em of 0](l1){\tiny $1$};
        \node[left = .5em of 0d](l2){\tiny $2$};
        \node[right = .5em of 0](r1){\tiny $1$};
        \node[right = .5em of 1d](r2){\tiny $2$};
        \graph[use existing nodes, edges={color=black, pos = .5, earrow}, edge quotes={fill=white, inner sep=1pt,font= \scriptsize}]{
            l1 -> 0 -> r1;
            l2 -> 0d <- ["$c$"] 1d -> r2;
        };
    \end{tikzpicture}\hspace{-.3em}\right) \series \left(\hspace{-.3em}\begin{tikzpicture}[baseline = -.5ex]
        \graph[grow right = 1.cm, branch down = 2ex, nodes={mynode}]{
        {0/{}[draw, circle]}
        };
        \node[above left = .5em of 0](l1){\tiny $1$};
        \node[below left = .5em of 0](l2){\tiny $2$};
        \node[right = .5em of 0](r){\tiny };
        \graph[use existing nodes, edges={color=black, pos = .5, earrow}, edge quotes={fill=white, inner sep=1pt,font= \scriptsize}]{
            {l1,l2} -> 0; 0 -> {r};
        };
    \end{tikzpicture}\hspace{-.3em}\right) &\quad=\quad \left(\hspace{-.3em}\begin{tikzpicture}[baseline = -2.5ex]
    \graph[grow right = 1.cm, branch down = 2ex]{
    {/, 0/{}[mynode, draw, circle]} -!- {/, /, 1/{}[mynode, draw, circle]} -!- {/, 2/{}[mynode, draw, circle]}
    };
    \node[left = .5em of 0](l){\tiny };
    \node[right = .5em of 2](r){\tiny };
    \graph[use existing nodes, edges={color=black, pos = .5, earrow}, edge quotes={fill=white, inner sep=1pt,font= \scriptsize}]{
        l -> 0; 2 -> r;
        0 ->["$a$", bend left = 10] 2;
        0 ->["$b$", bend right = 10] 1 <- ["$c$", bend right = 10] 2;
    };
    \end{tikzpicture}\hspace{-.3em}\right),\\
    \left(\hspace{-.3em}\begin{tikzpicture}[baseline = -.5ex]
        \graph[grow right = 1.cm, branch down = 6ex, nodes={mynode}]{
        {0/{}[draw, circle]}
        };
        \node[left = .5em of 0](l){\tiny };
        \node[above right = .5em of 0](r1){\tiny $1$};
        \node[below right = .5em of 0](r2){\tiny $2$};
        \graph[use existing nodes, edges={color=black, pos = .5, earrow}, edge quotes={fill=white, inner sep=1pt,font= \scriptsize}]{
            l -> 0; 0 -> {r1, r2};
        };
    \end{tikzpicture}\hspace{-.3em}\right) \series \left(
    \begin{gathered}
        \left(\hspace{-.3em}\begin{tikzpicture}[baseline = -.5ex]
            \graph[grow right = 1.cm, branch down = 2ex, nodes={mynode}]{
            {0/{}[draw, circle]} -!- {1/{}[draw, circle]}
            };
            \node[left = .5em of 0](l){\tiny };
            \node[right = .5em of 1](r){\tiny };
            \graph[use existing nodes, edges={color=black, pos = .5, earrow}, edge quotes={fill=white, inner sep=1pt,font= \scriptsize}]{
                l -> 0 ->["$a$"] 1 -> r;
            };
        \end{tikzpicture}\hspace{-.3em}\right)\\
        \parallel\\
        \left(\hspace{-.3em}\begin{tikzpicture}[baseline = -.5ex]
            \graph[grow right = 1.cm, branch down = 2ex, nodes={mynode}]{
            {0/{}[draw, circle]} -!- {1/{}[draw, circle]}
            };
            \node[left = .5em of 0](l){\tiny };
            \node[right = .5em of 1](r){\tiny };
            \graph[use existing nodes, edges={color=black, pos = .5, earrow}, edge quotes={fill=white, inner sep=1pt,font= \scriptsize}]{
                l -> 0 ->["$b$"] 1 -> r;
            };
        \end{tikzpicture}\hspace{-.3em}\right) \series     \left(\hspace{-.3em}\begin{tikzpicture}[baseline = -.5ex]
            \graph[grow right = 1.cm, branch down = 2ex, nodes={mynode}]{
            {1/{}[draw, circle]} -!-{2/{}[draw, circle]}
            };
            \node[left = .5em of 1](l){\tiny };
            \node[right = .5em of 2](r){\tiny };
            \graph[use existing nodes, edges={color=black, pos = .5, earrow}, edge quotes={fill=white, inner sep=1pt,font= \scriptsize}]{
                l -> 1 <-["$c$"] 2 -> r;
            };
        \end{tikzpicture}\hspace{-.3em}\right)
    \end{gathered}    
    \right) \series \left(\hspace{-.3em}\begin{tikzpicture}[baseline = -.5ex]
        \graph[grow right = 1.cm, branch down = 2ex, nodes={mynode}]{
        {0/{}[draw, circle]}
        };
        \node[above left = .5em of 0](l1){\tiny $1$};
        \node[below left = .5em of 0](l2){\tiny $2$};
        \node[right = .5em of 0](r){\tiny };
        \graph[use existing nodes, edges={color=black, pos = .5, earrow}, edge quotes={fill=white, inner sep=1pt,font= \scriptsize}]{
            {l1,l2} -> 0; 0 -> {r};
        };
    \end{tikzpicture}\hspace{-.3em}\right) &\quad=\quad \left(\hspace{-.3em}\begin{tikzpicture}[baseline = -2.5ex]
    \graph[grow right = 1.cm, branch down = 2ex]{
    {/, 0/{}[mynode, draw, circle]} -!- {/, /, 1/{}[mynode, draw, circle]} -!- {/, 2/{}[mynode, draw, circle]}
    };
    \node[left = .5em of 0](l){\tiny };
    \node[right = .5em of 2](r){\tiny };
    \graph[use existing nodes, edges={color=black, pos = .5, earrow}, edge quotes={fill=white, inner sep=1pt,font= \scriptsize}]{
        l -> 0; 2 -> r;
        0 ->["$a$", bend left = 10] 2;
        0 ->["$b$", bend right = 10] 1 <- ["$c$", bend right = 10] 2;
    };
    \end{tikzpicture}\hspace{-.3em}\right).
\end{align*}

For $n \in \nat$, we let $\id^{n}$ be the following \kl{graph} with $\tuple{n,n}$-interface:
$\id^{n} \defeq 
\left(
\begin{gathered}
    \begin{tikzpicture}[baseline = -3.ex]
        \graph[grow right = 1.cm, branch down = 2.5ex, nodes={}]{
        {1/{}[draw, circle,mynode], /{$\vdots$}[yshift = 0.2ex], 2/{}[draw, circle,mynode]}
        };
        \node[left = .5em of 1](l){\tiny $1$};
        \node[right = .5em of 1](r){\tiny $1$};
        \node[left = .5em of 2](l2){\tiny $n$};
        \node[right = .5em of 2](r2){\tiny $n$};
        \graph[use existing nodes, edges={color=black, pos = .5, earrow}, edge quotes={fill=white, inner sep=1pt,font= \scriptsize}]{
            l -> 1 -> r;
            l2 -> 2 -> r2;
            };
    \end{tikzpicture}
\end{gathered}\right)$.
Each $\id^{n}$ is an \intro*\kl{empty} \kl{graph} and the \kl{graph} $\id^{0}$ is the \intro*\kl{null} \kl{graph}.
Each $\id^n$ is the left identity element with respect to $\series$ on \kl{graphs} $\graph$ with $\tuple{n, m}$-interface, that is $\id^{n} \series \graph = \graph$.
The right identity element is given by $\id^{m}$, i.e., $\graph \series \id^m = \graph$.
The identity element with respect to $\parallel$ on \kl{graphs} with $\tuple{n,m}$-interface is given by $\id^{0}$, i.e., 
$\id^{0} \parallel \graph = \graph \parallel \id^{0} = \graph$.

\subsubsection*{Pathwidth and treewidth}
We recall the notions of the \kl{pathwidth} and \kl{treewidth} of \kl{structures} \cite[Definition 9.12]{courcelleGraphStructureMonadic2012},
which are defined based on the \kl{pathwidth} \cite{robertsonGraphMinorsExcluding1983} (see also, \eg, \cite[Theorem 2]{bodlaenderPartialArboretumGraphs1998}\cite[Lemma 4.6]{bojanczykDefinabilityEqualsRecognizability2016} for alternative characterizations) and the \kl{treewidth} \cite{robertsonGraphMinorsII1986} for graphs.
Below, for \kl{graphs} with bi-interface, we use a slightly generalized definition, based on \cite[Definition 2.5.3]{courcelleGraphStructureMonadic2012} for \kl{treewidth} and \cite{nakamuraPartialDerivativesGraphs2017} for \kl{pathwidth} (these definitions will be useful in the constructions in \Cref{proposition: pw le iw,proposition: tw 2}, while we will always consider \kl{structures} (\kl{graphs} without interfaces)).
For a \kl{graph} $\graph[1]$ (with bi-interface),
a \intro*\kl{path decomposition} of $\graph[1]$ is a sequence $\vec{\graph[2]} = \graph[2]_1 \dots \graph[2]_n$ of finite \kl{graphs} such that
\begin{itemize}
    \item $\univ{\graph[1]} = \bigcup_{i \in \range{n}} \univ{\graph[2]_{i}}$ and $a^{\graph[1]} = \bigcup_{i \in \range{n}} a^{\graph[2]_{i}}$ for each $a \in A$;
    \item $\univ{\graph[2]_i} \cap \univ{\graph[2]_k} \subseteq \univ{\graph[2]_j} \mbox{ for all $1 \le i \le j \le k \le n$}$;
    \item each \kl{source} \kl{vertex} of $\graph[1]$ is in $\univ{\graph[2]_1}$
    and each \kl{target} \kl{vertex} of $\graph[1]$ is in $\univ{\graph[2]_n}$.
\end{itemize}
\noindent %FIXED indent
The \intro*\kl(pathwidth){width} of $\vec{\graph[2]}$ is $\max_{i \in \range{n}}(\card \univ{\graph[2]_i} - 1)$.
The \intro*\kl{pathwidth} $\pw(\graph[1])$ of a \kl{graph} $\graph[1]$ is the minimum \kl(pathwidth){width} among \kl{path decompositions} of $\graph[1]$.
Each element of $\vec{\graph[2]}$ is called a \intro*\kl(pathwidth){bag}.

Similarly, for a \kl{graph} $\graph[1]$,
a \intro*\kl{tree decomposition} of $\graph[1]$ is a finite rooted tree $\vec{\graph[2]} = \set{\graph[2]_{\word}}_{\word \in \Gamma}$ of \kl{graphs}, where $\Gamma \subseteq \nat^*$ is a finite and prefix-closed ($\word \word' \in \Gamma$ implies $\word \in \Gamma$) set, such that
\begin{itemize}
    \item $\univ{\graph[1]} = \bigcup_{\word \in \Gamma} \univ{\graph[2]_{\word}}$ and $a^{\graph[1]} = \bigcup_{\word \in \Gamma} a^{\graph[2]_{\word}}$ for each $a \in A$;
    \item $\univ{\graph[2]_{\word[1]}} \cap \univ{\graph[2]_{\word[3]}} \subseteq \univ{\graph[2]_{\word[2]}}$ for all 
    $\word[1], \word[2], \word[3] \in \Gamma$ such that $\word[2]$ is on the unique path between $\word[1]$ and $\word[3]$
    (i.e., $\word[2]$ satisfies $\mathsf{LCA}(\word[1], \word[3]) \preceq_{\mathrm{pref}} \word[2] \preceq_{\mathrm{pref}} \word[1]$ or $\mathsf{LCA}(\word[1], \word[3]) \preceq_{\mathrm{pref}} \word[2] \preceq_{\mathrm{pref}} \word[3]$ where $\word[1]' \preceq_{\mathrm{pref}} \word[2]'$ denotes that $\word[1]'$ is a prefix of $\word[2]'$ and $\mathsf{LCA}(\word[1], \word[3])$ denotes the maximal common prefix \kl{word} (i.e., the least common ancestor) of $\word[1]$ and $\word[3]$);
    \item each \kl{source} \kl{vertex} and each \kl{target} \kl{vertex} of $\graph[1]$ is in the root $\univ{\graph[2]_{\eps}}$.
\end{itemize}\noindent %FIXED indent
The \intro*\kl(treewidth){width} of $\vec{\graph[2]}$ is $\max_{\word \in \Gamma}(\card \univ{\graph[2]_{\word}} - 1)$.
The \intro*\kl{treewidth} $\tw(\graph[1])$ of a \kl{graph} $\graph[1]$ is the minimum \kl(treewidth){width} among \kl{tree decompositions} of $\graph[1]$.
Each element of $\vec{\graph[2]}$ is called a \intro*\kl(treewidth){bag}.

Particularly when $\graph$ has no interfaces,
the \kl{pathwidth}/\kl{treewidth} of a \kl{graph} $\graph$ coincide with those of its Gaifman graph (i.e., the \kl{graph} $\tuple{\univ{\graph}, E}$ where $E$ is the binary relation $\bigcup_{a \in A} \bigcup_{\tuple{x_1, \dots, x_k} \in a^{\graph}} \set{\tuple{x_i, x_j} \mid i, j \in \range{k} \land x_i \neq x_j}$ \cite[p.\ 26]{ebbinghausFiniteModelTheory1995}), respectively.
\begin{rem}
    By definition, the number of \kl(pathwidth){bags} is finite; %
    this is sufficient in this paper as we are mainly interested in finite \kl{structures} (thanks to the \Cref{proposition: bounded model property}).
\end{rem}

\subsection{PCoR*: The Positive Calculus of Relations with Transitive Closure}\label{section: PCOR*}
We consider the positive calculus of relations with transitive closure (PCoR*).
We use $\vsig$ to denote a set of \intro*\kl{variables}.
The set of \intro*\kl{PCoR* terms} over $\vsig$ is defined as the set of terms over the signature $\set{\id_{(0)}, \emp_{(0)}, \top_{(0)}, \compo_{(2)}, \union_{(2)}, \intersection_{(2)}, \bl^{\smile}_{(1)}, \bl^{*}_{(1)}}$ and the \kl{variable} set $\vsig$:
\begin{align*}
    \term[1], \term[2], \term[3] \;\Coloneqq\; \aterm \mid \id \mid \emp  \mid \top \mid \term[1] \compo \term[2] \mid \term[1] \union \term[2] \mid \term[1] \intersection \term[2] \mid \term^{\smile}  \mid \term[1]^*. \tag{$a \in \vsig$}
\end{align*}
We often abbreviate $\term[1] \compo \term[2]$ to $\term[1] \term[2]$.
We use parentheses in ambiguous situations (where $\union$, $\compo$, and $\cap$ are left-associative).
We write $\sum_{i = 1}^{n} \term[1]_i$ for the term $\emp \union \term[1]_1 \union \dots \union \term[1]_n$,
write $\bigcompo_{i = 1}^{n} \term[1]_i$ for the term $\id \compo \term[1]_1 \compo \dots \compo \term[1]_n$,
write $\term^{n}$ for the term $\bigcompo_{i = 1}^{n} \term[1]$,
and write $\term[1]^{+}$ for the term $\term[1] \term[1]^*$.

An \intro*\kl{equation} $\term[1] = \term[2]$ is a pair of \kl{PCoR* terms}.
The \intro*\kl{inequation} $\term[1] \le \term[2]$ abbreviates the \kl{equation} $\term[1] \union \term[2] = \term[2]$.

For a \kl{PCoR* term} $\term$, the \intro*\kl{size} $\|\term\|$ is the number of symbols occurring in $\term$:
\begin{align*}
    \|\aterm\| &\defeq 1 \mbox{ for $\aterm \in \vsig \cup \set{\id, \emp, \top}$}, &
    \|\term[1]^{\mathbin{\heartsuit}}\| &\defeq 1 + \|\term[1]\| \mbox{ for $\mathbin{\heartsuit} \in \set{\smile, *}$},\\
    \|\term[1] \mathbin{\heartsuit} \term[2]\| &\defeq 1 + \|\term[1]\| + \|\term[2]\| \mbox{ for $\mathbin{\heartsuit} \in \set{\compo, \union, \intersection}$.}
\end{align*}
Additionally, the \intro*\kl{intersection width} $\iw(\term)$ \cite{gollerPDLIntersectionConverse2009} is defined as follows:
\begin{align*}
    \iw(\aterm) &\defeq 1  \mbox{ for $\aterm \in \vsig \cup \set{\id, \emp, \top}$}, & 
    \iw(\term[1]^{\mathbin{\heartsuit}}) &\defeq \iw(\term[1]) \mbox{ for $\mathbin{\heartsuit} \in \set{\smile, *}$},\\
    \iw(\term[1] \mathbin{\heartsuit} \term[2]) &\defeq \max(\iw(\term[1]), \iw(\term[2])) \mbox{  for $\mathbin{\heartsuit} \in \set{\compo, \union}$}, &
    \iw(\term[1] \intersection \term[2]) &\defeq \iw(\term[1]) + \iw(\term[2]).
\end{align*}
For instance, $\iw(a \intersection b \intersection c \intersection d) = 4$ and $\iw((a \intersection b) \compo (a \intersection c) \compo (a \intersection d)) = 2$.
\begin{prop}\label{proposition: iw}
    For all \kl{PCoR* terms} $\term$, we have $\iw(\term) \le \|\term\|$.
\end{prop}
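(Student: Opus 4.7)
The plan is to prove this by a straightforward structural induction on the \kl{PCoR* term} $\term$, mirroring the parallel recursive definitions of $\iw$ and $\|\cdot\|$. The inequality is tight in several base cases (atoms), so I expect the bound to be witnessed by pure intersections of atoms, and slack to accumulate quickly as soon as any other connective appears.

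For the base cases $\term = \aterm$ with $\aterm \in \vsig \cup \set{\eps, \emp, \top}$, both $\iw(\term)$ and $\|\term\|$ equal $1$ by definition, so the inequality holds with equality. For the unary cases $\term = \term[1]^{\smile}$ and $\term = \term[1]^{*}$, the induction hypothesis gives $\iw(\term[1]) \le \|\term[1]\|$, and since $\iw(\term) = \iw(\term[1])$ while $\|\term\| = 1 + \|\term[1]\|$, the inequality follows (with one unit of slack). For the binary cases $\term = \term[1] \mathbin{\heartsuit} \term[2]$ with $\mathbin{\heartsuit} \in \set{\compo, \union}$, we use $\iw(\term) = \max(\iw(\term[1]), \iw(\term[2])) \le \max(\|\term[1]\|, \|\term[2]\|) \le \|\term[1]\| + \|\term[2]\| < \|\term\|$, where the second inequality relies on $\|\term[i]\| \ge 1$ (immediate by induction on \kl{size}). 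Finally, for $\term = \term[1] \intersection \term[2]$, the definitions give $\iw(\term) = \iw(\term[1]) + \iw(\term[2]) \le \|\term[1]\| + \|\term[2]\| < 1 + \|\term[1]\| + \|\term[2]\| = \|\term\|$ by the induction hypothesis applied to each subterm.

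There is no real obstacle here; the statement is essentially a sanity check that the \kl{intersection width} is dominated by the overall \kl{size}, and the proof is a mechanical case analysis matching the two definitions clause-by-clause. The only point worth noting explicitly is the trivial auxiliary fact $\|\term[i]\| \ge 1$ used in the $\compo$ and $\union$ cases to pass from a maximum to a sum; this too is immediate by induction.
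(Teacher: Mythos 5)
Your proof is correct and is exactly the ``easy induction on $\term$'' that the paper invokes without detail: each clause of the definition of $\iw$ is matched against the corresponding clause of $\|\cdot\|$, using $\max(a,b)\le a+b$ for $\compo$ and $\union$ and the induction hypothesis summand-wise for $\intersection$. No further comment is needed.
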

\begin{proof}
    By easy induction on $\term$.
\end{proof}

\subsubsection{Semantics: relational models}
A \intro*\kl[structure]{structure (of binary relations)} is a non-empty \kl{graph} without interfaces over the set $\vsig$ with $\ty(a) = \tuple{1, 1}$ for $a \in \vsig$.
We use $\struc[1], \struc[2]$ to denote a \kl{structure}.
Let $\struc[1]$ be a \kl{structure}.
For a \kl{PCoR* term} $\term$,
the \intro*\kl{semantics} $\jump{\term}_{\struc} \subseteq \univ{\struc}^2$ is defined as follows:
\begin{gather*}
    \jump{a}_{\struc} \defeq a^{\struc}, \qquad
    \jump{\id}_{\struc} \defeq \triangle_{\univ{\struc}}, \qquad
    \jump{\emp}_{\struc} \defeq \emptyset, \qquad
    \jump{\top}_{\struc} \defeq \univ{\struc}^2, \qquad
    \jump{\term[1] \compo \term[2]}_{\struc} \defeq \jump{\term[1]}_{\struc} \compo \jump{\term[2]}_{\struc}, \\
    \jump{\term[1] \union \term[2]}_{\struc} \defeq \jump{\term[1]}_{\struc} \cup \jump{\term[2]}_{\struc}, \qquad
    \jump{\term[1] \intersection \term[2]}_{\struc} \defeq \jump{\term[1]}_{\struc} \cap \jump{\term[2]}_{\struc}, \qquad
    \jump{\term^{\smile}}_{\struc} \defeq \jump{\term}_{\struc}^{\smile}, \qquad
    \jump{\term[1]^*}_{\struc} \defeq \jump{\term[1]}_{\struc}^*.
\end{gather*}
We write $\REL$ for the class of all \kl{structures}.
For $\strucclass \subseteq \REL$,
we write $\strucclass \models \term[1] = \term[2]$ if $\jump{\term[1]}_{\struc} = \jump{\term[2]}_{\struc}$ for all $\struc \in \strucclass$.
In the sequel, we consider the \intro*\kl{equational theory} with respect to $\REL$.
For instance, the following \kl{equations} hold with respect to $\REL$:
\begin{center}
    \begin{minipage}[]{0.29\textwidth}
        \begin{align}
            \label{equation: KA 1} (\term[1]^* \term[2]^*)^* = (\term[1] \union \term[2])^*\\
            \label{equation: KA 2} \term[1]^* = (\term[1]\term[1])^*(\id \union \term[1])
        \end{align}
    \end{minipage}%
    \begin{minipage}[]{0.35\textwidth}
        \begin{align}
            \label{equation: allegories 1} \term[1] (\term[2] \intersection \term[3]) \le (\term[1] \term[2]) \intersection (\term[1] \term[3])\\
            \label{equation: allegories 2} (\term[1] \term[2]) \intersection \term[3] \le (\term[1] \intersection (\term[3] \term[2]^{\smile})) \term[2]
        \end{align}
    \end{minipage}%
    \begin{minipage}[]{0.36\textwidth}
        \begin{align}
            \label{equation: PCoR* 1} \term[1]^{+} \intersection \id \le (\term[1] \term[1])^{+}\\
            \label{equation: PCoR* 2} (\term[1] \intersection \term[2]^{\smile})^{+} \intersection \id \le (\term[1] \intersection \term[2]^{+})^{+}
        \end{align}
    \end{minipage}%
\end{center}

Notably,
the \kl{equational theory} of \kl{PCoR*} contains that of Kleene algebras $\set{\id, \emp, \compo, \union, \bl^{*}}$ \cite{kozenCompletenessTheoremKleene1991} (such as \Cref{equation: KA 1,equation: KA 2})
and that of allegories $\set{\id, \compo, \intersection, \bl^{\smile}}$ \cite{freydCategoriesAllegories1990,pousAllegoriesDecidabilityGraph2018} (such as \Cref{equation: allegories 1} (the semi-distributive law over $\cap$) and \Cref{equation: allegories 2} (the modularity law));
see also \cite{bloomNotesEquationalTheories1995,brunetAlgorithmsKleeneAlgebra2016} for Kleene algebras with converse $\set{\id, \emp, \compo, \union, \bl^{*}, \bl^{\smile}}$
and \cite{andrekaEquationalTheoryKleene2011,doumaneCompletenessIdentityfreeKleene2018} for \kl{identity-free Kleene lattices} $\set{\emp,\compo, \union, \intersection, \bl^{+}}$.
\Cref{equation: PCoR* 1} \cite[p.\ 14]{pousPositiveCalculusRelations2018} and \Cref{equation: PCoR* 2} are non-trivial \kl{equations} having transitive closure and intersection.

\subsubsection{An alternative semantics: graph languages}\label{section: graph languages}
We recall \kl{graph languages} for PCoR* \cite{brunetPetriAutomata2017}\cite[Definition 15]{pousPositiveCalculusRelations2018}.
The \intro*\kl{graph language} $\glang(\term)$ of a \kl{PCoR* term} $\term$ is the set of \kl{graphs} with $\tuple{1, 1}$-interface over $\vsig$, defined as follows:
\begin{align*}
    \hspace{-1.em} \glang(a)                                                                                                      & \defeq \set*{
        \begin{tikzpicture}[baseline = -.5ex]
            \graph[grow right = 1.cm, branch down = 6ex, nodes={mynode}]{
            {0/{}[draw, circle]}-!-{1/{}[draw, circle]}
            };
            \node[left = .5em of 0](l){};
            \node[right = .5em of 1](r){};
            \graph[use existing nodes, edges={color=black, pos = .5, earrow}, edge quotes={fill=white, inner sep=1pt,font= \scriptsize}]{
            0 ->["$a$"] 1;
            l -> 0; 1 -> r;
            };
        \end{tikzpicture}
    }  \mbox{ \mbox{ for $a \in \vsig$}},                                          & 
     \glang(\id)                                                                                                        &\defeq \set{\begin{tikzpicture}[baseline = -.5ex]
                                                                                                                                                                                                                                                                                        \graph[grow right = 1.cm, branch down = 6ex, nodes={mynode}]{
                                                                                                                                                                                                                                                                                        {0/{}[draw, circle]}
                                                                                                                                                                                                                                                                                        };
                                                                                                                                                                                                                                                                                        \node[left = .5em of 0](l){};
                                                                                                                                                                                                                                                                                        \node[right = .5em of 0](r){};
                                                                                                                                                                                                                                                                                        \graph[use existing nodes, edges={color=black, pos = .5, earrow}, edge quotes={fill=white, inner sep=1pt,font= \scriptsize}]{
                                                                                                                                                                                                                                                                                            l -> 0; 0 -> r;
                                                                                                                                                                                                                                                                                        };
                                                                                                                                                                                                                                                                                    \end{tikzpicture}},  \hspace{1.em}                \glang(\emp)  \defeq \emptyset,  \\
    \glang(\term[1] \compo \term[2])                                                                                 & \defeq \set*{\begin{tikzpicture}[baseline = -.5ex]
                                                                                                                                                        \graph[grow right = 1.cm, branch down = 2.5ex, nodes={mynode, font = \scriptsize}]{
                                                                                                                                                        {s1/{}[draw, circle]}
                                                                                                                                                        -!- {c/{}[draw, circle]}
                                                                                                                                                        -!- {t1/{}[draw, circle]}
                                                                                                                                                        };
                                                                                                                                                        \node[left = 4pt of s1](s1l){} edge[earrow, ->] (s1);
                                                                                                                                                        \node[right = 4pt of t1](t1l){}; \path (t1) edge[earrow, ->] (t1l);
                                                                                                                                                        \graph[use existing nodes, edges={color=black, pos = .5, earrow}, edge quotes={fill=white, inner sep=1pt,font= \scriptsize}]{
                                                                                                                                                        s1 ->["$\graph[1]$"] c;
                                                                                                                                                        c ->["$\graph[2]$"] t1;
                                                                                                                                                        };
                                                                                                                                                    \end{tikzpicture} \mid \graph[1] \in \glang(\term[1]) \land \graph[2] \in \glang(\term[2])},                                                                                            & 
    \glang(\top) &\defeq \set*{\begin{tikzpicture}[baseline = -.5ex]
        \graph[grow right = .5cm, branch down = 6ex, nodes={mynode}]{
        {0/{}[draw, circle]}-!-{1/{}[draw, circle]}
        };
        \node[left = .5em of 0](l){};
        \node[right = .5em of 1](r){};
        \graph[use existing nodes, edges={color=black, pos = .5, earrow}, edge quotes={fill=white, inner sep=1pt,font= \scriptsize}]{
            l -> 0; 1 -> r;
        };
    \end{tikzpicture}},       \\
    \hspace{-1.em}\glang(\term[1] \intersection \term[2])                                                                                  & \defeq
    \set*{\begin{tikzpicture}[baseline = -.5ex]
                    \graph[grow right = 1.cm, branch down = 2.5ex, nodes={mynode, font = \scriptsize}]{
                    {s1/{}[draw, circle]}
                    -!- {t1/{}[draw, circle]}
                    };
                    \node[left = 4pt of s1](s1l){} edge[earrow, ->] (s1);
                    \node[right = 4pt of t1](t1l){}; \path (t1) edge[earrow, ->] (t1l);
                    \graph[use existing nodes, edges={color=black, pos = .5, earrow}, edge quotes={fill=white, inner sep=1pt,font= \scriptsize}]{
                    s1 ->["$\graph[1]$", bend left = 25] t1;
                    s1 ->["$\graph[2]$", bend right = 25] t1;
                    };
                \end{tikzpicture} \mid \graph[1] \in \glang(\term[1]) \land \graph[2] \in \glang(\term[2])}, & \glang(\term[1] \union \term[2])                                                                                                & \defeq \glang(\term[1]) \cup \glang(\term[2]),                                                                                                      \\
    \hspace{-1.em}
    \glang(\term[1]^{\smile})                                                                                                    & \defeq \set{
        \begin{tikzpicture}[baseline = -.5ex]
            \graph[grow right = 1.cm, branch down = 6ex, nodes={mynode}]{
            {0/{}[draw, circle]}-!-{1/{}[draw, circle]}
            };
            \node[left = .5em of 0](l){};
            \node[right = .5em of 1](r){};
            \graph[use existing nodes, edges={color=black, pos = .5, earrow}, edge quotes={fill=white, inner sep=1pt,font= \scriptsize}]{
            1 ->["$\graph$"] 0;
            l -> 0; 1 -> r;
            };
        \end{tikzpicture} \mid \graph \in \glang(\term)
    },                                  &
    \glang(\term[1]^{*})                                                                                                          & \defeq
    \bigcup_{n \ge 0} \glang(\term[1]^n).
\end{align*}
For instance,
\begin{align*}
    \glang(\aterm[1] \intersection (\aterm[2] \aterm[3]^{\smile})) \;&=\; \set*{\begin{tikzpicture}[baseline = -.5ex]
                    \graph[grow right = 1.cm, branch down = 1.5ex, nodes={font = \scriptsize}]{
                    {s1/{}[mynode, draw, circle]}
                    -!- {/, c/{}[mynode, draw, circle]}
                    -!- {t1/{}[mynode, draw, circle]}
                    };
                    \node[left = 4pt of s1](s1l){} edge[earrow, ->] (s1);
                    \node[right = 4pt of t1](t1l){}; \path (t1) edge[earrow, ->] (t1l);
                    \graph[use existing nodes, edges={color=black, pos = .5, earrow}, edge quotes={fill=white, inner sep=1pt,font= \scriptsize}]{
                    s1 ->["$\aterm[1]$", bend left = 15] t1;
                    s1 ->["$\aterm[2]$", bend right = 15] c <-["$\aterm[3]$", bend right = 15] t1;
                    };
                \end{tikzpicture}},\\
    \glang((((\aterm[1] \aterm[1]) \intersection \id) \aterm[2] \top \aterm[3]) \union \aterm[4]) \;&=\; \set*{\begin{tikzpicture}[baseline = -2.5ex]
    \graph[grow right = 1.cm, branch down = 3.5ex, nodes={font = \scriptsize}]{
    {su/{}[mynode, draw, circle],s1/{}[mynode, draw, circle]}
    -!- {/, c1/{}[mynode, draw, circle]}
    -!- {/, c2/{}[mynode, draw, circle]}
    -!- {/, t1/{}[mynode, draw, circle]}
    };
    \node[left = 4pt of s1](s1l){} edge[earrow, ->] (s1);
    \node[right = 4pt of t1](t1l){}; \path (t1) edge[earrow, ->] (t1l);
    \graph[use existing nodes, edges={color=black, pos = .5, earrow}, edge quotes={fill=white, inner sep=1pt,font= \scriptsize}]{
    s1 ->["$\aterm[1]$", bend left = 60, pos = .45] su ->["$\aterm[1]$", bend left = 60, pos = .45] s1;
    s1 ->["$\aterm[2]$"] c1; c2 ->["$\aterm[3]$"] t1;
    };
    \end{tikzpicture}, \begin{tikzpicture}[baseline = .5ex]
        \graph[grow right = 1.cm, branch down = 6ex, nodes={mynode}]{
        {0/{}[draw, circle]}-!-{1/{}[draw, circle]}
        };
        \node[left = .5em of 0](l){};
        \node[right = .5em of 1](r){};
        \graph[use existing nodes, edges={color=black, pos = .5, earrow}, edge quotes={fill=white, inner sep=1pt,font= \scriptsize}]{
        0 ->["$\aterm[4]$"] 1;
        l -> 0; 1 -> r;
        };
    \end{tikzpicture}},\\
    \glang(\aterm^*) \;&=\; \set*{
        \begin{tikzpicture}[baseline = -.5ex]
            \graph[grow right = 1.cm, branch down = 6ex, nodes={mynode}]{
            {1/{}[draw, circle]}
            };
            \node[left = .5em of 1](l){};
            \node[right = .5em of 1](r){};
            \graph[use existing nodes, edges={color=black, pos = .5, earrow}, edge quotes={fill=white, inner sep=1pt,font= \scriptsize}]{
                l -> 1; 1 -> r;
            };
        \end{tikzpicture}, 
        \begin{tikzpicture}[baseline = -.5ex]
                \graph[grow right = 1.cm, branch down = 6ex, nodes={mynode}]{
                {0/{}[draw, circle]}-!-{1/{}[draw, circle]}
                };
                \node[left = .5em of 0](l){};
                \node[right = .5em of 1](r){};
                \graph[use existing nodes, edges={color=black, pos = .5, earrow}, edge quotes={fill=white, inner sep=1pt,font= \scriptsize}]{
                0 ->["$\aterm$"] 1;
                l -> 0; 1 -> r;
                };
            \end{tikzpicture},
            \begin{tikzpicture}[baseline = -.5ex]
                \graph[grow right = 1.cm, branch down = 6ex, nodes={mynode}]{
                {0/{}[draw, circle]}-!-{1/{}[draw, circle]}-!-{2/{}[draw, circle]}
                };
                \node[left = .5em of 0](l){};
                \node[right = .5em of 2](r){};
                \graph[use existing nodes, edges={color=black, pos = .5, earrow}, edge quotes={fill=white, inner sep=1pt,font= \scriptsize}]{
                0 ->["$\aterm$"] 1 ->["$\aterm$"] 2;
                l -> 0; 2 -> r;
                };
            \end{tikzpicture},\  \dots }.
\end{align*}

\begin{prop}\label{proposition: tw 2}
    For all \kl{PCoR* terms} $\term$ and $\graph \in \glang(\term)$, we have $\tw(\graph) \le 2$.
\end{prop}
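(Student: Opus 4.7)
The plan is to prove, by structural induction on $\term$, a slightly strengthened statement: every $\graph \in \glang(\term)$ admits a \kl{tree decomposition} of \kl(treewidth){width} at most $2$ whose root bag contains both $\src^{\graph}$ and $\tgt^{\graph}$. Strengthening the induction in this way is essential, because the inductive cases combine decompositions by gluing at the interface vertices, and I need to know those vertices sit at the root in order to attach new bags above them.

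For the base cases $\term \in \set{a, \eps, \top}$, every $\graph \in \glang(\term)$ has at most two vertices, namely $\src^{\graph}$ and $\tgt^{\graph}$ (possibly identified), so the single bag $\set{\src^{\graph}, \tgt^{\graph}}$ is a \kl{tree decomposition} of \kl(treewidth){width} at most $1$ satisfying the invariant; for $\term = \emp$, the statement is vacuous. The cases $\term[1] \union \term[2]$ and $\term[1]^{*}$ reduce immediately to the inductive hypothesis, since $\glang(\term[1]^{*}) = \bigcup_{n} \glang(\term[1]^{n})$ and $\term[1]^{n}$ is built from $\compo$. The case $\term^{\smile}$ is trivial: the underlying \kl{graph} is unchanged and only the interface is swapped, so the same decomposition works.

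The two substantive cases are $\compo$ and $\intersection$, handled as follows. For $\graph = \graph[1] \series \graph[2]$ with $\graph[i] \in \glang(\term[i])$, let $m$ be the shared middle vertex, i.e., $m = \tgt^{\graph[1]} = \src^{\graph[2]}$; by induction, take \kl{tree decompositions} $T_1, T_2$ of \kl(treewidth){width} at most $2$ with $\set{\src^{\graph[1]}, m} \subseteq$ (root bag of $T_1$) and $\set{m, \tgt^{\graph[2]}} \subseteq$ (root bag of $T_2$). Form a new root bag $\set{\src^{\graph[1]}, m, \tgt^{\graph[2]}}$ with $T_1$ and $T_2$ as children; this bag has size $3$, preserving \kl(treewidth){width} $\le 2$, and it contains the interfaces of $\graph$. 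For $\graph \in \glang(\term[1] \intersection \term[2])$ built from $\graph[i] \in \glang(\term[i])$ by identifying sources and targets into $s, t$, take $T_1, T_2$ by induction with $\set{s, t}$ in their root bags and create a new root bag $\set{s, t}$ with $T_1, T_2$ as children.

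The main verification burden is checking the connectivity condition for the constructed \kl{tree decompositions}: for any vertex $v$, the bags containing $v$ must form a connected subtree. This is where placing the interface vertices in the root bags pays off---the shared vertices between the two subtrees ($m$ in the series case, $s$ and $t$ in the intersection case) occur only in bags that are connected through the newly added root, and other vertices live entirely inside one of $T_1, T_2$. Together with the obvious checks that $\univ{\graph}$ and the edge set $a^{\graph}$ are covered, this completes the inductive step, and the proposition follows.
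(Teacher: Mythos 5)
Your proposal is correct and follows essentially the same route as the paper: the paper's proof is exactly an induction on \kl{PCoR* terms} with the strengthened invariant that some \kl{tree decomposition} of \kl(treewidth){width} $2$ has the \kl{source} and \kl{target} vertices in its root component (citing Bodlaender's Thm.~41 for the series-parallel core), and your write-up just fills in the details of the $\compo$ and $\intersection$ gluing steps that the paper leaves implicit. The only point worth stating slightly more carefully is the $\bl^{*}$ case, which needs an inner induction on $n$ for $\term[1]^{n}$ rather than a direct appeal to the structural hypothesis, but that is the standard move and does not affect correctness.
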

\begin{proof}[Proof (Corollary of {\cite[Theorem 41]{bodlaenderPartialArboretumGraphs1998}})]
    By induction on $\set{\compo,\intersection}$-terms,
    we have that every series-parallel graph (with one \kl{source} and one \kl{target}) has some \kl{tree decomposition} of \kl[treewidth]{width} $2$ \cite[Theorem 41]{bodlaenderPartialArboretumGraphs1998}.
    Similar to this, by induction on \kl{PCoR* terms} $\term$,
    each \kl{graph} $\graph \in \glang(\term)$ has some \kl{tree decomposition} of \kl[treewidth]{width} $2$.
\end{proof}

\begin{prop}\label{proposition: pw le iw}
    For all \kl{PCoR* terms} $\term$ and $\graph \in \glang(\term)$, we have $\pw(\graph) \le \iw(\term)$.
\end{prop}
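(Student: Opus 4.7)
The plan is to induct on $\term$ with a strengthened claim: \emph{for every $\graph \in \glang(\term)$, there exists a \kl{path decomposition} $\tuple{H_1, \dots, H_k}$ of $\graph$ of \kl(pathwidth){width} at most $\iw(\term)$ such that $H_1 = \set{\src^{\graph}}$ and $H_k = \set{\tgt^{\graph}}$.} Forcing the first and last bags to be the singleton source and target is what gives the intersection case exactly the slack needed to reach the $\iw$ bound.

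For the base cases, take $\tuple{\set{\src^{\graph}}}$ for $\eps$ (noting $\src^{\graph} = \tgt^{\graph}$), $\tuple{\set{\src^{\graph}}, \set{\src^{\graph}, \tgt^{\graph}}, \set{\tgt^{\graph}}}$ for a variable $a$, and $\tuple{\set{\src^{\graph}}, \set{\tgt^{\graph}}}$ for $\top$; $\emp$ is vacuous. The inductive cases for $\union$, $\bl^{\smile}$, and $\bl^{*}$ reduce easily to taking the chosen disjunct, reversing the sequence, or iterating the composition construction (with $\tuple{\set{\src^{\graph}}}$ for the $n = 0$ iterate). For $\term[1] \compo \term[2]$, concatenate the IH decompositions of $\graph[1]$ and $\graph[2]$: by the IH they end and begin respectively with the singleton $\set{m}$ at the shared middle vertex, so those bags merge into one, and the resulting width is $\max(\iw(\term[1]), \iw(\term[2])) = \iw(\term[1] \compo \term[2])$.

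The heart of the argument is $\term[1] \intersection \term[2]$, where $\graph$ identifies the sources and the targets of $\graph[1] \in \glang(\term[1])$ and $\graph[2] \in \glang(\term[2])$. Let $\tuple{H_1^1, \dots, H_1^{m_1}}$ and $\tuple{H_2^1, \dots, H_2^n}$ be the IH decompositions, of widths $w_1 \defeq \iw(\term[1])$ and $w_2 \defeq \iw(\term[2])$. I would traverse the product grid in an \emph{L-shape}: first advance along the $H_1$ axis while the $H_2$ axis is held at its singleton first bag $H_2^1 = \set{\src^{\graph}}$, yielding $K_i \defeq H_1^i \cup \set{\src^{\graph}}$ for $i \in \range{m_1}$; then advance along the $H_2$ axis while the $H_1$ axis is held at its singleton last bag $H_1^{m_1} = \set{\tgt^{\graph}}$, yielding $K_{m_1 + j - 1} \defeq \set{\tgt^{\graph}} \cup H_2^j$ for $j \in \range{n}$. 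The two formulae agree at the corner $K_{m_1} = \set{\src^{\graph}, \tgt^{\graph}}$. Because the idle axis contributes only a singleton, each bag has $\card K_i \le \max(w_1, w_2) + 2$, which is at most $w_1 + w_2 + 1$ thanks to $w_1, w_2 \ge 1$; hence the width is at most $w_1 + w_2 = \iw(\term[1] \intersection \term[2])$. The endpoints satisfy $K_1 = \set{\src^{\graph}}$ and $K_{m_1 + n - 1} = \set{\tgt^{\graph}}$; consecutiveness follows because interior vertices of $\graph[i]$ inherit their ranges from the original decomposition, $\src^{\graph}$ occupies the whole first phase plus its prefix range in $H_2$, and $\tgt^{\graph}$ enters at its suffix range in $H_1$ and persists through the second phase.

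The main obstacle is exactly this intersection case: a naive interleaving $K = H_1^i \cup H_2^j$ produces bags of size $w_1 + w_2 + 2$---one too many---whenever middle bags of the two decompositions share no vertex. The strengthening of the IH is designed precisely to close this one-unit gap: pinning the idle axis to a singleton $\set{\src^{\graph}}$ or $\set{\tgt^{\graph}}$ along the L-walk recovers the missing unit and delivers the tight $\iw$ bound.
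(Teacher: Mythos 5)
Your proposal is correct and follows essentially the same route as the paper: the paper also proves the statement by induction on $\term$ with a strengthened hypothesis that the first and last bags of the decomposition contain the source and target (you merely normalize them to singletons), and the intersection case is handled by the same sequential concatenation of the two decompositions carrying one extra endpoint vertex, with $\iw \ge 1$ absorbing the overhead. The paper leaves all of this as an "easy induction"; your write-up just supplies the details.
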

\begin{proof}
    Similar to the above, by easy induction on $\term$,
    each \kl{graph} $\graph \in \glang(\term)$
    has some \kl{path decomposition} $\vec{\graph[2]} = \graph[2]_1 \dots \graph[2]_n$ of \kl[pathwidth]{width} $\iw(\term)$
    (where the \kl{source} \kl{vertex} is in $\univ{\graph[2]_1}$ and the \kl{target} \kl{vertex} is in $\univ{\graph[2]_n}$).
\end{proof}

In this paper, we use the \kl{graph languages} above as an alternative semantics.
Let $\struc$ be a \kl{structure}.
For $x, y \in \univ{\struc}$,
we write $\const{G}(\struc, x, y)$ for the \kl{graph} defined by $\tuple{\univ{\struc}, \set{\jump{a}_{\struc}}_{a \in \vsig}, x, y}$.
For a \kl{graph} $\graph[2]$ and a set $\glang$ of \kl{graphs},
we define $\jump{\graph[2]}_{\struc}$ and $\jump{\glang}_{\struc}$ as follows:
\begin{align*}
    \jump{\graph[2]}_{\struc} & \;\defeq\; \set{\tuple{x, y} \mid \graph[2] \homo \const{G}(\struc, x, y)}, & \jump{\glang}_{\struc} \;\defeq\; \bigcup_{\graph[2] \in \glang} \jump{\graph[2]}_{\struc}.
\end{align*}
We then have the following.
\begin{propC}[{\cite[Lemma 2.1 and 2.2]{andrekaEquationalTheoryKleene2011}} (for PCoR* without $\top$), {\cite[Lemma 2.3]{brunetPetriAutomata2017}} (for PCoR*)]\label{proposition: graph language}
    Let $\struc$ be a \kl{structure}.
    For all \kl{PCoR* terms} $\term$, we have $\jump{\term}_{\struc} = \jump{\glang(\term)}_{\struc}$.
\end{propC}
\begin{proof}
    By easy induction on $\term$ using distributivity of $\compo$, $\intersection$, and $\bl^{\smile}$.
\end{proof}\noindent %FIXED indent
For instance, when $\struc = \left(\begin{tikzpicture}[baseline = 0.5ex]
    \graph[grow right = 1.cm, branch down = 6ex, nodes={mynode}]{
    {0/{$\lab[1]$}[draw, circle]}-!-{1/{$\lab[2]$}[draw, circle]}
    };
    \graph[use existing nodes, edges={color=black, pos = .5, earrow}, edge quotes={fill=white, inner sep=1pt,font= \scriptsize}]{
    0 ->["$\aterm[1]$", out = 70, in = 110, looseness = 9] 0;
    0 ->["$\aterm[2]$", bend left] 1;
    0 ->["$\aterm[3]$", bend right] 1;
    };
\end{tikzpicture} \right)$,
we have $\tuple{x, x} \in \jump{a \intersection (b c^{\smile})}_{\struc}$ by the following \kl{graph homomorphism}:
\[\glang(\aterm[1] \intersection (\aterm[2] \aterm[3]^{\smile})) \ni \begin{tikzpicture}[remember picture,baseline = -.5ex]
    \graph[grow right = 1.cm, branch down = 1.5ex, nodes={font = \scriptsize}]{
    {s1/{}[mynode, draw, circle]}
    -!- {/, c/{}[mynode, draw, circle]}
    -!- {t1/{}[mynode, draw, circle]}
    };
    \node[left = 4pt of s1](s1l){} edge[earrow, ->] (s1);
    \node[right = 4pt of t1](t1l){}; \path (t1) edge[earrow, ->] (t1l);
    \graph[use existing nodes, edges={color=black, pos = .5, earrow}, edge quotes={fill=white, inner sep=1pt,font= \scriptsize}]{
    s1 ->["$\aterm[1]$", bend left = 15] t1;
    s1 ->["$\aterm[2]$", bend right = 15] c <-["$\aterm[3]$", bend right = 15] t1;
    };
\end{tikzpicture} \hspace{4em} \begin{tikzpicture}[remember picture,baseline = -.5ex]
    \graph[grow right = 1.cm, branch down = 6ex, nodes={mynode}]{
    {0/{$\lab[1]$}[draw, circle]}-!-{1/{$\lab[2]$}[draw, circle]}
    };
    \node[above left = 3pt and 4pt of 0](l){} edge[earrow, ->] (0);
    \node[below left = 3pt and 4pt of 0](r){}; \path (0) edge[earrow, ->] (r);
    \graph[use existing nodes, edges={color=black, pos = .5, earrow}, edge quotes={fill=white, inner sep=1pt,font= \scriptsize}]{
    0 ->["$\aterm[1]$", out = 70, in = 110, looseness = 9] 0;
    0 ->["$\aterm[2]$", bend left] 1;
    0 ->["$\aterm[3]$", bend right] 1;
    };
\end{tikzpicture} = \const{G}(\struc, \lab[1], \lab[1]).
\begin{tikzpicture}[remember picture, overlay]
    \path (s1) edge [homoarrow,->, out =45, in = 165, looseness = .5] (0);
    \path (t1) edge [homoarrow,->, out =45, in = 165, looseness = .5] (0);
    \path (c) edge [homoarrow,->, out =-45, in = -120, looseness = .3] (1);
\end{tikzpicture}\]
As a corollary of \Cref{proposition: graph language}, we also have the following bounded model property.
\begin{prop}[Graph language bounded model property]\label{proposition: bounded model property}
    For all \kl{PCoR* terms} $\term[1], \term[2]$,
    \[\REL \models \term[1] \le \term[2] \;\iff\; \set{\struc \mid \exists x, y \in \univ{\struc}, \const{\graph}(\struc, x, y) \in \glang(\term[1])} \models \term[1] \le \term[2].\]
\end{prop}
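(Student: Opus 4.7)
The forward direction ($\Rightarrow$) is immediate, since the right-hand side quantifies over a subclass of $\REL$.

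For the backward direction, I would argue by contraposition. Suppose $\REL \not\models \term[1] \le \term[2]$. Then there exist a \kl{structure} $\struc$ and vertices $x, y \in \univ{\struc}$ with $\tuple{x,y} \in \jump{\term[1]}_{\struc}$ and $\tuple{x,y} \notin \jump{\term[2]}_{\struc}$. By \Cref{proposition: graph language}, $\jump{\term[1]}_{\struc} = \jump{\glang(\term[1])}_{\struc}$, so there exists some $\graph \in \glang(\term[1])$ together with a \kl{graph homomorphism} $h \colon \graph \homo \const{G}(\struc, x, y)$.

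The key idea is to use $\graph$ itself as the witnessing \kl{structure}. Concretely, let $\struc'$ be the \kl{structure} obtained from $\graph$ by forgetting its $\tuple{1,1}$-interface (note that $\graph$ is non-\kl{empty} since it carries \kl{source} and \kl{target} vertices, so $\struc'$ is a legitimate \kl{structure}), and let $x', y'$ be the \kl{source} and \kl{target} of $\graph$ respectively. Then by construction $\const{G}(\struc', x', y') = \graph \in \glang(\term[1])$, so $\struc'$ lies in the subclass appearing on the right-hand side.

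It remains to show that $\struc'$ also refutes $\term[1] \le \term[2]$. Since the identity map witnesses $\graph \homo \const{G}(\struc', x', y')$, we have $\tuple{x', y'} \in \jump{\graph}_{\struc'} \subseteq \jump{\glang(\term[1])}_{\struc'} = \jump{\term[1]}_{\struc'}$ by \Cref{proposition: graph language}. For the other inclusion, suppose for contradiction that $\tuple{x', y'} \in \jump{\term[2]}_{\struc'} = \jump{\glang(\term[2])}_{\struc'}$; then there is some $\graph[2] \in \glang(\term[2])$ and a \kl{graph homomorphism} $g \colon \graph[2] \homo \const{G}(\struc', x', y') = \graph$. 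Composing with $h$ yields $h \circ g \colon \graph[2] \homo \const{G}(\struc, x, y)$, so $\tuple{x, y} \in \jump{\glang(\term[2])}_{\struc} = \jump{\term[2]}_{\struc}$, contradicting our choice of $\struc$.

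There is essentially no obstacle here: the whole argument is a direct unpacking of \Cref{proposition: graph language} combined with the fact that \kl{graph homomorphisms} compose. The only point requiring minor care is the conversion from a \kl{graph} with $\tuple{1,1}$-interface to a \kl{structure} (a \kl{graph} with $\tuple{0,0}$-interface), which amounts to forgetting the interface while remembering the two distinguished vertices as $x', y'$.
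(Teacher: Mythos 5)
Your proof is correct and follows essentially the same route as the paper: both argue by contraposition, take the witnessing graph $\graph \in \glang(\term[1])$ itself as the refuting structure (forgetting its interface and remembering its source and target as the distinguished pair), and rule out $\term[2]$ there by composing homomorphisms back into the original structure. The only cosmetic difference is that you make the composition $h \circ g$ and the contradiction explicit, whereas the paper states the same argument parenthetically.
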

\begin{proof}
    ($\Longrightarrow$):
    Trivial.
    ($\Longleftarrow$):
    We prove the contraposition.
    Assume $\tuple{x, y} \in \jump{\term[1]}_{\struc} \setminus \jump{\term[2]}_{\struc}$ for some $\struc \in \REL$ and $x, y \in \univ{\struc}$.
    By \Cref{proposition: graph language},
    $\graph' \homo \const{G}(\struc, x, y)$ for some $\graph' \in \glang(\term)$, and
    $\graph[2] \centernot\homo \const{G}(\struc, x, y)$ for all $\graph[2] \in \glang(\term[2])$.
    Let $\struc[2]$, $x'$, and $y'$ be such that  $\const{G}(\struc[2], x', y') = \graph'$.
    Since $\graph' \homo \const{G}(\struc[2], x', y')$, we have $\tuple{x', y'} \in \jump{\term[1]}_{\struc[2]}$.
    Also, by $\const{G}(\struc[2], x', y') \homo \const{G}(\struc, x, y)$,
    we have $\graph[2] \centernot\homo \const{G}(\struc[2], x', y')$ for all $\graph[2] \in \glang(\term[2])$,
    and hence $\tuple{x', y'} \not\in \jump{\term[2]}_{\struc[2]}$.
    Hence by $\const{G}(\struc[2], x', y') \in \glang(\term[1])$, this completes the proof.
\end{proof}

For a class $\strucclass \subseteq \REL$, we use the following notations:
\begin{align*}
    \strucclass_{\pw \le k} &= \set{\struc \in \strucclass \mid \pw(\struc) \le k}, &
    \strucclass_{\tw \le k} &= \set{\struc \in \strucclass \mid \tw(\struc) \le k}.
\end{align*}
As a corollary of \Cref{proposition: bounded model property}, we also have the following two model properties.
\begin{prop}[Treewidth at most $2$ model property]\label{proposition: tw 2 property}
    {\intro*\kl[treewidth at most $2$ model property]{}}%
    For all \kl{PCoR* terms} $\term[1], \term[2]$,
    \[\REL \models \term[1] \le \term[2] \;\iff\; \REL_{\tw \le 2} \models \term[1] \le \term[2].\]
\end{prop}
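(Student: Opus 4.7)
The proposal is to derive this model property by directly combining the two preceding results: the \kl{graph language bounded model property} (which gives us a counterexample structure of the form $\const{G}(\struc[2], x', y')$ with $\const{G}(\struc[2], x', y') \in \glang(\term[1])$) and \Cref{proposition: tw 2} (which bounds the \kl{treewidth} of any graph in $\glang(\term[1])$ by $2$).

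The forward implication $(\Rightarrow)$ is immediate from $\REL_{\tw \le 2} \subseteq \REL$. For $(\Leftarrow)$, I would argue by contraposition. Assume $\REL \not\models \term[1] \le \term[2]$. By the \kl{graph language bounded model property}, there exist $\struc[2] \in \REL$ and $x', y' \in \univ{\struc[2]}$ such that $\const{G}(\struc[2], x', y') \in \glang(\term[1])$ and $\jump{\term[1]}_{\struc[2]} \not\subseteq \jump{\term[2]}_{\struc[2]}$. By \Cref{proposition: tw 2} applied to $\const{G}(\struc[2], x', y') \in \glang(\term[1])$, we get $\tw(\const{G}(\struc[2], x', y')) \le 2$. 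Since bi-interface is ignored in the definition of \kl{treewidth} (as noted in the remark after the \kl{treewidth} definition), the \kl{graph} $\const{G}(\struc[2], x', y')$ and the \kl{structure} $\struc[2]$ share the same universe and relations, so $\tw(\struc[2]) = \tw(\const{G}(\struc[2], x', y')) \le 2$, i.e., $\struc[2] \in \REL_{\tw \le 2}$. This witnesses $\REL_{\tw \le 2} \not\models \term[1] \le \term[2]$.

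Since the two lemmas needed are already in place, there is essentially no obstacle; the only subtle point is justifying that the \kl{treewidth} of the bi-interfaced \kl{graph} $\const{G}(\struc[2], x', y')$ coincides with that of the underlying \kl{structure} $\struc[2]$, which is handled by the convention spelled out in the remark following the \kl{treewidth} definition (bi-interface is forgotten). The proof is therefore a one-line composition of \Cref{proposition: bounded model property} with \Cref{proposition: tw 2}.
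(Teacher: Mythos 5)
Your proposal is correct and is exactly the paper's argument: the paper proves this proposition in one line by combining \Cref{proposition: bounded model property} with \Cref{proposition: tw 2}, which is precisely the composition you spell out (including the harmless identification of the \kl{treewidth} of $\const{G}(\struc[2], x', y')$ with that of $\struc[2]$ via the convention of forgetting bi-interface).
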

\begin{proof}
    ($\Longrightarrow$):
    Trivial.
    ($\Longleftarrow$):
    Because every \kl{graph} of $\glang(\term)$ has \kl{treewidth} at most $2$ (\Cref{proposition: tw 2}),
    this direction is shown by \Cref{proposition: bounded model property}.
\end{proof}
\begin{prop}[Linearly bounded pathwidth model property]\label{proposition: bounded pw property}
    {\intro*\kl[linearly bounded pathwidth model property]{}}%
    For all \kl{PCoR* terms} $\term[1], \term[2]$,
    \[\REL \models \term[1] \le \term[2] \;\iff\; \REL_{\pw \le \iw(\term[1])} \models \term[1] \le \term[2].\]
\end{prop}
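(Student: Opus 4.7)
The plan is to mirror the proof of \Cref{proposition: tw 2 property}, replacing the uniform treewidth bound by the term-dependent pathwidth bound from \Cref{proposition: pw le iw}. The $(\Rightarrow)$ direction is immediate, since $\REL_{\pw \le \iw(\term[1])} \subseteq \REL$.

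For $(\Leftarrow)$, I would argue the contrapositive. Assume $\REL \not\models \term[1] \le \term[2]$. By the \kl{graph language bounded model property} (\Cref{proposition: bounded model property}), there exist a \kl{structure} $\struc \in \REL$ and $x, y \in \univ{\struc}$ with $\const{G}(\struc, x, y) \in \glang(\term[1])$ such that $\tuple{x, y} \in \jump{\term[1]}_{\struc} \setminus \jump{\term[2]}_{\struc}$. Inspecting the proof of \Cref{proposition: bounded model property}, the witness can actually be taken to be the \kl{structure} $\struc[2]$ corresponding to a graph $\graph' \in \glang(\term[1])$ (namely, $\const{G}(\struc[2], x', y') = \graph'$).

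By \Cref{proposition: pw le iw}, this witness \kl{graph} $\graph'$ satisfies $\pw(\graph') \le \iw(\term[1])$. Because \kl{pathwidth} of \kl{structures} ignores bi-interface (as noted in the remark following the definitions of \kl{pathwidth} and \kl{treewidth}), we have $\pw(\struc[2]) = \pw(\graph') \le \iw(\term[1])$. Hence $\struc[2] \in \REL_{\pw \le \iw(\term[1])}$, and it refutes $\term[1] \le \term[2]$, so $\REL_{\pw \le \iw(\term[1])} \not\models \term[1] \le \term[2]$.

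No step is really an obstacle here: the content is entirely packaged in \Cref{proposition: bounded model property} and \Cref{proposition: pw le iw}. The only small point to be careful about is observing that the bi-interface does not affect the pathwidth, so that bounding $\pw$ of the witnessing \kl{graph} translates to bounding $\pw$ of the ambient \kl{structure}.
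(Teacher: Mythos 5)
Your proposal is correct and follows exactly the paper's own argument: the paper proves this proposition by combining the \kl{graph language bounded model property} (\Cref{proposition: bounded model property}) with the pathwidth bound of \Cref{proposition: pw le iw}, which is precisely what you do (just spelled out in more detail, including the observation that the bi-interface is irrelevant to the pathwidth).
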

\begin{proof}
    ($\Longrightarrow$):
    Trivial.
    ($\Longleftarrow$):
    Because every \kl{graph} of $\glang(\term)$ has \kl{pathwidth} at most $\iw(\term[1])$ (\Cref{proposition: pw le iw}),
    this direction is shown by \Cref{proposition: bounded model property}.
\end{proof}
Note that we can easily translate the \kl{equational theory} of \kl{PCoR* terms} into the theory of monadic second-order logic (MSO) formulas\footnote{Precisely, with respect to binary relations, the expressive power of PCoR* is equivalent to that of three-variable existential positive logic with ``variable-confined'' monadic transitive closure \cite{nakamuraExpressivePowerSuccinctness2022}.} as an analog of the standard translation from the \kl{equational theory} of the calculus of relations into the theory of three-variable first-order logic formulas \cite{tarskiCalculusRelations1941}.
Because the theory of MSO over bounded \kl{treewidth} \kl{structures} is decidable \cite{courcelleMonadicSecondorderLogic1988,courcelleMonadicSecondorderLogic1990,courcelleGraphStructureMonadic2012},
\Cref{proposition: tw 2 property} (or \Cref{proposition: bounded pw property}) implies that the \kl{equational theory} of \kl{PCoR*} is decidable.
However, the naive algorithm obtained from the above has a non-elementary complexity.
We will show that the \kl{equational theory} of \kl{PCoR*} is EXPSPACE-complete (\Cref{corollary: PCoR* EXPSPACE-complete}).

Additionally, Brunet and Pous gave the following graph-theoretic characterization of the \kl{equational theory} for PCoR*,
which generalizes the characterizations known for conjunctive queries \cite{chandraOptimalImplementationConjunctive1977}, representable allegories \cite{freydCategoriesAllegories1990} and union-free relation algebras \cite{andrekaEquationalTheoryUnionfree1995}.
\begin{propC}[\cite{brunetPetriAutomataKleene2015,brunetPetriAutomata2017}]\label{proposition: glang and equational theory}
    For all \kl{PCoR* terms} $\term[1], \term[2]$,
    \[\REL \models \term[1] \le \term[2] \;\iff\; \forall \graph[1] \in \glang(\term[1]), \exists \graph[2] \in \glang(\term[2]), \graph[2] \homo \graph[1].\]
\end{propC}

\subsubsection{Comparison to word languages}\label{section: word language}
The \kl{word language} $\ljump{\term}_{\vsig}$ of a \kl{Kleene lattice term} (i.e., terms over $\set{\id, \emp, \compo, \union, \cap, \bl^{*}}$) $\term$ is the set of \kl{words} over $\vsig$ defined as follows:
\begin{align*}
    \ljump{\aterm}_{\vsig} & \defeq \set{\aterm} \mbox{ for $\aterm \in \vsig$}, &
    \ljump{\id}_{\vsig} &\defeq \set{\eps},&
    \ljump{\emp}_{\vsig} & \defeq \emptyset, \\
    \ljump{\term[1] \union \term[2]}_{\vsig} &\defeq \ljump{\term[1]}_{\vsig} \cup \ljump{\term[2]}_{\vsig},&
    \ljump{\term[1] \compo \term[2]}_{\vsig} &\defeq \ljump{\term[1]}_{\vsig} \compo \ljump{\term[2]}_{\vsig}, &
    \ljump{\term[1]^{*}}_{\vsig} &\defeq \ljump{\term[1]}_{\vsig}^*, &
    \ljump{\term[1] \intersection \term[2]}_{\vsig} &\defeq \ljump{\term[1]}_{\vsig} \cap \ljump{\term[2]}_{\vsig}.
\end{align*}
For \kl{Kleene algebra terms} (i.e., terms over $\set{\id, \emp, \compo, \union, \bl^{*}}$) $\term[1]$ and $\term[2]$, the following is well-known (see, \eg, \cite[Theorem 4]{pousPositiveCalculusRelations2018}):
$\REL \models \term[1] \le \term[2]$ iff $\ljump{\term[1]}_{\vsig} \subseteq \ljump{\term[2]}_{\vsig}$.
This equivalence can be slightly strengthened as follows.
\begin{propC}[\cite{nakamuraPartialDerivativesGraphs2017,brunetPetriAutomata2017}]\label{proposition: glang and lang}
    For all \kl{Kleene algebra terms} $\term[1]$ and \kl{Kleene lattice terms} $\term[2]$, we have:
    \[\REL \models \term[1] \le \term[2] \;\iff\; \ljump{\term[1]}_{\vsig} \subseteq \ljump{\term[2]}_{\vsig}.\]
\end{propC}
\begin{proof}
    (In \cite[Theorem VI.3]{nakamuraPartialDerivativesGraphs2017}, this was shown via the tree unwinding technique in modal logic. The following is a simpler proof \cite[Proposition 10.2]{brunetPetriAutomata2017} based on the \kl{graph language} characterization.)
    For all \kl{Kleene lattice terms} $\term[2]$ and \kl{words} $\word \in \vsig^*$, we have:
    \[\exists \graph[2] \in \glang(\term[2]), \graph[2] \homo \const{G}(\word) \;\iff\; \word \in \ljump{\term[2]}_{\vsig}\]
    where $\const{G}(\word)$ denotes the unique \kl{graph} such that $\glang(\word) = \set{\const{G}(\word)}$.
    It is shown by easy induction on $\term[2]$.
    Thus by \Cref{proposition: glang and equational theory}, this completes the proof.
\end{proof}

\begin{rem}\label{rem: langauge model}
    The (word) language model $\LANG$ \cite{bloomNotesEquationalTheories1995,andrekaEquationalTheoryKleene2011,brunetReversibleKleeneLattices2017} is another interesting model of \kl{PCoR* terms}, where
    the constant $\top$ is interpreted as the universal language,
    the unary operator $\bl^{\smile}$ is interpreted as the reverse of the language: $L^{\smile} \defeq \set{a_n \dots a_2 a_1 \mid a_1 a_2 \dots a_n \in L}$,
    and the other operators are defined based on the above $\ljump{\bl}_{\vsig}$.
    In $\LANG$, each variable $x$ may interpret not only $\set{x}$ but also a (possibly non-singleton) language.
    Notably for \kl{identity-free Kleene lattices},
    the \kl{equational theory} with respect to $\LANG$ coincides with that with respect to $\REL$ \cite[Theorem 4.1]{andrekaEquationalTheoryKleene2011}.
    However, for \kl{Kleene lattices} and \kl{PCoR*}, this coincidence is broken, \eg, \Cref{equation: allegories 2,equation: PCoR* 1,equation: PCoR* 2} are counterexamples.
    For the \kl{equational theory} of \kl{Kleene lattices} (with complement),
    $\LANG$ can be characterized as a subclass of (linearly ordered) (generalized) relational models \cite{nakamuraFiniteRelationalSemantics2025}.
\end{rem}

\begin{rem}
The restriction of \Cref{proposition: glang and lang} is crucial.
\Cref{proposition: glang and lang} fails
when $\intersection$, $\top$, or $\bl^{\smile}$ occurs in $\term[1]$.
For instance, if $\term[1] = a \intersection b$ and $\term[2] = \emp$ where $a \neq b$, then we have $\ljump{\term[1]}_{\vsig} = \ljump{\term[2]}_{\vsig} = \emptyset$ but $a \intersection b \le \emp$ does not hold with respect to $\REL$.
\Cref{proposition: glang and lang} also fails when $\top$ or $\bl^{\smile}$ occurs in $\term[2]$ (where $\top$ expresses the universal language and $\bl^{\smile}$ expresses the reverse, according to \Cref{rem: langauge model});
let us consider, $\REL \models a \le a \top a$ or $\REL \models a \le a a^{\smile} a$.
In such cases, \kl{graph languages} are needed (\Cref{proposition: glang and equational theory}).
\end{rem}

\subsection{Kleene lattice terms: a core fragment of PCoR* terms}\label{section: Kleene lattice term}
In the sequel, we mainly consider \kl[KL terms]{Kleene lattice (KL) terms}.
We say that a \kl{PCoR* term} $\term$ is a \intro*\kl{KL term} if $\term$ contains neither $\top$ nor $\bl^{\smile}$.
Namely, the set of \kl{KL terms} is given by:
\begin{align*}
    \term[1], \term[2], \term[3] \;\Coloneqq\; \aterm \mid \id \mid \emp \mid \term[1] \compo \term[2] \mid \term[1] \union \term[2] \mid \term[1] \intersection \term[2] \mid \term[1]^*. \tag{$a \in \vsig$}
\end{align*}

\subsubsection{Second alternative semantics: run (DAG with vertices of one fan-in and one fan-out) languages}\label{section: runs}
We consider \kl{run languages}, which is a slightly modified semantics of \kl{graph languages}.
This is inspired by the runs in \kl{branching automata} \cite{lodayaSeriesparallelPosetsAlgebra1998,lodayaSeriesParallelLanguages2000,lodayaRationalityAlgebrasSeries2001}.
We use the symbols $\fork$ and $\join$ for expressing the forking and joining of runs, respectively.
Let $\vsig_{\fork, \join}$ be the set $\vsig \cup \set{\fork, \join}$ with the map $\ty$ defined by:
$\ty(a) = \tuple{1, 1} \mbox{ for } a \in \vsig$,
$\ty(\fork) = \tuple{1, 2}$, and
$\ty(\join) = \tuple{2, 1}$.
A \intro*\kl{run with $\tuple{n,m}$-interface} $\graph$ is a \kl{graph} with $\tuple{n, m}$-interface over $\vsig_{\fork, \join}$ where
\begin{itemize}
    \item $\graph$ forms a directed acyclic graph (DAG), namely, the following holds:
        \[\left(\bigcup_{a \in \vsig_{\fork, \join}}\bigcup_{i \in \range{\ty_1(a)}}\bigcup_{j \in \range{\ty_2(a)}} \set{\tuple{x_i, y_j} \mid \tuple{x_1, \dots, x_{\ty_1(a)}, y_1, \dots, y_{\ty_2(a)}} \in a^{\graph}} \right)^+ \cap \triangle_{\univ{\graph}} = \emptyset.\]
        \item $\mathsf{in}_{\graph}(z) = \mathsf{out}_{\graph}(z) = 1$ for each $z \in \univ{\graph}$, where
%FIXED Adjusted Spacing
        \hspace{2em}
        \begin{align*}
            \mathsf{in}_{\graph}(z) &\defeq \card \set{l \in \range{n} \mid \src_l^{\graph} = z}
            +\!\!\!\! \sum_{\substack{a \in \vsig_{\fork, \join}\\ i \in \range{\ty_1(a)}}}\!\!\!\!  \card \set{\tuple{x_1,\dots, x_{\ty_1(a)}, y_1, \dots, y_{\ty_2(a)}} \! \in \! a^{\graph} \mid x_i = z},  \\
            \mathsf{out}_{\graph}(z) &\defeq \card \set{r \in \range{m} \mid \tgt_r^{\graph} = z}
            +\!\!\!\!  \sum_{\substack{a \in \vsig_{\fork, \join}\\ j \in \range{\ty_2(a)}}}\!\!\!\!  \card \set{\tuple{x_1,\dots, x_{\ty_1(a)}, y_1, \dots, y_{\ty_2(a)}} \! \in \! a^{\graph} \mid y_j = z}.
        \end{align*}
\end{itemize}

\begin{defi}\label{definition: run language}
    The \intro*\kl{run language} $\run(\term)$ of a \kl{KL term} $\term$ is a set of \kl[run with $\tuple{n,m}$-interface]{runs with $\tuple{1,1}$-interface}, defined as follows:
    \begin{align*}
        \run(\aterm)                       & \defeq \set*{
        \begin{tikzpicture}[baseline = -.5ex]
                \graph[grow right = 1.cm, branch down = 6ex, nodes={mynode}]{
                {0/{}[draw, circle]}-!-{1/{}[draw, circle]}
                };
                \node[left = .5em of 0](l){};
                \node[right = .5em of 1](r){};
                \graph[use existing nodes, edges={color=black, pos = .5, earrow}, edge quotes={fill=white, inner sep=1pt,font= \scriptsize}]{
                0 ->["$\aterm$"] 1;
                l -> 0; 1 -> r;
                };
            \end{tikzpicture}}                                     \quad \mbox{for $\aterm \in \vsig$}, &
        \run(\id)                    & \defeq \set*{
            \begin{tikzpicture}[baseline = -.5ex]
                    \graph[grow right = 1.cm, branch down = 6ex, nodes={mynode}]{
                    {1/{}[draw, circle]}
                    };
                    \node[left = .5em of 1](l){};
                    \node[right = .5em of 1](r){};
                    \graph[use existing nodes, edges={color=black, pos = .5, earrow}, edge quotes={fill=white, inner sep=1pt,font= \scriptsize}]{
                        l -> 1; 1 -> r;
                    };
                \end{tikzpicture}}, \hspace{.5em}
        \run(\emp) \defeq \emptyset,                                                                                   \\
       \run(\term[1] \compo \term[2]) & \defeq \set*{ \begin{tikzpicture}[baseline = -.5ex]
            \graph[grow right = 1.cm, branch down = 2.5ex, nodes={mynode}]{
            {s1/{}[draw, circle]}
            -!- {mid/{}[draw, circle]}
            -!- {t1/{}[draw, circle]}
            };
            \node[left = 4pt of s1](s1l){} edge[earrow, ->] (s1);
            \node[right = 4pt of t1](t1l){}; \path (t1) edge[earrow, ->] (t1l);
            \path (s1) edge [draw = white, opacity = 0] node[pos= .5, elabel](tau12){$\graph[1]$}(mid);
            \path (mid) edge [draw = white, opacity = 0] node[pos= .5, elabel](tau21){$\graph[2]$}(t1);
            \graph[use existing nodes, edges={color=black, pos = .5, earrow}, edge quotes={fill=white, inner sep=1pt,font= \scriptsize}]{
            s1 -- {tau12} -> {mid};
            {mid} -- {tau21} -> t1;
            };
        \end{tikzpicture} \mid \graph[1] \in \run(\term[1]) \land \graph[2] \in \run(\term[2])},                                                              &
        \run(\term[1] \union \term[2])  & \defeq \run(\term[1]) \cup \run(\term[2]),                                                          \\
        \run(\term[1] \intersection \term[2])  & \defeq
        \set*{\hspace{-.5em}\begin{tikzpicture}[baseline = -2.ex]
            \graph[grow right = .8cm, branch down = 1.5ex, nodes={}]{
            {/, s1/{}[mynode, draw, circle]}
            -!- {s2/{}[mynode, draw, circle], /, s3/{}[mynode, draw, circle]}
            -!- {t2/{}[mynode, draw, circle], /, t3/{}[mynode, draw, circle]}
            -!- {/, t1/{}[mynode, draw, circle]}
            };
            \node[left = 4pt of s1](s1l){} edge[earrow, ->] (s1);
            \node[right = 4pt of t1](t1l){}; \path (t1) edge[earrow, ->] (t1l);
            \path (s2) edge [draw = white, opacity = 0] node(smid){} (s3);
            \path (t2) edge [draw = white, opacity = 0] node(tmid){} (t3);
            \path (s1) edge [draw = white, opacity = 0] node[pos= .5, elabel](tau12){$\fork$}(smid);
            \path (tmid) edge [draw = white, opacity = 0] node[pos= .5, elabel](tau21){$\join$}(t1);
            \graph[use existing nodes, edges={color=black, pos = .5, earrow}, edge quotes={fill=white, inner sep=1pt,font= \scriptsize}]{
            s1 -- tau12; tau12 ->["$1$"{auto}] s2; tau12 ->["$2$"{auto, below = .5ex}] s3;
            t2 --["$1$" auto] tau21; t3 --["$2$"{auto, below = .5ex}] tau21; tau21 -> t1;
            s2 ->["$\graph[1]$"] t2;
            s3 ->["$\graph[2]$"] t3;
            };
        \end{tikzpicture}\hspace{-.5em} \mid \graph[1] \in \run(\term[1]) \land \graph[2] \in \run(\term[2])}, &
        \run(\term[1]^{*})            & \defeq
        \bigcup_{n \ge 0} \run(\term[1]^n).    %
    \end{align*}
\end{defi}
By construction, every \kl{run} of \kl{KL terms} is a directed two-terminal series-parallel graph \cite{valdesRecognitionSeriesParallel1979,eppsteinParallelRecognitionSeriesparallel1992}, by viewing labels $\fork$ and $\join$ as vertices.
For instance (\cf\ \Cref{section: graph languages}),
\begin{align*}
    \run(\aterm[1] \intersection (\aterm[2] \aterm[3])) \;&=\; \set*{\begin{tikzpicture}[baseline = -3.ex]
                 \graph[grow right = 1.cm, branch down = 2.5ex, nodes={font = \scriptsize}]{
                 {/, s1/{}[mynode, draw, circle]}
                 -!- {c1/[mynode, draw, circle], /, d1/[mynode, draw, circle]}
                 -!- {c2/[mynode, draw, circle], /, d2/[mynode, draw, circle]}
                 -!- {c3/, /, d3/[mynode, draw, circle]}
                 -!- {/, t1/{}[mynode, draw, circle]}
                 };
                 \node[left = 4pt of s1](s1l){} edge[earrow, ->] (s1);
                 \node[right = 4pt of t1](t1l){}; \path (t1) edge[earrow, ->] (t1l);
                 \path (c1) edge [draw = white, opacity = 0] node(mid1){} (d1);
                 \path (s1) edge [draw = white, opacity = 0] node(f)[opacity = 1,elabel]{$\fork$} (mid1);
                 \path (c3) edge [draw = white, opacity = 0] node(mid3){} (d3);
                 \path (mid3) edge [draw = white, opacity = 0] node(j)[opacity = 1,elabel]{$\join$} (t1);
                 \graph[use existing nodes, edges={color=black, pos = .5, earrow}, edge quotes={fill=white, inner sep=1pt,font= \scriptsize}]{
                 s1 -- f; f ->["$1$" auto] c1; f ->["$2$" {auto, below left = .3ex}] d1;
                 c1 ->["$\aterm$"] c2;
                 d1 ->["$\aterm[2]$"] d2 -> ["$\aterm[3]$"] d3;
                 c2 -- ["$1$"{auto}, pos = .8] j; d3 -- ["$2$" {auto, below = .5ex}] j; j -> t1;
                 };
             \end{tikzpicture}},\\
    \run((((\aterm[1] \aterm[1]) \intersection \id) \aterm[2] \aterm[1] \aterm[3]) \union \aterm[4]) \;&=\; \set*{\begin{tikzpicture}[baseline = -3.ex]
    \graph[grow right = .8cm, branch down = 2.5ex, nodes={font = \scriptsize}]{
    {/, s1/{}[mynode, draw, circle]}
    -!- {c1/{}[mynode, draw, circle], /, d1/{}[mynode, draw, circle]}
    -!- {c2/{}[mynode, draw, circle], /, d2/{}}
    -!- {c3/{}[mynode, draw, circle], /, d3/{}}
    -!- {/, c4/{}[mynode, draw, circle]}
    -!- {/, c5/{}[mynode, draw, circle]}
    -!- {/, c6/{}[mynode, draw, circle]}
    -!- {/, t1/{}[mynode, draw, circle]}
    };
    \node[left = 4pt of s1](s1l){} edge[earrow, ->] (s1);
    \node[right = 4pt of t1](t1l){}; \path (t1) edge[earrow, ->] (t1l);
    \path (c1) edge [draw = white, opacity = 0] node(mid1){} (d1);
    \path (s1) edge [draw = white, opacity = 0] node(f)[opacity = 1,elabel]{$\fork$} (mid1);
    \path (c3) edge [draw = white, opacity = 0] node(mid3){} (d3);
    \path (mid3) edge [draw = white, opacity = 0] node(j)[opacity = 1,elabel]{$\join$} (c4);
\graph[use existing nodes, edges={color=black, pos = .5, earrow}, edge quotes={fill=white, inner sep=1pt,font= \scriptsize}]{
    s1 -- f; f ->["$1$" auto] c1; f->["$2$"{auto, below left = .3ex}] d1;
    c1 ->["$\aterm$"] c2;
    c2 ->["$\aterm$"] c3;
    c3 --["$1$"{auto}] j; d1 --["$2$"{auto, below = .5ex},  pos = .8] j; j -> c4;
    c4 ->["$\aterm[2]$"] c5 -> ["$\aterm[1]$"] c6 ->["$\aterm[3]$"] t1;
    };
    \end{tikzpicture}, \begin{tikzpicture}[baseline = -.5ex]
        \graph[grow right = .8cm, branch down = 6ex, nodes={mynode}]{
        {0/{}[draw, circle]}-!-{1/{}[draw, circle]}
        };
        \node[left = .5em of 0](l){};
        \node[right = .5em of 1](r){};
        \graph[use existing nodes, edges={color=black, pos = .5, earrow}, edge quotes={fill=white, inner sep=1pt,font= \scriptsize}]{
        0 ->["$\aterm[4]$"] 1;
        l -> 0; 1 -> r;
        };
    \end{tikzpicture}},\\
    \run(\aterm^*) \;&=\; \set*{
        \begin{tikzpicture}[baseline = -.5ex]
            \graph[grow right = .8cm, branch down = 6ex, nodes={mynode}]{
            {1/{}[draw, circle]}
            };
            \node[left = .5em of 1](l){};
            \node[right = .5em of 1](r){};
            \graph[use existing nodes, edges={color=black, pos = .5, earrow}, edge quotes={fill=white, inner sep=1pt,font= \scriptsize}]{
                l -> 1; 1 -> r;
            };
        \end{tikzpicture}, 
        \begin{tikzpicture}[baseline = -.5ex]
                \graph[grow right = .8cm, branch down = 6ex, nodes={mynode}]{
                {0/{}[draw, circle]}-!-{1/{}[draw, circle]}
                };
                \node[left = .5em of 0](l){};
                \node[right = .5em of 1](r){};
                \graph[use existing nodes, edges={color=black, pos = .5, earrow}, edge quotes={fill=white, inner sep=1pt,font= \scriptsize}]{
                0 ->["$\aterm$"] 1;
                l -> 0; 1 -> r;
                };
            \end{tikzpicture},
            \begin{tikzpicture}[baseline = -.5ex]
                \graph[grow right = .8cm, branch down = 6ex, nodes={mynode}]{
                {0/{}[draw, circle]}-!-{1/{}[draw, circle]}-!-{2/{}[draw, circle]}
                };
                \node[left = .5em of 0](l){};
                \node[right = .5em of 2](r){};
                \graph[use existing nodes, edges={color=black, pos = .5, earrow}, edge quotes={fill=white, inner sep=1pt,font= \scriptsize}]{
                0 ->["$\aterm$"] 1 ->["$\aterm$"] 2;
                l -> 0; 2 -> r;
                };
            \end{tikzpicture}, \dots }.
\end{align*}
\noindent %FIXED indent
We can also use the \kl{run languages} as an alternative relational semantics.
Let $\struc$ be a \kl{structure}.
For $\vec{x}, \vec{y} \in \univ{\struc}^{*}$,
let $\tilde{\const{G}}(\struc, \vec{x}, \vec{y}) \defeq \tuple{\univ{\struc}, \set{a^{\tilde{\const{G}}(\struc, \vec{x}, \vec{y})}}_{a \in \vsig_{\fork, \join}}, \vec{x}, \vec{y}}$ where
$a^{\tilde{\const{G}}(\struc, \vec{x}, \vec{y})} = \jump{a}_{\struc} \mbox{ for } a \in \vsig$
and
$\fork^{\tilde{\const{G}}(\struc, \vec{x}, \vec{y})} = \join^{\tilde{\const{G}}(\struc, \vec{x}, \vec{y})} = \set{\tuple{z, z, z} \mid z \in \univ{\struc}}$.

For a \kl{run} $\graph[2]$ and a set $\glang$ of \kl{runs},
we define $\jump{\graph[2]}_{\struc}$ and $\jump{\glang}_{\struc}$ as follows:
\begin{align*}
    \jump{\graph[2]}_{\struc} & \defeq \set{\vec{x} \vec{y} \mid \vec{x} \in \univ{\struc}^{\ty_1(\graph[2])} \land \vec{y} \in \univ{\struc}^{\ty_2(\graph[2])} \land  \graph[2] \homo \tilde{\const{G}}(\struc, \vec{x}, \vec{y})}, & \jump{\glang}_{\struc} \defeq \bigcup_{\graph[2] \in \glang} \jump{\graph[2]}_{\struc}.
\end{align*}
We then have the following as with \Cref{proposition: graph language}.
\begin{prop}\label{proposition: run language}
    Let $\struc$ be a \kl{structure}.
    For all \kl{KL terms} $\term$, we have $\jump{\term}_{\struc} = \jump{\run(\term)}_{\struc}$.
\end{prop}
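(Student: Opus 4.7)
The plan is to prove $\jump{\term}_{\struc} = \jump{\run(\term)}_{\struc}$ by induction on the \kl{KL term} $\term$, mirroring the proof of \Cref{proposition: graph language} but handling intersection via the fork/join convention instead of via superposition of edges. The key observation is that the interpretations $\fork^{\tilde{\const{G}}(\struc, \vec x, \vec y)} = \join^{\tilde{\const{G}}(\struc, \vec x, \vec y)} = \set{\tuple{z,z,z} \mid z \in \univ{\struc}}$ force any \kl{graph homomorphism} $f$ from a run of shape $\fork$ (resp.\ $\join$) into $\tilde{\const{G}}(\struc, \vec x, \vec y)$ to send all three incident vertices to the same element of $\univ{\struc}$. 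Thus fork and join operate semantically as identity nodes, and their sole purpose in a \kl{run} is to keep the ``one fan-in and one fan-out'' invariant.

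First I would dispatch the base cases: for $\aterm \in \vsig$ the singleton run with a single $\aterm$-edge gives $\jump{\run(\aterm)}_{\struc} = \aterm^{\struc} = \jump{\aterm}_{\struc}$; for $\eps$ the one-vertex run yields exactly $\triangle_{\univ{\struc}}$; for $\emp$ both sides are empty. For $\term[1] \union \term[2]$ the claim reduces, via $\run(\term[1] \union \term[2]) = \run(\term[1]) \cup \run(\term[2])$ and the distributivity of $\jump{-}_{\struc}$ over unions of run sets, to the induction hypothesis. For $\term[1] \compo \term[2]$ I would argue that a \kl{graph homomorphism} from a series composition $\graph[1] \series \graph[2]$ into $\tilde{\const{G}}(\struc, x, y)$ is determined by its value $z$ on the glued middle vertex together with homomorphisms of $\graph[1]$ into $\tilde{\const{G}}(\struc, x, z)$ and of $\graph[2]$ into $\tilde{\const{G}}(\struc, z, y)$; this gives exactly relational composition. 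For $\term[1]^{*}$ the Kleene-star case follows from the composition case plus $\run(\term[1]^*) = \bigcup_{n \ge 0} \run(\term[1]^n)$.

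The main obstacle is the intersection case $\term[1] \intersection \term[2]$. A \kl{run} $\graph \in \run(\term[1] \intersection \term[2])$ has the fork--parallel--join shape. A homomorphism $f$ from $\graph$ into $\tilde{\const{G}}(\struc, x, y)$ must send the fork vertex to some $z_{\fork}$ and, because $\fork^{\tilde{\const{G}}(\struc, x, y)}$ consists only of triples $\tuple{z,z,z}$, must also send the two target vertices of the fork to $z_{\fork}$; by $f(\src^{\graph}) = x$ we conclude $z_{\fork} = x$, so both inner runs have their \kl{source} mapped to $x$. The symmetric argument with $\join$ forces both inner runs' \kl{targets} to be mapped to $y$. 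Hence $f$ restricts to homomorphisms of $\graph[1]$ and $\graph[2]$ into $\tilde{\const{G}}(\struc, x, y)$, and conversely any such pair extends to a homomorphism of $\graph$. By the induction hypothesis this gives $\jump{\run(\term[1] \intersection \term[2])}_{\struc} = \jump{\run(\term[1])}_{\struc} \cap \jump{\run(\term[2])}_{\struc} = \jump{\term[1]}_{\struc} \cap \jump{\term[2]}_{\struc}$, as required.

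A small technical point worth flagging is that in the intersection case the homomorphism's behavior on the non-interface vertices of the fork/join gadget is forced but not unique in the sense that the three vertices may be identified in the structure already; nevertheless the existence of a homomorphism from $\graph$ is equivalent to the existence of homomorphisms from $\graph[1]$ and $\graph[2]$ agreeing on the common interface $\tuple{x, y}$, which is what intersection of relations requires. Once these cases are in place, a clean induction gives the proposition.
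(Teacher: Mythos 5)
Your proof is correct and follows essentially the same route as the paper, which dispatches this proposition with the one-line remark ``by easy induction on $\term$ using distributivity of $\compo$ and $\intersection$''; your write-up simply fills in the details of that induction, with the fork/join collapse argument and the identity $\bigcup_{i,j}(A_i \cap B_j) = (\bigcup_i A_i) \cap (\bigcup_j B_j)$ being exactly the ``distributivity'' the paper alludes to. No gaps.
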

\begin{proof}
    By easy induction on $\term$ using distributivity of $\compo$ and $\intersection$.
\end{proof}

\subsubsection{Compositions on runs}\label{section: compositions of runs}
We recall the \kl{series composition} $\series$, the \kl{parallel composition} $\parallel$, and the identity elements $\id^{n}$ (\Cref{definition: series composition and parallel composition}).
The set of \intro*\kl{atomic runs} is defined as $\bar{\vsig} \defeq \set{b^{n}_{i} \mid b \in \vsig_{\fork, \join}, n \ge 1, i \in \range{n}}$ where
$a^{n}_{i} \defeq \id^{i-1} \parallel \left(\begin{tikzpicture}[baseline = -.5ex]
    \graph[grow right = 1.cm, branch down = 6ex, nodes={mynode}]{
    {0/{}[draw, circle]} -!-
    {1/{}[draw, circle]}
    };
    \node[left = .5em of 0](l){};
    \node[right = .5em of 1](r){};
    \graph[use existing nodes, edges={color=black, pos = .5, earrow}, edge quotes={fill=white, inner sep=1pt,font= \scriptsize}]{
        l -> 0 ->["$a$"] 1 -> r;
    };
\end{tikzpicture}\right) \parallel \id^{n - i}$ for $a \in \vsig$,
$\fork^{n}_{i} \defeq \id^{i-1} \parallel \left(\hspace{-.5em}\begin{gathered}
    \begin{tikzpicture}[baseline = -.5ex]
        \graph[grow right = .5cm, branch down = 2ex, nodes={}]{
        {/, 0/{}[draw,circle, mynode]} -!-
        {/, c/{\scriptsize $\fork$}[elabel]} -!-
        {10/{}[draw, circle, mynode], /, 11/{}[draw, circle, mynode]}
        };
        \node[left = .5em of 0](l){};
        \node[right = .5em of 10](r0){\tiny $1$};
        \node[right = .5em of 11](r1){\tiny $2$};
        \graph[use existing nodes, edges={color=black, pos = .5, earrow}, edge quotes={fill=white, inner sep=1pt,font= \scriptsize}]{
        l -> 0 -- c;
        c ->["$1$" auto] 10; c->["$2$"{auto, below left = .3ex}] 11;
        10 -> r0; 11 -> r1;
        };
    \end{tikzpicture}
    \end{gathered}\right) \parallel \id^{n - i}$, and
$\join^{n}_{i} \defeq \id^{i-1} \parallel \left(
    \begin{gathered}
        \begin{tikzpicture}[baseline = -.5ex]
        \graph[grow right = .5cm, branch down = 2ex, nodes={}]{
        {00/{}[draw, circle, mynode], /,01/{}[draw,circle, mynode]} -!-
        {/, c/{\scriptsize $\join$}[elabel]} -!-
        {/, 1/{}[draw, circle, mynode]}
        };
        \node[left = .5em of 00](l0){\tiny $1$};
        \node[left = .5em of 01](l1){\tiny $2$};
        \node[right = .5em of 1](r){\tiny };
        \graph[use existing nodes, edges={color=black, pos = .5, earrow}, edge quotes={fill=white, inner sep=1pt,font= \scriptsize}]{
            l0 -> 00; l1 -> 01;
            00 --["$1$" auto] c; 01--["$2$"{auto, below right = .3ex}] c;
            c -> 1 -> r;
        };
    \end{tikzpicture}
\end{gathered}\hspace{-.5em}\right) \parallel \id^{n - i}$.
Namely, $a^{n}_{i}$, $\fork^{n}_{i}$, and $\join^{n}_{i}$ are given as follows:
\begin{align*}
    a^{n}_{i} &= \left(
        \begin{gathered}
            \begin{tikzpicture}[baseline = -.5ex]
                \graph[grow right = 1.cm, branch down = 6ex, nodes={mynode}]{
                {1/{}[draw, circle]}
                };
                \node[left = .5em of 1](l){\tiny $1$};
                \node[right = .5em of 1](r){\tiny $1$};
                \graph[use existing nodes, edges={color=black, pos = .5, earrow}, edge quotes={fill=white, inner sep=1pt,font= \scriptsize}]{
                    l -> 1;
                    1 -> r;
                };
            \end{tikzpicture}\\
            \vdots\\
            \begin{tikzpicture}[baseline = -.5ex]
                \graph[grow right = 1.cm, branch down = 6ex, nodes={mynode}]{
                {1/{}[draw, circle]}
                };
                \node[left = .5em of 1](l){\tiny $i-1$};
                \node[right = .5em of 1](r){\tiny $i-1$};
                \graph[use existing nodes, edges={color=black, pos = .5, earrow}, edge quotes={fill=white, inner sep=1pt,font= \scriptsize}]{
                    l -> 1;
                    1 -> r;
                };
            \end{tikzpicture}\\
            \begin{tikzpicture}[baseline = -.5ex]
            \graph[grow right = 1.cm, branch down = 6ex, nodes={mynode}]{
            {0/{}[draw, circle]} -!-
            {1/{}[draw, circle]}
            };
            \node[left = .5em of 0](l){\tiny $i$};
            \node[right = .5em of 1](r){\tiny $i$};
            \graph[use existing nodes, edges={color=black, pos = .5, earrow}, edge quotes={fill=white, inner sep=1pt,font= \scriptsize}]{
                l -> 0 ->["$a$"] 1 -> r;
            };
        \end{tikzpicture}\\
        \begin{tikzpicture}[baseline = -.5ex]
            \graph[grow right = 1.cm, branch down = 6ex, nodes={mynode}]{
            {1/{}[draw, circle]}
            };
            \node[left = .5em of 1](l){\tiny $i+1$};
            \node[right = .5em of 1](r){\tiny $i+1$};
            \graph[use existing nodes, edges={color=black, pos = .5, earrow}, edge quotes={fill=white, inner sep=1pt,font= \scriptsize}]{
                l -> 1;
                1 -> r;
            };
        \end{tikzpicture}\\
        \vdots\\
        \begin{tikzpicture}[baseline = -.5ex]
            \graph[grow right = 1.cm, branch down = 6ex, nodes={mynode}]{
            {1/{}[draw, circle]}
            };
            \node[left = .5em of 1](l){\tiny $n$};
            \node[right = .5em of 1](r){\tiny $n$};
            \graph[use existing nodes, edges={color=black, pos = .5, earrow}, edge quotes={fill=white, inner sep=1pt,font= \scriptsize}]{
                l -> 1;
                1 -> r;
            };
        \end{tikzpicture}
    \end{gathered}\right), &
    \fork^{n}_{i} &=  \left(
        \begin{gathered}
            \begin{tikzpicture}[baseline = -.5ex]
                \graph[grow right = 1.cm, branch down = 6ex, nodes={mynode}]{
                {1/{}[draw, circle]}
                };
                \node[left = .5em of 1](l){\tiny $1$};
                \node[right = .5em of 1](r){\tiny $1$};
                \graph[use existing nodes, edges={color=black, pos = .5, earrow}, edge quotes={fill=white, inner sep=1pt,font= \scriptsize}]{
                    l -> 1;
                    1 -> r;
                };
            \end{tikzpicture}\\
            \vdots\\
            \begin{tikzpicture}[baseline = -.5ex]
                \graph[grow right = 1.cm, branch down = 6ex, nodes={mynode}]{
                {1/{}[draw, circle]}
                };
                \node[left = .5em of 1](l){\tiny $i-1$};
                \node[right = .5em of 1](r){\tiny $i-1$};
                \graph[use existing nodes, edges={color=black, pos = .5, earrow}, edge quotes={fill=white, inner sep=1pt,font= \scriptsize}]{
                    l -> 1;
                    1 -> r;
                };
            \end{tikzpicture}\\
            \begin{tikzpicture}[baseline = -.5ex]
            \graph[grow right = .5cm, branch down = 2ex, nodes={}]{
            {/, 0/{}[draw,circle, mynode]} -!-
            {/, c/{\scriptsize $\fork$}[elabel]} -!-
            {10/{}[draw, circle, mynode], /, 11/{}[draw, circle, mynode]}
            };
            \node[left = .5em of 0](l){\tiny $i$};
            \node[right = .5em of 10](r0){\tiny $i$};
            \node[right = .5em of 11](r1){\tiny $i+1$};
            \graph[use existing nodes, edges={color=black, pos = .5, earrow}, edge quotes={fill=white, inner sep=1pt,font= \scriptsize}]{
            l -> 0 -- c;
            c ->["$1$" auto] 10; c->["$2$"{auto, below left = .3ex}] 11;
            10 -> r0; 11 -> r1;
            };
        \end{tikzpicture}\\
        \begin{tikzpicture}[baseline = -.5ex]
            \graph[grow right = 1.cm, branch down = 6ex, nodes={mynode}]{
            {1/{}[draw, circle]}
            };
            \node[left = .5em of 1](l){\tiny $i+1$};
            \node[right = .5em of 1](r){\tiny $i+2$};
            \graph[use existing nodes, edges={color=black, pos = .5, earrow}, edge quotes={fill=white, inner sep=1pt,font= \scriptsize}]{
                l -> 1;
                1 -> r;
            };
        \end{tikzpicture}\\
        \vdots\\
        \begin{tikzpicture}[baseline = -.5ex]
            \graph[grow right = 1.cm, branch down = 6ex, nodes={mynode}]{
            {1/{}[draw, circle]}
            };
            \node[left = .5em of 1](l){\tiny \phantom{${}+1$}$n$};
            \node[right = .5em of 1](r){\tiny $n+1$};
            \graph[use existing nodes, edges={color=black, pos = .5, earrow}, edge quotes={fill=white, inner sep=1pt,font= \scriptsize}]{
                l -> 1;
                1 -> r;
            };
        \end{tikzpicture}
    \end{gathered}\right), &
    \join^{n}_{i} &= \left(
        \begin{gathered}
            \begin{tikzpicture}[baseline = -.5ex]
                \graph[grow right = 1.cm, branch down = 6ex, nodes={mynode}]{
                {1/{}[draw, circle]}
                };
                \node[left = .5em of 1](l){\tiny $1$};
                \node[right = .5em of 1](r){\tiny $1$};
                \graph[use existing nodes, edges={color=black, pos = .5, earrow}, edge quotes={fill=white, inner sep=1pt,font= \scriptsize}]{
                    l -> 1;
                    1 -> r;
                };
            \end{tikzpicture}\\
            \vdots\\
            \begin{tikzpicture}[baseline = -.5ex]
                \graph[grow right = 1.cm, branch down = 6ex, nodes={mynode}]{
                {1/{}[draw, circle]}
                };
                \node[left = .5em of 1](l){\tiny $i-1$};
                \node[right = .5em of 1](r){\tiny $i-1$};
                \graph[use existing nodes, edges={color=black, pos = .5, earrow}, edge quotes={fill=white, inner sep=1pt,font= \scriptsize}]{
                    l -> 1;
                    1 -> r;
                };
            \end{tikzpicture}\\
            \begin{tikzpicture}[baseline = -.5ex]
            \graph[grow right = .5cm, branch down = 2ex, nodes={}]{
            {00/{}[draw, circle, mynode], /,01/{}[draw,circle, mynode]} -!-
            {/, c/{\scriptsize $\join$}[elabel]} -!-
            {/, 1/{}[draw, circle, mynode]}
            };
            \node[left = .5em of 00](l0){\tiny $i$};
            \node[left = .5em of 01](l1){\tiny $i+1$};
            \node[right = .5em of 1](r){\tiny $i$};
            \graph[use existing nodes, edges={color=black, pos = .5, earrow}, edge quotes={fill=white, inner sep=1pt,font= \scriptsize}]{
                l0 -> 00; l1 -> 01;
                00 --["$1$" auto] c; 01--["$2$"{auto, below right = .3ex}] c;
                c -> 1 -> r;
            };
        \end{tikzpicture}\\
        \begin{tikzpicture}[baseline = -.5ex]
            \graph[grow right = 1.cm, branch down = 6ex, nodes={mynode}]{
            {1/{}[draw, circle]}
            };
            \node[left = .5em of 1](l){\tiny $i+2$};
            \node[right = .5em of 1](r){\tiny $i+1$};
            \graph[use existing nodes, edges={color=black, pos = .5, earrow}, edge quotes={fill=white, inner sep=1pt,font= \scriptsize}]{
                l -> 1;
                1 -> r;
            };
        \end{tikzpicture}\\
        \vdots\\
        \begin{tikzpicture}[baseline = -.5ex]
            \graph[grow right = 1.cm, branch down = 6ex, nodes={mynode}]{
            {1/{}[draw, circle]}
            };
            \node[left = .5em of 1](l){\tiny $n+1$};
            \node[right = .5em of 1](r){\tiny $n$\phantom{${}+1$}};
            \graph[use existing nodes, edges={color=black, pos = .5, earrow}, edge quotes={fill=white, inner sep=1pt,font= \scriptsize}]{
                l -> 1;
                1 -> r;
            };
        \end{tikzpicture}
    \end{gathered}\right).
\end{align*}
Every \kl{run} can be expressed as \kl{series compositions} of \kl{atomic runs}.
\begin{prop}\label{proposition: run atomic}
    For every \kl{run} $\graph$ with $n \ge 1$ edges,
    there exist some $a_1, \dots, a_n \in \bar{\vsig}$ such that $\graph = a_1 \series \dots \series a_n$.
\end{prop}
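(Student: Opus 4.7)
The plan is to leverage the \kl[series-parallel run]{series-parallel} structure of runs noted just before the proposition: every non-empty \kl{run} of a \kl{KL term} decomposes recursively as either a single edge labeled in $\vsig \cup \set{\fork, \join}$, a \kl{series composition} $\graph[1] \series \graph[2]$, or a \kl{parallel composition} $\graph[1] \parallel \graph[2]$ of smaller non-empty runs. I proceed by induction on the depth of this decomposition. The single-edge case is immediate, since an edge labeled $c \in \vsig$ alone equals $c^{1}_{1}$ (and similarly $\fork^{1}_{1}$, $\join^{1}_{1}$), all in $\bar{\vsig}$. For the series case $\graph = \graph[1] \series \graph[2]$, the induction hypothesis gives $\graph[1] = c_1 \series \dots \series c_k$ and $\graph[2] = d_1 \series \dots \series d_l$ in $\bar{\vsig}^{+}$, and their concatenation is the desired decomposition.

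For the parallel case $\graph = \graph[1] \parallel \graph[2]$, where $\graph[1]$ has $\tuple{n_1, m_1}$-interface and $\graph[2]$ has $\tuple{n_2, m_2}$-interface, the idea is to rewrite $\parallel$ into $\series$ via the padding identity
\[\graph[1] \parallel \graph[2] \;=\; (\graph[1] \parallel \eps^{n_2}) \series (\eps^{m_1} \parallel \graph[2]).\]
Writing $\graph[1] = c_1 \series \dots \series c_k$ by the induction hypothesis and distributing $\parallel \eps^{n_2}$ over $\series$ yields $\graph[1] \parallel \eps^{n_2} = (c_1 \parallel \eps^{n_2}) \series \dots \series (c_k \parallel \eps^{n_2})$; a direct unfolding of the definition $c^{n}_{i} = \eps^{i-1} \parallel c \parallel \eps^{n-i}$ then shows $c^{n}_{i} \parallel \eps^{n_2} = c^{n+n_2}_{i} \in \bar{\vsig}$, and analogously for $\fork^{n}_{i}$ and $\join^{n}_{i}$. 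A symmetric argument treats the right factor $\eps^{m_1} \parallel \graph[2]$, and concatenating the two resulting sequences produces the decomposition of $\graph$.

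The main obstacle is exactly this parallel case, and more specifically the two padding-and-flatten lemmas it relies on: that appending $\eps^{n_2}$ or prepending $\eps^{m_1}$ to an atomic run in $\bar{\vsig}$ yields another element of $\bar{\vsig}$, and that $\parallel \eps^{n_2}$ distributes over $\series$. Both reduce to direct manipulations of the defining $\eps^{i-1} \parallel a \parallel \eps^{n-i}$ patterns using associativity of $\parallel$ with $\eps^{0}$ as identity and the matching of arities at each $\series$ interface. The only other point requiring care is the boundary case where $\graph[1]$ or $\graph[2]$ is itself \kl{empty} (so the induction hypothesis contributes an empty series composition, interpreted as some $\eps^{n}$); once these bookkeeping issues are in hand, the induction completes.
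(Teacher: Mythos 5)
There is a genuine gap: your induction is carried out on a series--parallel decomposition that general \kl{runs} do not possess. The proposition quantifies over \emph{every} non-\kl{empty} \kl{run}, i.e.\ every DAG over $\vsig_{\fork,\join}$ with an arbitrary $\tuple{n,m}$-interface in which each vertex has fan-in and fan-out exactly $1$ --- not only over the \kl{runs} produced by \kl{KL terms}. The sentence you lean on (``every run of KL terms is a directed two-terminal series-parallel graph'') is a property of that special subclass, which the paper later isolates as $\SPR$; it explicitly exhibits \kl{runs} that are not even \kl{sub-SP} (\Cref{remark: non-sub-SP runs}), e.g.\ ones containing the ``N''-shaped configuration of \Cref{remark: N-free}, and \kl{runs} with $n,m>1$ arise throughout (the \kl{atomic runs} themselves, the remainders of \kl{left quotients}, and the prefixes $\trace[3]$ to which the $\struc$-run analogue \Cref{proposition: struc-run atomic} is applied in the proof of \Cref{lemma: derivative}). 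For such a \kl{run} the case split ``single edge, or $\graph[1]\series\graph[2]$, or $\graph[1]\parallel\graph[2]$ of smaller runs'' is not available, so the induction never gets off the ground; in effect, assuming that every run is built by this grammar from single edges is already tantamount to the conclusion, since your own padding computations show the two descriptions are interconvertible.

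The paper's argument is instead an induction on the number of edges of $\graph$: because $\graph$ is a finite DAG in which every vertex has exactly one incoming and one outgoing slot, some tuple in some $a^{\graph}$ has all of its input vertices among the \kl{source} vertices of $\graph$ (take an edge minimal in the induced order on edges); peeling it off writes $\graph = a^{n}_{i}\series\graph'$ for a suitable \kl{atomic run} and a \kl{run} $\graph'$ with one fewer edge, and the induction hypothesis finishes. Your treatment of the parallel case --- the identity $\graph[1]\parallel\graph[2]=(\graph[1]\parallel\eps^{n_2})\series(\eps^{m_1}\parallel\graph[2])$, distributing the padding over $\series$, and $c^{n}_{i}\parallel\eps^{k}=c^{n+k}_{i}$ --- is correct as far as it goes, but it only establishes the statement for the series--parallel subclass, not for the class the proposition (and its later uses) actually requires.
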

\begin{proof}
    By easy induction on $n$.
\end{proof}
For instance,
$\left(\begin{tikzpicture}[baseline = -3.ex]
    \graph[grow right = .8cm, branch down = 2.5ex, nodes={font = \scriptsize}]{
    {/, s1/{}[mynode, draw, circle]}
    -!- {c1/[mynode, draw, circle], /, d1/[mynode, draw, circle]}
    -!- {c2/[mynode, draw, circle], /, d2/[mynode, draw, circle]}
    -!- {c3/, /, d3/[mynode, draw, circle]}
    -!- {/, t1/{}[mynode, draw, circle]}
    };
    \node[left = 4pt of s1](s1l){} edge[earrow, ->] (s1);
    \node[right = 4pt of t1](t1l){}; \path (t1) edge[earrow, ->] (t1l);
    \path (c1) edge [draw = white, opacity = 0] node(mid1){} (d1);
    \path (s1) edge [draw = white, opacity = 0] node(f)[opacity = 1,elabel]{$\fork$} (mid1);
    \path (c3) edge [draw = white, opacity = 0] node(mid3){} (d3);
    \path (mid3) edge [draw = white, opacity = 0] node(j)[opacity = 1,elabel]{$\join$} (t1);
    \graph[use existing nodes, edges={color=black, pos = .5, earrow}, edge quotes={fill=white, inner sep=1pt,font= \scriptsize}]{
    s1 -- f;
    f ->["$1$" auto] c1;
    f ->["$2$"{auto, below left = .3ex}] d1;
    c1 ->["$\aterm$"] c2;
    d1 ->["$\aterm[2]$"] d2 -> ["$\aterm[3]$"] d3;
    c2 --["$1$"{auto}, pos = .8] j;
    d3 --["$2$"{auto, below right = .3pt}] j;
    j -> t1;
    };
\end{tikzpicture}\right) = \fork^{1}_{1} \series \aterm^{2}_{1} \series \aterm[2]^{2}_{2} \series \aterm[3]^{2}_{2} \series \join^{1}_{1}$.
This decomposition is not unique if the ``topological sort of edges'' of the \kl{run} (i.e., the linear ordering of edges such that for every \kl{vertex}, each incoming edge of the \kl{vertex} comes before each outgoing edge of the \kl{vertex} in the ordering) is not unique.
For instance,
$\fork^{1}_{1} \series \aterm[2]^{2}_{2} \series \aterm^{2}_{1} \series \aterm[3]^{2}_{2} \series \join^{1}_{1}$ and
$\fork^{1}_{1} \series \aterm[2]^{2}_{2} \series \aterm[3]^{2}_{2} \series \aterm^{2}_{1} \series \join^{1}_{1}$ also express the same \kl{run} above.

\subsubsection{Left quotients of runs}\label{section: left quotients of runs}
For \kl{runs}, we define \emph{\kl{left quotients}} with respect to \kl{series compositions}.
For \kl{runs} $\graph[1]$, $\graph[2]$, and $\graph[3]$,
we write:
\[\graph[1] \longrightarrow_{\graph[3]} \graph[2] \;\defiff\;  \mbox{$\graph[3] \series \graph[2]$ is defined and } \graph[1] = \graph[3] \series \graph[2].\]
The \intro*\kl{left quotient} $\D_{\graph[3]}(\graph)$ of a \kl{run} $\graph$ with respect to a \kl{run} $\graph[3]$ is the \kl{run language} defined by:
\begin{align*}
    \D_{\graph[3]}(\graph[1]) &\;\defeq\; \set{\graph[2] \mid \graph[1] \longrightarrow_{\graph[3]} \graph[2]}. 
\end{align*}
For a \kl{run language} $\glang$, let $\D_{\graph[3]}(\glang) \defeq \bigcup_{\graph[1] \in \glang} \D_{\graph[3]}(\graph[1])$.
For instance, when $\graph[3] = \left(\begin{tikzpicture}[baseline = -3.ex]
    \graph[grow right = .8cm, branch down = 2.5ex, nodes={font = \scriptsize}]{
    {/, s1/{}[mynode, draw, circle]}
    -!- {c1/[mynode, draw, circle], /, d1/[mynode, draw, circle]}
    -!- {c2/, /, d2/[mynode, draw, circle]}
    };
    \node[left = 4pt of s1](s1l){} edge[earrow, ->] (s1);
    \node[right = 4pt of c1](c1r){\tiny $1$}; \path (c1) edge[earrow, ->] (c1r);
    \node[right = 4pt of d2](d2r){\tiny $2$}; \path (d2) edge[earrow, ->] (d2r);
    \path (c1) edge [draw = white, opacity = 0] node(mid1){} (d1);
    \path (s1) edge [draw = white, opacity = 0] node(f)[opacity = 1,elabel]{$\fork$} (mid1);
    \graph[use existing nodes, edges={color=black, pos = .5, earrow}, edge quotes={fill=white, inner sep=1pt,font= \scriptsize}]{
    s1 -- f;
    f ->["$1$"{auto}] c1;
    f ->["$2$"{auto, below left = .3ex}] d1;
    d1 ->["$\aterm[2]$"] d2;
    };
\end{tikzpicture}\right)$, we have:
\[\D_{\graph[3]}\left(\begin{tikzpicture}[baseline = -3.ex]
    \graph[grow right = .8cm, branch down = 2.5ex, nodes={font = \scriptsize}]{
    {/, s1/{}[mynode, draw, circle]}
    -!- {c1/[mynode, draw, circle], /, d1/[mynode, draw, circle]}
    -!- {c2/[mynode, draw, circle], /, d2/[mynode, draw, circle]}
    -!- {c3/, /, d3/[mynode, draw, circle]}
    -!- {/, t1/{}[mynode, draw, circle]}
    };
    \node[left = 4pt of s1](s1l){} edge[earrow, ->] (s1);
    \node[right = 4pt of t1](t1l){}; \path (t1) edge[earrow, ->] (t1l);
    \path (c1) edge [draw = white, opacity = 0] node(mid1){} (d1);
    \path (s1) edge [draw = white, opacity = 0] node(f)[opacity = 1,elabel]{$\fork$} (mid1);
    \path (c3) edge [draw = white, opacity = 0] node(mid3){} (d3);
    \path (mid3) edge [draw = white, opacity = 0] node(j)[opacity = 1,elabel]{$\join$} (t1);
    \graph[use existing nodes, edges={color=black, pos = .5, earrow}, edge quotes={fill=white, inner sep=1pt,font= \scriptsize}]{
    s1 -- f;
    f ->["$1$"{auto}] c1;
    f ->["$2$"{auto, below left = .3ex}] d1;
    c1 ->["$\aterm$"] c2;
    d1 ->["$\aterm[2]$"] d2 -> ["$\aterm[3]$"] d3;
    c2 --["$1$"{auto}, pos = .8] j;
    d3 --["$2$"{auto, below right = .3pt}] j;
    j -> t1;
    };
\end{tikzpicture}\right) \;=\; \set*{\begin{tikzpicture}[baseline = -3.ex]
    \graph[grow right = .8cm, branch down = 2.5ex, nodes={font = \scriptsize}]{
    -!- {c1/[mynode, draw, circle], /, d1/}
    -!- {c2/[mynode, draw, circle], /, d2/[mynode, draw, circle]}
    -!- {c3/, /, d3/[mynode, draw, circle]}
    -!- {/, t1/{}[mynode, draw, circle]}
    };
    \node[left = 4pt of c1](c1l){\tiny $1$} edge[earrow, ->] (c1);
    \node[left = 4pt of d2](d2l){\tiny $2$} edge[earrow, ->] (d2);
    \node[right = 4pt of t1](t1r){}; \path (t1) edge[earrow, ->] (t1r);
    \path (c3) edge [draw = white, opacity = 0] node(mid3){} (d3);
    \path (mid3) edge [draw = white, opacity = 0] node(j)[opacity = 1,elabel]{$\join$} (t1);
    \graph[use existing nodes, edges={color=black, pos = .5, earrow}, edge quotes={fill=white, inner sep=1pt,font= \scriptsize}]{
    c1 ->["$\aterm$"] c2;
    d2 -> ["$\aterm[3]$"] d3;
    c2 --["$1$"{auto}, pos = .8] j;
    d3 --["$2$"{auto, below right = .3pt}] j;
    j -> t1;
    };
\end{tikzpicture}}.\]

Let $n \in \nat$.
For a \kl{run} $\graph$,
the \intro*\kl{emptiness property} $\EPS_{n}(\graph)$ is the \kl{truth value} defined by: 
\begin{align*}
    \EPS_{n}(\graph) &\;\defeq\; \begin{cases}
        \const{true} & (\graph = \id^n)\\
        \const{false} & (\mbox{otherwise}).
    \end{cases}
\end{align*}
For a \kl{run language} $\glang$, let $\EPS_{n}(\glang) \defeq \bigvee_{\graph \in \glang} \EPS_{n}(\graph)$.
The following is clear by definition.
\begin{prop}
    For all \kl{runs} $\graph$, we have $\EPS_{\ty_2(\graph)}(\D_{\graph}(\graph))$.
\end{prop}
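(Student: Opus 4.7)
The plan is to observe that this is an immediate consequence of the identity law for series composition stated earlier in the paper, namely $\graph \series \eps^{n} = \graph$ when $\ty_2(\graph) = n$. Concretely, let $m \defeq \ty_2(\graph)$. Since $\graph$ has right interface of arity $m$ and $\eps^{m}$ has left interface of arity $m$, the composition $\graph \series \eps^{m}$ is defined, and by the identity law it satisfies $\graph = \graph \series \eps^{m}$. Hence $\graph \longrightarrow_{\graph} \eps^{m}$ by the definition of $\longrightarrow_{\graph[3]}$, so $\eps^{m} \in \D_{\graph}(\graph)$ by the definition of the \kl{left quotient}.

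Now by the definition of $\EPS_{m}$ on a single \kl{run}, we have $\EPS_{m}(\eps^{m}) = \const{true}$; and by the extension of $\EPS_{m}$ to \kl{run languages} as a disjunction $\EPS_{m}(\glang) = \bigvee_{\graph' \in \glang} \EPS_{m}(\graph')$, the presence of the witness $\eps^{m}$ inside $\D_{\graph}(\graph)$ is enough to conclude $\EPS_{m}(\D_{\graph}(\graph)) = \const{true}$, which is exactly the claim.

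There is no real obstacle here: once one unfolds the definitions of $\longrightarrow_{\graph[3]}$, $\D_{\graph[3]}$, and $\EPS_{n}$, the statement reduces verbatim to the identity equation $\graph \series \eps^{\ty_2(\graph)} = \graph$, which has already been recorded. The only mild point worth mentioning in passing is checking that the typing of the series composition matches, i.e.\ that $\eps^{\ty_2(\graph)}$ has a $\tuple{\ty_2(\graph), \ty_2(\graph)}$-interface, so that the composition is in the defined case rather than the ``undefined'' branch of \Cref{definition: series composition and parallel composition}.
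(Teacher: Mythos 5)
Your proof is correct and matches the paper, which simply asserts the proposition is "clear by definition": the intended argument is exactly your unfolding, namely that $\graph \series \eps^{\ty_2(\graph)} = \graph$ witnesses $\eps^{\ty_2(\graph)} \in \D_{\graph}(\graph)$, which makes the disjunction defining $\EPS_{\ty_2(\graph)}$ on the \kl{run language} true. Nothing further is needed.
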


\subsection{Runs on structures}
We also introduce \emph{\kl{runs} on \kl{structures}} to simplify the use of the \kl{semantics} using \kl{runs} (\Cref{proposition: run language}).
Let $\struc$ be a \kl{structure}.
An \intro*\kl{$\struc$-run} with $\tuple{\vec{\lab[1]}, \vec{\lab[2]}}$-interface is a \kl{graph homomorphism} $\trace \colon \graph[2] \homo \tilde{\const{G}}(\struc, \vec{x}, \vec{y})$, where $\graph[2]$ is some \kl{run}.
We let $\ty(\trace) \defeq \tuple{\ty_{\src}(\trace), \ty_{\tgt}(\trace)} \defeq \tuple{\vec{\lab[1]}, \vec{\lab[2]}}$.
We use $\trace[1], \trace[2], \trace[3]$ to denote \kl{$\struc$-runs}.
For brevity, we depict \kl{$\struc$-runs} as \kl{vertex}-labeled graphs where each \kl{vertex} is labeled with a \kl{vertex} of $\struc$.
For instance, let $\struc = \left(\begin{tikzpicture}[baseline = -.5ex]
    \graph[grow right = 1.cm, branch down = 6ex, nodes={mynode}]{
    {0/{$\lab[1]$}[draw, circle]}-!-{1/{$\lab[2]$}[draw, circle]}
    };
    \graph[use existing nodes, edges={color=black, pos = .5, earrow}, edge quotes={fill=white, inner sep=1pt,font= \scriptsize}]{
    0 ->["$\aterm[1]$", bend left] 1;
    1 ->["$\aterm[2]$", bend left] 0;
    };
\end{tikzpicture} \right)$ and we consider the following \kl{graph homomorphism} $\trace$: 
\[\begin{tikzpicture}[baseline = -3.ex, remember picture]
    \graph[grow right = .8cm, branch down = 2.5ex, nodes={font = \scriptsize}]{
    {/, s1/{}[mynode, draw, circle]}
    -!- {c1/[mynode, draw, circle], /, d1/[mynode, draw, circle]}
    -!- {c2/, /, d2/[mynode, draw, circle]}
    -!- {c3/, /, d3/[mynode, draw, circle]}
    -!- {c4/, /, d4/[mynode, draw, circle]}
    };
    \node[left = 4pt of s1](l){} edge[earrow, ->] (s1);
    \node[right = 4pt of c1](r1){\tiny $1$}; \path (c1) edge[earrow, ->] (r1);
    \node[right = 4pt of d4](r2){\tiny $2$}; \path (d4) edge[earrow, ->] (r2);
    \path (c1) edge [draw = white, opacity = 0] node(mid1){} (d1);
    \path (s1) edge [draw = white, opacity = 0] node(f)[opacity = 1,elabel]{$\fork$} (mid1);
    \graph[use existing nodes, edges={color=black, pos = .5, earrow}, edge quotes={fill=white, inner sep=1pt,font= \scriptsize}]{
    s1 -- f;
    f ->["$1$"{auto}] c1;
    f ->["$2$"{auto, below left = .3ex}] d1;
    d1 ->["$\aterm[1]$"] d2 ->["$\aterm[2]$"] d3 ->["$\aterm[1]$"] d4;
    };
\end{tikzpicture} \hspace{4em} \begin{tikzpicture}[baseline = -.5ex, remember picture]
    \graph[grow right = 1.5cm, branch down = 6ex, nodes={mynode}]{
    {0/{$\lab[1]$}[draw, circle]}-!-{1/{$\lab[2]$}[draw, circle]}
    };
    \node[left = 4pt of 0](0l){} edge[earrow, ->] (0);
    \node[below left = 4pt of 0](0r){\tiny $1$}; \path (0) edge[earrow, ->] (0r);
    \node[right = 4pt of 1](1r){\tiny $2$}; \path (1)  edge[earrow, ->] (1r);
    \node[above = 2ex of 0, inner sep = .3pt](0u){\tiny $\fork$};
    \node[above = 2ex of 1, inner sep = .3pt](1u){\tiny $\fork$};
    \node[below = 2ex of 0, inner sep = .3pt](0d){\tiny $\join$};
    \node[below = 2ex of 1, inner sep = .3pt](1d){\tiny $\join$};
    \graph[use existing nodes, edges={color=black, pos = .5, earrow}, edge quotes={fill=white, inner sep=1pt,font= \scriptsize}]{
    0 --[bend right] 0u;
    0u ->[bend right = 35] 0;
    0u ->[bend right = 15] 0;
    1 --[bend right] 1u;
    1u ->[bend right = 35] 1;
    1u ->[bend right = 15] 1;
    0 <-[bend right] 0d;
    0d --[bend right = 35] 0;
    0d --[bend right = 15] 0;
    1 <-[bend right] 1d;
    1d --[bend right = 35] 1;
    1d --[bend right = 15] 1;
    0 ->["$\aterm[1]$", bend left] 1;
    1 ->["$\aterm[2]$", bend left] 0;
    };
\end{tikzpicture} = \tilde{\const{G}}(\struc, \lab[1], \lab[1] \lab[2]).
\begin{tikzpicture}[remember picture, overlay]
    \path (s1) edge [homoarrow,  out =45, in = 135, looseness = .2] ($(f)+(0,.6)$);
    \path ($(f)+(0,.6)$) edge [homoarrow,->, out =45, in = 135, looseness = .2] (0);
    \path (c1) edge [homoarrow,->, out =-45, in = 135, looseness = .2] (0);
    \path (d1) edge [homoarrow,->, out =45, in = 135, looseness = .2] (0);
    \path (d2) edge [homoarrow,->, out =-45, in = -90, looseness = .5] (1);
    \path (d3) edge [homoarrow,->, out =45, in = 135, looseness = .2] (0);
    \path (d4) edge [homoarrow,->, out =-45, in = -90, looseness = .5] (1);
\end{tikzpicture}\]
We then depict $\trace$ as follows:
\[\begin{tikzpicture}[baseline = -5.ex, remember picture]
    \graph[grow right = 1.cm, branch down = 2.5ex, nodes={font = \scriptsize}]{
    {/, s1/{$\lab$}[mynode, draw, circle]}
    -!- {c1/{$\lab$}[mynode, draw, circle], /, d1/{$\lab$}[mynode, draw, circle]}
    -!- {c2/, /, d2/{$\lab[2]$}[mynode, draw, circle]}
    -!- {c3/, /, d3/{$\lab$}[mynode, draw, circle]}
    -!- {c4/, /, d4/{$\lab[2]$}[mynode, draw, circle]}
    };
    \node[left = 4pt of s1](l){} edge[earrow, ->] (s1);
    \node[right = 4pt of c1](r1){\tiny $1$}; \path (c1) edge[earrow, ->] (r1);
    \node[right = 4pt of d4](r2){\tiny $2$}; \path (d4) edge[earrow, ->] (r2);
    \path (c1) edge [draw = white, opacity = 0] node(mid1){} (d1);
    \path (s1) edge [draw = white, opacity = 0] node(f)[opacity = 1,elabel]{$\fork$} (mid1);
    \graph[use existing nodes, edges={color=black, pos = .5, earrow}, edge quotes={fill=white, inner sep=1pt,font= \scriptsize}]{
    s1 -- f;
    f ->["$1$"{auto}] c1;
    f ->["$2$"{auto, below left = .3ex}] d1;
    d1 ->["$\aterm[1]$"] d2 ->["$\aterm[2]$"] d3 ->["$\aterm[1]$"] d4;
    };
\end{tikzpicture}.\]
For two \kl{$\struc$-runs} with $\tuple{\vec{x},\vec{y}}$-interface,
$\trace[1] \colon \graph[2] \homo \tilde{\const{\graph}}(\struc, \vec{\lab[1]}, \vec{\lab[2]})$
and $\trace[2] \colon \graph[3] \homo \tilde{\const{\graph}}(\struc, \vec{\lab[1]}, \vec{\lab[2]})$,
we say that a map $f \colon \univ{\graph[2]} \to \univ{\graph[3]}$ is an \intro*\kl{$\struc$-run isomorphism} from $\trace[1]$ to $\trace[2]$, written $f \colon \trace[1] \cong \trace[2]$, if
$f \colon \graph[2] \cong \graph[3] \mbox{ and for all $x \in \univ{\graph[2]}$, $\trace[1](x) = \trace[2](f(x))$}$.
We write $\trace[1] \cong \trace[2]$ if there is some $f$ such that $f \colon \trace[1] \cong \trace[2]$.
In the sequel, we identify two \kl{$\struc$-runs} if there is an \kl{$\struc$-run isomorphism} between them;
we write $\trace[1] = \trace[2]$ when $\trace[1] \cong \trace[2]$.

For a \kl{KL term} $\term$ and $\lab[1], \lab[2] \in \univ{\struc}$,
we write $\run^{\struc}_{\lab[1], \lab[2]}(\term)$ for the \kl{$\struc$-run language} defined by:
\[\run^{\struc}_{\lab[1], \lab[2]}(\term) \;\defeq\; \set{\trace \colon \graph[2] \homo \tilde{\const{\graph}}(\struc, \lab[1], \lab[2])  \mid \graph[2] \in \run(\term)}.\]
For instance, when $\struc = \left(\begin{tikzpicture}[baseline = -.5ex]
    \graph[grow right = 1.cm, branch down = 6ex, nodes={mynode}]{
    {0/{$\lab[1]$}[draw, circle]}-!-{1/{$\lab[2]$}[draw, circle]}
    };
    \graph[use existing nodes, edges={color=black, pos = .5, earrow}, edge quotes={fill=white, inner sep=1pt,font= \scriptsize}]{
    0 ->["$\aterm[1]$", bend left] 1;
    1 ->["$\aterm[2]$", bend left] 0;
    };
\end{tikzpicture} \right)$, we have:
\begin{align*}
    \run^{\struc}_{\lab[1], \lab[2]}(\aterm[1] \intersection (\aterm[1] \aterm[2] \aterm[1])) \;&=\; \set*{\begin{tikzpicture}[baseline = -3.ex]
                 \graph[grow right = 1.cm, branch down = 2.5ex, nodes={font = \scriptsize}]{
                 {/, s1/{$\lab[1]$}[mynode, draw, circle]}
                 -!- {c1/{$\lab[1]$}[mynode, draw, circle], /, d1/{$\lab[1]$}[mynode, draw, circle]}
                 -!- {c2/{$\lab[2]$}[mynode, draw, circle], /, d2/{$\lab[2]$}[mynode, draw, circle]}
                 -!- {c3/, /, d3/{$\lab[1]$}[mynode, draw, circle]}
                 -!- {c4/, /, d4/{$\lab[2]$}[mynode, draw, circle]}
                 -!- {/, t1/{{$\lab[2]$}}[mynode, draw, circle]}
                 };
                 \node[left = 4pt of s1](s1l){} edge[earrow, ->] (s1);
                 \node[right = 4pt of t1](t1l){}; \path (t1) edge[earrow, ->] (t1l);
                 \path (c1) edge [draw = white, opacity = 0] node(mid1){} (d1);
                 \path (s1) edge [draw = white, opacity = 0] node(f)[opacity = 1,elabel]{$\fork$} (mid1);
                 \path (c4) edge [draw = white, opacity = 0] node(mid4){} (d4);
                 \path (mid4) edge [draw = white, opacity = 0] node(j)[opacity = 1,elabel]{$\join$} (t1);
                 \graph[use existing nodes, edges={color=black, pos = .5, earrow}, edge quotes={fill=white, inner sep=1pt,font= \scriptsize}]{
                s1 -- f; f ->["$1$" auto] c1; f ->["$2$" {auto, below left = .3ex}] d1;
                c1 ->["$\aterm$"] c2;
                d1 ->["$\aterm$"] d2 -> ["$\aterm[2]$"] d3 -> ["$\aterm[1]$"] d4;
                c2 -- ["$1$"{auto}, pos = .8] j; d4 --["$2$" {auto, below right = .3ex}] j; j -> t1;
                };
             \end{tikzpicture}}.
\end{align*}
Using \kl{$\struc$-runs}, we can rephrase \Cref{proposition: run language} as follows.
\begin{prop}\label{proposition: struc-run language}
    Let $\struc$ be a \kl{structure}.
    For all \kl{KL term} $\term$,
    we have:
    \[\jump{\term}_{\struc} \;=\; \set{\tuple{\lab[1],\lab[2]} \in \univ{\struc}^2 \mid \run^{\struc}_{\lab[1], \lab[2]}(\term) \neq \emptyset}.\]
\end{prop}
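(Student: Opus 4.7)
The plan is to derive this as a direct reformulation of \Cref{proposition: run language} after unfolding the definitions of $\jump{\bl}_{\struc}$ on \kl{run languages} and of $\run^{\struc}_{\lab[1], \lab[2]}(\term)$.

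First, I would observe that every $\graph[2] \in \run(\term)$ has $\tuple{1,1}$-interface. This is proved by a trivial induction on the structure of the \kl{KL term} $\term$, using \Cref{definition: run language}: each atomic case ($\aterm$, $\eps$, $\emp$) produces a \kl{run with $\tuple{n,m}$-interface} with $\tuple{n,m} = \tuple{1,1}$, and the inductive cases ($\compo$, $\union$, $\intersection$, $\bl^{*}$) preserve this by inspection of the constructions in \Cref{definition: run language} (using $\fork$ and $\join$ to reduce the intersection case back to $\tuple{1,1}$-interface).

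Next, I would unfold the right-hand side of the claimed equality. By the definition of $\run^{\struc}_{\lab[1], \lab[2]}(\term)$, the set $\run^{\struc}_{\lab[1], \lab[2]}(\term)$ is non-\kl{empty} iff there exists $\graph[2] \in \run(\term)$ and a \kl{graph homomorphism} $\trace \colon \graph[2] \homo \tilde{\const{G}}(\struc, \lab[1], \lab[2])$, which is precisely $\graph[2] \homo \tilde{\const{G}}(\struc, \lab[1], \lab[2])$. By the definition of $\jump{\graph[2]}_{\struc}$ for \kl{runs} with $\tuple{1,1}$-interface, this is equivalent to $\tuple{\lab[1], \lab[2]} \in \jump{\graph[2]}_{\struc}$. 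Taking the union over $\graph[2] \in \run(\term)$, we get $\tuple{\lab[1], \lab[2]} \in \jump{\run(\term)}_{\struc}$.

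Finally, I would apply \Cref{proposition: run language} to rewrite $\jump{\run(\term)}_{\struc}$ as $\jump{\term}_{\struc}$, concluding the equality. There is no genuine obstacle here: the statement is essentially a repackaging of \Cref{proposition: run language} in terms of the notation for \kl{$\struc$-runs}; the only point that requires a brief check is the typing remark in the first step, ensuring that $\vec{x}, \vec{y}$ in the definition of $\jump{\graph[2]}_{\struc}$ collapse to single vertices $\lab[1], \lab[2]$ for $\graph[2] \in \run(\term)$.
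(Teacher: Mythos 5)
Your proposal is correct and matches the paper's own argument: the paper proves this proposition exactly as "an easy consequence of \Cref{proposition: run language}, as $\tuple{\lab[1], \lab[2]} \in \jump{\term}_{\struc}$ iff $\exists \graph[2] \in \run(\term), \exists \trace,\ \trace \colon \graph[2] \homo \tilde{\const{\graph}}(\struc, \lab[1], \lab[2])$ iff $\run^{\struc}_{\lab[1], \lab[2]}(\term) \neq \emptyset$," which is the same unfolding you carry out. Your extra check that runs of \kl{KL terms} have $\tuple{1,1}$-interface is a harmless elaboration of a point the paper leaves implicit.
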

\begin{proof}
    By easy induction on $\term$ (or an easy consequence of \Cref{proposition: run language}, as
    $\tuple{\lab[1], \lab[2]} \in \jump{\term}_{\struc}$
    iff $\exists \graph[2] \in \run(\term), \exists \trace, \trace \colon \graph[2] \homo \tilde{\const{\graph}}(\struc, \lab[1], \lab[2])$
    iff $\run^{\struc}_{\lab[1], \lab[2]}(\term) \neq \emptyset$).
\end{proof}

\subsubsection{Compositions on $\struc$-runs}\label{section: compositions of struc-runs}
We recall the \kl{series composition} $\series$, the \kl{parallel composition} $\parallel$, and the identity element $\id^{n}$ (\Cref{definition: series composition and parallel composition}) for \kl{graphs} with bi-interface.
Similarly, we define them also for \kl{$\struc$-runs}.
\begin{defi}\label{definition: sequential composition and parallel composition of runs on structures}
    Let $\trace[1] \colon \graph[2] \homo \tilde{\const{\graph}}(\struc, \vec{x}, \vec{y})$ and 
    $\trace[2] \colon \graph[3] \homo \tilde{\const{\graph}}(\struc, \vec{x}', \vec{y}')$ be \kl{$\struc$-runs}.
    Under $\vec{y} = \vec{x}'$,
    the \intro*\kl(srun){series composition} $\trace[1] \series \trace[2] \colon (\graph[2] \series \graph[3]) \homo \tilde{\const{\graph}}(\struc, \vec{x}, \vec{y}')$ is defined (undefined when $\vec{y} \neq \vec{x}'$) as the map such that,
    for each \kl{vertex} $x \in \univ{\graph[2] \series \graph[3]}$,
    if $x$ is induced from a \kl{vertex} $y \in \univ{\graph[2]}$,
    then $(\trace[1] \series \trace[2])(x) = \trace[1](y)$,
    and
    if $x$ is induced from a \kl{vertex} $y \in \univ{\graph[3]}$,
    then $(\trace[1] \series \trace[2])(x) = \trace[2](y)$.
    The \intro*\kl(srun){parallel composition} $\trace[1] \parallel \trace[2] \colon (\graph[2] \parallel \graph[3]) \homo \tilde{\const{\graph}}(\struc, \vec{x}\vec{x}', \vec{y}\vec{y}')$
    is defined as the map such that
    if $x$ is induced from a \kl{vertex} $y \in \univ{\graph[2]}$,
    then $(\trace[1] \parallel \trace[2])(x) = \trace[1](y)$,
    and
    if $x$ is induced from a \kl{vertex} $y \in \univ{\graph[3]}$,
    then $(\trace[1] \parallel \trace[2])(x) = \trace[2](y)$.
    For each $\vec{\lab[1]} = \lab[1]_1 \dots \lab[1]_n \in \univ{\struc}^{*}$,
    let $\id^{\vec{\lab[1]}} \colon \id^{n} \homo \tilde{\const{\graph}}(\struc, \vec{\lab[1]}, \vec{\lab[1]})$ be the \kl{$\struc$-run} such that
    for each $i \in \range{n}$, the $i$-th \kl{source}/\kl{target} \kl{vertex} maps to $\lab[1]_i$.
\end{defi}
The left and right identity element with respect to $\series$ on \kl{$\struc$-runs} with $\tuple{\vec{\lab[1]}, \vec{\lab[2]}}$-interface are given by $\id^{\vec{\lab[1]}}$ and $\id^{\vec{\lab[2]}}$, respectively, i.e., $\id^{\vec{\lab[1]}} \series \trace = \trace \series \id^{\vec{\lab[2]}} = \trace$.
The identity element with respect to $\parallel$ on \kl{$\struc$-runs} with $\tuple{\vec{\lab[1]}, \vec{\lab[2]}}$-interface
is given by $\id^{\eps}$, i.e., $\id^{\eps} \parallel \trace = \trace \parallel \id^{\eps} = \trace$.

For $\vec{\lab[3]} = \lab[3]_1 \dots \lab[3]_n$,
let $a^{\vec{\lab[3]}, \lab[3]'}_{i} \defeq \id^{z_1 \dots z_{i-1}} \parallel \left(\hspace{-.3em}
        \begin{gathered}
            \begin{tikzpicture}[baseline = -.5ex]
            \graph[grow right = 1.cm, branch down = 6ex, nodes={mynode}]{
            {0/{$\lab[3]_{i}$}[draw, circle]} -!-
            {1/{$\lab[3]'$}[draw, circle, inner sep = .5pt]}
            };
            \node[left = .5em of 0](l){\tiny };
            \node[right = .5em of 1](r){\tiny };
            \graph[use existing nodes, edges={color=black, pos = .5, earrow}, edge quotes={fill=white, inner sep=1pt,font= \scriptsize}]{
                l -> 0 ->["$a$"] 1 -> r;
            };
        \end{tikzpicture}
    \end{gathered}\right) \parallel \id^{z_{i+1} \dots z_{n}}$,
$\fork^{\vec{\lab[3]}}_{i} \defeq \id^{z_1 \dots z_{i-1}} \parallel \left(\hspace{-.8em}
        \begin{gathered}
            \begin{tikzpicture}[baseline = -.5ex]
            \graph[grow right = .7cm, branch down = 2ex, nodes={}]{
            {/, 0/{$\lab[3]_{i}$}[draw,circle, mynode]} -!-
            {/, c/{\scriptsize $\fork$}[elabel]} -!-
            {10/{$\lab[3]_{i}$}[draw, circle, inner sep = .5pt, mynode], /, 11/{$\lab[3]_{i}$}[draw, circle, inner sep = .5pt, mynode]}
            };
            \node[left = .5em of 0](l){\tiny };
            \node[right = .5em of 10](r0){\tiny $1$};
            \node[right = .5em of 11](r1){\tiny $2$};
            \graph[use existing nodes, edges={color=black, pos = .5, earrow}, edge quotes={fill=white, inner sep=1pt,font= \scriptsize}]{
            l -> 0 -- c;
            c ->["$1$"{auto}] 10;
            c ->["$2$"{auto, below left = .3ex}] 11;
            10 -> r0; 11 -> r1;
            };
            \end{tikzpicture}
        \end{gathered}\right) \parallel \id^{z_{i+1} \dots z_{n}}$, and
$\join^{\vec{\lab[3]}}_{i} \defeq  \id^{z_1 \dots z_{i-1}} \parallel \left(
        \begin{gathered}
            \begin{tikzpicture}[baseline = -.5ex]
            \graph[grow right = .7cm, branch down = 2ex, nodes={}]{
            {00/{$\lab[3]_{i}$}[draw, circle, inner sep = .5pt, mynode], /,01/{$\lab[3]_{i}$}[draw,circle, mynode]} -!-
            {/, c/{\scriptsize $\join$}[elabel]} -!-
            {/, 1/{$\lab[3]_{i}$}[draw, circle, inner sep = .5pt, mynode]}
            };
            \node[left = .5em of 00](l0){\tiny $1$};
            \node[left = .5em of 01](l1){\tiny $2$};
            \node[right = .5em of 1](r){\tiny};
            \graph[use existing nodes, edges={color=black, pos = .5, earrow}, edge quotes={fill=white, inner sep=1pt,font= \scriptsize}]{
                l0 -> 00; l1 -> 01;
                00 --["$1$"{auto}] c;
                01 --["$2$"{auto, below right = .3ex}] c;
                c -> 1 -> r;
            };
        \end{tikzpicture}
    \end{gathered}\hspace{-.8em}\right) \parallel \id^{z_{i+1} \dots z_{n}}$.
The set of \intro*\kl{atomic $\struc$-runs}, written $\bar{\vsig}_{\struc}$, is defined as $\bar{\vsig}_{\struc} \defeq
\set{a^{\vec{\lab[3]}, \lab[3]'}_{i} \mid a \in \vsig, n \ge 1, i \in \range{n}, \vec{\lab[3]} \in \univ{\struc}^{n}, \lab[3]' \in \univ{\struc}, \tuple{\lab[3]_i, \lab[3]'} \in \jump{a}_{\struc}}
\cup \set{\fork^{\vec{\lab[3]}}_{i}, \join^{\vec{\lab[3]}}_{i} \mid n \ge 1, i \in \range{n}, \vec{\lab[3]} \in \univ{\struc}^{n}}$.
Every \kl{$\struc$-run} can be expressed as \kl{series compositions} of \kl{atomic $\struc$-runs}, as follows (as with \Cref{proposition: run atomic}).
\begin{prop}\label{proposition: struc-run atomic}
    Let $\struc$ be a \kl{structure}.
    For every \kl{$\struc$-run} $\trace$ with $n \ge 1$ edges,
    there exist some $a_1, \dots, a_n \in \bar{\vsig}_{\struc}$ such that $\trace = a_1 \series \dots \series a_n$.
\end{prop}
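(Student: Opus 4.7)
The plan is to lift \Cref{proposition: run atomic} from \kl{runs} to \kl{$\struc$-runs} by decomposing the underlying \kl{run} first and then attaching the labels provided by $\trace$.

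Let $\trace \colon \graph \homo \tilde{\const{G}}(\struc, \vec{\lab[1]}, \vec{\lab[2]})$ be the given non-\kl{empty} \kl{$\struc$-run}. By \Cref{proposition: run atomic} applied to the underlying \kl{run} $\graph$, there exist $a_1, \dots, a_n \in \bar{\vsig}$ such that $\graph = a_1 \series \dots \series a_n$. Each $a_i$ has type $\tuple{k_{i-1}, k_i}$ for some sequence $k_0, \dots, k_n$ with $k_0 = \ty_{\src}(\graph)$ and $k_n = \ty_{\tgt}(\graph)$, and the decomposition induces for each $i \in \range{0, n}$ a tuple $\vec{z}^{(i)} \in \univ{\graph}^{k_i}$ of interface vertices such that $a_i$ is the sub-\kl{run} of $\graph$ between $\vec{z}^{(i-1)}$ and $\vec{z}^{(i)}$.

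Next, I would attach labels via $\trace$: define $\vec{\lab[3]}^{(i)} \defeq \trace(\vec{z}^{(i)}) \in \univ{\struc}^{k_i}$ for each $i$, so that $\vec{\lab[3]}^{(0)} = \vec{\lab[1]}$ and $\vec{\lab[3]}^{(n)} = \vec{\lab[2]}$. For each atomic \kl{run} $a_i$, applying $\trace$ to its (one or two) internal vertices yields the extra label datum required by $\bar{\vsig}_{\struc}$: if $a_i = a^{k_i}_{j}$ with $a \in \vsig$, then the homomorphism condition on $\trace$ forces $\tuple{\lab[3]^{(i-1)}_{j}, \lab[3]^{(i)}_{j}} \in \jump{a}_{\struc}$, so we may set $a'_i \defeq a^{\vec{\lab[3]}^{(i-1)}, \lab[3]^{(i)}_{j}}_{j} \in \bar{\vsig}_{\struc}$; if $a_i = \fork^{k_{i-1}}_{j}$ or $\join^{k_{i-1}}_{j}$, the three relevant vertices of $a_i$ all map under $\trace$ to the same element of $\univ{\struc}$ (because of $\fork^{\tilde{\const{G}}(\struc, \cdot, \cdot)}$ and $\join^{\tilde{\const{G}}(\struc, \cdot, \cdot)}$ being the diagonal ternary relation), hence we set $a'_i \defeq \fork^{\vec{\lab[3]}^{(i-1)}}_{j}$ or $\join^{\vec{\lab[3]}^{(i-1)}}_{j}$ respectively, landing again in $\bar{\vsig}_{\struc}$.

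Finally, the series composition $a'_1 \series \dots \series a'_n$ is well-defined since the target interface of $a'_i$ and the source interface of $a'_{i+1}$ both carry the label tuple $\vec{\lab[3]}^{(i)}$ by construction, and it equals $\trace$ because the underlying \kl{runs} coincide with $\graph$ while the vertex labelings agree with $\trace$ on every vertex (internal and interface) by the very definition of the labels $\vec{\lab[3]}^{(i)}$. I do not expect any serious obstacle: the only point requiring care is to verify that the label data assigned to each atomic piece is consistent across the series composition, which is immediate from $\trace$ being a single map defined on all of $\univ{\graph}$.
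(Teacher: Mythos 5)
Your proof is correct, but it takes a different route from the paper's. The paper proves \Cref{proposition: struc-run atomic} by a direct induction on the number of edges of $\trace$, exactly mirroring the induction used for \Cref{proposition: run atomic}; you instead \emph{reduce} to \Cref{proposition: run atomic} by first decomposing the underlying \kl{run} $\graph$ and then transporting the labels along the homomorphism $\trace$. The reduction buys reuse of the already-established result and isolates the only genuinely new content of the $\struc$-level statement, namely that the label data can be attached consistently to each atomic piece: for a $\vsig$-edge the homomorphism condition supplies the required membership $\tuple{\lab[3]^{(i-1)}_{j}, \lab[3]^{(i)}_{j}} \in \jump{a}_{\struc}$ (and the identity-threaded coordinates force $\lab[3]^{(i-1)}_{l} = \lab[3]^{(i)}_{l}$ for $l \neq j$), while for $\fork$/$\join$ the diagonal interpretation in $\tilde{\const{G}}(\struc,\cdot,\cdot)$ collapses all three incident vertices to one label. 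The paper's direct induction is shorter to state but repeats the structural case analysis; your version makes the lifting step explicit at the cost of having to track the interface tuples $\vec{z}^{(i)}$ through the isomorphism implicit in $\graph = a_1 \series \dots \series a_n$, which you handle adequately. Both arguments are sound and yield the same decomposition.
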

\begin{proof}
    By easy induction on $n$.
\end{proof}
For instance, when $\struc = \left(\hspace{-.5em}\begin{tikzpicture}[baseline = .5ex]
    \graph[grow right = 1.cm, branch down = 6ex, nodes={mynode}]{
    {0/{$\lab[1]$}[draw, circle]}-!-{1/{$\lab[2]$}[draw, circle]}
    };
    \graph[use existing nodes, edges={color=black, pos = .5, earrow}, edge quotes={fill=white, inner sep=1pt,font= \scriptsize}]{
    0 ->["$\aterm[1]$", bend left] 1;
    0 ->["$\aterm[3]$", bend right] 1;
    0 ->["$\aterm[2]$", out = 60, in = 120, looseness = 8] 0;
    };
\end{tikzpicture} \right)$ and $\trace = \begin{tikzpicture}[baseline = -3.ex]
    \graph[grow right = 1.cm, branch down = 2.5ex, nodes={font = \scriptsize}]{
    {/, s1/{$\lab[1]$}[mynode, draw, circle]}
    -!- {c1/{$\lab[1]$}[mynode, draw, circle], /, d1/{$\lab[1]$}[mynode, draw, circle]}
    -!- {c2/{$\lab[2]$}[mynode, draw, circle], /, d2/{$\lab[1]$}[mynode, draw, circle]}
    -!- {c3/, /, d3/{$\lab[2]$}[mynode, draw, circle]}
    -!- {/, t1/{$\lab[2]$}[mynode, draw, circle]}
    };
    \node[left = 4pt of s1](s1l){} edge[earrow, ->] (s1);
    \node[right = 4pt of t1](t1l){}; \path (t1) edge[earrow, ->] (t1l);
    \path (c1) edge [draw = white, opacity = 0] node(mid1){} (d1);
    \path (s1) edge [draw = white, opacity = 0] node(f)[opacity = 1,elabel]{$\fork$} (mid1);
    \path (c3) edge [draw = white, opacity = 0] node(mid3){} (d3);
    \path (mid3) edge [draw = white, opacity = 0] node(j)[opacity = 1,elabel]{$\join$} (t1);
    \graph[use existing nodes, edges={color=black, pos = .5, earrow}, edge quotes={fill=white, inner sep=1pt,font= \scriptsize}]{
    s1 -- f;
    f ->["$1$"{auto}, pos = .8] c1;
    f ->["$2$"{auto, below left = .3ex}] d1;
    c1 ->["$\aterm$"] c2;
    d1 ->["$\aterm[2]$"] d2 -> ["$\aterm[3]$"] d3;
    c2 --["$1$"{auto}] j;
    d3 --["$2$"{auto, below right = .3ex}] j;
    j -> t1;
    };
\end{tikzpicture}$, we have
$\trace = \fork^{\lab[1]}_{1} \series \aterm^{\lab[1]\lab[1],\lab[2]}_{1} \series \aterm[2]^{\lab[2]\lab[1],\lab[1]}_{2} \series \aterm[3]^{\lab[2]\lab[1], \lab[2]}_{2} \series \join^{\lab[2]}_{1}$.
The decomposition of $\trace$ is not unique, as
$\fork^{\lab[1]}_{1} \series \aterm[2]^{\lab[1]\lab[1], \lab[1]}_{2} \series \aterm^{\lab[1]\lab[1], \lab[2]}_{1} \series \aterm[3]^{\lab[2]\lab[1], \lab[2]}_{2} \series \join^{\lab[2]}_{1}$ and
$\fork^{\lab[1]}_{1} \series \aterm[2]^{\lab[1]\lab[1], \lab[1]}_{2} \series \aterm[3]^{\lab[1]\lab[1], \lab[2]}_{2} \series \aterm^{\lab[1]\lab[2], \lab[2]}_{1} \series \join^{\lab[2]}_{1}$
also express $\trace$.

\subsubsection{Left quotients of $\struc$-runs}\label{section: left quotients of struc-runs}
Let $\struc$ be a \kl{structure}.
Similar to \Cref{section: left quotients of runs},
we define the \kl(srun){left quotients} and \kl(srun){emptiness property} for \kl{$\struc$-runs}.
For \kl{$\struc$-runs} $\trace[1]$, $\trace[2]$, and $\trace[3]$, we write:
\[\trace[1] \longrightarrow^{\struc}_{\trace[3]} \trace[2] \;\defiff\; \mbox{$\trace[3] \series \trace[2]$ is defined and } \trace[1] \cong \trace[3] \series \trace[2].\]
For \kl{$\struc$-runs} $\trace$ and $\trace[3]$,
the \intro*\kl(srun){left quotient} $\D^{\struc}_{\trace[3]}(\trace)$ of $\trace$ with respect to $\trace[3]$
is the \kl{$\struc$-run language} defined by:
\begin{align*}
    \D^{\struc}_{\trace[3]}(\trace[1]) &\;\defeq\; \set{\trace[2] \mbox{ an \kl{$\struc$-run}} \mid \trace[1] \longrightarrow_{\trace[3]} \trace[2]}. 
\end{align*}
For an \kl{$\struc$-run language} $\glang$, let $\D^{\struc}_{\trace[3]}(\glang) \defeq \bigcup_{\trace[1] \in \glang} \D^{\struc}_{\trace[3]}(\trace[1])$.

\vskip0.5\baselineskip %FIXED Adjusted Spacing
For instance, if $\trace[3] = \left(\begin{tikzpicture}[baseline = -3.ex]
    \graph[grow right = 1.cm, branch down = 2.5ex, nodes={font = \scriptsize}]{
    {/, s1/{$\lab[1]$}[mynode, draw, circle]}
    -!- {c1/{$\lab[1]$}[mynode, draw, circle], /, d1/{$\lab[1]$}[mynode, draw, circle]}
    -!- {c2/, /, d2/{$\lab[2]$}[mynode, draw, circle]}
    };
    \node[left = 4pt of s1](s1l){} edge[earrow, ->] (s1);
    \node[right = 4pt of c1](c1r){\tiny $1$}; \path (c1) edge[earrow, ->] (c1r);
    \node[right = 4pt of d2](d2r){\tiny $2$}; \path (d2) edge[earrow, ->] (d2r);
    \path (c1) edge [draw = white, opacity = 0] node(mid1){} (d1);
    \path (s1) edge [draw = white, opacity = 0] node(f)[opacity = 1,elabel]{$\fork$} (mid1);
    \graph[use existing nodes, edges={color=black, pos = .5, earrow}, edge quotes={fill=white, inner sep=1pt,font= \scriptsize}]{
    s1 -- f;
    f ->["$1$"{auto}] c1;
    f ->["$2$"{auto, below left = .3ex}] d1;
    d1 ->["$\aterm[2]$"] d2;
    };
\end{tikzpicture}\right)$, we have:
\[\D^{\struc}_{\trace[3]}\left(\begin{tikzpicture}[baseline = -3.ex]
    \graph[grow right = 1.cm, branch down = 2.5ex, nodes={font = \scriptsize}]{
    {/, s1/{$\lab[1]$}[mynode, draw, circle]}
    -!- {c1/{$\lab[1]$}[mynode, draw, circle], /, d1/{$\lab[1]$}[mynode, draw, circle]}
    -!- {c2/{$\lab[3]$}[mynode, draw, circle], /, d2/{$\lab[2]$}[mynode, draw, circle]}
    -!- {c3/, /, d3/{$\lab[3]$}[mynode, draw, circle]}
    -!- {/, t1/{$\lab[3]$}[mynode, draw, circle]}
    };
    \node[left = 4pt of s1](s1l){} edge[earrow, ->] (s1);
    \node[right = 4pt of t1](t1l){}; \path (t1) edge[earrow, ->] (t1l);
    \path (c1) edge [draw = white, opacity = 0] node(mid1){} (d1);
    \path (s1) edge [draw = white, opacity = 0] node(f)[opacity = 1,elabel]{$\fork$} (mid1);
    \path (c3) edge [draw = white, opacity = 0] node(mid3){} (d3);
    \path (mid3) edge [draw = white, opacity = 0] node(j)[opacity = 1,elabel]{$\join$} (t1);
    \graph[use existing nodes, edges={color=black, pos = .5, earrow}, edge quotes={fill=white, inner sep=1pt,font= \scriptsize}]{
    s1 -- f;
    f ->["$1$"{auto}] c1;
    f ->["$2$"{auto, below left = .3ex}] d1;
    c1 ->["$\aterm$"] c2;
    d1 ->["$\aterm[2]$"] d2 -> ["$\aterm[3]$"] d3;
    c2 --["$1$"{auto}, pos = .8] j;
    d3 --["$2$"{auto, below right = .3ex}] j;
    j -> t1;
    };
\end{tikzpicture}\right) \;=\; \set*{\begin{tikzpicture}[baseline = -3.ex]
    \graph[grow right = 1.cm, branch down = 2.5ex, nodes={font = \scriptsize}]{
    -!- {c1/{$\lab[1]$}[mynode, draw, circle], /, d1/}
    -!- {c2/{$\lab[3]$}[mynode, draw, circle], /, d2/{$\lab[2]$}[mynode, draw, circle]}
    -!- {c3/, /, d3/{$\lab[3]$}[mynode, draw, circle]}
    -!- {/, t1/{$\lab[3]$}[mynode, draw, circle]}
    };
    \node[left = 4pt of c1](c1l){\tiny $1$} edge[earrow, ->] (c1);
    \node[left = 4pt of d2](d2l){\tiny $2$} edge[earrow, ->] (d2);
    \node[right = 4pt of t1](t1r){}; \path (t1) edge[earrow, ->] (t1r);
    \path (c3) edge [draw = white, opacity = 0] node(mid3){} (d3);
    \path (mid3) edge [draw = white, opacity = 0] node(j)[opacity = 1,elabel]{$\join$} (t1);
    \graph[use existing nodes, edges={color=black, pos = .5, earrow}, edge quotes={fill=white, inner sep=1pt,font= \scriptsize}]{
    c1 ->["$\aterm$"] c2;
    d2 -> ["$\aterm[3]$"] d3;
    c2 --["$1$"{auto}, pos = .8] j;
    d3 --["$2$"{auto, below right = .3ex}] j;
    j -> t1;
    };
\end{tikzpicture}}.\]
For an \kl{$\struc$-run} $\trace[1]$ and $\vec{\lab[3]} \in \univ{\struc}^{*}$,
the \intro*\kl(srun){emptiness property} $\EPS_{\vec{\lab[3]}}(\trace[1])$ is the \kl{truth value} defined by: 
\begin{align*}
    \EPS_{\vec{\lab[3]}}(\trace[1]) &\;\defeq\; 
    \begin{cases}
        \const{true} & (\trace[1] = \id^{\vec{\lab[3]}})\\
        \const{false} & (\mbox{otherwise}).
    \end{cases}
\end{align*}
For an \kl{$\struc$-run language} $\glang$, let $\EPS_{\vec{\lab[3]}}(\glang) \defeq \bigvee_{\trace[1] \in \glang} \EPS_{\vec{\lab[3]}}(\trace)$.
The following is clear by definition.
\begin{prop}\label{proposition: left quotient}
    Let $\struc$ be a \kl{structure}.
    For all \kl{$\struc$-runs} $\trace$,
    we have $\EPS_{\ty_2(\trace)}(\D^{\struc}_{\trace}(\trace))$.
\end{prop}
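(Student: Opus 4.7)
The plan is to exhibit an explicit witness in $\D^{\struc}_{\trace}(\trace)$ that satisfies the \kl(srun){emptiness property}, namely the identity $\struc$-run $\eps^{\ty_2(\trace)}$. The claim then reduces to two elementary verifications: that this witness lies in the left quotient, and that it is the one making $\EPS_{\ty_2(\trace)}$ true by definition.

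First I would unfold the definitions. Let $\vec{\lab} \defeq \ty_2(\trace)$. By \Cref{definition: sequential composition and parallel composition of runs on structures}, $\eps^{\vec{\lab}}$ is an $\struc$-run with $\tuple{\vec{\lab}, \vec{\lab}}$-interface, so $\ty_1(\eps^{\vec{\lab}}) = \vec{\lab} = \ty_2(\trace)$; hence $\trace \series \eps^{\vec{\lab}}$ is defined. Then I would invoke the identity law noted just before the proposition, that $\eps^{\vec{\lab[1]}}$ is the identity element of $\series$ on $\struc$-runs with $\tuple{\vec{\lab[1]}, \vec{\lab[1]}}$-interface, to conclude $\trace \cong \trace \series \eps^{\vec{\lab}}$.

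Consequently, by the definition of $\longrightarrow^{\struc}_{\trace}$ we have $\trace \longrightarrow^{\struc}_{\trace} \eps^{\vec{\lab}}$, and thus $\eps^{\vec{\lab}} \in \D^{\struc}_{\trace}(\trace)$. Finally, the definition of the \kl(srun){emptiness property} gives $\EPS_{\vec{\lab}}(\eps^{\vec{\lab}}) = \const{true}$, so the disjunction $\EPS_{\vec{\lab}}(\D^{\struc}_{\trace}(\trace)) = \bigvee_{\trace' \in \D^{\struc}_{\trace}(\trace)} \EPS_{\vec{\lab}}(\trace')$ also evaluates to $\const{true}$, as required.

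There is no genuine obstacle here: the proposition is essentially a restatement of the identity law for $\series$ on $\struc$-runs. The only subtlety worth mentioning is that one must appeal to the convention (fixed at the end of \Cref{section: preliminaries} and reiterated for $\struc$-runs) that $\struc$-runs are identified up to $\struc$-run isomorphism, so $\trace \cong \trace \series \eps^{\vec{\lab}}$ may legitimately be read as equality.
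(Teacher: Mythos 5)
Your proposal is correct and is exactly the argument the paper leaves implicit: the paper's "proof" is just the remark "clear by definition", and your write-up spells out the intended witness $\eps^{\ty_2(\trace)}$ together with the right-identity law $\trace \series \eps^{\ty_2(\trace)} \cong \trace$. Nothing further is needed.
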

We will define \kl{derivatives} based on these \kl{left quotient} and \kl{emptiness property}.

\section{Labeled Kleene Lattice Terms}\label{section: lKL}
To present \kl{derivatives} (which will be used for an automata construction for \kl{PCoR*}),
we consider simulating \kl{left quotients} of \kl{$\struc$-runs} (\Cref{section: left quotients of struc-runs}) by terms.
To this end, we extend the syntax of terms with ``\kl{labels}''.
In a nutshell, \kl{labels} are used for expressing \kl{runs} with \emph{multiple} source \kl{vertices}.

Let $\Lab$ be a set of \intro*\kl{labels}.
We use $\lab[1], \lab[2], \lab[3]$ to denote \kl{labels}.
The set of \intro*\kl{labeled Kleene lattice terms} (\kl{lKL terms}) is defined by the following grammar:
\begin{align*}
    \lterm[1], \lterm[2], \lterm[3] \in \LT_{\Lab} &\;\Coloneqq\; @\lab. \term \mid \lterm[1] \compo_{1} \term[2]  \mid \lterm[1] \intersection_{1} \lterm[2] \qquad \tag{$\lab \in \Lab$ and $\term[1], \term[2]$ are \kl{KL terms}}
\end{align*}
Each \kl{label} $\lab[1]$ indicates the \kl{vertex} $\lab[1]$ on a \kl{structure} (if exists).
The notation $@\lab. \term$ is inspired by the jump operator in hybrid logics \cite{arecesHybridLogics2007} (a minor difference is that $\lab$ directly indicates a \kl{vertex} name, here).
We explicitly use $\compo_{1}$ and $\intersection_{1}$ instead of $\compo$ and $\intersection$ in \kl{lKL terms},
as labeled terms interpret unary relations (\cf\ \Cref{section: lKL semantics}).
We write $\mathsf{lab}_{i}(\lterm)$ for the $i$-th \kl{label} occurring in $\lterm$,
and we write $\overrightarrow{\mathsf{lab}}(\lterm)$ for the sequence $\mathsf{lab}_{1}(\lterm) \dots \mathsf{lab}_{n}(\lterm)$ where 
$n$ is the number of \kl{labels} occurring in $\lterm$.
For instance, if $\lterm = ((@ \lab[1]. \lterm[2]_1) \intersection_{1} (((@\lab[1]. \lterm[2]_2) \intersection_{1} (@\lab[2]. \lterm[2]_3)) \compo_1 \term[3]_1)) \compo_{1} \term[3]_2$,
then $\mathsf{lab}_{1}(\lterm) = \lab[1]$, $\mathsf{lab}_{2}(\lterm) = \lab[1]$, $\mathsf{lab}_{3}(\lterm) = \lab[2]$, and
$\overrightarrow{\mathsf{lab}}(\lterm) = \lab[1]\lab[1]\lab[2]$.

\subsection{Semantics: relational models}\label{section: lKL semantics}
Let $\struc \in \REL$.
For an \kl{lKL term} $\lterm$,
the \intro*\kl(lterm){semantics} $\jump{\lterm}_{\struc} \subseteq \univ{\struc}$ is a unary relation defined as follows:
\begin{align*}
    \jump{@ \lab[1]. \term}_{\struc} &\defeq \set{\lab[2] \mid \tuple{\lab[1], \lab[2]} \in \jump{\term}_{\struc}}, &
    \hspace{-.7em}\jump{\lterm[1] \compo_{1} \term[2]}_{\struc} &\defeq \set{\lab[3] \mid \lab[2] \in \jump{\lterm[1]}_{\struc}, \tuple{\lab[2], \lab[3]} \in \jump{\term[2]}_{\struc}}, &
    \hspace{-.7em} \jump{\lterm[1] \intersection_{1} \lterm[2]}_{\struc} &\defeq \jump{\lterm[1]}_{\struc} \cap \jump{\lterm[2]}_{\struc}.
\end{align*}
Note that $\jump{@ \lab[1]. \term}_{\struc} = \emptyset$ when $\lab[1] \not\in \univ{\struc}$.
Also, note that $\jump{\term[1]}_{\struc}$ is a \emph{binary} relation for \kl{KL terms} $\term[1]$
and $\jump{\lterm[1]}_{\struc}$ is a \emph{unary} relation for \kl{lKL} terms $\lterm[1]$.
For $\strucclass \subseteq \REL$,
we write $\strucclass \models \lterm[1] = \lterm[2]$ if $\jump{\lterm[1]}_{\struc} = \jump{\lterm[2]}_{\struc}$ for $\struc \in \strucclass$.
We then have the following easy fact.
\begin{prop}\label{proposition: labelling}
    Let $\term[1]$ and $\term[2]$ be \kl{KL terms} and $\lab$ be a \kl{label}.
    We then have:
    \[\REL \models \term[1] \le \term[2] \;\iff\; \REL \models @\lab. \term[1] \le @\lab. \term[2].\]
\end{prop}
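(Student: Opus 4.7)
\bigskip\noindent\textbf{Proof plan.} The plan is to show the two directions separately. The forward direction ($\Rightarrow$) is immediate from the definition: if $\jump{\term[1]}_{\struc} \subseteq \jump{\term[2]}_{\struc}$ for every \kl{structure} $\struc$, then for each $\struc$ we directly obtain
\[
\jump{@\lab. \term[1]}_{\struc} = \set{y \mid \tuple{\lab, y} \in \jump{\term[1]}_{\struc}} \subseteq \set{y \mid \tuple{\lab, y} \in \jump{\term[2]}_{\struc}} = \jump{@\lab. \term[2]}_{\struc},
\]
which yields $\REL \models @\lab. \term[1] \le @\lab. \term[2]$.

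The backward direction ($\Leftarrow$) is the interesting one. The idea is that the \kl{semantics} of \kl{KL terms} is invariant under renaming of the universe, so we can always arrange for the \kl{label} $\lab$ to occur as any chosen vertex. Concretely, assume $\REL \models @\lab. \term[1] \le @\lab. \term[2]$, and suppose towards showing $\REL \models \term[1] \le \term[2]$ that $\tuple{x, y} \in \jump{\term[1]}_{\struc}$ for some $\struc \in \REL$ and $x, y \in \univ{\struc}$. Fix a bijection $\pi \colon \univ{\struc} \to V'$ onto some set $V' \ni \lab$ with $\pi(x) = \lab$ (such a $V'$ and $\pi$ clearly exist; if $\lab$ already occurs in $\univ{\struc}$ and $\lab \neq x$, one may simply swap the names of $x$ and $\lab$). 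Define the \kl{structure} $\struc'$ over $V'$ by $a^{\struc'} \defeq \set{\tuple{\pi(u), \pi(v)} \mid \tuple{u,v} \in a^{\struc}}$ for each $a \in \vsig$, so that $\pi \colon \struc \cong \struc'$.

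By a routine induction on \kl{KL terms} (or directly by \Cref{proposition: graph language} combined with the fact that $\pi$ is a \kl{graph isomorphism} between the associated \kl{structures}), \kl[semantics]{semantics} transport along isomorphisms: $\tuple{u,v} \in \jump{\term}_{\struc}$ iff $\tuple{\pi(u), \pi(v)} \in \jump{\term}_{\struc'}$. In particular, $\tuple{\lab, \pi(y)} = \tuple{\pi(x), \pi(y)} \in \jump{\term[1]}_{\struc'}$, hence $\pi(y) \in \jump{@\lab.\term[1]}_{\struc'}$. By the assumption, $\pi(y) \in \jump{@\lab.\term[2]}_{\struc'}$, so $\tuple{\lab, \pi(y)} \in \jump{\term[2]}_{\struc'}$. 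Transporting back along $\pi^{-1}$ gives $\tuple{x, y} \in \jump{\term[2]}_{\struc}$, as required.

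The main obstacle is really only bookkeeping: one must be careful that $\lab$ is treated as a constant naming a specific vertex (rather than a bound variable), so the renaming step is legitimate precisely because the class $\REL$ is closed under isomorphic copies. No structural induction on $\term[1]$ or $\term[2]$ is needed beyond the isomorphism-invariance lemma, which is standard.
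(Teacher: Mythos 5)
Your proposal is correct and follows essentially the same route as the paper: the forward direction is immediate, and for the backward direction the paper likewise passes to an isomorphic copy $\struc[2]$ of $\struc$ via $f \colon \struc \cong \struc[2]$ with $f$ sending the given source vertex to $\lab$, then transports the semantics of the (label-free) \kl{KL terms} along the isomorphism in both directions. Your extra care about the case where $\lab$ already occurs in $\univ{\struc}$ and about $\REL$ being closed under isomorphic copies is a harmless elaboration of the same argument.
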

\begin{proof}
    ($\Longrightarrow$):
    Trivial.
    ($\Longleftarrow$):
    Let $\tuple{\lab[2], \lab[3]} \in \jump{\term}_{\struc}$.
    Let $f$ and $\struc[2]$ be such that $f \colon \struc[1] \cong \struc[2]$ and $f(\lab[2]) = \lab[1]$.
    We then have that for all \kl{KL terms} $\term[3]$,
    $\tuple{\lab[2], \lab[3]} \in \jump{\term[3]}_{\struc}$
    iff $\tuple{\lab[1], f(\lab[3])} \in \jump{\term[3]}_{\struc[2]}$ (since $\struc[1] \cong \struc[2]$ and $\term[3]$ is a (non-labelled) \kl{KL term})
    iff $f(\lab[3]) \in \jump{@ \lab.\term[3]}_{\struc[2]}$ (by the definition of $\jump{\bl}_{\struc[2]}$).
    We thus have $f(\lab[3]) \in \jump{@ \lab.\term[1]}_{\struc}$.
    By the assumption,
    $f(\lab[3]) \in \jump{@ \lab.\term[2]}_{\struc[2]}$.
    Hence, by the reverse direction of the above argument,
    we have $\tuple{\lab[2], \lab[3]} \in \jump{\term[2]}_{\struc}$.
\end{proof}

Note that for \kl{labeled KL terms} $\lterm[1]$, $\jump{\lterm[1]}_{\struc[1]}$ and $\jump{\lterm[1]}_{\struc[2]}$ may not be equal even if $\struc[1] \cong \struc[2]$, because we use the names of \kl{vertices} in the \kl(lterm){semantics} (\cf\ \kl{nominals} in hybrid logic \cite{arecesHybridLogics2007}).
For that reason, for \kl{structures}, we distinguish two \kl{structures} even if they are isomorphic, \cf\  we identify two \kl{graphs}/\kl{runs}/\kl{$\struc$-runs} if they are isomorphic.

\subsection{Runs on structures}
We use the following notation:
$X \mathop{\heartsuit} Y \defeq \set{\trace[1] \mathop{\heartsuit} \trace[2] \mid \trace[1] \in X \land \trace[2] \in Y}$
where $X$ and $Y$ are sets of \kl{runs} or sets of \kl{$\struc$-runs} and $\mathop{\heartsuit} \in \set{\series, \parallel}$.
In this set notation, we use terms as singleton sets: for instance, $\set{\trace[1], \trace[2]} \series \trace[3] = \set{\trace[1] \series \trace[3], \trace[2] \series \trace[3]}$.
\begin{defi}\label{definition: run lab}
    Let $\struc$ be a \kl{structure}.
    For an \kl{lKL term} $\lterm$ and $\lab[3] \in \univ{\struc}$,
    the \intro*\kl{$\struc$-run language} $\run^{\struc}_{\lab[3]}(\lterm)$ is defined as follows:
    \begin{align*}
        \run^{\struc}_{\lab[3]}(@ \lab[1]. \term) &\defeq \run^{\struc}_{\lab[1], \lab[3]}(\term), &
        \hspace{-.5em}\run^{\struc}_{\lab[3]}(\lterm[1] \compo_{1} \term[2]) &\defeq \bigcup_{\lab[2] \in \univ{\struc}} \run^{\struc}_{\lab[2]}(\lterm[1]) \series \run^{\struc}_{\lab[2], \lab[3]}(\term[2]),  &
        \hspace{-.5em}\run^{\struc}_{\lab[3]}(\lterm[1] \intersection_{1} \lterm[2]) &\defeq (\run^{\struc}_{\lab[3]}(\lterm[1]) \parallel \run^{\struc}_{\lab[3]}(\lterm[2])) \series \join_1^{\lab[3]}.
    \end{align*}
\end{defi}

\begin{prop}[\cf\ \Cref{proposition: struc-run language}]\label{proposition: labeled struc-run language}
    Let $\struc$ be a \kl{structure}.
    For all \kl{lKL term} $\lterm$,
    we have:
    \[\jump{\lterm}_{\struc} \;=\; \set{\lab[3] \in \univ{\struc} \mid \run^{\struc}_{\lab[3]}(\lterm) \neq \emptyset}.\]
\end{prop}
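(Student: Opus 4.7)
The plan is to proceed by structural induction on $\lterm$, following the three cases of the grammar of \kl{lKL terms}: the base case $\lterm = @\lab[1].\term$ with $\term$ a \kl{KL term}, and the two inductive cases $\lterm = \lterm[1] \compo_{1} \term[2]$ and $\lterm = \lterm[1] \intersection_{1} \lterm[2]$. Along the induction I would also track (either implicitly or as an easy side lemma proved by the same induction) the invariant that every $\trace \in \run^{\struc}_{\lab[3]}(\lterm)$ has interface $\tuple{\overrightarrow{\mathsf{lab}}(\lterm),\lab[3]}$, so that every \kl(srun){series} or \kl(srun){parallel composition} invoked in \Cref{definition: run lab} is well-defined.

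For the base case $\lterm = @\lab[1].\term$, both sides unfold to a statement about \kl{KL terms}: the left side becomes $\set{\lab[3] \mid \tuple{\lab[1],\lab[3]} \in \jump{\term}_{\struc}}$ and the right side becomes $\set{\lab[3] \mid \run^{\struc}_{\lab[1],\lab[3]}(\term) \neq \emptyset}$, which coincide by \Cref{proposition: struc-run language}.

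For $\lterm = \lterm[1] \compo_{1} \term[2]$, I would unfold the semantics as the set of $\lab[3]$ for which there exists $\lab[2] \in \univ{\struc}$ with $\lab[2] \in \jump{\lterm[1]}_{\struc}$ and $\tuple{\lab[2],\lab[3]} \in \jump{\term[2]}_{\struc}$; then unfold $\run^{\struc}_{\lab[3]}(\lterm) \neq \emptyset$ as the existence of some $\lab[2]$ such that $\run^{\struc}_{\lab[2]}(\lterm[1]) \series \run^{\struc}_{\lab[2],\lab[3]}(\term[2])$ is non-empty. Using the invariant on interfaces, this \kl(srun){series composition} is defined, and it is non-empty iff both factor sets are non-empty. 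Then \Cref{proposition: struc-run language} applied to $\term[2]$ and the induction hypothesis applied to $\lterm[1]$ close the case.

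For $\lterm = \lterm[1] \intersection_{1} \lterm[2]$, I would observe that
$(\run^{\struc}_{\lab[3]}(\lterm[1]) \parallel \run^{\struc}_{\lab[3]}(\lterm[2])) \series \join_{1}^{\lab[3]}$
is non-empty iff both $\run^{\struc}_{\lab[3]}(\lterm[1])$ and $\run^{\struc}_{\lab[3]}(\lterm[2])$ are non-empty: \kl(srun){parallel composition} of non-empty sets is non-empty, and the subsequent \kl(srun){series composition} with $\join_{1}^{\lab[3]}$ is always defined because the target interface of the parallel composition is $\lab[3]\lab[3]$, which matches the source interface of $\join_{1}^{\lab[3]}$. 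The induction hypothesis on $\lterm[1]$ and $\lterm[2]$ then rewrites this as $\lab[3] \in \jump{\lterm[1]}_{\struc} \cap \jump{\lterm[2]}_{\struc} = \jump{\lterm[1] \intersection_{1} \lterm[2]}_{\struc}$.

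The only real obstacle is the bookkeeping of interfaces: one must be careful that every $\struc$-run in $\run^{\struc}_{\lab[3]}(\lterm[i])$ has exactly the source-interface prescribed by $\overrightarrow{\mathsf{lab}}(\lterm[i])$, so that the compositions used in \Cref{definition: run lab} are all defined and so that ``non-empty iff both non-empty'' arguments for $\series$ and $\parallel$ go through. This is routine; I would either state it as a one-line auxiliary observation preceding the main induction or absorb it into each inductive step.
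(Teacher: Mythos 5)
Your proposal is correct and matches the paper's proof, which is exactly ``by easy induction on $\lterm$ using \Cref{proposition: struc-run language}''; you have simply spelled out the three cases and the interface bookkeeping that the paper leaves implicit. No further changes are needed.
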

\begin{proof}
    By easy induction on $\lterm$ using \Cref{proposition: struc-run language}.
\end{proof}

\section{Derivatives on Graphs}\label{section: derivative}
We recall the \kl{left quotients} of \kl{$\struc$-runs} (defined in \Cref{section: left quotients of struc-runs}).
In this section, we now define \kl{derivatives} on $\struc$, which can simulate the \kl{left quotients}.
We first define the \kl{emptiness property} in the term level (\Cref{definition: eps}), and then define the \kl{derivatives} on \kl{graphs} (\Cref{definition: derivative,proposition: derivative on graphs}), based on derivatives on strings \cite{brzozowskiDerivativesRegularExpressions1964,antimirovPartialDerivativesRegular1996,sakarovitchElementsAutomataTheory2009}.
\begin{defi}\label{definition: eps}
    For a \kl{label} $\lab[3]$ and an \kl{lKL term} $\lterm$,
    the truth value $\EPS_{\lab[3]}(\lterm) \in \set{\const{false}, \const{true}}$ is defined as follows:%
    \footnote{The definition $\EPS_{\lab[3]}(@\lab[1].\term[2] \intersection \term[3]) \defeq \EPS_{\lab[3]}(\lterm[2] \intersection_{1} \lterm[3]) \defeq \const{false}$ is due to that each \kl{$\struc$-run} of $\run^{\struc}_{z}(@\lab[1].\term[2] \intersection \term[3])$ and $\run^{\struc}_{z}(\lterm[2] \intersection_{1} \lterm[3])$ always contains $\join^{z}_{1}$ (\Cref{definition: run lab}), so they do not contain the \kl{run} having the \kl{emptiness property}.}
    \begin{align*}
        \EPS_{\lab[3]}(@\lab[1].a)                      
        & \defeq  \EPS_{\lab[3]}(@\lab[1].\emp)
        \defeq \EPS_{\lab[3]}(@\lab[1].\term[2] \intersection \term[3])
        \defeq   \EPS_{\lab[3]}(\lterm[2] \intersection_{1} \lterm[3]) 
        \defeq \const{false},                                \span\span                                                                              \\
        \EPS_{\lab[3]}(@\lab[1].\id)                     & \defeq \EPS_{\lab[3]}(@\lab[1].\term[2]^*) \defeq \begin{cases}
            \const{true}  & (\lab[1] = \lab[3])      \\
            \const{false} & (\lab[1] \neq \lab[3]),     
        \end{cases}
        &\EPS_{\lab[3]}(\lterm[2] \compo_{1} \term[3]) &\defeq  \EPS_{\lab[3]}(\lterm[2]) \land \EPS_{\lab[3]}(@ \lab[3]. \term[3]),   \\
        \EPS_{\lab[3]}(@\lab[1].\term[2] \compo \term[3]) &\defeq  \EPS_{\lab[3]}(@\lab[1].\term[2]) \land \EPS_{\lab[3]}(@\lab[3].\term[3]),
        & \EPS_{\lab[3]}(@\lab[1].\term[2] \union \term[3]) &\defeq \EPS_{\lab[3]}(@\lab[1].\term[2]) \lor \EPS_{\lab[3]}(@\lab[1].\term[3]).
    \end{align*}
\end{defi}
For a \kl{label} $\lab[3]$ and a set $\ltset$ of \kl{lKL terms}, let $\EPS_{\lab[3]}(\ltset) \defeq \bigvee_{\lterm[1] \in \ltset} \EPS_{\lab[3]}(\lterm[1])$.
This $\EPS_{\lab[3]}$ (\Cref{definition: eps}) is defined so that we can encode the \kl{emptiness property} $\EPS_{\lab[3]}$ (in \Cref{section: left quotients of runs}) as follows.
\begin{lem}\label{lemma: eps}
    Let $\struc$ be a \kl{structure}.
    For all $\lab[3] \in \univ{\struc}$ and \kl{lKL terms} $\lterm$,
    we have:
    \[\EPS_{\lab[3]}(\run^{\struc}_{\lab[3]}(\lterm)) \;\iff\; \EPS_{\lab[3]}(\lterm).\]
\end{lem}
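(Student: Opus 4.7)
The plan is to proceed by structural induction on $\lterm$, with a sub-induction on the inner KL term $\term$ in the base case $\lterm = @\lab[1].\term$. The driving observation is a rigidity property of the identity run, which I would state up front: if $\trace_{1} \series \trace_{2} = \eps^{\lab[3]}$, then necessarily $\trace_{1} = \trace_{2} = \eps^{\lab[3]}$, since $\eps^{\lab[3]}$ has no edges and the series composition is only defined when the interior interface matches. Dually, any $\struc$-run whose topmost structure is forced to contain a $\fork$ or $\join$ vertex (and hence an edge) can never equal $\eps^{\lab[3]}$. I will also use that $\EPS_{\lab[3]}$ distributes over unions of $\struc$-run languages.

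In the base case $\lterm = @\lab[1].\term$, I unfold $\run^{\struc}_{\lab[3]}(@\lab[1].\term) = \run^{\struc}_{\lab[1], \lab[3]}(\term)$ and do induction on $\term$. For $\term \in \set{a, \emp}$, the run language is either empty or contains only runs with one edge, so $\EPS_{\lab[3]}$ is $\const{false}$, matching \Cref{definition: eps}. For $\term = \eps$, $\run^{\struc}_{\lab[1], \lab[3]}(\eps)$ is $\set{\eps^{\lab[3]}}$ when $\lab[1] = \lab[3]$ and $\emptyset$ otherwise, giving exactly the case split of $\EPS_{\lab[3]}(@\lab[1].\eps)$. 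For $\term = \term[2] \union \term[3]$, $\EPS_{\lab[3]}$ distributes over the set union and the IH closes the case. For $\term = \term[2] \compo \term[3]$, rigidity plus $\run^{\struc}_{\lab[1],\lab[3]}(\term[2]\compo\term[3]) = \bigcup_{\lab[2]} \run^{\struc}_{\lab[1],\lab[2]}(\term[2]) \series \run^{\struc}_{\lab[2],\lab[3]}(\term[3])$ forces the intermediate $\lab[2]$ to equal $\lab[3]$, reducing $\EPS_{\lab[3]}$ to the conjunction $\EPS_{\lab[3]}(\run^{\struc}_{\lab[1],\lab[3]}(\term[2])) \land \EPS_{\lab[3]}(\run^{\struc}_{\lab[3],\lab[3]}(\term[3]))$, which by IH is $\EPS_{\lab[3]}(@\lab[1].\term[2]) \land \EPS_{\lab[3]}(@\lab[3].\term[3])$. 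For $\term = \term[2] \intersection \term[3]$, every element of $\run^{\struc}_{\lab[1], \lab[3]}(\term[2] \intersection \term[3])$ begins with $\fork_{1}^{\lab[1]}$ and so is never $\eps^{\lab[3]}$. Finally, for $\term = \term[2]^*$, only the zeroth iteration can contribute a trivial run, so $\EPS_{\lab[3]}$ coincides with that of $@\lab[1].\eps$.

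The two remaining outer cases on $\lterm$ are handled analogously. For $\lterm = \lterm[1] \compo_{1} \term[2]$, rigidity again collapses the union over $\lab[2] \in \univ{\struc}$ in \Cref{definition: run lab} to the single $\lab[2] = \lab[3]$, yielding $\EPS_{\lab[3]}(\lterm[1]) \land \EPS_{\lab[3]}(@\lab[3].\term[2])$ by the IH. For $\lterm = \lterm[1] \intersection_{1} \lterm[2]$, the defining expression ends in the obligatory $\series \join_{1}^{\lab[3]}$, which introduces a $\join$ vertex, so $\EPS_{\lab[3]}$ is unconditionally $\const{false}$, matching \Cref{definition: eps}.

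The main obstacle is threading the interface-label constraints correctly: every application of rigidity converts an unquantified intermediate vertex (arising from the $\bigcup_{\lab[2]}$ in the semantics of composition, or the matching of interfaces in $\parallel$/$\series$) into an equality with $\lab[3]$, and these equalities must be propagated so that the IH is invoked with exactly the form $\run^{\struc}_{\lab[3]}(\cdot)$ needed. Provided these interface alignments are made explicit in each composition case, the remaining work is routine unfolding of definitions.
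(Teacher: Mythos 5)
Your proposal is correct and takes essentially the same route as the paper: a structural induction that unfolds $\run^{\struc}_{\lab[3]}(\cdot)$ in each case, observes that the $\intersection$/$\intersection_{1}$ cases and the atomic cases $a,\emp$ can never produce $\eps^{\lab[3]}$, and collapses the $\bigcup_{\lab[2]}$ in the composition cases to $\lab[2]=\lab[3]$ before applying the induction hypothesis. The only difference is presentational — you isolate the rigidity of $\eps^{\lab[3]}$ under $\series$ as an explicit auxiliary fact, which the paper leaves implicit in its equivalence chain.
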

\begin{proof}
    By easy induction on $\lterm$, with unfolding the definitions.
    \begin{itemize}
        \item Case $\lterm = @\lab[1]. a, @\lab[1].\emp, @\lab[1].\term[2] \intersection \term[3], \lterm[2] \intersection_{1} \lterm[3]$:
        By $\EPS_{\lab[3]}(\run^{\struc}_{\lab[3]}(\lterm))$ iff $\const{false}$.

        \item Case $\lterm = @\lab[1]. \id, @\lab[1].\term[2]^*$:
        By $\EPS_{\lab[3]}(\run^{\struc}_{\lab[3]}(\lterm))$ iff $\begin{cases}
            \const{true}  & (\lab[1] = \lab[3])      \\
            \const{false} & (\lab[1] \neq \lab[3]).
        \end{cases}$

        \item Case $\lterm = @\lab[1].\term[2] \compo \term[3]$:
        We have:
        \begin{align*}
            &\EPS_{\lab[3]}(\run^{\struc}_{\lab[3]}(@\lab[1].\term[2] \compo \term[3]))\\
            &\;\iff\; \EPS_{\lab[3]}(\bigcup_{\lab[2] \in \univ{\struc}} \run^{\struc}_{\lab[2]}(@\lab[1].\term[2]) \series \run^{\struc}_{\lab[3]}(@\lab[2].\term[3])) \tag{\Cref{definition: run lab} of $\run^{\struc}_{\lab[3]}$}\\
            &\;\iff\; \bigvee_{\lab[2] \in \univ{\struc}} \EPS_{\lab[2]}(\run^{\struc}_{\lab[3]}(@\lab[1].\term[2])) \land \EPS_{\lab[3]}(\run^{\struc}_{\lab[3]}(@\lab[2].\term[3])) \tag{By $\EPS_{\lab[3]}(\trace[1] \series\trace[2])$ iff $\EPS_{\lab[3]}(\trace[1]) \land \EPS_{\lab[3]}(\trace[2])$}\\
            &\;\iff\; \EPS_{\lab[3]}(\run^{\struc}_{\lab[3]}(@\lab[1].\term[2])) \land \EPS_{\lab[3]}(\run^{\struc}_{\lab[3]}(@\lab[3].\term[3])) \tag{$\EPS_{\lab[2]}(\run^{\struc}_{\lab[3]}(@\lab[1].\term[2]))$ is $\const{false}$ when $\lab[2] \neq \lab[3]$}\\
            &\;\iff\; \EPS_{\lab[3]}(@\lab[1].\term[2]) \land \EPS_{\lab[3]}(@\lab[3].\term[3]) \tag{IH}\\
            &\;\iff\; \EPS_{\lab[3]}(@\lab[1]. \term[2] \compo \term[3]). \tag{\Cref{definition: eps}}
        \end{align*}
        \item Case $\lterm = \lterm[2] \compo_{1} \term[3]$:
        Similar to the case above.

        \item Case $\lterm = @\lab[1].\term[2] \union \term[3]$:
        We have:
        \begin{align*}
        \EPS_{\lab[3]}(\run^{\struc}_{\lab[3]}(@\lab[1].\term[2] \union \term[3])) 
        &\;\iff\; \EPS_{\lab[3]}(\run^{\struc}_{\lab[3]}(@\lab[1].\term[2]) \cup \run^{\struc}_{\lab[3]}(@\lab[3].\term[3])) \tag{\Cref{definition: run lab} of $\run^{\struc}_{\lab[3]}$}\\
        &\;\iff\; \EPS_{\lab[3]}(\run^{\struc}_{\lab[3]}(@\lab[1].\term[2])) \lor \EPS_{\lab[3]}(\run^{\struc}_{\lab[3]}(@\lab[3].\term[3])) \tag{By $\EPS_{\lab[3]}(\glang) = \bigvee_{\trace \in \glang} \EPS_{\lab[3]}(\trace)$}\\
        &\;\iff\; \EPS_{\lab[3]}(@\lab[1].\term[2]) \lor \EPS_{\lab[3]}(@\lab[3].\term[3]) \tag{IH}\\
        &\;\iff\; \EPS_{\lab[3]}(@\lab[1]. \term[2] \union \term[3]). \tag{\Cref{definition: eps}}
        \end{align*}

    \end{itemize}\noindent %FIXED indent
    Hence, this completes the proof.
\end{proof}

\begin{defi}\label{definition: derivative}
    Let $\struc$ be a \kl{structure}.
    The \intro*\kl{derivative} relation $\lterm[1] \longrightarrow^{\struc}_{\trace[3]} \lterm[2]$,
    where $\lterm[1]$ and $\lterm[2]$ are \kl{lKL terms} and $\trace[3]$ is an \kl{$\struc$-run},
    is defined as the smallest relation (of tuples of $\lterm[1]$, $\lterm[2]$, and $\trace[3]$) closed under the following rules:
    \begin{gather*}
        \begin{prooftree}
            \hypo{\tuple{\lab[1], \lab[2]} \in \jump{a}_{\struc}}
            \infer1{@\lab. a \longrightarrow^{\struc}_{a^{\lab[1],\lab[2]}_{1}} @\lab[2].\id}
        \end{prooftree} \mbox{ for $a \in \vsig$}
        \qquad
        \begin{prooftree}
            \hypo{@\lab.\term[1] \longrightarrow^{\struc}_{\trace[3]} \lterm[1]'}
            \infer1{@\lab.\term[1] \compo \term[2] \longrightarrow^{\struc}_{\trace[3]} \lterm[1]' \compo \term[2]}
        \end{prooftree}  
        \qquad
        \begin{prooftree}
            \hypo{\EPS_{\lab[3]}(@\lab[1].\term[1])}
            \hypo{@\lab[3].\term[2] \longrightarrow^{\struc}_{\trace[3]} \lterm[2]'}
            \infer2{@\lab[1].\term[1] \compo \term[2] \longrightarrow^{\struc}_{\trace[3]} \lterm[2]'}
        \end{prooftree}
        \\
        \begin{prooftree}
            \hypo{@\lab.\term[1] \longrightarrow^{\struc}_{\trace[3]} \lterm[1]'}
            \infer1{@\lab.\term[1] \union \term[2] \longrightarrow^{\struc}_{\trace[3]} \lterm[1]'}
        \end{prooftree}
        \qquad
        \begin{prooftree}
            \hypo{@\lab.\term[2] \longrightarrow^{\struc}_{\trace[3]} \lterm[2]'}
            \infer1{@\lab.\term[1] \union \term[2] \longrightarrow^{\struc}_{\trace[3]} \lterm[2]'}
        \end{prooftree}
        \qquad
        \begin{prooftree}
            \hypo{\lterm[1]  \longrightarrow^{\struc}_{\trace[3]} \lterm[1]'}
            \infer1{\lterm[1] \compo_{1} \term[2] \longrightarrow^{\struc}_{\trace[3]} \lterm[1]' \compo_{1} \term[2]}
        \end{prooftree}
        \qquad
        \begin{prooftree}
            \hypo{\EPS_{\lab[3]}(\lterm[1])}
            \hypo{@ \lab[3].\term[2] \longrightarrow^{\struc}_{\trace[3]} \lterm[2]'}
            \infer2{\lterm[1] \compo_{1} \term[2] \longrightarrow^{\struc}_{\trace[3]} \lterm[2]'}
        \end{prooftree}
        \\
        \begin{prooftree}
            \hypo{@\lab.\term[1] \longrightarrow^{\struc}_{\trace[3]} \lterm[1]'}
            \infer1{ @\lab.\term[1]^* \longrightarrow^{\struc}_{\trace[3]} \lterm[1]' \compo \term[1]^*}
        \end{prooftree}
        \qquad
        \begin{prooftree}
            \hypo{\mathstrut}
            \infer1{@\lab.\term[1] \intersection \term[2] \longrightarrow^{\struc}_{\fork^{\lab[1]}_{1}} (@\lab.\term[1]) \intersection_{1} (@\lab.\term[2])}
        \end{prooftree}                                                          
        \qquad
        \begin{prooftree}
            \hypo{\EPS_{\lab[3]}(\lterm[1])}
            \hypo{\EPS_{\lab[3]}(\lterm[2])}
            \infer2{\lterm[1] \intersection_{1} \lterm[2] \longrightarrow^{\struc}_{\join^{\lab[3]}_{1}} @\lab[3].\id}
        \end{prooftree}
        \\
        \begin{prooftree}[center=false] % align by the baseline of the conclusion
            \hypo{\lterm[1] \longrightarrow^{\struc}_{\trace[3]} \lterm[1]'}
            \infer1{\lterm[1] \intersection_{1} \lterm[2] \longrightarrow^{\struc}_{\trace[3] \parallel \id^{\overrightarrow{\mathsf{lab}}(\lterm[2])}} \lterm[1]' \intersection_{1} \lterm[2]}
        \end{prooftree}
        \quad
        \begin{prooftree}[center=false]
            \hypo{\lterm[2] \longrightarrow^{\struc}_{\trace[3]} \lterm[2]'}
            \infer1{\lterm[1] \intersection_{1} \lterm[2] \longrightarrow^{\struc}_{\id^{\overrightarrow{\mathsf{lab}}(\lterm[1])} \parallel \trace[3]} \lterm[1] \intersection_{1} \lterm[2]'}
        \end{prooftree}
        \quad
        \begin{prooftree}[center=false]
            \hypo{\mathstrut}
            \infer1[\labeltext{R}{rule: derivative R}]{\lterm[1] \longrightarrow^{\struc}_{\id^{\overrightarrow{\mathsf{lab}}(\lterm[1])}} \lterm[1]}
        \end{prooftree}
        \quad
        \begin{prooftree}[center=false]
            \hypo{\lterm[1] \longrightarrow^{\struc}_{\trace[3]} \lterm[3]}
            \hypo{\lterm[3] \longrightarrow^{\struc}_{\smash{\trace[3]'}} \lterm[2]}
            \infer2[\labeltext{T}{rule: derivative T}]{\lterm[1] \longrightarrow^{\struc}_{\trace[3] \series \trace[3]'} \lterm[2]}
        \end{prooftree}
        \end{gather*}
\end{defi}
Let $\struc$ be a \kl{structure} and $\trace[3]$ be an \kl{$\struc$-run}.
For an \kl{lKL term} $\lterm$ (resp.\ a set $\ltset$ of \kl{lKL terms}),
we define the \kl{derivative} of $\lterm$ (resp.\ $\ltset$) with respect to $\trace[3]$ as follows:
\begin{align*}
    \D^{\struc}_{\trace[3]}(\lterm[1]) &\;\defeq\; \set{\lterm[2] \mbox{ an \kl{lKL term}} \mid \lterm[1] \longrightarrow^{\struc}_{\trace[3]} \lterm[2]}, & \D^{\struc}_{\trace[3]}(\ltset) &\;\defeq\; \bigcup_{\lterm[1] \in \ltset} \D^{\struc}_{\trace[3]}(\lterm[1]).
\end{align*}
For notational simplicity, we use the following notation: $A \mathop{\heartsuit} B \defeq \set{a \mathop{\heartsuit} b \mid a \in A \land b \in B}$
where $A$ and $B$ are sets of \kl{lKL terms} and $\mathop{\heartsuit} \in \set{\compo_{1}, \intersection_{1}}$.
In this notation, we use terms as singleton sets: for instance, $\set{a, b} \intersection_{1} c = \set{a \intersection_{1} c, b \intersection_{1} c}$.
Using this notation, for \emph{\kl{atomic $\struc$-runs}}, \Cref{definition: derivative} can be alternatively expressed as follows (\cf\  Antimirov's derivatives \cite{antimirovPartialDerivativesRegular1996}).
\begin{prop}\label{proposition: derivative on graphs}
    Let $\struc$ be a \kl{structure}.
    For an \kl{atomic $\struc$-run} $\trace[3]$,
    we have the following:
    \begingroup%
\allowdisplaybreaks
    \begin{align*}
        \D^{\struc}_{\trace[3]}(@\lab[1].a)                       &\;=\; \begin{cases}
             \set{@\lab[2].\id \mid \tuple{\lab[1], \lab[2]} \in \jump{a}_{\struc}} & (\trace[3] = a^{\lab[1], \lab[2]}_{1})   \\
            \emptyset & (\mbox{otherwise})
        \end{cases}  \mbox{ for $a \in \vsig$},                                                                                                                                                            \\
        \D^{\struc}_{\trace[3]}(@\lab[1].\id)                       &\;=\;  \D^{\struc}_{\trace[3]}(@\lab[1].\emp) \;=\; \emptyset,                                                             \\
        \D^{\struc}_{\trace[3]}(@\lab[1].\term[2] \compo \term[3]) &\;=\;  (\D^{\struc}_{\trace[3]}(@\lab[1].\term[2]) \compo_{1} \term[3]) \cup \bigcup_{\lab[3] \in \univ{\struc}} \D^{\struc}_{\trace[3]}(\set{@ \lab[3].\term[3] \mid \EPS_{\lab[3]}(@ \lab[1]. \term[2])}),                                                        \\
        \D^{\struc}_{\trace[3]}(@\lab[1].\term[2] \union \term[3])  &\;=\; \D^{\struc}_{\trace[3]}(@\lab[1].\term[2]) \cup \D^{\struc}_{\trace[3]}(@\lab[1].\term[3]),                                                                                                                                                                  \\
        \D^{\struc}_{\trace[3]}(@\lab[1].\term[2]^*)              &\;=\; \D^{\struc}_{\trace[3]}(@\lab[1].\term[2]) \compo_{1} \term[2]^*,                                                                                                                                                                                 \\
        \D^{\struc}_{\trace[3]}(@\lab[1].\term[2] \intersection \term[3])  &\;=\; \begin{cases}
            \set{(@\lab[1].\term[2]) \intersection_{1} (@ \lab[1].\term[3])}      & (\trace[3] = \fork^{\lab[1]}_{1})        \\
            \emptyset & (\mbox{otherwise}),
        \end{cases}                                                                                                                                                               \\
        \D^{\struc}_{\trace[3]}(\lterm[2] \compo_{1} \term[3])     &\;=\;  (\D^{\struc}_{\trace[3]}(\lterm[2]) \compo_{1} \term[3]) \cup \bigcup_{\lab[3] \in \univ{\struc}} \D^{\struc}_{\trace[3]}(\set{@\lab[3]. \term[3] \mid \EPS_{\lab[3]}(\lterm[2])}),                                                                         \\
        \D^{\struc}_{\trace[3]}(\lterm[2] \intersection_{1} \lterm[3])     &\;=\;  \begin{cases}
            \D^{\struc}_{\trace[3]'}(\lterm[2]) \intersection_{1} \lterm[3] & (\mbox{$\trace[3] = (\trace[3]' \parallel \id^{\overrightarrow{\mathsf{lab}}(\lterm[3])})$ for some $\trace[3]'$})\\
            \lterm[2] \intersection_{1} \D^{\struc}_{\trace[3]'}(\lterm[3]) & (\mbox{$\trace[3] = (\id^{\overrightarrow{\mathsf{lab}}(\lterm[2])} \parallel \trace[3]')$ for some $\trace[3]'$})\\
            \set{@\lab[3]. \id \mid \EPS_{\lab[3]}(\lterm[2]) \land \EPS_{\lab[3]}(\lterm[3])} & (\trace[3] = \join^{\lab[3]}_{1})\\
            \emptyset & (\mbox{otherwise}).
        \end{cases}
    \end{align*}
    \endgroup%
\end{prop}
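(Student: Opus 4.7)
The plan is to establish each equality in Proposition 4.5 by structural induction on the lKL term $\lterm$, combined with a case analysis on which of the three shapes $a^{\lab[1],\lab[2]}_{1}$, $\fork^{\lab}_{1}$, $\join^{\lab}_{1}$ the atomic $\struc$-run $\trace[3]$ takes. For each syntactic form of $\lterm$ I will compare the set of derivatives produced by the one-step inference rules of Definition 4.4 against the closed-form expression stated in the proposition, using Lemma 4.5 to translate the side-condition $\EPS_{\lab[3]}(\cdot)$ into the relevant emptiness statement about $\struc$-run languages when needed.

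The crucial preliminary observation is that the rules R and T contribute nothing new when the label $\trace[3]$ is atomic. Rule R produces a derivation labelled by the identity run $\eps^{\overrightarrow{\mathsf{lab}}(\lterm)}$, which is disjoint from the class of atomic $\struc$-runs. Any application of T yielding an atomic $\trace[3]$ must split it as $\trace[3]_1 \series \trace[3]_2$; but an atomic $\struc$-run contains exactly one non-identity edge, so one factor $\trace[3]_i$ must itself be an identity run $\eps^{\vec{\lab}}$, and the T-step then collapses (via $\eps^{\vec{\lab}} \series \trace[3]' = \trace[3]' = \trace[3]' \series \eps^{\vec{\lab}}$) to a single application of one of the generating rules with the same atomic $\trace[3]$. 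Consequently, for atomic $\trace[3]$ every derivative $\lterm \longrightarrow^{\struc}_{\trace[3]} \lterm'$ is obtained by precisely one non-trivial rule application, possibly inside an induction context for composition or intersection.

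Given this normalisation, the cases are routine. The base cases $@\lab[1].a$, $@\lab[1].\eps$, $@\lab[1].\emp$ are read off from the axioms directly. The two composition cases $@\lab[1].\term[2] \compo \term[3]$ and $\lterm[2] \compo_{1} \term[3]$ use the pair of sequencing rules, generating the first and second summand of the union on the right-hand side respectively; the jump label $\lab[3]$ is quantified by the $\EPS_{\lab[3]}$ side-condition, producing the $\bigcup_{\lab[3]}$ in the closed form. The union and Kleene star cases follow from a single rule each; the $\intersection$ case fires only the fork rule, and only when $\trace[3] = \fork^{\lab[1]}_{1}$. The main obstacle is the $\intersection_{1}$ case, where three different rules can apply depending on whether $\trace[3]$ arises from a left factor, right factor, or join: I need to verify that these three shapes $\trace[3]' \parallel \eps^{\overrightarrow{\mathsf{lab}}(\lterm[3])}$, $\eps^{\overrightarrow{\mathsf{lab}}(\lterm[2])} \parallel \trace[3]'$, and $\join^{\lab[3]}_{1}$ partition the set of atomic runs of the correct type (so that the case split in the formula is well-defined and exhaustive), and that the inner $\trace[3]'$ is uniquely determined when it exists. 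Both facts follow from the rigid type structure of atomic $\struc$-runs, so no substantive difficulty remains beyond careful accounting of interface types.
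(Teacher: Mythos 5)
Your overall plan---a case analysis on the generating rules of Definition~\ref{definition: derivative} for each syntactic form of $\lterm$, preceded by a normalisation of the structural rules (R) and (T)---is the only sensible way to carry out what the paper dismisses as ``a routine verification'', so on the level of strategy you and the paper agree.

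The gap is in the normalisation step, which is the entire content of the argument: it is \emph{not} true that a (T)-application with an atomic label ``collapses to a single application of one of the generating rules''. You correctly observe that one factor of the split must be a zero-edge run, but you then tacitly assume that zero-edge derivations are instances of (R) and hence relate a term only to itself. They are not: the $\EPS$-guarded rules for $\compo$, $\compo_{1}$ and $\intersection_{1}$ can fire with a zero-edge premise supplied by (R) and change the term. Concretely, from $\EPS_{\lab[2]}(@\lab[2].\eps)$ and $@\lab[2].b \longrightarrow^{\struc}_{\eps^{\lab[2]}} @\lab[2].b$ one derives $(@\lab[2].\eps)\compo_{1} b \longrightarrow^{\struc}_{\eps^{\lab[2]}} @\lab[2].b$; composing this by (T) with $@\lab[1].(a\compo b) \longrightarrow^{\struc}_{a^{\lab[1],\lab[2]}_{1}} (@\lab[2].\eps)\compo_{1} b$ and using $a^{\lab[1],\lab[2]}_{1} \series \eps^{\lab[2]} = a^{\lab[1],\lab[2]}_{1}$ gives $@\lab[2].b \in \D^{\struc}_{a^{\lab[1],\lab[2]}_{1}}(@\lab[1].a\compo b)$, whereas the closed form of the $\compo$ clause yields only $\set{(@\lab[2].\eps)\compo_{1} b}$ (its second summand is empty since $\EPS_{\lab[3]}(@\lab[1].a)$ is always $\const{false}$). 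So the literal set equality fails in the $\compo$ and $\compo_{1}$ clauses; only the inclusion $\supseteq$ holds, and the missing elements differ from elements of the closed form exactly by such $\EPS$-mediated zero-edge jumps, which preserve $\run^{\struc}_{\lab[3]}(\cdot)$ and membership in $\cl$ (which is all the paper needs of this proposition downstream, in \Cref{lemma: derivative atomic} and \Cref{proposition: closure}). To make your proof go through you must first characterise the zero-edge derivations explicitly and either prove the proposition as a $\supseteq$-inclusion together with a statement that the remaining derivatives are obtained from the closed form by these jumps, or close the right-hand sides under them; the ``careful accounting'' you defer in the $\intersection_{1}$ case is genuinely routine, but this point is not.
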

\begin{proof}
    By a routine verification.
\end{proof}
The \kl{derivatives} can simulate \kl{left quotients} of \kl{$\struc$-runs} in that
$\D^{\struc}_{\trace[3]}(\lterm)$ expresses the \kl{left quotient} of $\run^{\struc}_{\lab[3]}(\lterm)$ with respect to $\trace[3]$, as follows.
See \Cref{section: lemma: derivative}, for a detailed proof.
\begin{lem}\label{lemma: derivative}
    Let $\struc$ be a \kl{structure} and $\lab[3] \in \univ{\struc}$.
    For all \kl{lKL terms} $\lterm$ and \kl{$\struc$-runs} $\trace[3]$,
    we have:
    \[\D^{\struc}_{\trace[3]}(\run^{\struc}_{\lab[3]}(\lterm)) \;=\; \run^{\struc}_{\lab[3]}(\D^{\struc}_{\trace[3]}(\lterm)).\]
\end{lem}
%FIXED Split Equation and centered
\begin{exa}[\cf\ \Cref{section: left quotients of struc-runs}]\label{example: derivatives}
    Let $\struc = \left(\begin{tikzpicture}[baseline = -.5ex]
        \graph[grow right = 1.cm, branch down = 2.5ex]{
        {s1/{$\lab[1]$}[vert]} -!- {t1/{$\lab[2]$}[vert]}
        };
        \graph[use existing nodes, edges={color=black, pos = .5, earrow}, edge quotes={fill=white, inner sep=1pt,font= \scriptsize}]{
            s1 ->["$a$", bend right] t1;
            t1 ->["$b$", bend right] s1;
        };
    \end{tikzpicture} \right)$,
    $\lterm = @\lab[1]. (a ( (b a) \intersection \id) b) \intersection \id$,
    $\trace[3] =$
    \begin{center}
    $\left(\hspace{-.5em}\begin{tikzpicture}[baseline = -2.5ex]
        \graph[grow right = 1.cm, branch down = 2.ex]{
        {/, 11/{$\lab[1]$}[vert]} -!-
        {21/{$\lab[1]$}[vert], 2c/, 22/{$\lab[1]$}[vert]} -!-
        {31/{$\lab[2]$}[vert],}
        };
        \path (11) edge [opacity = 0] node[pos= .5, elabel](c){$\fork$}(2c);
        \node[left = .5em of 11](l1){};
        \node[right = .5em of 31](r1){\tiny $1$};
        \node[right = .5em of 22](r2){\tiny $2$};
        \graph[use existing nodes, edges={color=black, pos = .5, earrow}, edge quotes={fill=white, inner sep=1pt,font= \scriptsize}]{
            11 -- c;
            c -> ["$1$"{auto}] 21;
            c -> ["$2$"{auto, below left = .3ex}] 22;
            21 ->["$a$"] 31;
            l1 -> 11; 31 -> r1; 22 -> r2;
        };
    \end{tikzpicture}\hspace{-.5em} \right)$,
    and $\trace[3]' = \left(\hspace{-.5em}\begin{tikzpicture}[baseline = -4.ex]
        \graph[grow right = 1.cm, branch down = 2.ex]{
        {/, 11/{$\lab[2]$}[vert], /, 12/{$\lab[1]$}[vert]} -!-
        {21/{$\lab[2]$}[vert], 11mid/, 22/{$\lab[2]$}[vert], /,} -!-
        {31/{$\lab[1]$}[vert]} -!-
        {41/{$\lab[2]$}[vert], 51mid/} -!-
        {/, 51/{$\lab[2]$}[vert]} -!-
        {/, 61/{$\lab[1]$}[vert], 71mid/} -!-
        {/, /, 71/{$\lab[1]$}[vert]}
        };
        \path (11) edge [opacity = 0] node[pos= .5, elabel](11c){$\fork$}(11mid);
        \path (51mid) edge [opacity = 0] node[pos= .5, elabel](51c){$\join$}(51);
        \path (71mid) edge [opacity = 0] node[pos= .5, elabel](71c){$\join$}(71);
        \node[left = .5em of 11](l1){\tiny $1$};
        \node[left = .5em of 12](l2){\tiny $2$};
        \node[right = .5em of 71](r1){};
        \graph[use existing nodes, edges={color=black, pos = .5, earrow}, edge quotes={fill=white, inner sep=1pt,font= \scriptsize}]{
            11 -- 11c;
            11c ->["$1$"{auto}] 21;
            11c ->["$2$"{auto, below left = .3ex}] 22;
            21 ->["$b$"] 31;
            31 ->["$a$"] 41;
            {41, 22} -- 51c -> 51;
            51 ->["$b$"] 61;
            61 --["$1$"{auto}] 71c;
            12 --["$2$"{auto, below right = .3ex}, bend right = 3, pos = .9] 71c;
            71c -> 71;
            l1 -> 11; l2 -> 12; 71 -> r1;
        };
    \end{tikzpicture} \hspace{-.5em}\right)$.\end{center}

    \noindent %FIXED indent
    We can see $\trace[3]' \in \D_{\trace[3]}(\run^{\struc}_{\lab[1]}(\lterm))$ by $\trace[3] \series \trace[3]' \in \run^{\struc}_{\lab[1]}(\lterm)$.
    We also have $\trace[3]' \in \run^{\struc}_{\lab[1]}(\D_{\trace[3]}(\lterm))$, because by letting $\lterm' = (@\lab[2]. ( (b a) \intersection \id) b) \intersection_{1} @\lab[1]. \id$,
    we have $\trace[3]' \in \run^{\struc}(\lterm')$ and $\lterm' \in \D_{\trace[3]}(\lterm)$.
    Here, $\lterm' \in \D_{\trace[3]}(\lterm)$ is shown as follows:
    \begin{align*}
        \lterm \;=\;  @\lab[1]. (a ( (b a) \intersection \id) b) \intersection \id
        & \quad\longrightarrow^{\struc}_{\scalebox{.8}{$\left(\begin{tikzpicture}[baseline = -2.ex]
            \graph[grow right = 1.cm, branch down = 2.ex]{
            {/, 11/{$\lab[1]$}[vert]} -!-
            {21/{$\lab[1]$}[vert], 2c/, 22/{$\lab[1]$}[vert]} -!-
            };
            \path (11) edge [opacity = 0] node[pos= .5, elabel](c){$\fork$}(2c);
            \node[left = .5em of 11](l1){\tiny };
            \node[right = .5em of 21](r1){\tiny $1$};
            \node[right = .5em of 22](r2){\tiny $2$};
            \graph[use existing nodes, edges={color=black, pos = .5, earrow}, edge quotes={fill=white, inner sep=1pt,font= \scriptsize}]{
                11 -- c -> {21, 22};
                l1 -> 11; 21 -> r1; 22 -> r2;
            };
        \end{tikzpicture} \right)$}}\quad (@\lab[1]. a ( (b a) \intersection \id) b) \intersection_{1} @\lab[1]. \id\\
        &\quad\longrightarrow^{\struc}_{\scalebox{.8}{$\left(\begin{tikzpicture}[baseline = -2.ex]
            \graph[grow right = 1.cm, branch down = 2.ex]{
            {21/{$\lab[1]$}[vert], 2c/, 22/{$\lab[1]$}[vert]} -!-
            {31/{$\lab[2]$}[vert],}
            };
            \node[left = .5em of 21](l1){\tiny $1$};
            \node[left = .5em of 22](l2){\tiny $2$};
            \node[right = .5em of 31](r1){\tiny $1$};
            \node[right = .5em of 22](r2){\tiny $2$};
            \graph[use existing nodes, edges={color=black, pos = .5, earrow}, edge quotes={fill=white, inner sep=1pt,font= \scriptsize}]{
                21 ->["$a$"] 31;
                l1 -> 21; l2 -> 22; 31 -> r1; 22 -> r2;
            };
        \end{tikzpicture} \right)$}}\quad (@\lab[2]. ( (b a) \intersection \id) b) \intersection_{1} @\lab[1]. \id \;=\; \lterm'.
    \end{align*}
\end{exa}
We let $(\longrightarrow^{\struc}) \defeq \bigcup_{\text{$\trace[3]$ an \kl{$\struc$-run}}} (\longrightarrow^{\struc}_{\trace[3]})$.
For an \kl{lKL term} $\lterm$ and a set $\tilde{T}$ of \kl{lKL terms},
we let:
\begin{align*}
    \D^{\struc}(\lterm[1]) &\;\defeq\; \set{\lterm[2] \mbox{ an \kl{lKL term}}  \mid \lterm[1] \longrightarrow^{\struc} \lterm[2]}, &
    \D^{\struc}(\tilde{T}) &\;\defeq\; \bigcup_{\lterm[1] \in \tilde{T}} \D^{\struc}(\lterm[1]).
\end{align*}
By \Cref{lemma: eps,lemma: derivative}, \kl{derivatives} can give an alternative semantics, as follows.
\begin{thm}\label{theorem: derivative}
    Let $\struc$ be a \kl{structure} and let $\lterm$ be an \kl{lKL term}.
    We have:
    \[\jump{\lterm}_{\struc} \;=\; \set{z \in \univ{\struc} \mid \EPS_{\lab[3]}(\D^{\struc}(\lterm))}.\]
\end{thm}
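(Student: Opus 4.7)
The plan is to chain together the three results already established: \Cref{proposition: labeled struc-run language} (which relates the semantics to $\struc$-run language emptiness), \Cref{lemma: derivative} (which relates derivatives on terms to left quotients on $\struc$-runs), and \Cref{lemma: eps} (which relates $\EPS$ on terms to $\EPS$ on $\struc$-run languages). The key observation is that for any non-\kl{empty} \kl{$\struc$-run} $\trace \in \run^{\struc}_{\lab[3]}(\lterm)$, its \kl{left quotient} by itself yields the identity $\eps^{\lab[3]}$, which witnesses $\EPS_{\lab[3]}$; conversely, any term $\lterm'$ witnessing $\EPS_{\lab[3]}(\D^{\struc}(\lterm))$ produces such a $\trace$ via \Cref{lemma: derivative}.

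For the forward direction ($\lab[3] \in \jump{\lterm}_{\struc} \Rightarrow \EPS_{\lab[3]}(\D^{\struc}(\lterm))$), I would proceed as follows. By \Cref{proposition: labeled struc-run language}, pick some $\trace \in \run^{\struc}_{\lab[3]}(\lterm)$. By \Cref{proposition: left quotient}, $\eps^{\lab[3]} \in \D^{\struc}_{\trace}(\trace) \subseteq \D^{\struc}_{\trace}(\run^{\struc}_{\lab[3]}(\lterm))$. Applying \Cref{lemma: derivative} with $\trace[3] \defeq \trace$, we get $\eps^{\lab[3]} \in \run^{\struc}_{\lab[3]}(\D^{\struc}_{\trace}(\lterm))$, so there exists $\lterm' \in \D^{\struc}_{\trace}(\lterm) \subseteq \D^{\struc}(\lterm)$ with $\eps^{\lab[3]} \in \run^{\struc}_{\lab[3]}(\lterm')$. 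Then $\EPS_{\lab[3]}(\run^{\struc}_{\lab[3]}(\lterm'))$ holds, and by \Cref{lemma: eps} so does $\EPS_{\lab[3]}(\lterm')$, giving $\EPS_{\lab[3]}(\D^{\struc}(\lterm))$.

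For the backward direction, suppose $\EPS_{\lab[3]}(\D^{\struc}(\lterm))$. Then there exist some $\lterm' \in \D^{\struc}(\lterm)$ and some \kl{$\struc$-run} $\trace$ with $\lterm \longrightarrow^{\struc}_{\trace} \lterm'$ and $\EPS_{\lab[3]}(\lterm')$. By \Cref{lemma: eps} applied to $\lterm'$, we get $\eps^{\lab[3]} \in \run^{\struc}_{\lab[3]}(\lterm')$, hence $\eps^{\lab[3]} \in \run^{\struc}_{\lab[3]}(\D^{\struc}_{\trace}(\lterm))$. By \Cref{lemma: derivative}, $\eps^{\lab[3]} \in \D^{\struc}_{\trace}(\run^{\struc}_{\lab[3]}(\lterm))$, so there is some $\trace[1] \in \run^{\struc}_{\lab[3]}(\lterm)$ with $\trace[1] \longrightarrow^{\struc}_{\trace} \eps^{\lab[3]}$, i.e., $\trace[1] = \trace \series \eps^{\lab[3]} = \trace$ (since $\eps^{\lab[3]}$ is the identity for \kl(srun){series composition} on the appropriate interface). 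Thus $\run^{\struc}_{\lab[3]}(\lterm) \neq \emptyset$ and \Cref{proposition: labeled struc-run language} yields $\lab[3] \in \jump{\lterm}_{\struc}$.

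The argument is essentially a formal manipulation, with no real obstacle beyond bookkeeping on interfaces; the only subtle point is to observe that a non-\kl{empty} $\struc$-run $\trace$ whose right interface is $\lab[3]$ satisfies $\trace \series \eps^{\lab[3]} = \trace$, which is exactly the identity property of $\eps^{\vec{\lab[1]}}$ recorded after \Cref{definition: sequential composition and parallel composition of runs on structures}. All other steps are direct invocations of the preceding lemmas.
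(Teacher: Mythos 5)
Your proposal is correct and follows essentially the same route as the paper: the paper writes the argument as a single chain of equivalences ($\lab[3] \in \jump{\lterm}_{\struc}$ iff $\run^{\struc}_{\lab[3]}(\lterm) \neq \emptyset$ iff $\EPS_{\lab[3]}(\D^{\struc}_{\trace[3]}(\run^{\struc}_{\lab[3]}(\lterm)))$ for some $\trace[3]$ iff $\EPS_{\lab[3]}(\D^{\struc}_{\trace[3]}(\lterm))$ for some $\trace[3]$), invoking \Cref{proposition: labeled struc-run language}, \Cref{proposition: left quotient}, \Cref{lemma: derivative}, and \Cref{lemma: eps} in exactly the roles you assign them. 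Unrolling the chain into two directions, as you do, changes nothing of substance.
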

\begin{proof}
    We have:
    \begin{align*}
       \lab[3] \in \jump{\lterm}_{\struc}
        &\;\iff\; \run^{\struc}_{\lab[3]}(\lterm) \neq \emptyset \tag{\Cref{proposition: labeled struc-run language}}\\
        &\;\iff\; \EPS_{\lab[3]}(\D^{\struc}_{\trace[3]}(\run^{\struc}_{\lab[3]}(\lterm))) \mbox{ for some \kl{$\struc$-run} $\trace[3]$} \tag{\Cref{proposition: left quotient}}\\
        &\;\iff\; \EPS_{\lab[3]}(\D^{\struc}_{\trace[3]}(\lterm)) \mbox{ for some \kl{$\struc$-run} $\trace[3]$} \tag{\Cref{lemma: derivative,lemma: eps}}\\
        &\;\iff\; \EPS_{\lab[3]}(\D^{\struc}(\lterm)). \tag{By $\D^{\struc}(\lterm) = \bigcup_{\text{$\trace[3]$ an \kl{$\struc$-run}}} \D^{\struc}_{\trace[3]}(\lterm)$}
    \end{align*}
    Hence this completes the proof.
\end{proof}

\subsection{Proof of {\Cref{lemma: derivative}}: Derivatives can simulate left quotients}\label{section: lemma: derivative}
We first prove the following proposition.
For $\vec{\lab}$ and $\vec{\lab[2]}$,
we use $X_{\vec{\lab}, \vec{\lab[2]}}, Y_{\vec{\lab}, \vec{\lab[2]}}, \dots$ to denote sets of \kl{$\struc$-runs} with $\tuple{\vec{\lab}, \vec{\lab[2]}}$-interface.
\begin{prop}\label{proposition: derivative}
    Let $\trace[3]$ be an \kl{atomic $\struc$-run}.
    We then have the following.

    \labeltext{$(\series)$}{proposition: derivative cdot}:
    $\D^{\struc}_{\trace[3]}(X_{\vec{\lab},\lab[2]} \series Y_{\lab[2], \lab[3]}) = 
    \begin{cases}
        (\D^{\struc}_{\trace[3]}(X_{\vec{\lab},\lab[2]}) \series Y_{\lab[2], \lab[3]}) \cup \D^{\struc}_{\trace[3]}(Y_{\lab[2], \lab[3]}) & (\EPS_{\lab[2]}(X_{\vec{\lab},\lab[2]}))\\
        (\D^{\struc}_{\trace[3]}(X_{\vec{\lab},\lab[2]}) \series Y_{\lab[2], \lab[3]}) & (\mbox{otherwise});
    \end{cases}$

    \labeltext{$(\parallel)$}{proposition: derivative cap}:
    $\D^{\struc}_{\trace[3]}(X_{\vec{\lab[1]}_1, \lab[3]_1} \parallel Y_{\vec{\lab[1]}_2, \lab[3]_2}) = 
    \begin{cases}
        \D^{\struc}_{\trace[3]'}(X_{\vec{\lab[1]}_1, \lab[3]_1}) \parallel Y_{\vec{\lab[1]}_2, \lab[3]_2}  & (\mbox{$\trace[3] = (\trace[3]' \parallel \id^{\overrightarrow{\mathsf{lab}}(\vec{\lab[1]_2})})$ for some $\trace[3]'$})     \\
        X_{\vec{\lab[1]}_1, \lab[3]_1} \parallel \D^{\struc}_{\trace[3]'}(Y_{\vec{\lab[1]}_2, \lab[3]_2})  & (\mbox{$\trace[3] = ( \id^{\overrightarrow{\mathsf{lab}}(\vec{\lab[1]_1})} \parallel \trace[3]')$ for some $\trace[3]'$})     \\
        \emptyset & (\mbox{otherwise}).
    \end{cases}$
\end{prop}
\begin{proof}
    For \nameref{proposition: derivative cdot}: Because for all $\trace[1], \trace[2], \trace[3]'$ such that $\ty_{2}(\trace[1]) = \ty_{1}(\trace[2]) = y$, we have
    \[\trace[1] \series \trace[2] \longrightarrow^{\struc}_{\trace[3]} \trace[3]' \;\iff\; 
    \bigvee \left\{\begin{aligned}
        & \exists \trace[1]', \trace[1] \longrightarrow^{\struc}_{\trace[3]} \trace[1]' \land \trace[3]' = \trace[1]' \series \trace[2]\\
        & \EPS_{\lab[2]}(\trace[1]) \land \trace[2] \longrightarrow^{\struc}_{\trace[3]} \trace[3]'.
    \end{aligned}\right.\]
    \noindent %FIXED indent
    For \nameref{proposition: derivative cap}: Because for all $\trace[1], \trace[2], \trace[3]'$, we have
    \[\trace[1] \parallel \trace[2] \longrightarrow^{\struc}_{\trace[3]} \trace[3]' \;\iff\; 
    \bigvee \left\{\begin{aligned}
        & \exists \trace[1]',\trace[3]',\ \trace[1] \longrightarrow^{\struc}_{\trace[3]''} \trace[1]' \land \trace[3]' = (\trace[1]' \parallel \trace[2]) \land \trace[3] = (\trace[3]'' \parallel \id^{\ty_1(\trace[2])}) \\
        & \exists \trace[2]',\trace[3]',\ \trace[2] \longrightarrow^{\struc}_{\trace[3]''} \trace[2]' \land \trace[3]' = (\trace[1] \parallel \trace[2]') \land \trace[3] = (\id^{\ty_1(\trace[1])} \parallel \trace[3]''). 
    \end{aligned}\right. \]

    Hence, this completes the proof. \qedhere
\end{proof}
\noindent %FIXED indent
We first show \Cref{lemma: derivative} for \kl{atomic $\struc$-runs}, and then we extend for \kl{$\struc$-runs}.
\begin{lem}\label{lemma: derivative atomic}
    Let $\struc$ be a \kl{structure} and $\lab[3] \in \univ{\struc}$.
    For all \kl{lKL terms} $\lterm$ and \kl{atomic $\struc$-runs} $\trace[3]$,
    \[\D^{\struc}_{\trace[3]}(\run^{\struc}_{\lab[3]}(\lterm)) \;=\; \run^{\struc}_{\lab[3]}(\D^{\struc}_{\trace[3]}(\lterm)).\]
\end{lem}
\begin{proof}
    By easy induction on $\lterm$ using \Cref{proposition: derivative}.
    \begin{itemize}
        \item Case $\lterm = @\lab[1].a$ for $a \in \vsig$:
        We have:
        \begin{align*}
            \D^{\struc}_{\trace[3]}(\run^{\struc}_{\lab[3]}(@\lab[1].a))
            &\;=\; \D^{\struc}_{\trace[3]}(\set{a^{\lab[1], \lab[3]}_{1} \mid \tuple{\lab[1], \lab[3]} \in \jump{a}_{\struc}}) \tag{\Cref{definition: run lab} of $\run^{\struc}_{\lab[3]}$}\\
            &\;=\; \begin{cases}
                \run^{\struc}_{\lab[3]}(\set{@\lab[3].\id \mid \tuple{\lab[1], \lab[3]} \in \jump{a}_{\struc}}) & (\trace[3] = a^{\lab[1], \lab[3]}_{1})   \\
               \emptyset & (\mbox{otherwise})
           \end{cases} \;=\; \run^{\struc}_{\lab[3]}(\D^{\struc}_{\trace[3]}(@\lab[1].a)).
        \end{align*}

        \item Case $\lterm = @\lab[1].\id, @\lab[1].\emp$:
        We have: $\D^{\struc}_{\trace[3]}(\run^{\struc}_{\lab[3]}(\lterm)) = \emptyset = \run^{\struc}_{\lab[3]}(\D^{\struc}_{\trace[3]}(\lterm))$.

        \item Case $\lterm = @\lab[1].\term[2] \union \term[3]$:
        We have:
        \begin{align*}
            \D^{\struc}_{\trace[3]}(\run^{\struc}_{\lab[3]}(@\lab[1].\term[2] \union \term[3]))
            &\;=\; \D^{\struc}_{\trace[3]}(\run^{\struc}_{\lab[3]}(@\lab[1].\term[2])) \cup  \D^{\struc}_{\trace[3]}(\run^{\struc}_{\lab[3]}(@\lab[1].\term[3]))   \tag{\Cref{definition: run lab} of $\run^{\struc}_{\lab[3]}$}                      \\
            &\;=\; \run^{\struc}_{\lab[3]}(\D^{\struc}_{\trace[3]}(@\lab[1].\term[2])) \cup \run^{\struc}_{\lab[3]}(\D^{\struc}_{\trace[3]}(@\lab[1].\term[3]))
            = \run^{\struc}_{\lab[3]}(\D^{\struc}_{\trace[3]}(@\lab[1].\term[2] \union \term[3])).       \tag{IH} 
        \end{align*}

        \item Case $\lterm = @\lab[1].\term[2] \compo \term[3]$:
        We have:
        \begin{align*}
            &\D^{\struc}_{\trace[3]}(\run^{\struc}_{\lab[3]}(@\lab[1].\term[2] \compo \term[3])) \;=\; \bigcup_{\lab[2] \in \univ{\struc}} \D^{\struc}_{\trace[3]}(\run^{\struc}_{\lab[2]}(@\lab[1].\term[2]) \series \run^{\struc}_{\lab[2], \lab[3]}(\term[3])) \tag{\Cref{definition: run lab} of $\run^{\struc}_{\lab[3]}$}\\
            &= (\bigcup_{\lab[2] \in \univ{\struc}} \D^{\struc}_{\trace[3]}(\run^{\struc}_{\lab[2]}(@\lab[1].\term[2])) \series \run^{\struc}_{\lab[2], \lab[3]}(\term[3]))
            \cup \bigcup_{\lab[2] \in \univ{\struc}; \EPS_{\lab[2]}(\run^{\struc}_{\lab[2]}(@\lab[1].\term[2]))} \D^{\struc}_{\trace[3]}( \run^{\struc}_{\lab[3]}(@ \lab[2]. \term[3]))
            \tag{\Cref{proposition: derivative} \nameref{proposition: derivative cdot}}  \\
            &= (\bigcup_{\lab[2] \in \univ{\struc}} \run^{\struc}_{\lab[2]}(\D^{\struc}_{\trace[3]}(@\lab[1].\term[2])) \series \run^{\struc}_{\lab[2], \lab[3]}(\term[3]))
            \cup \bigcup_{\lab[2] \in \univ{\struc}; \EPS_{\lab[2]}(@\lab[1].\term[2])} \run^{\struc}_{\lab[3]}(\D^{\struc}_{\trace[3]}(@ \lab[2]. \term[3]))               \tag{\Cref{lemma: eps}, IH}                \\
            &= \run^{\struc}_{\lab[3]}\left((\D^{\struc}_{\trace[3]}(@\lab[1].\term[2]) \compo_{1} \term[3]) \cup \bigcup_{\lab[2] \in \univ{\struc}; \EPS_{\lab[2]}(@\lab[1].\term[2])} \D^{\struc}_{\trace[3]}(@\lab[2]. \term[3]) \right)  = \run^{\struc}_{\lab[3]}(\D^{\struc}_{\trace[3]}(@\lab[1].\term[2] \compo \term[3])).                             \tag{\Cref{definition: run lab}}
        \end{align*}

        \item Case $\lterm = @\lab[1].\term[2]^*$:
        We have:
        \begin{align*}
            \D^{\struc}_{\trace[3]}(\run^{\struc}_{\lab[3]}(@\lab[1].\term[2]^*))
            &\;=\; \D^{\struc}_{\trace[3]}(\bigcup_{\lab[2] \in \univ{\struc}} \set{\trace \in \run^{\struc}_{\lab[2]}(@ \lab[1]. \term[2]) \mid \lnot \EPS_{\lab[2]}(\trace)} \series \run^{\struc}_{\lab[2], \lab[3]}(\term[2]^{*})) \tag{As $\trace[3]$ is not empty}\\
            &\;=\; \bigcup_{\lab[2] \in \univ{\struc}} \D^{\struc}_{\trace[3]}(\set{\trace \in \run^{\struc}_{\lab[2]}(@ \lab[1]. \term[2]) \mid \lnot \EPS_{\lab[2]}(\trace)}) \series \run^{\struc}_{\lab[2], \lab[3]}(\term[2]^{*}) \tag{\Cref{proposition: derivative} \nameref{proposition: derivative cdot}} \\
            &\;=\; \bigcup_{\lab[2] \in \univ{\struc}} \D^{\struc}_{\trace[3]}(\run^{\struc}_{\lab[2]}(@ \lab[1]. \term[2])) \series \run^{\struc}_{\lab[2], \lab[3]}(\term[2]^{*})        \tag{As $\trace[3]$ is not empty}                     \\
            &\;=\; \bigcup_{\lab[2] \in \univ{\struc}} \run^{\struc}_{\lab[2]}(\D^{\struc}_{\trace[3]}(@\lab[1].\term[2])) \series \run^{\struc}_{\lab[2], \lab[3]}(\term[2]^*)         \tag{IH}                     \\
            &\;=\; \run^{\struc}_{\lab[3]}(\D^{\struc}_{\trace[3]}(@\lab[1].\term[2]) \compo_{1} \term[2]^*) \;=\; \run^{\struc}_{\lab[3]}(\D^{\struc}_{\trace[3]}(@\lab[1].\term[2]^*)). \tag{\Cref{definition: run lab}}
        \end{align*}

        \item Case $\lterm = @\lab[1].\term[2] \intersection \term[3]$:
        We have:
        \begin{align*}
            \D^{\struc}_{\trace[3]}(\run^{\struc}_{\lab[3]}(@\lab[1].\term[2] \intersection \term[3]))
            &\;=\; \D^{\struc}_{\trace[3]}(\fork^{\lab[1]}_{1} \series (\run^{\struc}_{\lab[3]}(@\lab[1].\term[2]) \parallel \run^{\struc}_{\lab[3]}(@ \lab[1].\term[3])) \series \join^{\lab[3]}_{1}) \tag{\Cref{definition: run lab} of $\run^{\struc}_{\lab[3]}$} \\
            &\;=\; \begin{cases}
                (\run^{\struc}_{\lab[3]}(@\lab[1].\term[2]) \parallel \run^{\struc}_{\lab[3]}(@ \lab[1].\term[3])) \series \join^{\lab[3]}_{1}      & (\trace[3] = \fork^{\lab[1]}_1)        \\
                \emptyset & (\mbox{otherwise})
            \end{cases}
            &= \run^{\struc}_{\lab[3]}(\D^{\struc}_{\trace[3]}(@\lab[1].\term[2] \intersection \term[3])).
        \end{align*}

        \item Case $\lterm = \lterm[2] \compo_{1} \term[3]$:
        Similar to Case $\lterm = @\lab[1].\term[2] \compo \term[3]$.

        \item Case $\lterm = \lterm[2] \intersection_{1} \lterm[3]$:
        We have:
        \begin{align*}
            &\D^{\struc}_{\trace[3]}(\run^{\struc}_{\lab[3]}(\lterm[2] \intersection_{1} \lterm[3]))
            \;=\; \D^{\struc}_{\trace[3]}((\run^{\struc}_{\lab[3]}(\lterm[2]) \parallel \run^{\struc}_{\lab[3]}(\lterm[3])) \series \join_1^{\lab[3]})  \tag{\Cref{definition: run lab} of $\run^{\struc}_{\lab[3]}$}\\
            &\;=\;  \begin{cases}
                (\run^{\struc}_{\lab[3]}(\D^{\struc}_{\trace[3]'}(\lterm[2])) \parallel \run^{\struc}_{\lab[3]}(\lterm[3])) \series \join_1^{\lab[3]}  & (\trace[3] = (\trace[3]' \parallel \id^{\overrightarrow{\mathsf{lab}}(\lterm[3])}) \mbox{ for some $\trace[3]'$})\\
                (\run^{\struc}_{\lab[3]}(\lterm[2]) \parallel \run^{\struc}_{\lab[3]}(\D^{\struc}_{\trace[3]'}(\lterm[3]))) \series \join_1^{\lab[3]} & (\trace[3] = (\id^{\overrightarrow{\mathsf{lab}}(\lterm[2])} \parallel \trace[3]') \mbox{ for some $\trace[3]'$})\\
                \set{\id^{\lab[3]} \mid \EPS_{\lab[3]}(\lterm[2]) \land \EPS_{\lab[3]}(\lterm[3])} & (\trace[3] = \join_1^{\lab[3]})\\
                \emptyset & (\mbox{otherwise})
            \end{cases} \tag{\Cref{proposition: derivative}, IH}   \\
            &\;=\;  \begin{cases}
                \run^{\struc}_{\lab[3]}(\D^{\struc}_{\trace[3]'}(\lterm[2]) \intersection_{1} \lterm[3]) & (\trace[3] = (\trace[3]' \parallel \id^{\overrightarrow{\mathsf{lab}}(\lterm[3])}) \mbox{ for some $\trace[3]'$})\\
                \run^{\struc}_{\lab[3]}(\lterm[2] \intersection_{1} \D^{\struc}_{\trace[3]'}(\lterm[3])) & (\trace[3] = (\id^{\overrightarrow{\mathsf{lab}}(\lterm[2])} \parallel \trace[3]') \mbox{ for some $\trace[3]'$})\\
                \run^{\struc}_{\lab[3]}(\set{@\lab[3]. \id \mid \EPS_{\lab[3]}(\lterm[2]) \land \EPS_{\lab[3]}(\lterm[3])}) & (\trace[3] = \join_1^{\lab[3]})\\
                \emptyset & (\mbox{otherwise})
            \end{cases} \\ 
            &\;=\; \run^{\struc}_{\lab[3]}(\D^{\struc}_{\trace[3]}(\lterm[2] \intersection_{1} \lterm[3])).
        \end{align*}
    \end{itemize}
    Hence this completes the proof.
\end{proof}

\begin{proof}[Proof of \Cref{lemma: derivative}]
    By easy induction on the number $k$ of edges occurring in $\trace[3]$ using \Cref{lemma: derivative atomic}.
    We distinguish the following cases:
    \begin{itemize}
        \item Case $k = 0$:
        We have:
        \begin{align*}
            \D^{\struc}_{\trace[3]}(\run^{\struc}_{\lab[3]}(\lterm))
            &\;=\; \begin{cases}
                \run^{\struc}_{\lab[3]}(\lterm) & (\trace[3] = \id^{\overrightarrow{\mathsf{lab}}(\lterm)})\\
                \emptyset & (\mbox{otherwise})
            \end{cases} \;=\; \run^{\struc}_{\lab[3]}(\D^{\struc}_{\trace[3]}(\lterm)). \tag{By the rule (\nameref{rule: derivative R})}
        \end{align*}
        \item Case $k \ge 1$:
        By \Cref{proposition: struc-run atomic},
        let $a_1, \dots, a_k$ be \kl{atomic $\struc$-runs} such that $\trace[3] = a_1 \series \dots \series a_k$.
        We have:
        \begin{align*}
            \D^{\struc}_{a_1 \series \dots \series a_k}(\run^{\struc}_{\lab[3]}(\lterm)) 
            &\;=\; \D^{\struc}_{a_k}(\D^{\struc}_{a_{k-1}}( \dots (\D^{\struc}_{a_1}(\run^{\struc}_{\lab[3]}(\lterm))))) \tag{By the rule (\nameref{rule: derivative T})}\\
            &\;=\; \run^{\struc}_{\lab[3]}(\D^{\struc}_{a_k}(\D^{\struc}_{a_{k-1}}( \dots (\D^{\struc}_{a_1}(\lterm))))) \tag{By \Cref{lemma: derivative atomic}, iteratively}\\
            &\;=\; \run^{\struc}_{\lab[3]}(\D^{\struc}_{a_1 \series \dots \series a_k}(\lterm)). \tag{By the rule (\nameref{rule: derivative T})}
        \end{align*}
    \end{itemize}
    Hence this completes the proof.
\end{proof}

\subsection{Closure property of derivatives}
Moreover, the \kl{derivatives} above have a closure property like Antimirov's derivatives \cite{antimirovPartialDerivativesRegular1996,bastosStateComplexityPartial2016}.
\begin{defi}\label{definition: closure}
    Let $L$ be a set (of \kl{labels}).
    For a \kl{KL term} $\term$, the \intro*\kl{closure} $\cl_{L}(\term)$ is the set of \kl{lKL terms} defined by:
    \begin{align*}
        \cl_{L}(a)                       &\;\defeq\; \set{@\lab[2]. a, @\lab[2].\id \mid \lab[2] \in L} \mbox{ for $a \in \vsig$},                                                                                                                                                                 \\
        \cl_{L}(\id)                       &\;\defeq\; \set{@\lab[2].\id \mid \lab[2] \in L},\\
        \cl_{L}(\emp) &\;\defeq\;  \set{\emp},                                                \\
        \cl_{L}(\term[2] \compo \term[3]) &\;\defeq\; \set{@\lab[2].\term[2] \compo \term[3] \mid \lab[2] \in L} \cup (\cl_{L}(\term[2]) \compo_{1} \term[3]) \cup \cl_{L}(\term[3]), \\
        \cl_{L}(\term[2] \union \term[3])  &\;\defeq\; \set{@\lab[2].\term[2] \union \term[3] \mid \lab[2] \in L} \cup \cl_{L}(\term[2]) \cup \cl_{L}(\term[3]),                                                                                                                                                                  \\
        \cl_{L}(\term[2]^*)              &\;\defeq\; \set{@\lab[2].\term[2]^* \mid \lab[2] \in L} \cup (\cl_{L}(\term[2]) \compo_{1} \term[2]^*),                                                                                                                                                                                 \\
        \cl_{L}(\term[2] \intersection \term[3])  &\;\defeq\; \set{@\lab[2].\term[2] \intersection \term[3], @\lab[2]. \id \mid \lab[2] \in L} \cup (\cl_{L}(\term[2]) \intersection_{1} \cl_{L}(\term[3])).
    \end{align*}
    For an \kl{lKL term} $\lterm \in \LT_{L}$, the \kl{closure} $\cl_{L}(\lterm)$ is the set of \kl{lKL terms} defined by:
    \begin{align*}
        \cl_{L}(@\lab[1].\term)                       &\;\defeq\; \cl_{L}(\term),                                                                                        \\
        \cl_{L}(\lterm[2] \compo_{1} \term[3])     &\;\defeq\; (\cl_{L}(\lterm[2]) \compo_{1} \term[3]) \cup \cl_{L}(\term[3]),    \\
        \cl_{L}(\lterm[2] \intersection_{1} \lterm[3])     &\;\defeq\; \set{@\lab[2]. \id \mid \lab[2] \in L} \cup (\cl_{L}(\lterm[2]) \intersection_{1} \cl_{L}(\lterm[3])).
    \end{align*}
    We then extend $\cl_{L}$ for sets of \kl{lKL terms}, by $\cl_{L}(\ltset) \defeq \bigcup_{\lterm \in \ltset} \cl_{L}(\lterm)$.
\end{defi}
The function $\cl_{L}$ is a closure operator, as follows.
\begin{prop}\label{proposition: closure operator}
    For each set $L$, the function $\cl_{L} \colon \wp(\LT_{L}) \to \wp(\LT_{L})$ is a closure operator.
\end{prop}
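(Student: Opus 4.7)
The plan is to verify the three defining properties of a closure operator on the poset $(\wp(\LT_L), \subseteq)$: extensivity ($\ltset \subseteq \cl_L(\ltset)$), monotonicity ($\ltset_1 \subseteq \ltset_2 \Rightarrow \cl_L(\ltset_1) \subseteq \cl_L(\ltset_2)$), and idempotency ($\cl_L(\cl_L(\ltset)) = \cl_L(\ltset)$). Monotonicity is immediate from the pointwise extension $\cl_L(\ltset) = \bigcup_{\lterm \in \ltset} \cl_L(\lterm)$, so the two real cases are extensivity and idempotency, both of which reduce to statements about a single \kl{lKL term}.

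For extensivity, it suffices to show $\lterm \in \cl_L(\lterm)$ for every $\lterm \in \LT_L$. This is a routine induction on $\lterm$, with a preliminary induction on the underlying \kl{KL term} $\term$ to handle the case $\lterm = @\lab.\term$. In every case, the ``top-level'' term ($@\lab.\term[2] \compo \term[3]$, $@\lab.\term[2] \intersection \term[3]$, $\lterm[1] \compo_1 \term[2]$, $\lterm[1] \intersection_1 \lterm[2]$, etc.) is explicitly listed on the right-hand side of the defining clause (with $\lab \in L$ guaranteed by $\lterm \in \LT_L$), or is assembled from the sub-lKL-terms using the IH that each subterm lies in its own closure.

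For idempotency, the inclusion $\cl_L(\ltset) \subseteq \cl_L(\cl_L(\ltset))$ follows from extensivity. The reverse inclusion reduces, via the pointwise definition, to the claim: for every $\lterm \in \LT_L$ and every $\lterm' \in \cl_L(\lterm)$, we have $\cl_L(\lterm') \subseteq \cl_L(\lterm)$. The plan is to prove this by induction on $\lterm$, with a nested induction on $\term$ for the $@\lab.\term$ case, inspecting each branch of \Cref{definition: closure} and applying the IH to sub-lKL-terms. The main obstacle is the intersection case, where one has to check that the two-level structure $\cl_L(\cl_L(\lterm[1]) \intersection_1 \cl_L(\lterm[2]))$ does not produce any new element beyond $\cl_L(\lterm[1] \intersection_1 \lterm[2]) = \set{@\lab.\eps \mid \lab \in L} \cup (\cl_L(\lterm[1]) \intersection_1 \cl_L(\lterm[2]))$; here the extra $\set{@\lab.\eps \mid \lab \in L}$ summand is exactly what absorbs the $\set{@\lab.\eps \mid \lab \in L}$ contributions arising from expanding $\cl_L$ on the inner intersection, so the IH closes the loop. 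The $\compo$ and $\compo_1$ cases are similar but simpler, since those clauses merely add $\cl_L$ of an immediate subterm, and the $\bl^{*}$ and $\union$ cases are handled analogously.
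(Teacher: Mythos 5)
Your proposal is correct and follows essentially the same route as the paper: verify extensivity, monotonicity, and idempotency, with monotonicity immediate from the pointwise definition and idempotency reduced to $\cl_{L}(\cl_{L}(\term)) \subseteq \cl_{L}(\term)$ and $\cl_{L}(\cl_{L}(\lterm)) \subseteq \cl_{L}(\lterm)$ proved by (mutual) induction on the term. The paper's proof is just terser—it dismisses extensivity as "by definition" and the idempotency induction as routine—whereas you spell out the inductive structure and the absorption of the $\set{@\lab.\eps \mid \lab \in L}$ summand in the intersection case, which is the right detail to check.
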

\begin{proof}
    Because we have the following, respectively:
    \begin{itemize}
        \item Extensivity $\ltset[1] \subseteq \cl_{L}(\ltset[1])$:
        Because $\lterm \in \cl_{L}(\lterm)$ (by definition).
        \item Monotonicity $\ltset[1] \subseteq \ltset[2] \Longrightarrow \cl_{L}(\ltset[1]) \subseteq \cl_{L}(\ltset[2])$:
        By $\cl_{L}(\ltset[1]) = \bigcup_{\lterm \in \ltset[1]} \cl_{L}(\lterm) \subseteq \bigcup_{\lterm \in \ltset[2]} \cl_{L}(\lterm) = \cl_{L}(\ltset[2])$.
        \item Idempotency $\cl_{L}(\cl_{L}(\ltset[1])) \subseteq \cl_{L}(\ltset[1])$:
        It suffices to prove that $\cl_{L}(\cl_{L}(\term)) \subseteq \cl_{L}(\term)$ and $\cl_{L}(\cl_{L}(\lterm)) \subseteq \cl_{L}(\lterm)$.
        This is shown by induction on the \kl{size} of $\term$ (resp.\ $\lterm$). \qedhere
    \end{itemize}
\end{proof}
\noindent %FIXED indent
Additionally, the closure operator $\cl_{L}$ is \emph{algebraic}, i.e., $\cl_{L}(\ltset) = \bigcup_{\ltset[2] \subseteq \ltset; \text{$\ltset[2]$ is finite}} \cl_{L}(\ltset[2])$ holds, which is clear by $\cl_{L}(\ltset[2]) = \bigcup_{\lterm \in \ltset[2]} \cl_{L}(\lterm)$ for each $\ltset[2]$.

We then have that the \kl{lKL terms} obtained by \kl{derivatives} of an \kl{lKL term} $\lterm$ are always in the closure of $\lterm$, as follows.
\begin{prop}\label{proposition: closure}
    Let $\struc$ be a \kl{structure} and let $\trace[3]$ be a \kl{$\struc$-run}.
    For all \kl{lKL terms} $\lterm \in \LT_{\univ{\struc}}$,
    we have
    $\D^{\struc}_{\trace[3]}(\lterm) \subseteq \cl_{\univ{\struc}}(\lterm)$.
\end{prop}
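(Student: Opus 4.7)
The plan is to prove the statement by structural induction on the derivation of $\lterm \longrightarrow^{\struc}_{\trace[3]} \lterm'$, showing $\lterm' \in \cl_{\univ{\struc}}(\lterm)$ for every derivable pair. Throughout, the labels appearing in any applied rule are assumed to range over $\univ{\struc}$, matching the convention of \Cref{proposition: derivative on graphs} (so that, e.g., the $\lab[2]$ produced by $@\lab. a \longrightarrow_{a^{\lab[1], \lab[2]}_{1}} @\lab[2].\eps$ belongs to $\univ{\struc}$ because $\tuple{\lab[1],\lab[2]} \in \jump{a}_{\struc}$, and similarly for the witness labels in the $\EPS$-premises and the join label in the $\intersection_1$-axiom).

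The structural rules (R) and (T) are handled uniformly by \Cref{proposition: closure operator}. Rule (R), $\lterm \longrightarrow_{\eps^{\overrightarrow{\mathsf{lab}}(\lterm)}} \lterm$, is immediate from extensivity of $\cl_{\univ{\struc}}$. For rule (T), given $\lterm[1] \longrightarrow \lterm[3]$ and $\lterm[3] \longrightarrow \lterm[2]$, the induction hypothesis yields $\lterm[3] \in \cl_{\univ{\struc}}(\lterm[1])$ and $\lterm[2] \in \cl_{\univ{\struc}}(\lterm[3])$; monotonicity then gives $\cl_{\univ{\struc}}(\lterm[3]) \subseteq \cl_{\univ{\struc}}(\cl_{\univ{\struc}}(\lterm[1]))$, and idempotency collapses this to $\cl_{\univ{\struc}}(\lterm[1])$, so $\lterm[2] \in \cl_{\univ{\struc}}(\lterm[1])$.

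The remaining cases are a routine check against \Cref{definition: closure}, which was set up precisely to contain every one-step derivative. For example, the axiom $@\lab. a \longrightarrow_{a^{\lab[1], \lab[2]}_{1}} @\lab[2].\eps$ lands in the $\set{@\lab[2].\eps \mid \lab[2] \in \univ{\struc}}$ component of $\cl_{\univ{\struc}}(a)$. For the composition rules, the first feeds the IH into $\cl_{\univ{\struc}}(\term[1]) \compo_{1} \term[2] \subseteq \cl_{\univ{\struc}}(\term[1] \compo \term[2])$, while the second uses $\lterm[2]' \in \cl_{\univ{\struc}}(@\lab[3].\term[2]) = \cl_{\univ{\struc}}(\term[2]) \subseteq \cl_{\univ{\struc}}(\term[1] \compo \term[2])$; the analogous $\compo_{1}$-rules on \kl{lKL terms} are identical. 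Union, Kleene star, and both of the recursive $\intersection_{1}$-rules each land in the corresponding summand of the defining union of $\cl_{\univ{\struc}}$. The $\intersection$-axiom $@\lab.\term[1] \intersection \term[2] \longrightarrow_{\fork^{\lab}_{1}} (@\lab.\term[1]) \intersection_{1} (@\lab.\term[2])$ lands in $\cl_{\univ{\struc}}(\term[1]) \intersection_{1} \cl_{\univ{\struc}}(\term[2])$ (using $@\lab.\term[i] \in \cl_{\univ{\struc}}(\term[i])$, which is immediate from the clauses of \Cref{definition: closure} since $\lab \in \univ{\struc}$), and the $\intersection_{1}$-to-$\eps$ axiom lands in the $\set{@\lab[2]. \eps \mid \lab[2] \in \univ{\struc}}$ summand of $\cl_{\univ{\struc}}(\lterm[1] \intersection_{1} \lterm[2])$.

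I do not anticipate any substantial obstacle: the closure was tailored to the derivative, so each base rule matches a syntactic summand of $\cl_{\univ{\struc}}$ by inspection, and the two structural rules are absorbed by the closure-operator properties of \Cref{proposition: closure operator}. The only point requiring care is the label-domain convention described in the first paragraph, without which the $\intersection_{1}$-to-$\eps$ axiom could in principle produce a term $@\lab[3].\eps$ with $\lab[3] \notin \univ{\struc}$.
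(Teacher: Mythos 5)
Your proof is correct and takes essentially the same approach as the paper: both arguments amount to checking each derivative rule against the corresponding summand of \Cref{definition: closure} and absorbing the structural rules via the closure-operator properties of \Cref{proposition: closure operator}. The only organizational difference is that the paper first handles \kl{atomic $\struc$-runs} by induction on the term (using the explicit formulas of \Cref{proposition: derivative on graphs}) and then lifts to arbitrary \kl{$\struc$-runs} by extensivity and idempotency, whereas you run a single induction on the derivation tree of $\longrightarrow^{\struc}_{\trace[3]}$, treating (R) and (T) as cases---the same ingredients, differently packaged.
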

\begin{proof}
    When $\trace[3]$ is an \kl{atomic $\struc$-run}, this is shown by easy induction on the \kl{size} of $\lterm$ (by comparing \Cref{proposition: derivative on graphs} and \Cref{definition: closure}).
    By extensivity and idempotency (\Cref{proposition: closure operator}), we can extend this property
    from \kl{atomic $\struc$-runs} to \kl{$\struc$-runs}.
\end{proof}

Moreover, on the closure size, we have the following result.
\begin{prop}\label{proposition: closure size}
    Let $L$ be a set.
    For all \kl{KL terms} $\term$,
        $\# \cl_{L}(\term) \le (2 \#L \|\term\|)^{\iw(\term)}$.
\end{prop}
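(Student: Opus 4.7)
My plan is a straightforward structural induction on the \kl{KL term} $\term$, abbreviating $N(\term) \defeq 2 \times \#L \times \|\term\|$ and $k(\term) \defeq \iw(\term)$, and showing $\# \cl_L(\term) \le N(\term)^{k(\term)}$ in each case by applying the recursive definition of $\cl_L$ and then invoking elementary inequalities on $N^k$. I will assume $\#L \ge 1$ throughout; the degenerate case $L = \emptyset$ is immaterial to the applications of this bound.

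For the base cases, one reads off directly from \Cref{definition: closure}: $\# \cl_L(a) = 2\#L$, $\# \cl_L(\eps) = \#L$, $\# \cl_L(\emp) = 1$, and in each case $\iw = 1$ and $\|\term\| = 1$, so the desired bound $\le N(\term)^{1}$ holds. For the inductive cases, the defining recurrences yield
\begin{align*}
\# \cl_L(\term[2] \compo \term[3]) &\le \#L + \# \cl_L(\term[2]) + \# \cl_L(\term[3]), &
\# \cl_L(\term[2] \union \term[3]) &\le \#L + \# \cl_L(\term[2]) + \# \cl_L(\term[3]), \\
\# \cl_L(\term[2]^{*}) &\le \#L + \# \cl_L(\term[2]), &
\# \cl_L(\term[2] \intersection \term[3]) &\le 2\#L + \# \cl_L(\term[2]) \cdot \# \cl_L(\term[3]).
\end{align*}
In each case, I substitute the induction hypothesis for $\cl_L(\term[i])$, write $N_i = N(\term[i])$, note $N(\term) = N_2 + N_3 + 2\#L$ for the binary operators (and $N(\term) = N_2 + 2\#L$ for $\bl^{*}$), and expand with the binomial theorem.

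For $\compo$, $\union$, and $\bl^{*}$, we have $k(\term) = \max(k(\term[2]), k(\term[3]))$ (or $k(\term) = k(\term[2])$ for $\bl^{*}$), so $N_i^{k(\term[i])} \le N(\term)^{k(\term)}$ by monotonicity, and the residual additive term $\#L$ (or $2\#L$) is absorbed using the fact that $(N_2 + N_3 + 2\#L)^{k} \ge N_2^{k} + N_3^{k} + (2\#L)^{k}$, which follows from $(a+b)^{k} \ge a^{k} + b^{k}$ applied twice for $a, b \ge 0$ and $k \ge 1$; the term $(2\#L)^{k} \ge 2\#L \ge \#L$ takes care of the extra summand.

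The main obstacle is the $\intersection$ case, where $k(\term) = k(\term[2]) + k(\term[3])$ and the induction produces a \emph{product} $N_2^{k(\term[2])} \cdot N_3^{k(\term[3])}$; here I split $N(\term)^{k(\term)} = N(\term)^{k(\term[2])} \cdot N(\term)^{k(\term[3])}$ and bound each factor separately using $N(\term) \ge N_i + 2\#L$. The expansion $(N_2 + 2\#L)^{k(\term[2])} \ge N_2^{k(\term[2])} + (2\#L)^{k(\term[2])}$ together with the analogous bound for the $\term[3]$ factor yields
\[ N(\term)^{k(\term)} \;\ge\; N_2^{k(\term[2])} N_3^{k(\term[3])} \;+\; (2\#L)^{k(\term[2])} \cdot N_3^{k(\term[3])}, \]
and the second summand dominates $2\#L$ as soon as $\#L \ge 1$ and $N_3 \ge 1$ (which hold by assumption and because $\|\term[3]\| \ge 1$). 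This closes the induction.
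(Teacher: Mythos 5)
Your proof is correct and follows essentially the same route as the paper's: a structural induction driven by the same cardinality recurrences read off from the definition of $\cl_L$, closed in each case by the superadditivity $(a+b)^k \ge a^k + b^k$ (the paper merely factors out $(2\times\#L)^{\iw}$ before applying the same inequality, which is a cosmetic difference, and handles $\intersection$ by the same split of the exponent $\iw(\term[2])+\iw(\term[3])$). Your explicit caveat that $\#L \ge 1$ is needed is fair — the stated bound does fail for $L = \emptyset$, e.g.\ $\#\cl_{\emptyset}(\emp) = 1 > 0$ — and the paper's base case silently makes the same assumption.
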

\begin{proof}
    By easy induction on $\term$.
    \begin{itemize}
        \item Case $\term = \aterm$ where $\aterm \in \vsig \cup \set{\id, \emp}$:                                        
        By $\# \cl_{L}(\term) \le 2 \# L \le (2 \# L \|\term\|)^{1}$.
        \item Case $\term = \term[2] \mathbin{\heartsuit} \term[3]$ where $\mathbin{\heartsuit} \in \set{\compo, \union}$:
        We have:
        \begin{align*}
            \# \cl_{L}(\term[2] \mathbin{\heartsuit} \term[3])
            & \le \#L + \# \cl_{L}(\term[2]) + \# \cl_{L}(\term[3])  \tag{\Cref{definition: closure} of $\cl_{L}$}\\
            & \le \#L + (2 \# L \|\term[2]\|)^{\iw(\term[2])} + (2 \# L \|\term[3]\|)^{\iw(\term[3])}  \tag{IH}\\
            & \le (2 \# L)^{\max(\iw(\term[2]), \iw(\term[3]))} (1 + \|\term[2]\|^{\max(\iw(\term[2]), \iw(\term[3]))} + \|\term[3]\|^{\max(\iw(\term[2]), \iw(\term[3]))}) \\
            & \le (2 \# L \|\term[2] \mathbin{\heartsuit} \term[3]\|)^{\iw(\term[2] \mathbin{\heartsuit} \term[3])}.
        \end{align*}

        \item Case $\term = \term[2]^*$:
        We have:
        \begin{align*}
            \# \cl_{L}(\term[2]^*)
            & \le \#L + \# \cl_{L}(\term[2])  \tag{\Cref{definition: closure} of $\cl_{L}$}\\
            & \le \#L + (2 \# L \|\term[2]\|)^{\iw(\term[2])} \tag{IH}\\
            & \le (2 \# L)^{\iw(\term[2])} (1 + \|\term[2]\|^{\iw(\term[2])}) \le (2 \# L \|\term[2]^*\|)^{\iw(\term[2]^*)}.
        \end{align*}

        \item Case $\term = \term[2] \intersection \term[3]$:
        We have:
        \begin{align*}
            \# \cl_{L}(\term[2] \intersection \term[3])
            & \le \#L + \# \cl_{L}(\term[2]) \# \cl_{L}(\term[3]) + \# L  \tag{\Cref{definition: closure} of $\cl_{L}$}\\
            & \le 2 \#L + (2 \# L \|\term[2]\|)^{\iw(\term[2])} (2 \# L \|\term[3]\|)^{\iw(\term[3])} \tag{IH}\\
            & \le (2 \#L)^{\iw(\term[2]) + \iw(\term[3])} (1 + \|\term[2]\|^{\iw(\term[2])} \|\term[3]\|^{\iw(\term[3])})\\
            & \le (2 \#L)^{\iw(\term[2]) + \iw(\term[3])} (1 + \|\term[2]\| + \|\term[3]\|)^{\iw(\term[2]) + \iw(\term[3])}         
            \le (2 \# L \|\term[2] \intersection \term[3]\|)^{\iw(\term[2] \intersection \term[3])}.
        \end{align*}
    \end{itemize}
    Hence, this completes the proof.
\end{proof}
\section{Decomposing Derivatives: An Automata Construction} \label{section: automata construction}
In this section, we show that the \kl{derivatives} are decomposable on \kl{path decompositions} (\Cref{theorem: decomposition} in \Cref{section: decomposing derivatives}),
and we give an automata construction (\Cref{section: automata construction}).

We let $\STR_{k} \defeq \set{\struc \in \STR \mid \univ{\struc} \subseteq \range{k}}$.
For $\strucclass \subseteq \STR$,
we write $\isoc(\strucclass)$ for the \intro*\kl{isomorphic closure} of $\strucclass$, i.e., $\isoc(\strucclass) \defeq \set{\struc[2] \in \STR \mid \struc \in \strucclass \mbox{ and } \struc \cong \struc[2]}$.

\subsection{Gluing operator for path decompositions}
For $\struc_1, \dots, \struc_n \in \STR$,
the \intro*\kl{disjoint union} $\bigsqcup_{i = 1}^{n} \struc_i$ is defined as
the \kl{structure} $\tuple{\set{\tuple{i, x} \mid i \in \range{n}, x \in \univ{\struc_i}},
\set{\tuple{\tuple{i, x}, \tuple{i, y}} \mid i \in \range{n}, \tuple{x, y} \in a^{\struc_i}}_{a \in \vsig}}$.
We consider the following gluing operator.
\begin{defi}\label{definition: gluing structures}
    Let $\vec{\struc} = \struc_1 \dots \struc_n \in \STR^{+}$.
    The \kl{structure} $\odot \vec{\struc}$ (or written $\odot_{i = 1}^{n} \struc_i$) is defined as follows:
    \[\odot \vec{\struc} \;\defeq\; \left(\bigsqcup_{i = 1}^{n} \struc_i \right)/{\sim_{\vec{\struc}}}\]
    where $\sim_{\vec{\struc}}$ is the minimal equivalence relation on the set $\set{\tuple{i, x} \mid i \in \range{n}, x \in \univ{\struc_i}}$ closed under the following rule:
    \begin{quotation}
        For all $x$ and $i \in \range{n-1}$, if $x \in \univ{\struc_i} \cap \univ{\struc_{i+1}}$, then $\tuple{i, x} \sim_{\vec{\struc}} \tuple{i+1, x}$.
    \end{quotation}
    (For notational simplicity, we abbreviate $\sim_{\vec{\struc}}$ to $\sim$ when $\vec{\struc}$ is clear from the context.)
\end{defi}
For instance, $\bigodot \left(\begin{tikzpicture}[baseline = -.5ex]
    \graph[grow right = 1.cm, branch down = 2.5ex]{
    {s1/{$1$}[vert]} -!- {t1/{$2$}[vert]}
    };
    \graph[use existing nodes, edges={color=black, pos = .5, earrow}, edge quotes={fill=white, inner sep=1pt,font= \scriptsize}]{
        s1 ->["$a$", bend right] t1;
        t1 ->["$b$", bend right] s1;
    };
\end{tikzpicture} \right) \left(\begin{tikzpicture}[baseline = -.5ex]
    \graph[grow right = 1.cm, branch down = 2.5ex]{
    {s1/{$2$}[vert]} -!- {t1/{$3$}[vert]}
    };
    \graph[use existing nodes, edges={color=black, pos = .5, earrow}, edge quotes={fill=white, inner sep=1pt,font= \scriptsize}]{
        s1 ->["$a$", bend right] t1;
        t1 ->["$b$", bend right] s1;
    };
\end{tikzpicture} \right) \left(\begin{tikzpicture}[baseline = -.5ex]
    \graph[grow right = 1.cm, branch down = 2.5ex]{
    {s1/{$3$}[vert]} -!- {t1/{$1$}[vert]}
    };
    \graph[use existing nodes, edges={color=black, pos = .5, earrow}, edge quotes={fill=white, inner sep=1pt,font= \scriptsize}]{
        s1 ->["$a$", bend right] t1;
        t1 ->["$b$", bend right] s1;
    };
\end{tikzpicture} \right) = \left(\begin{tikzpicture}[baseline = -.5ex]
    \graph[grow right = 1.cm, branch down = 2.5ex]{
    {1/{}[vert]} -!- {2/{}[vert]} -!- {3/{}[vert]} -!- {4/{}[vert]}
    };
    \graph[use existing nodes, edges={color=black, pos = .5, earrow}, edge quotes={fill=white, inner sep=1pt,font= \scriptsize}]{
        1 ->["$a$", bend right] 2; 2 ->["$b$", bend right] 1;
        2 ->["$a$", bend right] 3; 3 ->["$b$", bend right] 2;
        3 ->["$a$", bend right] 4; 4 ->["$b$", bend right] 3;
    };
    \end{tikzpicture}
    \right)$.
This gluing operator transforms a given \kl{path decomposition} into the original \kl{structure} up to \kl[graph isomorphism]{isomorphisms}.
\begin{prop}\label{proposition: gluing structures}
    Let $\struc$ be a \kl{structure}.
    If $\vec{\struc[2]}$ is a \kl{path decomposition} of $\struc$,
    then $\struc \cong \odot \vec{\struc[2]}$.
\end{prop}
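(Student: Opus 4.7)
The plan is to define a map $f\colon \odot\vec{\struc[2]} \to \struc$ sending the equivalence class of $\tuple{i, x}$ to $x$, and show it is a \kl{graph isomorphism}. Write $\vec{\struc[2]} = \struc[2]_1 \dots \struc[2]_n$.

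First, I would check that $f$ is well-defined. Since $\sim_{\vec{\struc[2]}}$ is generated by pairs $\tuple{i, x} \sim_{\vec{\struc[2]}} \tuple{i+1, x}$ sharing the same second coordinate, every equivalence class $[\tuple{i, x}]_{\sim_{\vec{\struc[2]}}}$ consists of elements of the form $\tuple{j, x}$ with the same $x$; therefore $f$ depends only on the class.

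Next, I would establish bijectivity. Surjectivity is immediate from the covering condition $\univ{\struc} = \bigcup_{i \in \range{n}} \univ{\struc[2]_i}$: any $x \in \univ{\struc}$ lies in some $\univ{\struc[2]_i}$ and is then the $f$-image of $[\tuple{i, x}]$. Injectivity is the step that genuinely uses the defining property of \kl{path decompositions}: if $f([\tuple{i, x}]) = f([\tuple{j, y}])$, then $x = y$, and assuming $i \le j$ we have $x \in \univ{\struc[2]_i} \cap \univ{\struc[2]_j}$, so the intermediate-bag condition $\univ{\struc[2]_i} \cap \univ{\struc[2]_j} \subseteq \univ{\struc[2]_k}$ for every $i \le k \le j$ supplies the chain $\tuple{i, x} \sim \tuple{i+1, x} \sim \dots \sim \tuple{j, x}$, giving $[\tuple{i, x}] = [\tuple{j, y}]$.

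Finally, I would verify that $f$ preserves and reflects each relation $a \in \vsig$, invoking the other defining property $a^{\struc} = \bigcup_{i \in \range{n}} a^{\struc[2]_i}$. From left to right, if $\tuple{x, y} \in a^{\struc}$, then $\tuple{x, y} \in a^{\struc[2]_i}$ for some $i$, whence $\tuple{\tuple{i, x}, \tuple{i, y}} \in a^{\bigsqcup \struc[2]_k}$ and by the definition of \kl{quotient graph} we get $\tuple{[\tuple{i, x}], [\tuple{i, y}]} \in a^{\odot \vec{\struc[2]}}$. Conversely, an edge in $a^{\odot \vec{\struc[2]}}$ is witnessed by some $\tuple{\tuple{k, x'}, \tuple{k, y'}} \in a^{\bigsqcup \struc[2]_i}$, hence $\tuple{x', y'} \in a^{\struc[2]_k} \subseteq a^{\struc}$, and $f$ maps this pair to $\tuple{x', y'}$. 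Combined with the bijectivity above, this shows $f$ is a \kl{graph isomorphism}.

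There is no real obstacle here: the only nontrivial ingredient is the intermediate-bag condition of \kl{path decompositions}, which is exactly what is needed for injectivity; everything else is a matter of unfolding the definitions of $\bigsqcup$, $\odot$, and \kl{quotient graph}.
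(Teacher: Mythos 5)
Your argument is correct and is precisely the unfolding of definitions that the paper compresses into ``Easy, by definition'': the map $[\tuple{i,x}]\mapsto x$ is well-defined, bijective via the intermediate-bag condition, and preserves and reflects each relation via the covering condition $a^{\struc}=\bigcup_i a^{\struc[2]_i}$. No gap; this is the same (and essentially the only) approach.
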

\begin{proof}
    Easy by definition.
\end{proof}
From this, we have:
\[\REL_{\pw \le k} = \isoc(\set{\odot \vec{\struc} \mid \vec{\struc} \in \STR_{k+1}^+}).\]
The direction ($\supseteq$) is clear by the definition of \kl{pathwidth} (\Cref{proposition: gluing structures}).
For the direction ($\subseteq$), let $\vec{\struc}$ be a \kl{path decomposition} of \kl(pathwidth){width} $k$.
We then can replace names of \kl{vertices} in each \kl(pathwidth){bag} with numbers in $\range{k+1}$
by introducing an intermediate \kl(pathwidth){bag} whose \kl{vertices} consist of the intersection of the two adjacent \kl(pathwidth){bags}.
For instance, we can transform $\vec{\struc}$ into $\vec{\struc[2]}$ with preserving $\odot \vec{\struc} \cong \odot \vec{\struc[2]}$ as follows:
\begin{align*}
&\vec{\struc} = \left(\begin{tikzpicture}[baseline = -.5ex]
    \graph[grow right = .9cm, branch down = 2.5ex]{
    {s1/{$1$}[vert]} -!- {s2/{$2$}[vert]} -!- {s3/{$3$}[vert]} -!- {s4/{$4$}[vert]}
    };
    \graph[use existing nodes, edges={color=black, pos = .4, earrow}, edge quotes={fill=white, inner sep=1pt,font= \scriptsize}]{
        s1 ->["$a$"] s2 ->["$a$"] s3 ->["$a$"] s4 -> ["$a$", bend right] s1;
    };
\end{tikzpicture}\right)\hspace{-.4em}
\left(\begin{tikzpicture}[baseline = -.5ex]
    \graph[grow right = .9cm, branch down = 2.5ex]{
    {s1/{$3$}[vert]} -!- {s2/{$4$}[vert]} -!- {s3/{$x$}[vert]} -!- {s4/{$y$}[vert]}
    };
    \graph[use existing nodes, edges={color=black, pos = .4, earrow}, edge quotes={fill=white, inner sep=1pt,font= \scriptsize}]{
        s1 ->["$a$"] s2 ->["$a$"] s3 ->["$a$"] s4 -> ["$a$", bend right] s1;
    };
\end{tikzpicture}\right)\hspace{-.4em}\left(\begin{tikzpicture}[baseline = -.5ex]
    \graph[grow right = .9cm, branch down = 2.5ex]{
    {s1/{$x$}[vert]} -!- {s2/{$y$}[vert]} -!- {s3/{$z$}[vert]} -!- {s4/{$w$}[vert]}
    };
    \graph[use existing nodes, edges={color=black, pos = .4, earrow}, edge quotes={fill=white, inner sep=1pt,font= \scriptsize}]{
        s1 ->["$a$"] s2 ->["$a$"] s3 ->["$a$"] s4 -> ["$a$", bend right] s1;
    };
\end{tikzpicture}\right) \quad\leadsto \\
&\vec{\struc[2]} = \left(\begin{tikzpicture}[baseline = -.5ex]
    \graph[grow right = .9cm, branch down = 2.5ex]{
    {s1/{$1$}[vert]} -!- {s2/{$2$}[vert]} -!- {s3/{$3$}[vert]} -!- {s4/{$4$}[vert]}
    };
    \graph[use existing nodes, edges={color=black, pos = .4, earrow}, edge quotes={fill=white, inner sep=1pt,font= \scriptsize}]{
        s1 ->["$a$"] s2 ->["$a$"] s3 ->["$a$"] s4 -> ["$a$", bend right] s1;
    };
\end{tikzpicture}\right)\hspace{-.4em}
\left(\begin{tikzpicture}[baseline = -.5ex]
    \graph[grow right = .9cm, branch down = 2.5ex]{
    {s1/{$3$}[vert]} -!- {s2/{$4$}[vert]}
    };
    \graph[use existing nodes, edges={color=black, pos = .4, earrow}, edge quotes={fill=white, inner sep=1pt,font= \scriptsize}]{
        s1 ->["$a$"] s2;
    };
\end{tikzpicture}\right)\hspace{-.4em}
\left(\begin{tikzpicture}[baseline = -.5ex]
    \graph[grow right = .9cm, branch down = 2.5ex]{
    {s1/{$3$}[vert]} -!- {s2/{$4$}[vert]} -!- {s3/{$1$}[vert]} -!- {s4/{$2$}[vert]}
    };
    \graph[use existing nodes, edges={color=black, pos = .4, earrow}, edge quotes={fill=white, inner sep=1pt,font= \scriptsize}]{
        s1 ->["$a$"] s2 ->["$a$"] s3 ->["$a$"] s4 -> ["$a$", bend right] s1;
    };
\end{tikzpicture}\right)\hspace{-.4em}
\left(\begin{tikzpicture}[baseline = -.5ex]
    \graph[grow right = .9cm, branch down = 2.5ex]{
    {s1/{$1$}[vert]} -!- {s2/{$2$}[vert]}
    };
    \graph[use existing nodes, edges={color=black, pos = .4, earrow}, edge quotes={fill=white, inner sep=1pt,font= \scriptsize}]{
        s1 ->["$a$"] s2;
    };
\end{tikzpicture}\right)\hspace{-.4em}
\left(\begin{tikzpicture}[baseline = -.5ex]
    \graph[grow right = .9cm, branch down = 2.5ex]{
    {s1/{$1$}[vert]} -!- {s2/{$2$}[vert]} -!- {s3/{$3$}[vert]} -!- {s4/{$4$}[vert]}
    };
    \graph[use existing nodes, edges={color=black, pos = .4, earrow}, edge quotes={fill=white, inner sep=1pt,font= \scriptsize}]{
        s1 ->["$a$"] s2 ->["$a$"] s3 ->["$a$"] s4 -> ["$a$", bend right] s1;
    };
\end{tikzpicture}\right).
\end{align*}
By the \kl{linearly bounded pathwidth model property} (\Cref{proposition: bounded pw property}),
the \kl{equational theory} of \kl{KL terms} can be reformulated as follows:
\[\REL \models \term[1] \le \term[2] \;\iff\; \set{\odot \vec{\struc} \mid \vec{\struc} \in \STR_{\iw(\term[1]) + 1}^+} \models \term[1] \le \term[2].\]
Because \kl{path decompositions} are sequences (\kl{words}) of \kl{structures}, this characterization is compatible with the automata construction given later.

\subsection{Decomposing derivatives}\label{section: decomposing derivatives}
Let $\bullet$ be a fresh \kl{label} for denoting a special isolated \kl{vertex}.
For a \kl{structure} $\struc$, let $\struc_{\bullet} \defeq \tuple{\univ{\struc} \uplus \set{\bullet}, \set{a^{\struc}}_{a \in \vsig}}$.
Clearly, for all $\lterm[1], \lterm[2] \in \LT_{\univ{\struc}}$,
$\lterm[1] \longrightarrow^{\struc} \lterm[2]$ iff $\lterm[1] \longrightarrow^{\struc_{\bullet}} \lterm[2]$.
Let $\vec{\struc} = \struc_1 \dots \struc_n \in \STR^{+}$.
In the sequel, we consider the \kl{structure} $\odot ((\struc_1)_{\bullet} \dots (\struc_n)_{\bullet})$; we use $(\odot \vec{\struc})_{\bullet}$ to denote this \kl{structure} (as they are isomorphic) and we may abbreviate $[\tuple{i, \bullet}]_{\sim_{(\odot \vec{\struc})_{\bullet}}}$ to $\bullet$. 
For $i \in \range{n}$, we let $\LT^{\vec{\struc}}_{(i)} \defeq \LT_{\set{[\tuple{i, \lab}]_{\sim_{(\odot \vec{\struc})_{\bullet}}} \mid \lab \in \univ{(\struc_i)_{\bullet}}}}$.
For an \kl{lKL term} $\lterm$ and $i \in \range{n}$, let $\lterm_{(i)}$ be the \kl{lKL term} $\lterm$ in which each $\lab$ has been replaced with $[\tuple{i, \lab}]_{\sim_{(\odot \vec{\struc})_{\bullet}}}$.
We now consider the following composition of \kl{derivative} relations.
\begin{defi}\label{definition: decomposition}
    Let $\vec{\struc} = \struc_1 \dots \struc_n \in \STR^{+}$.
    The relation $\lterm[1] \mathrel{(\odot_{i = 1}^{n} \longrightarrow^{(\struc_{i})_{\bullet}})} \lterm[2]$,
    where $\tuple{\lterm[1], \lterm[2]} \in \bigcup_{i = 1}^{n} (\LT^{\vec{\struc}}_{(i)})^2$,
    is defined as the smallest relation (of tuples of $i \in \range{1, n}$, $\lterm[1] \in \LT^{\vec{\struc}}_{(i)}$, and $\lterm[2] \in \LT^{\vec{\struc}}_{(i)}$) closed under the following rules:
    \begin{gather*}
        \begin{prooftree}
            \hypo{\lterm[1] \longrightarrow^{(\struc_{j})_{\bullet}} \lterm[2]}
            \infer1[\labeltext{D}{rule: decomposition D}]{\lterm[1]_{(j)} \mathrel{(\odot_{i = 1}^{n} \longrightarrow^{(\struc_{i})_{\bullet}})} \lterm[2]_{(j)}}
        \end{prooftree} \hspace{5em}
        \begin{prooftree}
            \hypo{\lterm[1][\bullet/l] \mathrel{(\odot_{i = 1}^{n} \longrightarrow^{(\struc_{i})_{\bullet}})} \lterm[2][\bullet/r]}
            \infer1[\labeltext{L}{rule: decomposition L}]{\lterm[1][\lab/l] \mathrel{(\odot_{i = 1}^{n} \longrightarrow^{(\struc_{i})_{\bullet}})} \lterm[2][\lab/r]}
        \end{prooftree}\\
        \begin{prooftree}
            \hypo{\lterm[1] \mathrel{(\odot_{i = 1}^{n} \longrightarrow^{(\struc_{i})_{\bullet}})} \lterm[3]}
            \hypo{\lterm[3] \mathrel{(\odot_{i = 1}^{n} \longrightarrow^{(\struc_{i})_{\bullet}})} \lterm[2]}
            \infer2[\labeltext{T}{rule: decomposition T}]{\lterm[1] \mathrel{(\odot_{i = 1}^{n} \longrightarrow^{(\struc_{i})_{\bullet}})} \lterm[2]}
        \end{prooftree}.
    \end{gather*}
    Here, $\lterm[1][\lab[2]/j]$ denotes the \kl{lKL term} $\lterm[1]$ in which the $j$-th \kl{label} of $\lterm[1]$ has been replaced with $\lab[2]$.
\end{defi}
\begin{exa}\label{example: decomposition}
    Let $\vec{\struc} = \textcolor{blue}{\struc_1} \textcolor{red!60}{\struc_2}$ where
    $\textcolor{blue}{\struc_1} = \left(\begin{tikzpicture}[baseline = -.5ex]
        \graph[grow right = 1.cm, branch down = 2.5ex]{
        {s1/{$1$}[vert]} -!- {t1/{$2$}[vert]}
        };
        \graph[use existing nodes, edges={color=black, pos = .5, earrow}, edge quotes={fill=white, inner sep=1pt,font= \scriptsize}]{
            s1 ->["$a$", bend right] t1;
            t1 ->["$b$", bend right] s1;
        };
    \end{tikzpicture} \right)$
     and 
    $\textcolor{red!60}{\struc_2} = \left(\begin{tikzpicture}[baseline = -.5ex]
        \graph[grow right = 1.cm, branch down = 2.5ex]{
        {s1/{$2$}[vert]} -!- {t1/{$3$}[vert]}
        };
        \graph[use existing nodes, edges={color=black, pos = .5, earrow}, edge quotes={fill=white, inner sep=1pt,font= \scriptsize}]{
            s1 ->["$a$", bend right] t1;
            t1 ->["$b$", bend right] s1;
        };
    \end{tikzpicture} \right)$.
    Let $\lterm[1] = @[\tuple{1,1}]_{\sim}. (a^2 ( (b^2 a^2) \intersection \id) b^2) \intersection \id$
    and let $\lterm[2] = @[\tuple{1, 1}]_{\sim}. \id$.
    We then have $\lterm[1] \longrightarrow^{(\odot \vec{\struc})_{\bullet}} \lterm[2]$ by the following \kl[$\struc$-run]{$(\odot \vec{\struc})_{\bullet}$-run}:
    \begin{center}
        \begin{tikzpicture}[baseline = -.5ex, yscale=-1]
            \foreach \x/\y/\label/\style/\dc in {
                0/2/{$1, 1$}//{blue},1/1/{$1, 1$}//{blue},2/1/{$1, 2$}/{thick}/,3/1/{$2, 3$}//{red!60},
                4/0/{$2, 3$}//{red!60},5/0/{$1, 2$}/{thick}/,6/0/{$1, 1$}//{blue},6/2/{$2, 3$}//{red!60},6/3/{$1, 1$}//{blue},
                7/0/{$1, 2$}/{thick}/,8/0/{$2, 3$}//{red!60},9/1/{$2, 3$}//{red!60},10/1/{$1, 2$}/{thick}/,11/1/{$1, 1$}//{blue},
                12/2/{$1, 1$}//{blue}}{
               \node[mynode,draw = \dc,circle, \style]  (\x X\y) at (1.15*\x,.7*\y) {\tiny \label};
            }
            \node[left = 4pt of 0X2](s){}; \path (s) edge[earrow, ->] (0X2);
            \node[right = 4pt of 12X2](t){}; \path (12X2) edge[earrow, ->] (t);
            \foreach \xp/\x/\y/\col in {0/1/2/blue,3/4/1/red!60}{
                \node  (\x X\y) at (1.15*\x,.7*\y) {};
                \path (\xp X\y) edge [opacity = 0] node[pos= .5, elabel, color = \col](\xp X\y Xf){\tiny $\fork$}(\x X\y);
            }
            \foreach \xp/\x/\y/\col in {8/9/1/red!60,11/12/2/blue}{
                \node (\xp X\y) at (1.15*\xp,.7*\y) {};
                \path (\xp X\y) edge [opacity = 0] node[pos= .5, elabel, color = \col](\x X\y Xj){\tiny $\join$}(\x X\y);
            }
            \graph[use existing nodes, edges={color=black, pos = .4, earrow, line width = .5pt, 
            },edge quotes={fill=white, inner sep=1pt,font= \scriptsize}]{
            {0X2} --[color = blue] {0X2Xf} ->[color = blue] {1X1, 6X3};
            {1X1} ->["$a$", color = blue] {2X1} ->["$a$", color = red!60] {3X1};
            {3X1} --[color = red!60] {3X1Xf} -> [color = red!60] {4X0, 6X2};
            {4X0} ->["$b$", color = red!60] {5X0} ->["$b$", color = blue] {6X0};
            {6X0} ->["$a$", color = blue] {7X0} ->["$a$", color = red!60] {8X0};
            {8X0,6X2} --[color = red!60] {9X1Xj} -> [color = red!60] {9X1};
            {9X1} ->["$b$", color = red!60] {10X1} ->["$b$", color = blue] {11X1};
            {11X1,6X3} --[color = blue] {12X2Xj} -> [color = blue] {12X2};
            };
        \end{tikzpicture}
    \end{center}
    where we abbreviate $[\tuple{i,x}]_{\sim}$ to ``$i, x$'' in each \kl{vertex}.
    We let
   {\scriptsize\begin{align*}
        \lterm_1 &= (@[\tuple{1,2}]_{\sim}. a ( (b^2 a^2) \intersection \id) b^2) \intersection_{1} (@[\tuple{1,1}]_{\sim}. \id),&
        \lterm_1' &= (@[\tuple{1,2}]_{\sim}. a ( (b^2 a^2) \intersection \id) b^2) \intersection_{1} (@\bullet. \id),\\
        \lterm_2' &= (( (@[\tuple{2,2}]_{\sim}. b a^2) \intersection_{1} (@[\tuple{2,3}]_{\sim}. \id)) b^2) \intersection_{1} (@ \bullet. \id),&
        \lterm_2'' &= (( (@[\tuple{1,2}]_{\sim}. b a^2) \intersection_{1} (@\bullet. \id)) b^2) \intersection_{1} (@ \bullet. \id),\\
        \lterm_3'' &= (( (@[\tuple{1,2}]_{\sim}. a) \intersection_{1} (@\bullet. \id)) b^2) \intersection_{1} (@ \bullet. \id),&
        \lterm_3' &= (( (@[\tuple{2,2}]_{\sim}. a) \intersection_{1} (@[\tuple{2,3}]_{\sim}. \id)) b^2) \intersection_{1} (@ \bullet. \id),\\
        \lterm_4' &= (@[\tuple{2,2}]_{\sim}. b) \intersection_{1} (@ \bullet. \id),&
        \lterm_4 &= (@[\tuple{1,2}]_{\sim}. b) \intersection_{1} (@ [\tuple{1,1}]_{\sim}. \id).
    \end{align*}}
    We also have $\lterm[1] \mathrel{(\odot_{i = 1}^{2} \longrightarrow^{(\struc_{i})_{\bullet}})} \lterm[2]$, as follows:
    \begin{center}
    \begin{prooftree}
        \hypo{\mathstrut}
        \infer1[\nameref{rule: decomposition D} ($\textcolor{blue}{\struc_1}$)]{\lterm[1] \leadsto \lterm_1}
        \hypo{\mathstrut}
        \infer1[\nameref{rule: decomposition D} ($\textcolor{red!60}{\struc_2}$)]{\lterm_1' \leadsto \lterm_2'}
        \hypo{\mathstrut}
        \infer1[\nameref{rule: decomposition D} ($\textcolor{blue}{\struc_1}$)]{\lterm_2'' \leadsto \lterm_3''}
        \infer1[\nameref{rule: decomposition L}]{\lterm_2' \leadsto \lterm_3'}
        \hypo{\mathstrut}
        \infer1[\nameref{rule: decomposition D} ($\textcolor{red!60}{\struc_2}$)]{\lterm_3' \leadsto \lterm_4'}
        \infer[double]3[\nameref{rule: decomposition T}]{\lterm_1' \leadsto \lterm_4'}
        \infer1[\nameref{rule: decomposition L}]{\lterm_1 \leadsto \lterm_4}
        \hypo{\mathstrut}
        \infer1[\nameref{rule: decomposition D} ($\textcolor{blue}{\struc_1}$)]{\lterm_4 \leadsto \lterm[2]}
        \infer[double]3[\nameref{rule: decomposition T}]{\lterm[1] \leadsto \lterm[2]}
    \end{prooftree}
    \end{center}
    Note that we cannot use pairs of \kl{lKL terms} that both \kl{labels} $[\tuple{1,1}]_{\sim}$ and $[\tuple{2,3}]_{\sim}$ appear in each \kl{vertex} of the derivation tree,
    because these \kl{labels} do not occur simultaneously in $\textcolor{blue}{\struc_1}$ or $\textcolor{red!60}{\struc_2}$;
    so, we should consider an appropriate strategy.
\end{exa}
The following decomposition theorem shows that by composing \kl{derivative} relations $\longrightarrow^{{\struc_{i}}}$ using the rules of \Cref{definition: decomposition},
we can obtain the \kl{derivative} relation $\longrightarrow^{(\odot \vec{\struc})_{\bullet}}$.
Namely, we can decompose a \kl{derivative} relation on a large \kl{structure} into \kl{derivative} relations on small \kl{structures}, as follows.
The soundness ($\Longleftarrow$) is easy.
For the completeness ($\Longrightarrow$), we appropriately decompose \kl{derivative} relations, as in \Cref{example: decomposition}.
The proof will be given in \Cref{section: theorem: decomposition}.
\begin{restatable}[Decomposition theorem]{thm}{decompositiontheorem}\label{theorem: decomposition}
    Let $\vec{\struc} = \struc_1 \dots \struc_n \in \STR^{+}$.
    For all $j \in \range{n}$ and \kl{lKL terms} $\lterm[1], \lterm[2] \in \LT^{\vec{\struc}}_{(j)}$,
    we have:
    \[\lterm[1] \longrightarrow^{(\odot \vec{\struc})_{\bullet}} \lterm[2] \;\iff\; \lterm[1] \mathrel{(\odot_{i = 1}^{n} \longrightarrow^{(\struc_{i})_{\bullet}})} \lterm[2].\]
\end{restatable}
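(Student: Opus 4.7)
My plan is to handle the two directions separately. The soundness direction ($\Leftarrow$) proceeds by induction on the derivation of the composed relation using rules (D), (T), (L). Rule (D) is immediate because every atomic $(\struc_j)_\bullet$-run embeds along the natural inclusion of $\univ{(\struc_j)_\bullet}$ into $\univ{(\odot\vec{\struc})_\bullet}$ as an atomic $(\odot\vec{\struc})_\bullet$-run, and the derivative rules of \Cref{definition: derivative} only consult the label relations visible to the atomic run being used, so each step lifts verbatim. Rule (T) follows because $\longrightarrow^{(\odot\vec{\struc})_\bullet}$ has its own (T) rule built into \Cref{definition: derivative}. For rule (L), the key observation is that $\bullet$ is an isolated vertex of $(\odot\vec{\struc})_\bullet$ with no incident $a$-edges for any $a \in \vsig$, so no atomic edge run can begin or end at $\bullet$; any atomic step of a derivation witnessing $\lterm[1][\bullet/l] \longrightarrow^{(\odot\vec{\struc})_\bullet} \lterm[2][\bullet/r]$ that touches $\bullet$ uses it only as a fork or join vertex, and since forks and joins exist at every vertex, replacing $\bullet$ by any label $\lab[1]$ preserves derivability.

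For completeness ($\Rightarrow$), I decompose the witnessing $\struc$-run using \Cref{proposition: struc-run atomic} and iterated (T), reducing the task to simulating a single atomic derivative step. Given an atomic step $\lterm[1] \longrightarrow^{(\odot\vec{\struc})_\bullet}_\trace \lterm[2]$, I select a component index $j$ for $\trace$: if $\trace = a^{\lab[1],\lab[2]}_i$ with $a \in \vsig$, the edge lies in some $\struc_j$ by the definition of $\odot\vec{\struc}$; if $\trace = \fork^{\lab[1]}_i$ or $\trace = \join^{\lab[1]}_i$, pick any $j$ with $\lab[1] \in \univ{(\struc_j)_\bullet}$, which exists since $\lab[1]$ occurs in $\odot\vec{\struc}$. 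Replacing every label of $\lterm[1], \lterm[2]$ not in $\univ{(\struc_j)_\bullet}$ by $\bullet$ yields terms $\lterm[1]'', \lterm[2]'' \in \LT^{\vec{\struc}}_{(j)}$; since the derivative rule triggered by $\trace$ only inspects labels at the interface of $\trace$, which all lie in $\struc_j$, the atomic step remains valid, giving $\lterm[1]'' \longrightarrow^{(\struc_j)_\bullet}_\trace \lterm[2]''$. Rule (D) then produces $\lterm[1]'' \mathrel{(\odot_{i=1}^n \longrightarrow^{(\struc_i)_\bullet})} \lterm[2]''$, and iterated applications of rule (L) restore the original foreign labels.

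The main obstacle is ensuring that the position-wise substitution of foreign labels by $\bullet$ preserves the atomic derivative step and that rule (L), which operates on one pair of positions at a time, suffices to invert the bulk substitution. This requires a case analysis on the derivative rules of \Cref{definition: derivative}: for each rule one must track how labels of the LHS map to positions in the RHS, so that iterated (L) applied in a suitable order restores the originals. Since each rule either leaves unaffected labels in their original relative positions or inserts/removes labels only at the interface of the atomic run, the bookkeeping is manageable, and the induction closes. I expect \Cref{example: decomposition} to be the prototype: the $b$-edge and $a$-edge spanning vertices $\{1,2\}$ and $\{2,3\}$ force an alternation between the two components, which is handled exactly by the substitution-and-restore pattern outlined above.
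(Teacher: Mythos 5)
Your soundness direction ($\Leftarrow$) is fine and matches the paper's (an easy induction on the derivation tree, using that each rule of \Cref{definition: decomposition} is admissible for $\longrightarrow^{(\odot\vec{\struc})_{\bullet}}$). The completeness direction ($\Rightarrow$), however, has a genuine gap. You cut the witnessing run at \emph{every} atomic step and claim that rule (D) plus iterated (L) recovers each step. But the relation $\mathrel{(\odot_{i=1}^{n}\longrightarrow^{(\struc_i)_{\bullet}})}$ is only defined on pairs in $\bigcup_{i}(\LT^{\vec{\struc}}_{(i)})^2$, so every interpolant fed to rule (T) must have \emph{all} of its labels inside a single component $(\struc_i)_{\bullet}$. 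The intermediate terms between consecutive atomic steps generally have labels spread across several components (this is exactly the situation in \Cref{example: decomposition}, where a term carries both $[\tuple{1,1}]_{\sim}$ and $[\tuple{2,3}]_{\sim}$, which live in no common component), and rule (L) cannot rescue this: it rewrites one source position of the left endpoint and one target position of the right endpoint of an already-derived pair, so it cannot reach the interpolants buried inside a (T)-chain, and it cannot produce a pair lying outside $\bigcup_{i}(\LT^{\vec{\struc}}_{(i)})^2$ in the first place. Your per-atomic-step substitution therefore produces facts that cannot be chained together by (T).

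The missing idea is precisely the hard part of the paper's proof: one must show that a \emph{suitable} cut point always exists, i.e.\ that any local sub-SP run whose interface labels straddle component $i$ can be split as $\trace[2]\series\trace[3]$ with all labels at the cut lying in $\univ{\vec{\struc}}_{(i)}$. This is the interpolation lemma (\Cref{lemma: interpolation SPR,lemma: interpolation lsubSPR 2}), it crucially exploits the series-parallel structure of runs of \kl{lKL terms} (\Cref{proposition: subSP}), and it genuinely fails for non-sub-SP local runs (\Cref{remark: non-sub-SP runs}); your proposal contains no argument in its place. The paper then drives the induction by the decomposition of the \emph{run} (\Cref{lemma: decomposition lsubSPR}) rather than by its atomic factorisation, and transfers it to terms via \Cref{lemma: i,lemma: L,lemma: T}. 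Note also that your reduction "to a single atomic derivative step" silently uses the fact that $\lterm[1]\longrightarrow^{\struc}_{\trace[1]\series\trace[2]}\lterm[2]$ admits a term-level interpolant; this is \Cref{lemma: T}, which itself requires the transitivity-elimination argument of \Cref{proposition: transitivity elimination} and is not free.
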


\subsection{Reducing to 2AFAs}
Using \Cref{theorem: decomposition}, we can give a reduction from the \kl{equational theory} of PCoR*
into the inclusion problem of \kl{two-way alternating finite word automata} (\kl{2AFAs}) \cite{ladnerAlternatingPushdownAutomata1978,ladnerAlternatingPushdownStack1984} (the following definition is based on \cite{geffertTransformingTwoWayAlternating2014}).\footnote{See also \cite{kapoutsisAlternationTwowayFinite2021} for a comparison of definitions of ``2AFA''.
In their classification, our definition corresponds to monotone 2BFA, as we use monotone (but, possibly non-basic) formulas in transition functions.}

\subsubsection{2AFAs}
We use $\triangleright$ and $\triangleleft$ as the special characters denoting the left end-marker and right end-marker, respectively.
A \intro*\kl{2AFA} $\automaton$ over a finite set $A$ is a tuple $\automaton = \tuple{\univ{\automaton}, \delta^{\automaton}, \src^{\automaton}}$, where 
\begin{itemize}
    \item $\univ{\automaton}$ is a finite set of states;
    \item $\delta^{\automaton} \colon \univ{\automaton} \times (A \uplus \set{\triangleright, \triangleleft}) \to \mathbb{B}_{+}(\univ{\automaton} \times \set{-1, 0, 1})$ is a \emph{transition function},
    where $\mathbb{B}_{+}(X)$ denotes the set of positive boolean formulas over a set $X$ given by
    \begin{align*}
        \varphi, \psi \in \mathbb{B}_{+}(X) &\;\Coloneqq\; p \mid \const{f} \mid \const{t} \mid \varphi \lor \psi \mid \varphi \land \psi \tag*{($p \in X$);}
    \end{align*}
\item $\src^{\automaton} \in \univ{\automaton}$ is the initial state.
\end{itemize}\noindent %FIXED indent
For a \kl{2AFA} $\automaton$ and a \kl{word} $w = a_{0} \dots a_{n-1}$ over $A \uplus \set{\triangleright, \triangleleft}$,
the set $S^{\automaton}_{\word} \subseteq \univ{\automaton} \times \range{0,n-1}$ is defined as the smallest set closed under the following rule:
    For $\tuple{q, i} \in \univ{\automaton} \times \range{0,n-1}$
    and propositional variables $\tuple{q_1, i_1}, \dots, \tuple{q_{m}, i_{m}} \in \univ{\automaton} \times \set{-1, 0, 1}$,
    when the positive boolean formula $\delta^{\automaton}(q, a_i)$ is $\const{true}$ under that each $\tuple{q_k, i_k}$ is $\const{true}$, then 
    \[\begin{prooftree}
        \hypo{\tuple{q_1, i+ i_1} \in S^{\automaton}_{\word}}
        \hypo{\dots}
        \hypo{\tuple{q_{m}, i + i_{m}} \in S^{\automaton}_{\word}}
        \infer3{\tuple{q, i} \in S^{\automaton}_{\word}}
    \end{prooftree}.\]
For a \kl{2AFA} $\automaton$, the \kl{language} is defined as $\ljump{\automaton} \defeq \set{\word \in A^* \mid \tuple{\src^{\automaton}, 0} \in S^{\automaton}_{{\triangleright}\word{\triangleleft}}}$.
We define the \intro*\kl(2AFA){size} $\|\automaton\|$ as $\sum_{\tuple{q, a} \in \univ{\automaton} \times (A \uplus \set{\triangleright, \triangleleft})} \|\delta^{\automaton}(q, a)\|$
where $\|\varphi\|$ denotes the number of symbols occurring in the positive boolean formula $\varphi$.%
\footnote{Note that $\|\delta^{\automaton}(q, a)\|$ has no limit in general.
Also, even if we take $\delta^{\automaton}(q, a)$ as the minimum one among equivalent positive boolean formulas,
the \kl{size} is not bounded by a polynomial in $(\#\univ{\automaton} \#A)$,
because the number of monotone boolean functions (a.k.a.\ the Dedekind number) with $2 n$ variables,
is $\Omega(2^{\binom{2 n}{n}})$ by considering antichains in which each set has exactly $n$ elements (and hence, $2^{2^{\Omega(n)}}$ by Stirling's approximation). %
Due to this, we count up each $\|\delta^{\automaton}(q, a)\|$ in the definition of the \kl{size} $\|\automaton\|$.
}
\begin{prop}\label{proposition: 2AFA PSPACE}
    The \intro*\kl{inclusion problem} for \kl{2AFAs}---given a finite set $A$ and given two \kl{2AFAs} $\automaton[1]$ and $\automaton[2]$ over $A$, does $\ljump{\automaton[1]} \subseteq \ljump{\automaton[2]}$ hold?---is decidable in $\mathrm{PSPACE}$.
\end{prop}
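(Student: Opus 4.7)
The plan is to reduce the \kl{inclusion problem} for \kl{2AFAs} to the emptiness problem for \kl{2AFAs}, exploiting closure of this class under intersection and complement with only polynomial blowup, and then to invoke the classical fact that 2AFA emptiness on finite words is decidable in $\mathrm{PSPACE}$.

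First, I would complement $\automaton[2]$ in place: dualising each positive boolean transition formula (swap $\land \leftrightarrow \lor$ and $\const{t} \leftrightarrow \const{f}$) yields a \kl{2AFA} $\automaton[2]^{c}$ of the same size with $\ljump{\automaton[2]^{c}} = A^{*} \setminus \ljump{\automaton[2]}$; the only subtlety is that on finite inputs non-halting plays must be counted as losing for the player trying to accept, which is already enforced by the least-fixpoint convention defining $S^{\automaton}_{\word}$. I would then form the product of $\automaton[1]$ and $\automaton[2]^{c}$: a \kl{2AFA} $\automaton$ of size polynomial in $\|\automaton[1]\|+\|\automaton[2]\|$, obtained by disjoint union of the state sets together with a fresh initial state whose transition on the left endmarker is $\tuple{\src^{\automaton[1]},0} \land \tuple{\src^{\automaton[2]^{c}},0}$. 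Then $\ljump{\automaton[1]} \subseteq \ljump{\automaton[2]}$ if and only if $\ljump{\automaton} = \emptyset$.

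For the emptiness step I would invoke the classical result (Ladner--Lipton--Stockmeyer) that emptiness of \kl{2AFAs} on finite words is in $\mathrm{PSPACE}$. The standard route goes via an equivalent 1NFA with exponentially many states, where each state admits a polynomial-size representation (as a subset of $\univ{\automaton}$, or as a ``crossing-sequence'' certificate over $\univ{\automaton}$) and each transition is polynomial-space verifiable; non-emptiness of this 1NFA is then decided on the fly by an $\mathrm{NPSPACE}$ search, and $\mathrm{NPSPACE} = \mathrm{PSPACE}$ by Savitch's theorem.

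The main obstacle is the $\mathrm{PSPACE}$ bound for 2AFA emptiness itself: the naive pipeline of explicitly constructing the exponential 1NFA and then testing its emptiness in $\mathrm{NL}$ of its size only yields $\mathrm{EXPSPACE}$. The Ladner--Lipton--Stockmeyer technique sidesteps this by never materialising the 1NFA, using compact certificate states and guess-and-verify transitions. I would cite this result as a black box rather than reprove it, and keep the proof of the proposition confined to the (straightforward) complement-and-product reduction above.
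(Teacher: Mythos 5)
There is a genuine gap at the complementation step. Under the acceptance semantics used here, $S^{\automaton}_{\word}$ is the \emph{least} set closed under the transition rules, so a configuration is accepting only if it admits a well-founded (finite) derivation; non-halting behaviour is losing for the automaton. Dualising the transition formulas and keeping this same least-fixpoint semantics does \emph{not} yield the complement: the complement of a least fixpoint $\mu X.\,F(X)$ is the \emph{greatest} fixpoint $\nu X.\,\tilde{F}(X)$ of the De Morgan dual operator, whereas your $\automaton[2]^{c}$ computes $\mu X.\,\tilde{F}(X)$, which in general is a proper subset. Concretely, take a one-state 2AFA with $\delta(q,a)=\tuple{q,0}$ on every symbol (including the endmarkers); it accepts nothing, it is its own dual, and so its ``complement'' also accepts nothing rather than everything. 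Your remark that the least-fixpoint convention ``already enforces'' the right treatment of non-halting plays is exactly backwards: that convention forces non-halting plays to be losing for the acceptor in \emph{both} automata, which is precisely what breaks the duality. Repairing this would require either a greatest-fixpoint semantics for the dual machine (a different model) or a prior transformation to a halting/loop-free 2AFA of polynomial size, and neither is supplied; the latter is not a routine step for two-way alternating automata.

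The paper sidesteps this by never complementing inside the 2AFA model: it converts $\automaton[1]$ to a 1NFA with $2^{\mathcal{O}(n\log n)}$ states and, separately, builds a 1NFA with $2^{\mathcal{O}(n)}$ states recognising $A^{*}\setminus\ljump{\automaton[2]}$ directly (Lem.~1 and Lem.~5 of \cite{geffertTransformingTwoWayAlternating2014}), then decides non-emptiness of the product 1NFA on the fly in nondeterministic polynomial space and concludes by Savitch's theorem. The second half of your argument (emptiness via succinctly representable states of an exponential 1NFA and an on-the-fly $\mathrm{NPSPACE}$ search) is sound and is essentially the same engine the paper uses, but it cannot be fed the input that your first half promises to produce.
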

\begin{proof}
    Let $\automaton[1]'$ be the (one-way) non-deterministic finite automata (\kl{1NFA}) of $2^{\mathcal{O}(n \log n)}$ states
    such that $\ljump{\automaton[1]'} = \ljump{\automaton[1]}$ obtained by \cite[Lemma 1]{geffertTransformingTwoWayAlternating2014},
    and let $\bar{\automaton[2]}'$ be the \kl{1NFA} of $2^{\mathcal{O}(n)}$ states such that $\ljump{\bar{\automaton[2]}'} = A^* \setminus \ljump{\automaton[2]}$ obtained by \cite[Lemma 5]{geffertTransformingTwoWayAlternating2014} (see also \cite[Theorem 8]{calvaneseViewBasedQueryAnswering2002}).
    Here, $n$ is the number of states in the input \kl{2AFA}. %
    We have $\ljump{\automaton[1]} \not\subseteq \ljump{\automaton[2]}$ iff $\ljump{\automaton[1]'} \cap \ljump{\bar{\automaton[2]}'} \neq \emptyset$.
    Then the right-hand side can be decided in a non-deterministic polynomial space algorithm
    by the ``on-the-fly'' checking of the non-emptiness problem of the product \kl{1NFA} of $\automaton[1]'$ and $\bar{\automaton[2]'}$.\footnote{
    See \cite{geffertTransformingTwoWayAlternating2014} for the precise constructions of $\automaton[1]'$ and $\bar{\automaton[2]'}$.
    For $\automaton[1]'$ \cite[Lemma 1]{geffertTransformingTwoWayAlternating2014},
    we can encode each state of $\automaton[1]'$ (expressed as a sequence over subsets of $\univ{\automaton[1]}$) in $\mathcal{O}(n \log n)$-space because its length is at most $2 n$ and each state of $\automaton[1]$ occurs at most $2$ times.
    Given $q, q' \in \univ{\automaton[1]'}$ and $a \in A \uplus \set{\triangleright,\triangleleft}$,
    the membership $\tuple{q, q'} \in a^{\automaton[1]'}$ can be decided in polynomial space, by construction.
    Similarly for $\bar{\automaton[2]'}$ \cite[Lemma 5]{geffertTransformingTwoWayAlternating2014},
    the membership $\tuple{q, q'} \in a^{\bar{\automaton[2]'}}$ can be decided in polynomial space.
    Thus, the ``on-the-fly'' checking is indeed possible.
    }
    Hence, by Savitch's theorem ($\mathrm{(co\text{-})NPSPACE} = \mathrm{PSPACE}$) \cite{savitchRelationshipsNondeterministicDeterministic1970}, this completes the proof.
\end{proof}

\subsubsection{2AFAs construction}
Using \kl{2AFAs}, we can naturally encode the rules of \Cref{definition: decomposition}.
\begin{defi}\label{definition: 2AFA}
    For $k \ge 1$ and a \kl{KL term} $\term[3]$,
    we define the \kl{2AFA} $\automaton_{k}^{\term[3]}$ over $\STR_{k}$ as follows:
    \begin{itemize}
        \item $\univ{\automaton_{k}^{\term[3]}} \defeq \set{\src} \uplus (\cl_{\range{k} \cup \set{\bullet}}(\term[3])^2 \times \set{?, \checkmark})$, we abbreviate $\tuple{\lterm[1], \lterm[2], p}$ to $\tuple{\lterm[1], \lterm[2]}_{p}$;
        \item $\src^{\automaton_{k}^{\term[3]}} \defeq \src$;
        \item $\delta^{\automaton_{k}^{\term[3]}}(q, a) \defeq \bigvee X(q, a)$ where
        $X(q, a) \subseteq \mathbb{B}_{+}(\univ{\automaton_{k}^{\term[3]}} \times \set{-1, 0, +1})$ is the smallest set closed under the following rules:
        \begin{gather*}
            \begin{prooftree}
                \hypo{\EPS_{\lab[2]}(\lterm[2])}
                \infer1[\labeltext{I}{rule: automata I} ($x, y \in \range{k}$)]{\tuple{\tuple{@ \lab[1].\ \term[3], \lterm[2]}_{?}, +1} \in X(\src^{\automaton_{k}^{\term[3]}}, \triangleright)}
            \end{prooftree}
            \hspace{1.5em}
            \begin{prooftree}
                \hypo{\overrightarrow{\mathsf{lab}}(\lterm[1]), \overrightarrow{\mathsf{lab}}(\lterm[2]) \in \univ{\struc_{\bullet}}^{+}}
                \infer1[\labeltext{$\checkmark$}{rule: automata check}]{\tuple{\tuple{\lterm[1], \lterm[2]}_{\checkmark}, 0} \in X(\tuple{\lterm[1], \lterm[2]}_{?}, \struc)}
            \end{prooftree}\\
            \begin{prooftree}
                \hypo{\mathstrut}
                \infer1[\labeltext{$-1$}{rule: automata -1}]{\tuple{\tuple{\lterm[1], \lterm[2]}_{?}, -1} \in X(\tuple{\lterm[1], \lterm[2]}_{\checkmark}, \struc)}
            \end{prooftree}\hspace{1.5em}
            \begin{prooftree}
                \hypo{\mathstrut}
                \infer1[\labeltext{$+1$}{rule: automata +1}]{\tuple{\tuple{\lterm[1], \lterm[2]}_{?}, +1} \in X(\tuple{\lterm[1], \lterm[2]}_{\checkmark}, \struc)}
            \end{prooftree}\hspace{1.5em}
            \begin{prooftree}
                \hypo{\lterm[1] \longrightarrow^{\struc_{\bullet}} \lterm[2]}
                \infer1[\labeltext{D}{rule: automata D}]{\const{t} \in X(\tuple{\lterm[1], \lterm[2]}_{\checkmark}, \struc)}
            \end{prooftree}\\
            \begin{prooftree}
                \hypo{\overrightarrow{\mathsf{lab}}(\lterm[3]') \in \univ{\struc_{\bullet}}^+}
                \infer1[\labeltext{T}{rule: automata T}]{\tuple{\tuple{\lterm[1], \lterm[3]'}_{\checkmark}, 0} \land \tuple{\tuple{\lterm[3]', \lterm[2]}_{\checkmark}, 0} \in X(\tuple{\lterm[1], \lterm[2]}_{\checkmark}, \struc)}
            \end{prooftree} \ 
            \begin{prooftree}
                \hypo{\mathstrut}
                \infer1[\labeltext{L}{rule: automata L}]{\tuple{\tuple{\lterm[1][\bullet/l], \lterm[2][\bullet/r]}_{\checkmark}, 0} \in X(\tuple{\lterm[1][x/l], \lterm[2][x/r]}_{\checkmark}, \struc)}
            \end{prooftree}
        \end{gather*}
    \end{itemize}
\end{defi}
The rules are defined based on \Cref{definition: decomposition}.
Intuitively, the check mark ``$\checkmark$'' in $\tuple{\lterm[1], \lterm[2]}_{\checkmark}$ expresses that $\overrightarrow{\mathsf{lab}}(\lterm[1]), \overrightarrow{\mathsf{lab}}(\lterm[2]) \in \univ{\struc_{\bullet}}^{+}$ holds (where $\struc$ is the \kl{structure} in the current position);
the question mark ``$?$'' in $\tuple{\lterm[1], \lterm[2]}_{?}$ expresses that it has not yet been checked.
They are introduced to check it after the rule $-1$ or $+1$ (as \kl{2AFAs} cannot read sibling \kl{structures} in one step).
\begin{exa}
    Let $\vec{\struc} = \textcolor{blue}{\struc_1} \textcolor{red!60}{\struc_2}$ where
    $\textcolor{blue}{\struc_1} = \left(\begin{tikzpicture}[baseline = -.5ex]
        \graph[grow right = 1.cm, branch down = 2.5ex]{
        {s1/{$1$}[vert]} -!- {t1/{$2$}[vert]}
        };
        \graph[use existing nodes, edges={color=black, pos = .5, earrow}, edge quotes={fill=white, inner sep=1pt,font= \scriptsize}]{
            s1 ->["$a$", bend right] t1;
            t1 ->["$b$", bend right] s1;
        };
    \end{tikzpicture} \right)$
     and 
    $\textcolor{red!60}{\struc_2} = \left(\begin{tikzpicture}[baseline = -.5ex]
        \graph[grow right = 1.cm, branch down = 2.5ex]{
        {s1/{$2$}[vert]} -!- {t1/{$3$}[vert]}
        };
        \graph[use existing nodes, edges={color=black, pos = .5, earrow}, edge quotes={fill=white, inner sep=1pt,font= \scriptsize}]{
            s1 ->["$a$", bend right] t1;
            t1 ->["$b$", bend right] s1;
        };
    \end{tikzpicture} \right)$.
    Let
    \begin{align*}
        \lterm_1 &= @1. (a^2 b^2) \intersection \id,&
        \lterm_2 &= (@2. a b^2) \intersection_1 (@1.\id),&
        \lterm_2' &= (@2. a b^2) \intersection_1 (@\bullet.\id),\\
        \lterm_3' &= (@2. b) \intersection_1 (@ \bullet.\id),&
        \lterm_3 &= (@2. b) \intersection_1 (@1.\id),&
        \lterm_4 &= @1. \id.
    \end{align*}
    Then $(\lterm_1)_{(1)} \leadsto (\lterm_4)_{(1)}$ holds by the following derivation tree (with respect to the rules \Cref{definition: decomposition}),
    where we abbreviate ${\mathrel{(\odot_{i = 1}^{2} \longrightarrow^{(\struc_{i})_{\bullet}})}}$ to $(\leadsto)$:
    \begin{center}
        \begin{prooftree}
            \hypo{\mathstrut}
            \infer1[D ($\textcolor{blue}{\struc_1})$]{(\lterm_1)_{(1)} \leadsto (\lterm_2)_{(1)}}
            \hypo{\mathstrut}
            \infer1[D ($\textcolor{red!60}{\struc_2})$]{(\lterm_2')_{(2)} \leadsto (\lterm_3')_{(2)}}
            \infer1[L]{(\lterm_2)_{(1)} \leadsto (\lterm_3)_{(1)}}
            \hypo{\mathstrut}
            \infer1[D ($\textcolor{blue}{\struc_1})$]{(\lterm_3)_{(1)} \leadsto (\lterm_4)_{(1)}}
            \infer[double]3[T]{(\lterm_1)_{(1)} \leadsto (\lterm_4)_{(1)}}
        \end{prooftree}
    \end{center}
    Let $\term[3] = a^2 b^2 \intersection \id$.
    This tree can be simulated in $\automaton_{k}^{\term[3]}$ as follows,
    where we omit ``$\in S^{{\automaton_{k}^{\term[3]}}}_{\triangleright \vec{\struc} \triangleleft}$'':
    \begin{center}
        \begin{prooftree}
            \hypo{\mathstrut}
            \infer1[D ($\textcolor{blue}{\struc_1})$]{\tuple{\tuple{\lterm_1, \lterm_2}_{\checkmark}, 1}}
            \hypo{\mathstrut}
            \infer1[D ($\textcolor{red!60}{\struc_2})$]{\tuple{\tuple{\lterm_2', \lterm_3'}_{\checkmark}, 2}}
            \infer1[$\checkmark$]{\tuple{\tuple{\lterm_2', \lterm_3'}_{?}, 2}}
            \infer1[$+1$]{\tuple{\tuple{\lterm_2', \lterm_3'}_{\checkmark}, 1}}
            \infer1[L]{\tuple{\tuple{\lterm_2, \lterm_3}_{\checkmark}, 1}}
            \hypo{\mathstrut}
            \infer1[D ($\textcolor{blue}{\struc_1})$]{\tuple{\tuple{\lterm_3, \lterm_4}_{\checkmark}, 1}}
            \infer[double]3[T]{\tuple{\tuple{\lterm_1, \lterm_4}_{\checkmark}, 1}}
            \infer1[$\checkmark$]{\tuple{\tuple{\lterm_1, \lterm_4}_{?}, 1}}
            \infer1[I]{\tuple{\src^{\automaton_{k}^{\term[3]}}, 0}}
        \end{prooftree}
    \end{center}
\end{exa}
Since the rules of $\automaton_{k}^{\term[3]}$ are defined based on the rules of \Cref{definition: decomposition},
we have the following, as expected.
Both directions are shown by easy induction on the derivative trees.
\begin{prop}[\Cref{section: proposition: decomposition and 2AFA}]\label{proposition: decomposition and 2AFA}
    Let $k \ge 1$ and let $\term[3]$ be an \kl{lKL term}.
    Let $\struc_1 \dots \struc_n \in \STR_{k}^{+}$.
    For all $j \in \range{n}$ and $\lterm[1], \lterm[2] \in \cl_{\univ{\struc_{j}} \cup \set{\bullet}}(\term[3])$, we have: 
    \[\lterm[1]_{(j)} \mathrel{(\odot_{i = 1}^{n} \longrightarrow^{(\struc_{i})_{\bullet}})} \lterm[2]_{(j)} \;\iff\; \tuple{\tuple{\lterm[1], \lterm[2]}_{\checkmark}, j} \in S^{{\automaton_{k}^{\term[3]}}}_{\triangleright \struc_1 \dots \struc_{n} \triangleleft}.\]
\end{prop}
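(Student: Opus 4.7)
The plan is to prove both directions by straightforward induction on derivation tree height, exploiting the close correspondence between the rules of \Cref{definition: decomposition} and those of \Cref{definition: 2AFA}: rules D, T, L of the 2AFA mirror the rules D, T, L of $\mathrel{(\odot_{i = 1}^{n} \longrightarrow^{(\struc_{i})_{\bullet}})}$, while the auxiliary 2AFA rules $\checkmark$, $+1$, $-1$ only serve to bridge between sibling positions in the word $\struc_{1} \dots \struc_{n}$. Well-definedness of the 2AFA states throughout both inductions is guaranteed by the closure property (\Cref{proposition: closure}), which ensures that every derivative step keeps the pair $(\lterm[1], \lterm[2])$ inside $\cl_{\univ{\struc_{j}} \cup \set{\bullet}}(\term[3])^{2}$.

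For the direction ($\Rightarrow$), I would induct on a derivation tree of $\lterm[1]_{(j)} \mathrel{\leadsto} \lterm[2]_{(j)}$ built from the rules of \Cref{definition: decomposition}. The base case for rule D is witnessed by $\const{t} \in X(\tuple{\lterm[1], \lterm[2]}_{\checkmark}, \struc_{j})$ in the 2AFA; rule L translates directly via rule L of the 2AFA, which stays in place; and rule T with an intermediate term $\lterm[3]$ living in position $j$ is witnessed conjunctively by rule T of the 2AFA. When the T rule's intermediate $\lterm[3]$ naturally lives in some position $j' \neq j$, I would first relocate the 2AFA head to $j'$ using a sequence of $\pm 1$ transitions sandwiched by $\checkmark$ transitions, invoke rule T there, and relocate back; the $\checkmark$ verification succeeds because the labels of the displaced state lie in $\univ{\struc_{j'}} \cup \set{\bullet}$ by hypothesis.

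For the direction ($\Leftarrow$), I would induct on the 2AFA acceptance tree for $\tuple{\tuple{\lterm[1], \lterm[2]}_{\checkmark}, j}$. Each direct application of rule D, T, or L of the 2AFA immediately yields the corresponding rule of \Cref{definition: decomposition}. A subtree that uses $\checkmark \circ {+1}$ or $\checkmark \circ {-1}$ to move to a neighboring position $j'$ and continue reducing there is handled by the observation that, since the $\checkmark$ rule forces $\overrightarrow{\mathsf{lab}}(\lterm[1]), \overrightarrow{\mathsf{lab}}(\lterm[2]) \in \univ{(\struc_{j'})_{\bullet}}^{+}$, the equivalence classes $[\tuple{j, x}]_{\sim}$ and $[\tuple{j', x}]_{\sim}$ agree on every label occurring in $\lterm[1]$ and $\lterm[2]$ (by the definition of $\sim$ in \Cref{definition: gluing structures} applied to adjacent positions). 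Hence $\lterm[1]_{(j)} \mathrel{\leadsto} \lterm[2]_{(j)}$ and $\lterm[1]_{(j')} \mathrel{\leadsto} \lterm[2]_{(j')}$ express the same assertion, and the inductive hypothesis at the position-$j'$ subtree delivers the conclusion.

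The main obstacle is precisely this reconciliation between the two label spaces: in the 2AFA, labels are drawn from $\range{k} \cup \set{\bullet}$ independently of position, whereas the $\mathrel{\leadsto}$ relation operates on equivalence classes of $\sim$ that depend on position. I expect this bookkeeping, together with the treatment of $\bullet$ under rule L (using uniformly $[\tuple{i, \bullet}]_{\sim} = \bullet$), to be the only technically delicate point; once verified, the inductions reduce, case by case, each rule of either formalism to the corresponding rule of the other.
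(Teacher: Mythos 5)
Your proposal is correct and follows essentially the same route as the paper's own proof: both directions go by induction on the respective derivation trees, with the $\pm 1$/$\checkmark$ transitions bridging positions and the interval structure of $\sim$ guaranteeing that shared labels yield identical annotated terms at adjacent positions. The only small imprecision is that the relocation you describe for rule T is also needed in the D and L cases of the ($\Rightarrow$) direction, since the decomposition's rule D may fire at a position $j' \neq j$ whose annotated terms merely coincide with those at $j$ (so the $\const{t}$ witness lives in $X(\cdot,\struc_{j'})$, not $X(\cdot,\struc_{j})$); the paper accordingly closes every case of that direction with the same $-1$/$+1$/$\checkmark$ relocation.
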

Thus, we have the following lemma.
\begin{lem}\label{lemma: 2AFA 2}
    For $k \ge 1$ and a \kl{KL term} $\term$,
    we have:
    $\ljump{\automaton_{k}^{\term[1]}} = \set{\vec{\struc} = \struc_1 \dots \struc_n \in \STR_{k}^{+} \mid \tuple{[\tuple{1,x}]_{\sim}, [\tuple{1,y}]_{\sim}} \in \jump{\term}_{\odot \vec{\struc}} \mbox{ for some $x, y \in \univ{\struc_1}$}}$.
\end{lem}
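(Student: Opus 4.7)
The plan is to unfold the definition of $\ljump{\automaton_{k}^{\term}}$ layer by layer, chaining it through the previously established bridges (\Cref{proposition: decomposition and 2AFA}, \Cref{theorem: decomposition}, and \Cref{theorem: derivative}) to reach the semantic statement on the right-hand side. Fix $\vec{\struc} = \struc_1 \dots \struc_n \in \mathsf{STR}_{k}^{+}$. By definition, $\vec{\struc} \in \ljump{\automaton_{k}^{\term}}$ iff $\tuple{\src^{\automaton_{k}^{\term}}, 0} \in S^{\automaton_{k}^{\term}}_{\triangleright \vec{\struc} \triangleleft}$. The only rule firing at the left sentinel $\triangleright$ is $\mathrm{i}_{x,y}$, so this holds iff there exist $x, y \in \range{k}$ and some $\lterm[2]$ with $\EPS_{y}(\lterm[2])$ such that $\tuple{\tuple{@x.\term, \lterm[2]}_{?}, 1} \in S^{\automaton_{k}^{\term}}_{\triangleright \vec{\struc} \triangleleft}$. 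The rule $\checkmark$ then forces $x,y \in \univ{(\struc_1)_{\bullet}}$ and passes to $\tuple{\tuple{@x.\term, \lterm[2]}_{\checkmark}, 1}$. Since $\mathrm{i}_{x,y}$ is the only way of entering the ``$\checkmark$'' stratum starting from $\src$, we conclude the following equivalence (taking $x, y \in \range{k} \cap \univ{\struc_1}$ possibly including $\bullet$, which we argue away in a moment):
\[ \vec{\struc} \in \ljump{\automaton_{k}^{\term}} \ \Leftrightarrow\ \exists x, y \in \univ{\struc_1},\ \exists \lterm[2],\ \EPS_{y}(\lterm[2]) \ \land\ \tuple{\tuple{@x.\term, \lterm[2]}_{\checkmark}, 1} \in S^{\automaton_{k}^{\term}}_{\triangleright \vec{\struc} \triangleleft}. \]

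Next, I invoke \Cref{proposition: decomposition and 2AFA} to replace the 2AFA membership by the decomposition-level derivation: the right-hand side becomes $(@x.\term)_{(1)} \mathrel{(\odot_{i = 1}^{n} \longrightarrow^{(\struc_{i})_{\bullet}})} (\lterm[2])_{(1)}$. Here I must check the side condition of \Cref{proposition: decomposition and 2AFA}, namely $@x.\term, \lterm[2] \in \cl_{\univ{\struc_{1}} \cup \set{\bullet}}(\term)$: trivially $@x.\term \in \cl_{\range{k} \cup \set{\bullet}}(\term)$ (by definition of $\cl$) and $\lterm[2]$ lies there by extensivity of the closure together with \Cref{proposition: closure} applied along the derivation from $@x.\term$. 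Then \Cref{theorem: decomposition} collapses the decomposition relation to the actual derivative relation on the glued structure: $(@x.\term)_{(1)} \longrightarrow^{(\odot \vec{\struc})_{\bullet}} (\lterm[2])_{(1)}$, where $(@x.\term)_{(1)} = @[\tuple{1,x}]_{\sim}.\term$.

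Now I apply \Cref{lemma: eps} together with \Cref{theorem: derivative} (and \Cref{proposition: labeled struc-run language} for the $@$ operator) to rephrase the existence of such a $\lterm[2]$ as a semantic statement: there is an $\lterm[2]$ reachable from $@[\tuple{1,x}]_{\sim}.\term$ satisfying $\EPS_{[\tuple{1,y}]_{\sim}}(\lterm[2])$ iff $[\tuple{1,y}]_{\sim} \in \jump{@[\tuple{1,x}]_{\sim}.\term}_{(\odot \vec{\struc})_{\bullet}}$, which by the definition of $\jump{@\lab.\term}_{\struc}$ is equivalent to $\tuple{[\tuple{1,x}]_{\sim}, [\tuple{1,y}]_{\sim}} \in \jump{\term}_{(\odot \vec{\struc})_{\bullet}}$. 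Finally, since $(\odot \vec{\struc})_{\bullet}$ is obtained from $\odot \vec{\struc}$ by adjoining the isolated vertex $\bullet$, an easy induction on $\term$ shows that $\jump{\term}_{(\odot \vec{\struc})_{\bullet}} \cap (\univ{\odot \vec{\struc}})^{2} = \jump{\term}_{\odot \vec{\struc}}$, so the added $\bullet$-state is harmless; this also justifies restricting $x,y$ above to $\univ{\struc_1}$ (the case $x = \bullet$ or $y = \bullet$ cannot contribute, because $\bullet$ is not an equivalence class that is reachable via $\jump{\term}_{\odot \vec{\struc}}$).

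The main obstacle is purely bookkeeping: one must track the ``$?$'' vs.\ ``$\checkmark$'' tagging of 2AFA states, verify that the rule $\mathrm{i}_{x,y}$ is indeed the only entry point from $\src$ (so that spurious runs do not inflate $\ljump{\automaton_{k}^{\term}}$), and confirm that all intermediate \kl{lKL terms} stay inside $\cl_{\range{k} \cup \set{\bullet}}(\term)$ so that \Cref{proposition: decomposition and 2AFA} and \Cref{definition: 2AFA} are applicable. None of these steps require new ideas beyond what the previous propositions and theorems provide; the lemma follows by chaining the equivalences.
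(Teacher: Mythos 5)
Your proposal is correct and follows essentially the same route as the paper: unfold $S^{\automaton_{k}^{\term}}_{\triangleright\vec{\struc}\triangleleft}$ via the rule $\mathrm{i}_{x,y}$, pass through \Cref{proposition: decomposition and 2AFA}, collapse with \Cref{theorem: decomposition}, drop the $\bullet$-vertex, and conclude with \Cref{theorem: derivative}. The extra bookkeeping you flag (closure membership via \Cref{proposition: closure}, and harmlessness of $\bullet$) is exactly what the paper handles implicitly, so nothing is missing.
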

\begin{proof}
    Let $\vec{\struc} = \struc_1 \dots \struc_n \in \STR_{k}^{+}$.
    We have:
    \begin{align*}
        &\vec{\struc} \in \ljump{\automaton_{k}^{\term}}\\
        &\iff
        \exists x, y \in \univ{\struc_1}, \exists \lterm[2] \in \cl_{[k]}(\term[1]),\  \tuple{\tuple{@x.\term,\lterm[2]}_{\checkmark}, 1} \in S^{\automaton_{k}^{\term}}_{\triangleright \vec{\struc} \triangleleft} \land \EPS_{y}(\lterm[2])
        \tag{By the form of $\automaton_{k}^{\term}$}\\
        &\iff \exists x, y \in \univ{\struc_1}, \exists \lterm[2] \in \cl_{[k]}(\term[1]),\  (@x.\term)_{(1)} \mathrel{(\odot_{i = 1}^{n} \longrightarrow^{(\struc_{i})_{\bullet}})} \lterm[2]_{(1)} \land \EPS_{y}(\lterm[2]) \tag{\Cref{proposition: decomposition and 2AFA}}\\
        &\iff \exists x, y \in \univ{\struc_1}, \exists \lterm[2] \in \cl_{[k]}(\term[1]),\  (@x.\term)_{(1)} \longrightarrow^{(\odot \vec{\struc})_{\bullet}} \lterm[2]_{(1)} \land \EPS_{y}(\lterm[2]) \tag{\Cref{theorem: decomposition}}\\
        &\iff \exists x, y \in \univ{\struc_1}, \exists \lterm[2] \in \cl_{[k]}(\term[1]),\  (@x.\term)_{(1)} \longrightarrow^{\odot \vec{\struc}} \lterm[2]_{(1)} \land \EPS_{[\tuple{1, y}]_{\sim}}(\lterm[2]_{(1)})\\
        &\iff \exists x, y \in \univ{\struc_1},\  [\tuple{1, y}]_{\sim} \in \jump{@[\tuple{1, x}]_{\sim}.\term}_{\odot \vec{\struc}}. \tag{\Cref{theorem: derivative}}
    \end{align*}
    Hence, this completes the proof.
\end{proof}

When ${l}$ and ${r}$ are fresh \kl{variables}, we have: $\REL \models \term[1] \le \term[2]$ iff $\REL \models \top {l} \term[1] {r} \top \le \top {l} \term[2] {r} \top$.
For \kl{PCoR* terms} of the form $\top \term[3] \top$, either $\jump{\top \term[3] \top}_{\struc} = \emptyset$ or $\jump{\top \term[3] \top}_{\struc} = \univ{\struc}^2$ holds.
Thus, if there is a \kl{KL term} $\term[3]_{\top}$ such that $\jump{\term[3]_{\top}}_{\struc} = \jump{\top}_{\struc}$ (note that $\top$ is not a \kl{KL term}),
then we have:
$\REL_{\pw \le k-1} \models \term[1] \le \term[2]$ iff
$\set{\vec{\struc} \in \STR_{k}^{+} \mid \jump{\term[3]_{\top} {l} \term {r} \term[3]_{\top}}_{\odot \vec{\struc}} \neq \emptyset} \subseteq \set{\vec{\struc} \in \STR_{k}^{+} \mid \jump{\term[3]_{\top} {l} \term[2] {r} \term[3]_{\top}}_{\odot \vec{\struc}} \neq \emptyset}$
iff $\ljump{\automaton_{k}^{\term[3]_{\top} {l} \term[1] {r} \term[3]_{\top}}} \subseteq \ljump{\automaton_{k}^{\term[3]_{\top} {l} \term[2] {r} \term[3]_{\top}}}$ (by \Cref{lemma: 2AFA 2}).

From this, if we can encode $\top$, then we can encode the \kl{equational theory} of \kl{KL terms}.
In the following, we consider encoding $\top$ on our automata construction.

\subsubsection{Normalized path decompositions}\label{section: normal form}
Let $\mathcal{L}^{\mathrm{N}}_{k} \defeq \STR_{k}^{+} \setminus (\mathcal{L}^{\mathrm{Inac}}_{k} \cup \mathcal{L}^{\mathrm{Incon}}_{k})$ where
\begin{align*}
    \mathcal{L}^{\mathrm{Inac}}_{k} &\defeq \STR_{k}^* \compo \set*{\struc_1 \struc_2 \in \STR_{k}^{2} \mid \univ{\struc_1} \cap \univ{\struc_2} = \emptyset} \compo \STR_{k}^*,\\
    \mathcal{L}^{\mathrm{Incon}}_{k} &\defeq \STR_{k}^* \compo \set*{\struc_1 \struc_2 \in \STR_{k}^{2} \mid \exists b \in \vsig,\ b^{\struc_{1}} \cap (\univ{\struc_{1}} \cap \univ{\struc_{2}})^{2} \neq b^{\struc_{2}} \cap (\univ{\struc_{1}} \cap \univ{\struc_{2}})^{2}} \compo \STR_{k}^*.
\end{align*}
Intuitively,
$\mathcal{L}^{\mathrm{Inac}}_{k}$ detects adjacent \kl{structures} not sharing any \kl{vertices}, and
$\mathcal{L}^{\mathrm{Incon}}_{k}$ detects adjacent \kl{structures} such that their relations are not the same in the sharing part.
The following proposition shows that $\mathcal{L}^{\mathrm{N}}_{k}$ is sufficient to enumerate all \kl{structures} of \kl{pathwidth} at most $k-1$.
\begin{prop}\label{prop: normal form}
    For $k \ge 2$,
    we have
    $\REL_{\pw \le k-1} = \isoc(\set{\odot \vec{\struc} \mid \vec{\struc} \in \mathcal{L}^{\mathrm{N}}_{k}})$.
\end{prop}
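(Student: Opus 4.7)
The plan is to prove the two inclusions separately.

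The inclusion $(\supseteq)$ is immediate. Any $\vec{\struc} \in \mathcal{L}^{\mathrm{N}}_{k}$ lies in $\mathsf{STR}_{k}^{+}$, so by the earlier observation following \Cref{proposition: gluing structures} (namely $\REL_{\pw \le k-1} = \set{\odot \vec{\struc} \mid \vec{\struc} \in \mathsf{STR}_{k}^{+}}$) we obtain $\odot \vec{\struc} \in \REL_{\pw \le k-1}$.

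For the converse $(\subseteq)$, I would begin with an arbitrary $\struc \in \REL_{\pw \le k-1}$ and choose a \kl{path decomposition} $\vec{\struc}^{0} = \struc_{1}^{0}\dots\struc_{n}^{0}$ of $\struc$ of \kl(pathwidth){width} at most $k-1$, so that (up to renaming vertices) $\vec{\struc}^{0} \in \mathsf{STR}_{k}^{+}$; \Cref{proposition: gluing structures} gives $\odot \vec{\struc}^{0} \cong \struc$. The task is to rewrite $\vec{\struc}^{0}$ into some $\vec{\struc} \in \mathcal{L}^{\mathrm{N}}_{k}$ without disturbing this isomorphism, via two normalization steps.

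\emph{Step 1 (eliminate $\mathcal{L}^{\mathrm{Incon}}_{k}$).} Enlarge each bag $\struc_{i}$ so that, for every $b \in \vsig$ and every $x, y \in \univ{\struc_{i}}$ with $\tuple{x, y} \in b^{\struc}$, the pair $\tuple{x, y}$ is included in $b^{\struc_{i}}$. The enlarged sequence is still a \kl{path decomposition} of $\struc$ (only already-present edges of $\struc$ are duplicated across bags), and now for every adjacent pair $\struc_{i}, \struc_{i+1}$ and every $b$, both $b^{\struc_{i}} \cap (\univ{\struc_{i}} \cap \univ{\struc_{i+1}})^{2}$ and $b^{\struc_{i+1}} \cap (\univ{\struc_{i}} \cap \univ{\struc_{i+1}})^{2}$ equal $b^{\struc} \cap (\univ{\struc_{i}} \cap \univ{\struc_{i+1}})^{2}$, ruling out the $\mathcal{L}^{\mathrm{Incon}}_{k}$ pattern. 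The gluing $\odot$ is unaffected, since every added edge $\tuple{x, y}$ in $\struc_{i}$ projects to a pair that was already an edge of $\odot \vec{\struc}^{0} \cong \struc$.

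\emph{Step 2 (eliminate $\mathcal{L}^{\mathrm{Inac}}_{k}$).} Whenever $\univ{\struc_{i}} \cap \univ{\struc_{i+1}} = \emptyset$, pick $u \in \univ{\struc_{i}}$, $v \in \univ{\struc_{i+1}}$ and splice a bridging bag $\struc_{\bowtie}$ of universe $\set{u, v}$ between them, equipped with precisely the self-loops of $\struc$ on $u$ and on $v$. Since $k \ge 2$, $\struc_{\bowtie} \in \mathsf{STR}_{k}$. There is no $u$-$v$ edge in $\struc$ to worry about: such an edge would have to lie in some bag containing both endpoints, but the interval property together with $\univ{\struc_{i}} \cap \univ{\struc_{i+1}} = \emptyset$ forces the $u$- and $v$-intervals to be disjoint. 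The interval property of the enlarged sequence therefore still holds, and the new bag overlaps both neighbors in one vertex, so the splice does not introduce any $\mathcal{L}^{\mathrm{Inac}}_{k}$ pattern with its immediate neighbors; Step 1's consistency on the shared singletons $\set{u}$ and $\set{v}$ is ensured by including exactly the required self-loops. The classes $[\tuple{\bowtie, u}]_{\sim}$ and $[\tuple{\bowtie, v}]_{\sim}$ coincide with the existing classes of $u$ (via $\struc_{i}$) and $v$ (via $\struc_{i+1}$), so $\odot$ is unchanged.

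Iterating Step 2 over all remaining disjoint adjacent pairs terminates (each splice fixes one pair and creates none) and yields $\vec{\struc} \in \mathcal{L}^{\mathrm{N}}_{k}$ with $\odot \vec{\struc} \cong \struc$. The main obstacle is the bookkeeping in Step 2: verifying that after each splice the interval property, the width bound, and the invariance of $\odot$ still hold. The only nontrivial point is the absence of a $u$-$v$ edge, which follows from the interval property of the original \kl{path decomposition}.
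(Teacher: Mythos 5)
Your proposal is correct and follows essentially the same route as the paper's own (much terser) proof: eliminate $\mathcal{L}^{\mathrm{Incon}}_{k}$ by adding to each bag every edge of $\struc$ between its vertices, and eliminate $\mathcal{L}^{\mathrm{Inac}}_{k}$ by splicing a two-vertex bridging bag between disjoint adjacent bags. Your additional checks (the interval property after splicing, the absence of a $u$\nobreakdash-$v$ edge, and the invariance of $\odot$) are the details the paper leaves implicit, and they all go through.
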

\begin{proof}
    Let $\vec{\struc[1]} = \struc_{1} \dots \struc_{n} \in \STR_{k}^{+}$.
    If $\univ{\struc_{i}} \cap \univ{\struc_{i+1}} = \emptyset$ for some $i$, then by putting a new component consisting of one \kl{vertex} in $\struc_i$ and one \kl{vertex} in $\struc_{i+1}$ between $\struc_i$ and $\struc_{i+1}$,
    we can avoid the condition of $\mathcal{L}^{\mathrm{Inac}}_{k}$.
    Moreover, for each $\tuple{[\tuple{i, x}]_{\sim}, [\tuple{i, y}]_{\sim}} \in b^{\odot \vec{\struc[1]}}$, we add a new edge for $\tuple{x, y} \in b^{\struc[1]_{i}}$ in $\struc[1]_{i}$;
    then, we can avoid the condition of $\mathcal{L}^{\mathrm{Incon}}_{k}$.
\end{proof}
In the sequel, we consider $\mathcal{L}^{\mathrm{N}}_{k}$ instead of $\STR_{k}^{+}$.
This is useful for encoding additional operators (\Cref{section: encoding KL,section: encoding variants}).
We use the condition of $\mathcal{L}^{\mathrm{Inac}}_{k}$ to encode $\top$
and we use the condition of $\mathcal{L}^{\mathrm{Incon}}_{k}$ to encode \kl{tests} and \kl{nominals}.

\subsubsection{Encoding the equational theory of KL}\label{section: encoding KL}
Let $c_{\top}$ be a special \kl{letter} for encoding $\top$.
We consider the following \kl{language}:
\[\mathcal{L}^{\top}_{k} \;\defeq\; \STR_{k}^{*} \compo \set{\struc \in \STR_{k} \mid {\aterm[3]_{\top}^{\struc} \neq \univ{\struc}^2}} \compo \STR_{k}^{*}.\]
Note that for every $\vec{\struc} \in \STR_{k}^{+} \setminus (\mathcal{L}^{\top}_{k} \cup \mathcal{L}^{\mathrm{Inac}}_{k})$,
we have $\jump{\aterm[3]_{\top}^{*}}_{\odot \vec{\struc}} = \jump{\top}_{\odot \vec{\struc}}$.
We then can reduce the \kl{equational theory} into the \kl{inclusion problem} for \kl{2AFAs}, as follows.
\begin{lem}\label{lemma: encoding the equational theory of KL}
    Let $k \ge 2$.
    Let $\term[1]$ and $\term[2]$ be \kl{KL terms}.
    Let ${l}, {r}, \aterm[3]_{\top}$ be fresh \kl{variables}.
    Then,
    \[\REL_{\pw \le k-1} \models \term[1] \le \term[2] \;\iff\; 
    \ljump{\automaton_{k}^{\aterm[3]_{\top}^* {l} \term[1] {r} \aterm[3]_{\top}^*}} \subseteq \ljump{\automaton_{k}^{\aterm[3]_{\top}^* {l} \term[2] {r} \aterm[3]_{\top}^*}} \cup \mathcal{L}^{\mathrm{Inac}}_{k} \cup \mathcal{L}^{\mathrm{Incon}}_{k} \cup \mathcal{L}^{\top}_{k}.\]
\end{lem}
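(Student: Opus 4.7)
The plan is to reduce the semantic inclusion on bounded-pathwidth structures to the automata inclusion by interpreting $\top$ syntactically as $\aterm[3]_\top^*$ on normal-form path decompositions. The first step I would carry out is to establish an encoding lemma: for every $\vec{\struc} \in \mathcal{L}^{\mathrm{N}}_{k} \setminus \mathcal{L}^{\top}_{k}$, $\jump{\aterm[3]_\top^*}_{\odot \vec{\struc}} = \univ{\odot \vec{\struc}}^2$. Being outside $\mathcal{L}^{\top}_{k}$ forces $\aterm[3]_\top^{\struc_i} = \univ{\struc_i}^2$ on every bag, while lying in $\mathcal{L}^{\mathrm{N}}_{k}$ (in particular, outside $\mathcal{L}^{\mathrm{Inac}}_{k}$) guarantees that consecutive bags share a vertex; any two vertices of $\odot \vec{\struc}$ are therefore joined by a finite chain of $\aterm[3]_\top$-edges hopping through shared vertices, and the reflexive part of $\aterm[3]_\top^*$ handles identity pairs. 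Combined with \Cref{lemma: 2AFA 2}, this yields that for any \kl{KL term} $\term$ and every $\vec{\struc} \in \mathcal{L}^{\mathrm{N}}_{k} \setminus \mathcal{L}^{\top}_{k}$, $\vec{\struc} \in \ljump{\automaton_{k}^{\aterm[3]_\top^* l \term r \aterm[3]_\top^*}}$ if and only if $\jump{l \term r}_{\odot \vec{\struc}} \neq \emptyset$, since universality of $\jump{\aterm[3]_\top^*}$ allows us to freely pick the endpoints in $\univ{\struc_1}$.

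For $(\Rightarrow)$, I would assume $\REL_{\pw \le k-1} \models \term[1] \le \term[2]$ and pick $\vec{\struc} \in \ljump{\automaton_{k}^{\aterm[3]_\top^* l \term[1] r \aterm[3]_\top^*}} \cap (\mathcal{L}^{\mathrm{N}}_{k} \setminus \mathcal{L}^{\top}_{k})$. By \Cref{proposition: gluing structures}, $\vec{\struc}$ is itself a path decomposition of $\odot \vec{\struc}$ of width at most $k-1$, so $\odot \vec{\struc} \in \REL_{\pw \le k-1}$. The non-emptiness of $\jump{l \term[1] r}_{\odot \vec{\struc}}$ factors through some $(v, w) \in \jump{\term[1]}_{\odot \vec{\struc}}$; the hypothesis gives $(v, w) \in \jump{\term[2]}_{\odot \vec{\struc}}$, so $\jump{l \term[2] r}_{\odot \vec{\struc}} \neq \emptyset$ and thus $\vec{\struc} \in \ljump{\automaton_{k}^{\aterm[3]_\top^* l \term[2] r \aterm[3]_\top^*}}$.

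For $(\Leftarrow)$, I would suppose $\REL_{\pw \le k-1} \not\models \term[1] \le \term[2]$ and fix $\struc$ of pathwidth at most $k-1$ with $(x, y) \in \jump{\term[1]}_{\struc} \setminus \jump{\term[2]}_{\struc}$. Take a path decomposition $\vec{\struc[2]}$ of $\struc$ of width at most $k-1$, with the vertices consistently labeled by elements of $[k]$; enrich each bag by $\aterm[3]_\top^{\struc[2]_i} := \univ{\struc[2]_i}^2$, and set $l := \{(x, x)\}$ in some bag containing $x$, $r := \{(y, y)\}$ in some bag containing $y$, leaving $l$ and $r$ empty elsewhere. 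Because $l, r, \aterm[3]_\top$ do not occur in $\term[1], \term[2]$, we have $\jump{\term[j]}_{\odot \vec{\struc[2]}} = \jump{\term[j]}_{\struc}$, whereas $\jump{l}_{\odot \vec{\struc[2]}}$ and $\jump{r}_{\odot \vec{\struc[2]}}$ are the singletons $\{(\bar x, \bar x)\}$ and $\{(\bar y, \bar y)\}$ (with $\bar x, \bar y$ the $\sim$-classes). Applying \Cref{prop: normal form} to normalize $\vec{\struc[2]}$, while assigning $\aterm[3]_\top$ universally and $l, r$ emptily on any bridge bags introduced, yields $\vec{\struc[2]}' \in \mathcal{L}^{\mathrm{N}}_{k} \setminus \mathcal{L}^{\top}_{k}$ with $\odot \vec{\struc[2]}' \cong \odot \vec{\struc[2]}$. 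Then $\jump{l \term[1] r}_{\odot \vec{\struc[2]}'} = \{(\bar x, \bar y)\}$ is non-empty but $\jump{l \term[2] r}_{\odot \vec{\struc[2]}'} = \emptyset$, contradicting the hypothesized inclusion.

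The main obstacle is executing the normalization in $(\Leftarrow)$ without breaking the invariants that $\aterm[3]_\top$ stays universal on every bag and that $l, r$ remain pointed at exactly $(x, y)$: the bridge bags inserted to eliminate the $\mathcal{L}^{\mathrm{Inac}}_{k}$ pattern and the edge-completions used to eliminate the $\mathcal{L}^{\mathrm{Incon}}_{k}$ pattern must be chosen so that the fresh symbols behave correctly on the new material and the width stays at most $k - 1$. Once this bookkeeping is verified, the rest is a routine combination of the encoding lemma, \Cref{lemma: 2AFA 2}, and the freshness of $l, r, \aterm[3]_\top$ in $\term[1], \term[2]$.
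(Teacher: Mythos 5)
Your proposal is correct and follows essentially the same route as the paper: encode $\top$ by $\aterm[3]_{\top}^{*}$ on decompositions in $\mathcal{L}^{\mathrm{N}}_{k}\setminus\mathcal{L}^{\top}_{k}$, use the fresh markers $l,r$ to pin down the endpoints, and invoke \Cref{lemma: 2AFA 2}; the paper just packages both directions as a single chain of equivalences ($\REL_{\pw\le k-1}\models \term[1]\le\term[2]$ iff $\REL_{\pw\le k-1}\models \top l\term[1]r\top\le\top l\term[2]r\top$, then restrict to $\odot\vec{\struc}$ with $\vec{\struc}\in\mathcal{L}^{\mathrm{N}}_{k}\setminus\mathcal{L}^{\top}_{k}$ via \Cref{prop: normal form} and freshness, then replace $\top$ by $\aterm[3]_{\top}^{*}$). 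The "bookkeeping" you flag in the $(\Leftarrow)$ direction is exactly what \Cref{prop: normal form} already handles — propagating edges to all bags sharing both endpoints and inserting width-$1$ bridge bags neither changes the glued structure's $\term[1],\term[2]$-semantics nor breaks universality of $\aterm[3]_{\top}$ — so your argument closes.
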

\begin{proof}
    We have:
    \begin{align*}
        &\REL_{\pw \le k-1} \models \term[1] \le \term[2]
        \;\iff\; \REL_{\pw \le k-1} \models \top {l} \term[1] {r} \top \le \top {l} \term[2] {r} \top\\
        &\;\iff\; \set{\odot \vec{\struc[1]} \mid \vec{\struc[1]} \in \mathcal{L}^{\mathrm{N}}_{k} \setminus \mathcal{L}^{\top}_{k}} \models \top {l} \term[1] {r} \top \le \top {l} \term[2] {r} \top \tag{\Cref{section: normal form} and $c_\top$ is fresh}\\
        &\;\iff\; \forall \vec{\struc} \in \mathcal{L}^{\mathrm{N}}_{k} \setminus \mathcal{L}^{\top}_{k},
        \jump{\aterm[3]_{\top}^* {l} \term[1] {r} \aterm[3]_{\top}^*}_{\odot \vec{\struc}} \neq \emptyset
        \mbox{ implies }
        \jump{\aterm[3]_{\top}^* {l} \term[2] {r} \aterm[3]_{\top}^*}_{\odot \vec{\struc}} \neq \emptyset  \tag{$\jump{\top}_{\odot \vec{\struc}} = \jump{\aterm[3]_{\top}^*}_{\odot \vec{\struc}}$}\\
        &\;\iff\; \ljump{\automaton_{k}^{\aterm[3]_{\top}^* {l} \term[1] {r} \aterm[3]_{\top}^*}} \cap (\mathcal{L}^{\mathrm{N}}_{k} \setminus \mathcal{L}^{\top}_{k}) \subseteq \ljump{\automaton_{k}^{\aterm[3]_{\top}^* {l} \term[2] {r} \aterm[3]_{\top}^*}} \tag{\Cref{lemma: 2AFA 2}, and $\jump{\top \term[3] \top}_{\odot \vec{\struc}}$ is $\emptyset$ or $\univ{\odot \vec{\struc}}^2$}\\
        &\;\iff\; \ljump{\automaton_{k}^{\aterm[3]_{\top}^* {l} \term[1] {r} \aterm[3]_{\top}^*}} \subseteq \ljump{\automaton_{k}^{\aterm[3]_{\top}^* {l} \term[2] {r} \aterm[3]_{\top}^*}} \cup \mathcal{L}^{\mathrm{Inac}}_{k} \cup \mathcal{L}^{\mathrm{Incon}}_{k} \cup \mathcal{L}^{\top}_{k}. \tag{By $\mathcal{L}^{\mathrm{N}}_{k} \defeq \STR_{k}^{+} \setminus (\mathcal{L}^{\mathrm{Inac}}_{k} \cup \mathcal{L}^{\mathrm{Incon}}_{k})$}
    \end{align*}
    Hence, this completes the proof.
\end{proof}

\subsubsection{Complexity: On the size of the translated 2AFA}
For the number of states of $\automaton_{k}^{\term[1]}$,
by \Cref{proposition: closure size},
we have $\# \univ{\automaton_{k}^{\term[1]}} = 2^{\mathcal{O}(\iw(\term) \log(k \|\term\|))}$.
For the alphabet size of $\automaton_{k}^{\term[1]}$, we have $\# \STR_{k} = 2^{\mathcal{O}(k^2 \#\vsig)}$ (the number of \kl{structures} with universe $S$ is $2^{\#S^2 \#\vsig}$, as every element of $\vsig$ denotes a binary relation;
thus, $\#\STR_{k} = \sum_{S \subseteq \range{k}; S \neq \emptyset} 2^{\#S^2 \#\vsig} \le 2^{k} \times 2^{k^2 \#\vsig} = 2^{\mathcal{O}(k^2 \#\vsig)}$).
For the sizes of the positive boolean formulas,
we have $\|\delta^{\automaton_{k}^{\term[1]}}(q, a)\| = \mathcal{O}(k^2 \# \univ{\automaton_{k}^{\term[1]}}) = 2^{\mathcal{O}(\iw(\term) \log(k \|\term\|))}$ (the rule ($\const{I}$) and the rule (T) are critical).
Thus, we have $\|\automaton_{k}^{\term[1]}\| = 2^{\mathcal{O}(\iw(\term) \log(k \|\term\|)+ k^2 \#\vsig)}$.
Hence, under $k = \mathcal{O}(\|\term\|)$ and $\#\vsig = \mathcal{O}(\|\term\|)$,
we have $\|\automaton_{k}^{\term[1]}\| = 2^{\mathrm{poly}(\|\term\|)}$.

Additionally, for $\mathcal{L}^{\mathrm{Inac}}_{k}$, there is a \kl{1DFA} such that the number of states is $\mathcal{O}(2^{k})$ (each state corresponds to the universe in the current \kl{structure}).
For $\mathcal{L}^{\mathrm{Incon}}_{k}$, there is a \kl{1DFA} such that the number of states is $\mathcal{O}(\#\STR_{k})$ (each state corresponds to the current \kl{structure}).
For $\mathcal{L}^{\top}_{k}$, there is a \kl{1DFA} such that the number of states is $2$ (hence, a constant size with respect to $k$).
Finally, we have obtained the following complexity result.
\begin{cor}\label{corollary: KL EXPSPACE-complete}
    The \kl{equational theory} for \kl{KL terms} with respect to binary relations---given a finite set $\vsig$ and \kl{KL terms} $\term[1]$ and $\term[2]$ over $\vsig$, does $\REL \models \term[1] \le \term[2]$ hold?---is $\mathrm{EXPSPACE}$-complete.
\end{cor}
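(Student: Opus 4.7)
The plan is to establish the two matching bounds separately.

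For the upper bound, I would first apply the \kl{linearly bounded pathwidth model property} (\Cref{proposition: bounded pw property}) to set $k \defeq \iw(\term[1]) + 1$, noting $k \le \|\term[1]\| + 1$ by \Cref{proposition: iw}, so that $\REL \models \term[1] \le \term[2]$ iff $\REL_{\pw \le k - 1} \models \term[1] \le \term[2]$. Then \Cref{lemma: encoding the equational theory of KL} rephrases this as an inclusion
\[\ljump{\automaton_{k}^{\aterm[3]_{\top}^* {l} \term[1] {r} \aterm[3]_{\top}^*}} \cap (\mathcal{L}^{\mathrm{N}}_{k} \setminus \mathcal{L}^{\top}_{k}) \quad\subseteq\quad \ljump{\automaton_{k}^{\aterm[3]_{\top}^* {l} \term[2] {r} \aterm[3]_{\top}^*}}\]
between \kl{2AFAs} over $\mathsf{STR}_{k}$. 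The preceding size analysis gives $\|\automaton_{k}^{\aterm[3]_{\top}^* {l} \term[i] {r} \aterm[3]_{\top}^*}\| = 2^{\mathrm{poly}(\|\term[1]\|, \|\term[2]\|, \#\vsig)}$, and the auxiliary regular languages $\mathcal{L}^{\mathrm{N}}_{k}$ and $\mathcal{L}^{\top}_{k}$ are recognized by \kl{1DFAs} of at most exponential size in the input. After taking a straightforward product with these \kl{1DFAs} (which preserves the $\mathrm{2AFA}$ structure with only an exponential blow-up of states), I would apply \Cref{proposition: 2AFA PSPACE}: since inclusion of \kl{2AFAs} is decidable in $\mathrm{PSPACE}$ in the size of the automata, and the automata here have \kl(2AFA){size} exponential in the input, we obtain an $\mathrm{EXPSPACE}$ algorithm for the \kl{equational theory}.

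For the lower bound, I would invoke the known reduction from the universality problem of regular expressions with intersection (which is $\mathrm{EXPSPACE}$-complete by \cite[Thm.\ 2]{furerComplexityInequivalenceProblem1980}) to the \kl{equational theory} of \kl{KL terms} w.r.t.\ $\REL$. This reduction is exactly the one alluded to in the introduction, established independently in \cite{nakamuraPartialDerivativesGraphs2017,brunetPetriAutomata2017}: given a regular expression with intersection $\term[1]$, the inequation $\top \le \term[1]$ (or, here, a fresh-variable encoding of $\top$ using $\aterm[3]_{\top}^*$ on a single-variable alphabet, since \kl{KL terms} lack $\top$) expresses precisely the universality of its \kl{language}, via \Cref{proposition: glang and lang} (which equates relational and language semantics on the regular fragment). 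Since the translation is polynomial, $\mathrm{EXPSPACE}$-hardness transfers.

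The main obstacle in carrying out the upper-bound direction cleanly is not conceptual but bookkeeping: one must verify that the product of $\automaton_{k}^{\term[3]}$ with \kl{1DFAs} recognizing $\mathcal{L}^{\mathrm{N}}_{k}$ and $\mathcal{L}^{\top}_{k}$ remains a \kl{2AFA} of suitable size (so that \Cref{proposition: 2AFA PSPACE} applies), and that the $\mathrm{PSPACE}$ bound in that proposition is measured in the \kl(2AFA){size}, not merely in the number of states, because the formulas $\delta^{\automaton_{k}^{\term[3]}}(q, a)$ themselves have exponential size in $\|\term[3]\|$ due to rules $(\const{i}_{x,y})$ and $(\mathrm{T})$. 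Once this is checked, the combination of the polynomial-size reduction of \Cref{lemma: encoding the equational theory of KL}, the exponential bound on $\#\univ{\automaton_{k}^{\term[3]}}$ from \Cref{proposition: closure size}, and the $\mathrm{PSPACE}$ upper bound of \Cref{proposition: 2AFA PSPACE} composes directly into $\mathrm{EXPSPACE}$.
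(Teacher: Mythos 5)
Your proposal follows essentially the same route as the paper: the upper bound combines the \kl{linearly bounded pathwidth model property} (\Cref{proposition: bounded pw property}), the reduction of \Cref{lemma: encoding the equational theory of KL}, the size bounds from \Cref{proposition: closure size}, and the $\mathrm{PSPACE}$ \kl{inclusion problem} for \kl{2AFAs} (\Cref{proposition: 2AFA PSPACE}); whether you intersect $\mathcal{L}^{\mathrm{N}}_{k} \setminus \mathcal{L}^{\top}_{k}$ on the left or take the union of the complementary regular languages on the right, as the paper does, is only bookkeeping. Your worry about the $\mathrm{PSPACE}$ bound being measured in the \kl(2AFA){size} rather than the number of states is legitimate and is exactly why the paper defines $\|\automaton\|$ via the transition formulas.

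One concrete slip in the lower bound: you propose using ``a fresh-variable encoding of $\top$ using $\aterm[3]_{\top}^{*}$'' as the left-hand side of the inequation. That does not work with \Cref{proposition: glang and lang}, because the \kl{word language} of $\aterm[3]_{\top}^{*}$ is $\set{\aterm[3]_{\top}}^{*}$, not $\vsig^{*}$, so the inequation would not express universality of $\ljump{\term[2]}_{\vsig}$ over $\vsig$. The correct left-hand side --- and the one the paper uses --- is the \kl{Kleene algebra term} $(\sum_{a \in \vsig} a)^{*}$, for which $\ljump{(\sum_{a \in \vsig} a)^{*}}_{\vsig} = \vsig^{*}$ and hence $\REL \models (\sum_{a \in \vsig} a)^{*} \le \term[2]$ iff $\vsig^{*} \subseteq \ljump{\term[2]}_{\vsig}$. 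With that substitution the hardness argument goes through exactly as you intend.
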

\begin{proof}
    (In $\mathrm{EXPSPACE}$):
    By \Cref{lemma: encoding the equational theory of KL} with the \kl{linearly bounded pathwidth model property} (\Cref{proposition: bounded pw property}), we can reduce the problem into the \kl{inclusion problem} for \kl{2AFAs}:
    \[\REL \models \term[1] \le \term[2] \;\iff\; \ljump{\automaton_{\iw(\term) + 1}^{\aterm[3]_{\top}^* {l} \term[1] {r} \aterm[3]_{\top}^*}} \subseteq \ljump{\automaton_{\iw(\term) + 1}^{\aterm[3]_{\top}^* {l} \term[2] {r} \aterm[3]_{\top}^*}} \cup \mathcal{L}^{\mathrm{Inac}}_{\iw(\term) + 1} \cup \mathcal{L}^{\mathrm{Incon}}_{\iw(\term) + 1} \cup \mathcal{L}^{\top}_{\iw(\term) + 1}.\]
    Here, we take the union construction.
    By the discussion above with $k = \iw(\term) + 1 \le \|\term\| + 1$ (\Cref{proposition: iw}),
    the \kl{2AFAs} have exponential sizes to the input size.
    By \Cref{proposition: 2AFA PSPACE}, this completes the proof.
    (Hardness) \cite{nakamuraPartialDerivativesGraphs2017,brunetPetriAutomata2017}: 
    By $\REL \models (\sum_{a \in \vsig} a)^* \le \term$ iff $\vsig^* \subseteq \ljump{\term}_{\vsig}$ (\Cref{proposition: glang and lang}) iff $\vsig^* = \ljump{\term}_{\vsig}$,
    we can give a reduction from the universality problem for regular expressions with intersection, which is $\mathrm{EXPSPACE}$-hard \cite[Theorem 2]{furerComplexityInequivalenceProblem1980}.
\end{proof}
\begin{rem}\label{remark: EXPSPACE-hard unary}
    In \Cref{corollary: KL EXPSPACE-complete},
    by \cite[Theorem 2]{furerComplexityInequivalenceProblem1980}, the \kl{equational theory} for \kl{KL terms} with respect to binary relations is $\mathrm{EXPSPACE}$-hard even when $\#\vsig = 2$.
    Furthermore, even when $\#\vsig = 1$, the \kl{equational theory} is $\mathrm{EXPSPACE}$-hard;
    given \kl{KL terms} $\term[1]$ and $\term[2]$,
    by letting $\term[1]'$ and $\term[2]'$ be the terms $\term[1]$ and $\term[2]$
    in which each variable $a_i$ (where $\vsig = \set{a_1, a_2}$) has been replaced with $\term[3]_i \defeq b^{3} \intersection b^{3 + i}$,
    we have $\REL \models \term[1] \le \term[2]$ iff $\REL \models \term[1]' \le \term[2]'$.
    (($\Longrightarrow$):
    Because $\term[1]' \le \term[2]'$ is obtained by a substitution from $\term[1] \le \term[2]$.
    ($\Longleftarrow$):
    We show the contraposition.
    Assume $\jump{\term[1]}_{\struc} \not\subseteq \jump{\term[2]}_{\struc}$.
    Let $\struc[2]$ be the \kl{structure} obtained by extending fresh \kl{vertices} for $\tuple{x, y} \in \jump{\term[3]_i}_{\struc[2]}$,
    for each edge $\tuple{x, y} \in \jump{a_i}_{\struc}$;
    for instance, if $\struc$ is $\begin{tikzpicture}[baseline = -.5ex]
        \graph[grow right = 1.2cm, branch down = 6ex, nodes={mynode}]{
        {0/{}[draw, circle]}-!-{1/{}[draw, circle]}-!-{2/{}[draw, circle]}
        };
        \graph[use existing nodes, edges={color=black, pos = .5, earrow}, edge quotes={fill=white, inner sep=1pt,font= \scriptsize}]{
        0 ->["$a_1$"] 1;
        1 ->["$a_2$"] 2;
        };
    \end{tikzpicture}$,
    then $\struc[2]$ is $\begin{tikzpicture}[baseline = -.5ex]
        \graph[grow right = 1.5cm, branch down = 6ex, nodes={mynode}]{
        {0/{}[draw, circle]}-!-{1/{}[draw, circle]}-!-{2/{}[draw, circle]}
        };
        \node[above right = 0 and .5cm of 0, draw, circle, mynode, inner sep = 1pt](0u1){};
        \node[above right = 0 and 1.cm of 0, draw, circle, mynode, inner sep = 1pt](0u2){};
        \node[below right = 0 and .3cm of 0, draw, circle, mynode, inner sep = 1pt](0d1){};
        \node[below right = 0 and .6cm of 0, draw, circle, mynode, inner sep = 1pt](0d2){};
        \node[below right = 0 and .9cm of 0, draw, circle, mynode, inner sep = 1pt](0d3){};
        \node[above right = 0 and .5cm of 1, draw, circle, mynode, inner sep = 1pt](1u1){};
        \node[above right = 0 and 1.cm of 1, draw, circle, mynode, inner sep = 1pt](1u2){};
        \node[below right = 0 and .2cm of 1, draw, circle, mynode, inner sep = 1pt](1d1){};
        \node[below right = 0 and .5cm of 1, draw, circle, mynode, inner sep = 1pt](1d2){};
        \node[below right = 0 and .8cm of 1, draw, circle, mynode, inner sep = 1pt](1d3){};
        \node[below right = 0 and 1.1cm of 1, draw, circle, mynode, inner sep = 1pt](1d4){};
        \graph[use existing nodes, edges={color=black, pos = .5, earrow}, edge quotes={fill=white, inner sep=1pt,font= \scriptsize}]{
        0 -> 0u1 -> 0u2 -> 1;
        0 -> 0d1 -> 0d2 -> 0d3 -> 1;
        1 -> 1u1 -> 1u2 -> 2;
        1 -> 1d1 -> 1d2 -> 1d3 -> 1d4 -> 2;
        };
    \end{tikzpicture}$.
    We then have
    $\jump{\term'}_{\struc[2]} = \jump{\term}_{\struc}$
    or $\jump{\term'}_{\struc[2]} = \jump{\term}_{\struc} \cup \triangle_{\univ{\struc[2]} \setminus \univ{\struc[1]}}$,
    by easy induction on $\term$;
    thus, $\jump{\term[1]'}_{\struc[2]} \not\subseteq \jump{\term[2]'}_{\struc[2]}$.)
\end{rem}  %
\section{Encoding Additional Operators}\label{section: encoding variants}
In this section, similar to how $\top$ was encoded in the reduction of \Cref{lemma: encoding the equational theory of KL} above,
we provide encodings for additional operators within our automata construction.
As a consequence, we have that the \kl{equational theory} of \kl{PCoR*} is decidable in EXPSPACE.

\subsection{Encoding PCoR*}\label{section: encoding PCoR*}
In this subsection, we give encodings of $\top$ and $\bl^{\smile}$.
Using them, we can encode the \kl{equational theory} of \kl{PCoR*}.

\subsubsection{Encoding top $(\top)$}\label{section: encode top}
We recall $\mathcal{L}^{\top}_{k}$ and $\aterm[3]_{\top}$ (\Cref{section: encoding KL}).
For a term $\term$, let $\term'$ be the unique normal form with respect to the term rewriting system:
$\top \leadsto \aterm[3]_{\top}^{*}$.
For $\vec{\struc} \in \STR_{k}^{+} \setminus (\mathcal{L}^{\top}_{k} \cup \mathcal{L}^{\mathrm{Inac}}_{k})$,
by $\jump{\aterm[3]_{\top}^{*}}_{\odot \vec{\struc}} = \jump{\top}_{\odot \vec{\struc}}$,
we have $\jump{\term'}_{\odot \vec{\struc}} = \jump{\term}_{\odot \vec{\struc}}$.
Thus we can encode $\top$.

\subsubsection{Encoding converse $(\bl^{\smile})$}\label{section: encode converse}
For each \kl{variable} $a \in \vsig$, we introduce a fresh \kl{variable} $\widebreve{a}$.
Let $\REL^{\smile} \defeq \set{\struc \in \REL \mid \forall a \in \vsig, \widebreve{a} = a^{\smile}}$.
We consider the following \kl{language}:
\begin{align*}
    \mathcal{L}^{\smile}_{k} &
    \;\defeq\;
    \STR_{k}^{*} \compo
    \set{\struc \in \STR_{k} \mid \widebreve{a}^{\struc} \neq (a^{\struc})^{\smile} \mbox{ for some $a \in \vsig$}} \compo 
    \STR_{k}^{*}.
\end{align*}
We then have $\REL_{\pw \le k - 1}^{\smile} = \set{\odot \vec{\struc} \mid \vec{\struc} \in \STR_{k}^{+} \setminus 
(\mathcal{L}^{\mathrm{Incon}}_{k} \cup \mathcal{L}^{\smile}_{k})}$.
Moreover, we consider the \intro*\kl{converse normal form} \cite{brunetPetriAutomata2017} of \kl{PCoR*}.
For a \kl{PCoR* term} $\term$, we consider the unique normal form $\term'$ with respect to the following term rewriting system:
\begin{gather*}
    a^{\smile} \leadsto \widebreve{a}, \qquad
    \id^{\smile} \leadsto \id, \qquad
    \emp^{\smile} \leadsto \emp, \qquad
    \top^{\smile} \leadsto \top, \qquad
    (\term[1] \compo \term[2])^{\smile} \leadsto \term[2]^{\smile} \compo \term[1]^{\smile},\\
    (\term[1] \union \term[2])^{\smile} \leadsto \term[1]^{\smile} \union \term[2]^{\smile}, \qquad
    (\term[1] \intersection \term[2])^{\smile} \leadsto \term[1]^{\smile} \intersection \term[2]^{\smile}, \qquad
    (\term[1]^{\smile})^{\smile} \leadsto \term[1], \qquad
    (\term[1]^{*})^{\smile} \leadsto (\term[1]^{\smile})^{*}.
\end{gather*}
We then have $\REL^{\smile} \models \term' = \term$.
Thus, within $\REL^{\smile}$, we can encode $\bl^{\smile}$.
Additionally, for the \kl{language} $\mathcal{L}^{\smile}_{k}$, there is a \kl{1DFA} such that the number of states is $2$ (a constant size).

\begin{exa}\label{ex: conv}
    As a toy example, consider $\REL \models a \le a a^{\smile} a$ \cite[(10)]{bloomNotesEquationalTheories1995}. 
    We observe that 
    this holds
    iff $\REL_{\pw \le 1} \models a \le a a^{\smile} a$ (\Cref{proposition: pw le iw})
    iff $\REL_{\pw \le 1}^{\smile} \models a \le a \breve{a} a$ (because $\breve{a}$ is fresh)
    iff $\ljump{\automaton_{2}^{\aterm[3]_{\top}^* {l} a {r} \aterm[3]_{\top}^*}} \subseteq \ljump{\automaton_{2}^{\aterm[3]_{\top}^* {l} a \breve{a} a {r} \aterm[3]_{\top}^*}} \cup \mathcal{L}^{\mathrm{Inac}}_{2} \cup \mathcal{L}^{\mathrm{Incon}}_{2} \cup \mathcal{L}^{\top}_{2} \cup \mathcal{L}^{\smile}_{2}$ (by \Cref{lemma: encoding the equational theory of KL} where $l, r, \aterm[3]_{\top}$ are fresh \kl{variables}).
    If $\vec{\struc} \in \ljump{\automaton_{2}^{\aterm[3]_{\top}^* {l} a {r} \aterm[3]_{\top}^*}} \setminus (\mathcal{L}^{\mathrm{Inac}}_{2} \cup \mathcal{L}^{\mathrm{Incon}}_{2} \cup \mathcal{L}^{\top}_{2} \cup \mathcal{L}^{\smile}_{2})$,
    then by \Cref{lemma: 2AFA 2} with
    $\jump{\aterm[3]_{\top}^*}_{\odot \vec{\struc}} = \jump{\top}_{\odot \vec{\struc}}$ (by $\mathcal{L}^{\top}_{2}$),
    there are $z_l$, $z_1$, $z_2$, and $z_r$ such that
    $\tuple{z_l, z_1} \in \jump{l}_{\odot \vec{\struc}}$,
    $\tuple{z_1, z_2} \in \jump{a}_{\odot \vec{\struc}}$, and
    $\tuple{z_2, z_r} \in \jump{r}_{\odot \vec{\struc}}$.
    By $\jump{\breve{a}}_{\odot \vec{\struc}} = \jump{a}_{\odot \vec{\struc}}^{\smile}$ (by $\mathcal{L}^{\smile}_{2}$),
    we have $\tuple{z_2, z_1} \in \jump{\widebreve{a}}_{\odot \vec{\struc}}$.
    Thus, the \kl{structure} $\odot\vec{\struc}$ can be illustrated as follows:
    \[\dots \hspace{.5em} \begin{tikzpicture}[baseline = -1.ex]
        \graph[grow right = 1.2cm, branch down = 2.5ex]{
            {l/{$z_l$}[font= \tiny, vert, xshift = .6cm]} -!- {s1/{$z_1$}[font= \tiny, vert]} -!- {t1/{$z_2$}[font= \tiny, vert]} -!- {r/{$z_r$}[font= \tiny, vert, xshift = -.6cm]}
        };
        \graph[use existing nodes, edges={color=black, pos = .5, earrow}, edge quotes={fill=white, inner sep=1pt,font= \scriptsize}]{
            l->["$l$", bend left = 80] s1;
            s1 ->["$a$", bend right] t1;
            t1 ->["$\breve{a}$", bend right] s1;
            t1->["$r$", bend left = 80] r;
        };
    \end{tikzpicture} \hspace{.5em} \dots\]
    where some of them may be the same \kl{vertex} and some \kl{vertices}/edges (including the universal relation $\jump{\aterm[3]_{\top}^*}_{\odot \vec{\struc}}$) are omitted.
    By the form of $\odot\vec{\struc}$ with \Cref{lemma: 2AFA 2},
    we have $\vec{\struc} \in \ljump{\automaton_{2}^{\aterm[3]_{\top}^* {l} a \breve{a} a {r} \aterm[3]_{\top}^*}}$.
    Hence, from this language inclusion, we have $\REL \models a \le a a^{\smile} a$.
\end{exa}

\begin{exa}\label{ex: modularity}
    We recall $\REL \models (\aterm[1] \aterm[2]) \intersection \aterm[3] \le (\aterm[1] \intersection (\aterm[3] \aterm[2]^{\smile})) \aterm[2]$ (\cf\ \Cref{equation: allegories 2}).
    Similar to \Cref{ex: conv},
    this holds
    iff $\ljump{\automaton_{3}^{\aterm[3]_{\top}^* {l} (\aterm[1] \aterm[2] \intersection \aterm[3]) {r} \aterm[3]_{\top}^*}} \subseteq \ljump{\automaton_{3}^{\aterm[3]_{\top}^* {l} (\aterm[1] \intersection \aterm[3] \widebreve{\aterm[2]}) \aterm[2] {r} \aterm[3]_{\top}^*}} \cup \mathcal{L}^{\mathrm{Inac}}_{3} \cup \mathcal{L}^{\mathrm{Incon}}_{3} \cup \mathcal{L}^{\top}_{3} \cup \mathcal{L}^{\smile}_{3}$.
    If $\vec{\struc} \in \ljump{\automaton_{3}^{\aterm[3]_{\top}^* {l} (\aterm[1] \aterm[2] \intersection \aterm[3]) {r} \aterm[3]_{\top}^*}} \setminus (\mathcal{L}^{\mathrm{Inac}}_{3} \cup \mathcal{L}^{\mathrm{Incon}}_{3} \cup \mathcal{L}^{\top}_{3} \cup \mathcal{L}^{\smile}_{3})$,
    then by \Cref{lemma: 2AFA 2} with
    $\jump{\aterm[3]_{\top}^*}_{\odot \vec{\struc}} = \jump{\top}_{\odot \vec{\struc}}$ (by $\mathcal{L}^{\top}_{3}$),
    there are $z_l$, $z_1$, $z_2$, $z_3$, and $z_r$ such that
    $\tuple{z_l, z_1} \in \jump{l}_{\odot \vec{\struc}}$,
    $\tuple{z_1, z_2} \in \jump{a}_{\odot \vec{\struc}}$,
    $\tuple{z_2, z_3} \in \jump{b}_{\odot \vec{\struc}}$,
    $\tuple{z_1, z_3} \in \jump{c}_{\odot \vec{\struc}}$, and
    $\tuple{z_3, z_r} \in \jump{r}_{\odot \vec{\struc}}$.
    By $\jump{\breve{b}}_{\odot \vec{\struc}} = \jump{b}_{\odot \vec{\struc}}^{\smile}$ (by $\mathcal{L}^{\smile}_{3}$),
    we have $\tuple{z_3, z_2} \in \jump{\widebreve{b}}_{\odot \vec{\struc}}$.
    Thus, the \kl{structure} $\odot\vec{\struc}$ can be illustrated as follows:
    \[\dots \hspace{.5em} \begin{tikzpicture}[baseline = -1.ex]
        \graph[grow right = 1.2cm, branch down = 2.5ex]{
            {1/{$z_l$}[vert, xshift = .6cm]} -!- {2/{$z_1$}[vert]} -!- {3/{$z_2$}[vert, yshift = 2ex]} -!- {4/{$z_3$}[vert]} -!- {5/{$z_r$}[vert, xshift = -.6cm]}
        };
        \graph[use existing nodes, edges={color=black, pos = .5, earrow}, edge quotes={fill=white, inner sep=1pt,font= \scriptsize}]{
            1->["$l$", bend left = 80] 2;
            2 ->["$a$", bend right = 15] 3;
            3 ->["$b$", bend right = 15] 4;
            4 ->["$\breve{b}$", bend right = 20] 3;
            2 ->["$c$", bend right = 15] 4;
            4->["$r$", bend left = 80] 5;
        };
    \end{tikzpicture} \hspace{.5em} \dots\]
    where some of them may be the same \kl{vertex} and some \kl{vertices}/edges are omitted.
    By the form of $\odot\vec{\struc}$ with \Cref{lemma: 2AFA 2},
    we have $\vec{\struc} \in \ljump{\automaton_{3}^{\aterm[3]_{\top}^* {l} (\aterm[1] \intersection \aterm[3] \widebreve{\aterm[2]}) \aterm[2] {r} \aterm[3]_{\top}^*}}$.
    Hence, from this language inclusion, we have $\REL \models (\aterm[1] \aterm[2]) \intersection \aterm[3] \le (\aterm[1] \intersection (\aterm[3] \aterm[2]^{\smile})) \aterm[2]$.
\end{exa}

\subsubsection{Encoding the equational theory of PCoR*}
By using the encodings above, we can encode the \kl{equational theory} of \kl{PCoR*}.
Thus, we have the following.
\begin{cor}\label{corollary: PCoR* EXPSPACE-complete}
    The \kl{equational theory} of \kl{PCoR*} is decidable and $\mathrm{EXPSPACE}$-complete.
\end{cor}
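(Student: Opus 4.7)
The plan is to reduce the equational theory of \kl{PCoR*} to an \kl{inclusion problem} for \kl{2AFAs} in essentially the same way as in \Cref{lemma: encoding the equational theory of KL} and \Cref{corollary: KL EXPSPACE-complete}, by combining the encodings of $\top$ and $\bl^{\smile}$ described in \Cref{section: encode top,section: encode converse} with the automata construction $\automaton_{k}^{\term}$ of \Cref{definition: 2AFA}. First, given \kl{PCoR* terms} $\term[1], \term[2]$, I would fix $k = \iw(\term[1]) + 1$ (using the \kl{linearly bounded pathwidth model property}, \Cref{proposition: bounded pw property}) and rewrite both sides into \kl{KL terms} $\term[1]', \term[2]'$ as follows: put each side into \kl{converse normal form} w.r.t.\ the rules of \Cref{section: encode converse} using fresh variables $\widebreve{a}$ for each $a \in \vsig$, and replace every remaining $\top$ by $\aterm[3]_{\top}^{*}$ for a fresh variable $\aterm[3]_{\top}$, as in \Cref{section: encode top}.

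Next, I would argue that for every $\vec{\struc} \in \mathcal{L}^{\mathrm{N}}_{k} \setminus (\mathcal{L}^{\top}_{k} \cup \mathcal{L}^{\smile}_{k})$ the semantics of the rewritten terms agree with those of the originals, since on such inputs $\aterm[3]_{\top}^{*}$ computes $\top$ and each $\widebreve{a}$ computes $(a)^{\smile}$; this gives the equivalence
\[
\REL \models \term[1] \le \term[2]
\;\Leftrightarrow\;
\ljump{\automaton_{k}^{\aterm[3]_{\top}^{*} {l} \term[1]' {r} \aterm[3]_{\top}^{*}}} \cap \bigl(\mathcal{L}^{\mathrm{N}}_{k} \setminus (\mathcal{L}^{\top}_{k} \cup \mathcal{L}^{\smile}_{k})\bigr)
\;\subseteq\;
\ljump{\automaton_{k}^{\aterm[3]_{\top}^{*} {l} \term[2]' {r} \aterm[3]_{\top}^{*}}}
\]
for fresh ${l}, {r}$, by the same chain of equivalences as in \Cref{lemma: encoding the equational theory of KL}, now combined with the encodings of $\top$ and $\bl^{\smile}$.

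Then I would bound the size of the resulting instance. The rewrites above increase $\|\term[i]\|$, $\#\vsig$, and $\iw(\term[i])$ only linearly, so $k$ remains linear in the input. Hence, reusing the estimates preceding \Cref{corollary: KL EXPSPACE-complete}, each $\automaton_{k}^{\cdot}$ has size $2^{\mathrm{poly}(\|\term[i]\|)}$; moreover, the languages $\mathcal{L}^{\mathrm{Inac}}_{k}$, $\mathcal{L}^{\mathrm{Incon}}_{k}$, $\mathcal{L}^{\top}_{k}$, $\mathcal{L}^{\smile}_{k}$ are all accepted by \kl{1DFAs} of at most exponential size. Taking the union construction on the right-hand side and invoking \Cref{proposition: 2AFA PSPACE}, the resulting \kl{inclusion problem} is decidable in space polynomial in the size of the \kl{2AFAs}, i.e.\ in $\mathrm{EXPSPACE}$.

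For the lower bound, $\mathrm{EXPSPACE}$-hardness is immediate from \Cref{corollary: KL EXPSPACE-complete} since \kl{KL terms} are a syntactic fragment of \kl{PCoR* terms}. The only subtle step I anticipate is verifying carefully that the constraint languages used to carve out ``well-behaved'' path decompositions ($\mathcal{L}^{\mathrm{N}}_{k}$, $\mathcal{L}^{\top}_{k}$, $\mathcal{L}^{\smile}_{k}$) are simultaneously compatible—i.e.\ that restricting to their complement in $\mathsf{STR}_{k}^{+}$ still enumerates (up to isomorphism) all of $\REL^{\smile}_{\pw \le k-1}$ with $\aterm[3]_{\top}$ interpreted as $\top$, so that the rewriting is sound and complete. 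This is routine once one checks, as in \Cref{prop: normal form}, that inserting extra components and extra edges to satisfy the constraints does not change the semantics of the rewritten terms.
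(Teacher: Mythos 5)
Your proposal is correct and follows essentially the same route as the paper: rewrite the \kl{PCoR* terms} into \kl{KL terms} via the encodings of $\top$ and $\bl^{\smile}$ (\Cref{section: encode top,section: encode converse}), invoke the \kl{linearly bounded pathwidth model property} to fix $k$, and reduce to the \kl{inclusion problem} for the \kl{2AFAs} $\automaton_{k}^{\aterm[3]_{\top}^* {l} \term {r} \aterm[3]_{\top}^*}$ relative to the constraint languages, concluding by \Cref{proposition: 2AFA PSPACE}; your formulation with $\mathcal{L}^{\mathrm{N}}_{k} \setminus (\mathcal{L}^{\top}_{k} \cup \mathcal{L}^{\smile}_{k})$ intersected on the left is equivalent to the paper's union of $\mathcal{L}^{\mathrm{Inac}}_{k} \cup \mathcal{L}^{\mathrm{Incon}}_{k} \cup \mathcal{L}^{\top}_{k} \cup \mathcal{L}^{\smile}_{k}$ on the right.
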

\begin{proof}
    ($\mathrm{EXPSPACE}$-hardness):
    By \Cref{corollary: KL EXPSPACE-complete}.
    (In $\mathrm{EXPSPACE}$):
    Let $\term[1]$ and $\term[2]$ be the \kl{KL terms} obtained from given \kl{PCoR* terms} $\term[1]'$ and $\term[2]'$ by applying the translations of \Cref{section: encode top,section: encode converse}.
    Similar to \Cref{corollary: KL EXPSPACE-complete}, we can reduce into the \kl{inclusion problem} for \kl{2AFAs}, as follows:
    $\REL \models \term[1]' \le \term[2]'$ iff $\REL_{\pw \le \iw(\term[1]')} \models \term[1]' \le \term[2]'$ (\Cref{proposition: bounded pw property}) iff 
    \[\ljump{\automaton_{\iw(\term) + 1}^{\aterm[3]_{\top}^* {l} \term[1] {r} \aterm[3]_{\top}^*}} \;\subseteq\; \ljump{\automaton_{\iw(\term) + 1}^{\aterm[3]_{\top}^* {l} \term[2] {r} \aterm[3]_{\top}^*}} \cup \mathcal{L}^{\mathrm{Inac}}_{\iw(\term) + 1} \cup \mathcal{L}^{\mathrm{Incon}}_{\iw(\term) + 1} \cup \mathcal{L}^{\top}_{\iw(\term) + 1} \cup \mathcal{L}^{\smile}_{\iw(\term) + 1}.\]
    Because these \kl{2AFAs} have exponential sizes, this completes the proof by \Cref{proposition: 2AFA PSPACE}.
\end{proof}

\subsection{Encoding tests and nominals}\label{section: encoding more PCoR*}
In this subsection, moreover, we give encodings of \kl{tests} in Kleene algebra with tests \cite[Section 4]{kozenKleeneAlgebraTests1996} and \kl{nominals} in hybrid logic \cite{arecesHybridLogics2007}.

\subsubsection{Encoding tests}\label{section: encode tests}
Let $B \subseteq \vsig$ be a finite set of \intro*\kl{atomic tests}.
Let $\REL^{\mathrm{tests}_{B}} \defeq \set{\struc \in \REL \mid \forall b \in B, b^{\struc} \subseteq \triangle_{\univ{\struc}}}$.
The set of \intro*\kl{PCoR* terms with tests} is given by:
\begin{align*}
    \term[1], \term[2] &\;\Coloneqq\; p \mid a \mid \id \mid \emp \mid \term[1] \compo \term[2] \mid \term[1] \union \term[2] \mid \term[1]^{*}, \tag{$a \in \vsig \setminus B$} \\
    p, q &\;\Coloneqq\; b \mid \id \mid \emp \mid p \compo q \mid p \union q \mid p^{-}. \tag{$b \in B$} 
\end{align*}
where the \kl[PCoR* term with tests]{term} $p^{-}$ expresses the complement of $p$ with respect to the identity---the \kl(lterm){semantics} is extended by $\jump{p^{-}}_{\struc} = \triangle_{\univ{\struc}} \setminus \jump{p}_{\struc}$ (the others are the same as \kl{PCoR* terms}).
Using \kl{tests}, we can encode propositional while programs \cite{fischerPropositionalDynamicLogic1979,kozenKleeneAlgebraTests1997}:
\begin{align*}
    \mbox{\textbf{while }} p \mbox{\textbf{ do }} \term[1] &\;\defeq\; (p \term[1])^{*} p^{-}, &
    \mbox{\textbf{if }} p \mbox{\textbf{ then }} \term[1] \mbox{\textbf{ else }} \term[2] &\;\defeq\; (p \term[1]) \union (p^{-} \term[2]).
\end{align*}

For each $b \in B$, we introduce a fresh \kl{variable} $\bar{b}$; let $\bar{B}$ be the set $\set{\bar{b} \mid b \in B}$.
Let $\REL^{\mathrm{tests}_{B, \bar{B}}} \defeq \set{\struc \in \REL \mid \forall b \in B, b^{\struc} \uplus \bar{b}^{\struc} = \triangle_{\univ{\struc}}}$.
For a \kl[PCoR* term with tests]{term} $\term$, we consider the unique normal form $\term'$ in \kl{PCoR* terms} with respect to the following term rewriting system:
\begin{gather*}
    b^{-} \leadsto \bar{b}, \qquad
    \id^{-} \leadsto \emp, \qquad
    \emp^{-} \leadsto \id, \qquad
    (p \compo q)^{-} \leadsto p^{-} \union q^{-}, \qquad
    (p \union q)^{-} \leadsto p^{-} \compo q^{-}.
\end{gather*}
Then for $\struc \in \REL^{\mathrm{tests}_{B, \bar{B}}}$, we have $\jump{\term[1]'}_{\struc} = \jump{\term[1]}_{\struc}$.
Let \[\mathcal{L}^{\mathrm{tests}_{B, \bar{B}}}_{k} \;\defeq\;
\STR_{k}^{*} \compo
\set{\struc \in \STR_{k} \mid \exists b \in B,\ b^{\struc} \uplus \bar{b}^{\struc} \neq \triangle_{\univ{\struc}} } \compo 
\STR_{k}^{*}.\]
Then,
$\REL^{\mathrm{tests}_{B, \bar{B}}}_{\pw \le k-1} = \set{\odot \vec{\struc} \mid \vec{\struc} \in \STR_{k}^{+} \setminus (\mathcal{L}^{\mathrm{tests}_{B, \bar{B}}}_{k} \cup \mathcal{L}^{\mathrm{Incon}}_{k})}$.
For $\mathcal{L}^{\mathrm{tests}_{B, \bar{B}}}_{k}$, there is a \kl{1NFA} such that the number of states is a constant size.
Thus, we can encode \kl{tests}.
\begin{rem}\label{rem: tests}
    For the \kl{1NFA} of $\mathcal{L}_k^{\mathrm{tests}_{B,\bar{B}}}$,
    the smallest number of states is $2$ (hence, a constant size) but the number of transitions is naively exponential in the size of the alphabet for \kl{tests}.
    In \Cref{section: alphabet size}, to obtain the PSPACE upper bound,
    we will give an additional technique for reducing the number of the alphabet size to polynomial to the input (\Cref{section: alphabet size}).
    The modified encoding works even if the number of \kl{tests} is not fixed.
\end{rem}

\subsubsection{Encoding nominals}\label{section: encode nominals}
Let $L \subseteq \vsig$ be a finite set of \intro*\kl{nominals}.
Let $\REL^{\mathrm{noms}_{L}} \defeq \set{\struc \in \REL \mid \forall l \in L, \exists x \in \univ{\struc}, l^{\struc} = \set{\tuple{x, x}}}$.
Using \kl{nominals}, we can point a \kl{vertex}.
In $\REL^{\mathrm{noms}_{L}}$, we can also encode the jump operator ``$@ l. \term$'' in the definition of \cite{arecesHybridLogics2007}, by $\REL^{\mathrm{noms}_{L}} \models @ l. \term = \top l \term$.

We define the following \kl{language}:
\begin{align*}
    \mathcal{L}^{\mathrm{noms}_{L}}_{k} &\;\defeq\;
    \bigcup_{l \in L}
    \left(\begin{aligned}
       & \set{\struc \in \STR_{k} \mid {l}^{\struc} = \emptyset}^{+} \cup (\STR_{k}^{*} \compo \set{\struc \in \STR_{k} \mid {l}^{\struc} \not\subseteq \triangle_{\univ{\struc}}} \compo \STR_{k}^{*}) \cup {} \\
       & \STR_{k}^{*} \compo \set*{
            \begin{aligned}
                \struc_1 \dots \struc_{n} \in \STR_{k}^{+} \mid \exists x, \exists y, \tuple{x, x} \in l^{\struc_1} \land \tuple{y, y} \in l^{\struc_n}   \\
                \land  (x \neq y \lor (x = y \land \exists j \in \range{n}, x \not\in \univ{\struc_{j}}))
            \end{aligned}} \compo \STR_{k}^{*}
    \end{aligned}\right).
\end{align*}
Then, $\REL^{\mathrm{noms}_{L}}_{\pw \le k - 1} = \set{\odot \vec{\struc} \mid \vec{\struc} \in \STR_{k}^{+} \setminus \mathcal{L}^{\mathrm{noms}_{L}}_{k}}$.
(By the first line, $l^{\odot\vec{\struc}} \neq \emptyset$ and $l^{\odot\vec{\struc}} \subseteq \triangle_{\univ{\odot\vec{\struc}}}$.
By the second line, $\# \set{x \mid \tuple{x, x} \in l^{\odot\vec{\struc}}} \le 1$.)
For $\mathcal{L}^{\mathrm{noms}_{L}}_{k}$, there is a \kl{1NFA} such that the number of states is $\mathcal{O}(\# L \times k)$ (intuitively, we use the states to remember $l \in L$, $x \in \range{k}$, and to track whether $x \not\in \univ{\struc_{j}}$ held in past \kl{structures} $\struc_j$).
Thus, we can encode \kl{nominals}.

\subsubsection{Encoding the equational theory of PCoR* with tests and nominals}
From the above two, to use \kl{tests} and \kl{nominals},
we consider \kl{PCoR* terms} in the class $\REL^{\mathrm{tests}_{B, \bar{B}}, \mathrm{noms}_{L}} \defeq \REL^{\mathrm{tests}_{B, \bar{B}}} \cap \REL^{\mathrm{noms}_{L}}$.
For the class, we have the following proposition (\cf\ \Cref{proposition: bounded pw property}).
\begin{prop}\label{proposition: bounded pw property for nominals}
    For all \kl{PCoR* terms} $\term[1], \term[2]$, we have:
    \[\REL^{\mathrm{tests}_{B, \bar{B}}, \mathrm{noms}_{L}} \models \term[1] \le \term[2] \;\iff\; \REL^{\mathrm{tests}_{B, \bar{B}}, \mathrm{noms}_{L}}_{\pw \le \iw(\term[1]) + \#L} \models \term[1] \le \term[2].\]
\end{prop}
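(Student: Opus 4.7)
The $(\Leftarrow)$ direction is immediate since $\REL^{\mathrm{tests}_{B,\bar B},\mathrm{noms}_L}_{\pw\le \iw(\term[1])+\#L}$ is a subclass of $\REL^{\mathrm{tests}_{B,\bar B},\mathrm{noms}_L}$. For $(\Rightarrow)$, the plan is to argue by contraposition, along the same lines as the proofs of \Cref{proposition: bounded model property} and \Cref{proposition: bounded pw property}, but modifying the witness graph so that its induced structure already lies in $\REL^{\mathrm{tests}_{B,\bar B},\mathrm{noms}_L}$. Suppose $\tuple{x,y}\in\jump{\term[1]}_{\struc}\setminus\jump{\term[2]}_{\struc}$ for some $\struc\in\REL^{\mathrm{tests}_{B,\bar B},\mathrm{noms}_L}$. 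By \Cref{proposition: graph language}, pick $\graph[1]\in\glang(\term[1])$ with a homomorphism $f\colon \graph[1]\homo\const{G}(\struc,x,y)$, and for each $l\in L$ let $v_l\in\univ{\struc}$ be the unique vertex with $l^{\struc}=\set{\tuple{v_l,v_l}}$.

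The construction of the small counterexample $\struc[2]$ proceeds in two steps. \emph{Enrichment}: form $\graph[1]^{+}$ from $\graph[1]$ by, for every vertex $u\in\univ{\graph[1]}$ and every $b\in B\cup\bar B$, attaching a $b$-loop at $u$ whenever $\tuple{f(u),f(u)}\in b^{\struc}$; and for every $l\in L$, adjoining a fresh vertex $w_l$ with an $l$-loop and declaring $f(w_l)\defeq v_l$. \emph{Quotient}: take $\graph[1]'\defeq \graph[1]^{+}/{\sim}$ where $\sim$ is the equivalence generated by identifying $u\sim v$ whenever (i) there is an edge labelled by some $b\in B\cup\bar B$ between $u$ and $v$ in $\graph[1]^{+}$, or (ii) $f(u)=f(v)\in\set{v_l\mid l\in L}$. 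Let $\struc[2]$ be the structure with $\univ{\struc[2]}\defeq\univ{\graph[1]'}$, with $a^{\struc[2]}$ the quotient image of $a^{\graph[1]^{+}}$ for $a\in\vsig\setminus(B\cup\bar B)$, with $b^{\struc[2]}$ inherited from $\graph[1]'$, and with $\bar b^{\struc[2]}\defeq\triangle_{\univ{\struc[2]}}\setminus b^{\struc[2]}$.

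Three things must then be checked. First, $\struc[2]\in\REL^{\mathrm{tests}_{B,\bar B},\mathrm{noms}_L}$: the enrichment guarantees that every vertex of $\struc[2]$ has either a $b$-loop or a $\bar b$-loop (but not both, since $\struc\in\REL^{\mathrm{tests}_{B,\bar B}}$), and the nominal identifications ensure $l^{\struc[2]}=\set{\tuple{[w_l],[w_l]}}$. Second, $\pw(\struc[2])\le\iw(\term[1])+\#L$: start from a \kl{path decomposition} of $\graph[1]$ of \kl(pathwidth){width} $\iw(\term[1])$ (\Cref{proposition: pw le iw}), attach added loops in bags already containing the supporting vertex, and insert $[w_l]$ into every bag of the decomposition, contributing $+\#L$; the test identifications $u\sim v$ happen only when $u,v$ share a bag, so by interval-connectivity along such chains they do not increase the \kl(pathwidth){width}. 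Third, $\tuple{[x],[y]}\in\jump{\term[1]}_{\struc[2]}\setminus\jump{\term[2]}_{\struc[2]}$: membership in $\jump{\term[1]}_{\struc[2]}$ follows from $\graph[1]\homo\graph[1]'\hookrightarrow\const{G}(\struc[2],[x],[y])$ together with \Cref{proposition: graph language}; non-membership in $\jump{\term[2]}_{\struc[2]}$ follows from the retraction $g\colon\struc[2]\homo\struc$ defined by $g([u])\defeq f(u)$, because any witness $\graph[2]\in\glang(\term[2])$ homomorphic into $\const{G}(\struc[2],[x],[y])$ would compose with $g$ to give a homomorphism $\graph[2]\homo\const{G}(\struc,x,y)$, contradicting $\tuple{x,y}\notin\jump{\term[2]}_{\struc}$.

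The main obstacle is the well-definedness and homomorphism property of the retraction $g$, and in particular its compatibility with the complement relations $\bar b$. Well-definedness requires that test-edge endpoints and nominal-preimages of $f$ always agree under $f$, which uses $b^{\struc}\subseteq\triangle$ and the uniqueness of $v_l$. The homomorphism condition on $\bar b$ would fail if we only inherited the $\bar b$-edges of $\graph[1]$—a quotient vertex $[u]$ could be placed in $\bar b^{\struc[2]}$ by default while $f(u)\in b^{\struc}$. This is precisely what forces the enrichment step: once every vertex carries the $b$- or $\bar b$-loop dictated by $f$, the partition $b^{\struc[2]}\uplus\bar b^{\struc[2]}=\triangle$ is compatible with $f$, so $g$ preserves $\bar b$. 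With this in place, the remaining verifications are routine extensions of the proofs of \Cref{proposition: bounded model property} and \Cref{proposition: bounded pw property}.
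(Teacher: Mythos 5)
Your proposal is correct and follows essentially the same route as the paper's proof: both enrich the witness graph from $\glang(\term[1])$ with test-loops dictated by the homomorphism into $\struc$ and with fresh nominal-loop vertices, quotient by the test edges and the nominal identifications, and then verify membership in $\REL^{\mathrm{tests}_{B,\bar{B}},\mathrm{noms}_{L}}$, the pathwidth bound $\iw(\term[1])+\#L$, and preservation of the counterexample via the induced homomorphism back to $\struc$ (\Cref{proposition: graph language}). The only slip is cosmetic: you have swapped the direction labels --- the trivial direction (restriction to a subclass) is $(\Rightarrow)$ of the stated equivalence, and the contraposition argument you give establishes $(\Leftarrow)$.
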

\begin{proof}
    ($\Longrightarrow$):
    Trivial.
    ($\Longleftarrow$):
    We prove the contraposition.
    Assume $\tuple{x, y} \in \jump{\term[1]}_{\struc} \setminus \jump{\term[2]}_{\struc}$ for some $\struc \in \REL^{\mathrm{tests}_{B, \bar{B}}, \mathrm{noms}_{L}}$.
    By \Cref{proposition: graph language},
    $\graph' \homo \const{G}(\struc, x, y)$ for some $\graph' \in \glang(\term[1])$, and 
    $\graph[2] \centernot\homo \const{G}(\struc, x, y)$ for all $\graph[2] \in \glang(\term[2])$.
    Let $\graph$ be the \kl{graph} $\graph'$ extended with $L$ isolated \kl{vertices}
    having $l_1, \dots, l_{\#L}$ labelled looping edges, respectively, where $L = \set{l_1, \dots, l_{\#L}}$.
    Then there is a \kl{graph homomorphism} $h \colon \graph \homo \const{G}(\struc, x, y)$ by extending $\graph' \homo \const{G}(\struc, x, y)$.
    Let $\mathcal{S}_{B, \bar{B}}(h) \defeq \tuple{\univ{\graph}, \set{a^{\mathcal{S}_{B, \bar{B}}(h)}}_{a \in \vsig}, \src^{\graph[1]}, \tgt^{\graph[1]}}$ where
    the binary relation $a^{\mathcal{S}_{B, \bar{B}}(h)}$ is defined by 
    \[a^{\mathcal{S}_{B, \bar{B}}(h)} \;\defeq\; \begin{cases}
        a^{\graph} & (a \in \vsig \setminus (B \cup \bar{B}))\\
        a^{\graph} \cup \set{\tuple{x, x} \in \triangle_{\univ{\graph}} \mid \tuple{h(x), h(x)} \in a^{\graph[2]}} & (a \in B \cup \bar{B}).
    \end{cases}\]
    Note that $h \colon \mathcal{S}_{B, \bar{B}}(h) \homo \const{G}(\struc, x, y)$.
    Let $\sim$ be the minimal equivalence relation satisfying the following two conditions:
    \begin{itemize}
        \item for each $l \in L$, if $\tuple{x, y},\tuple{x',y'} \in l^{\mathcal{S}_{B, \bar{B}}(h)}$, then $x \sim y'$ (hence, $x \sim y \sim x' \sim y'$);
        \item for each $b \in B \cup \bar{B}$, if $\tuple{x, y} \in b^{\mathcal{S}_{B, \bar{B}}(h)}$, then $x \sim y$.
    \end{itemize}\noindent %FIXED indent
    Then $\mathcal{S}_{B, \bar{B}}(h)/{\sim} \homo \const{G}(\struc, x, y)$ holds because
    $h(x) = h(y)$ holds for all $x \sim y$.
    Let $\struc[2]$, $x'$, and $y'$ be such that $\const{G}(\struc[2], x', y') = \mathcal{S}_{B, \bar{B}}(h)/{\sim}$.
    By $\graph' \homo \graph \homo \mathcal{S}_{B, \bar{B}}(h) \homo \const{G}(\struc[2], x', y')$,
    we have $\tuple{x', y'} \in \jump{\term[1]}_{\struc[2]}$ (\Cref{proposition: graph language}).
    By $\const{G}(\struc[2], x', y') \homo \const{G}(\struc, x, y)$,
    we also have $\tuple{x', y'} \not\in \jump{\term[2]}_{\struc[2]}$ (\Cref{proposition: graph language}).
    Because $\struc[2] \in \REL^{\mathrm{tests}_{B, \bar{B}}, \mathrm{noms}_{L}}_{\pw \le \iw(\term[1]) + \#L}$
    (note that by $\pw(\mathcal{S}_{B, \bar{B}}(h)) = \pw(\graph) = \pw(\graph') \le \iw(\term)$ (\Cref{proposition: pw le iw}), we have $\pw(\mathcal{S}_{B, \bar{B}}(h)/{\sim}) \le \iw(\term) + \#L$),
    this completes the proof.
\end{proof}
Using this bounded pathwidth model property, we have the following corollary.
\begin{cor}\label{corollary: PCoR* with tests EXPSPACE-complete}
    The \kl{equational theory} of \kl{PCoR* terms with tests and nominals} is decidable and $\mathrm{EXPSPACE}$-complete.
\end{cor}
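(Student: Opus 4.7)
The plan is to mimic the proof of \Cref{corollary: PCoR* EXPSPACE-complete}, augmenting the family of forbidden path-decomposition languages fed to the \kl{2AFA} \kl{inclusion problem} so as to capture the extra semantic restrictions imposed by \kl{tests} and \kl{nominals}.

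First, $\mathrm{EXPSPACE}$-hardness is inherited directly from \Cref{corollary: KL EXPSPACE-complete}, since \kl{KL terms} form a syntactic fragment of \kl{PCoR* terms with tests and nominals}, and the relational semantics on the larger signature restricts correctly.

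For the upper bound, given input terms $\term[1]'$ and $\term[2]'$, I would first eliminate $\bl^{\smile}$ and $\top$ via \kl{converse normal form} and the rewrite $\top \leadsto \aterm[3]_{\top}^{*}$ (\Cref{section: encode converse,section: encode top}), and eliminate complemented \kl{tests} via the rewrite rules of \Cref{section: encode tests}, producing \kl{KL terms} $\term[1]$ and $\term[2]$ over $\vsig \cup \bar{B}$ that are equivalent to $\term[1]'$ and $\term[2]'$ on $\REL^{\mathrm{tests}_{B, \bar{B}}, \mathrm{noms}_{L}}$. Setting $k \defeq \iw(\term[1]') + \#L + 1$, an application of \Cref{proposition: bounded pw property for nominals} together with the gluing representation (\Cref{proposition: gluing structures}) and the normal-form characterization (\Cref{prop: normal form}) reduces the inequation to verification over $\set{\odot \vec{\struc} \mid \vec{\struc} \in \mathsf{STR}_{k}^{+} \setminus \mathcal{L}^{\mathrm{bad}}}$, where
\[\mathcal{L}^{\mathrm{bad}} \defeq \mathcal{L}^{\mathrm{Inac}}_{k} \cup \mathcal{L}^{\mathrm{Incon}}_{k} \cup \mathcal{L}^{\top}_{k} \cup \mathcal{L}^{\smile}_{k} \cup \mathcal{L}^{\mathrm{tests}_{B, \bar{B}}}_{k} \cup \mathcal{L}^{\mathrm{noms}_{L}}_{k}.\]
By \Cref{lemma: 2AFA 2} applied to the envelope-padded terms $\aterm[3]_{\top}^{*} {l} \term[1] {r} \aterm[3]_{\top}^{*}$ and $\aterm[3]_{\top}^{*} {l} \term[2] {r} \aterm[3]_{\top}^{*}$, this is in turn equivalent to the \kl{2AFA} inclusion
\[\ljump{\automaton_{k}^{\aterm[3]_{\top}^{*} {l} \term[1] {r} \aterm[3]_{\top}^{*}}} \subseteq \ljump{\automaton_{k}^{\aterm[3]_{\top}^{*} {l} \term[2] {r} \aterm[3]_{\top}^{*}}} \cup \mathcal{L}^{\mathrm{bad}}.\]

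To finish, I would bound all sizes. The two \kl{2AFAs} have sizes $2^{\mathrm{poly}(\|\term[1]'\| + \|\term[2]'\| + \#L)}$ by the closure-size estimate (\Cref{proposition: closure size,proposition: iw}), and each component of $\mathcal{L}^{\mathrm{bad}}$ is accepted by a \kl{1NFA} of at most polynomial size in $k$, $\#\vsig$, and $\#L$ --- the largest contributor being $\mathcal{L}^{\mathrm{noms}_{L}}_{k}$ with $\mathcal{O}(\#L \cdot k)$ states. Taking the union construction and applying \Cref{proposition: 2AFA PSPACE} then yields the desired $\mathrm{EXPSPACE}$ decision procedure. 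The main delicate point I foresee is checking that the pathwidth bound $\iw(\term[1]') + \#L$ from \Cref{proposition: bounded pw property for nominals} survives the normalization step: the eliminations of $\top$, $\bl^{\smile}$, and $p^{-}$ introduce fresh \kl{variables} but manifestly leave the \kl{intersection width} of $\term[1]'$ unchanged, so $k$ remains linear in the input and all the constructed automata stay exponential.
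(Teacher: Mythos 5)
Your proposal is correct and follows essentially the same route as the paper: hardness from \Cref{corollary: KL EXPSPACE-complete}, then elimination of $\top$, $\bl^{\smile}$, and complemented tests, the pathwidth bound $\iw(\term[1]) + \#L$ from \Cref{proposition: bounded pw property for nominals}, and a reduction to \kl{2AFA} inclusion against the union of exactly the same forbidden languages $\mathcal{L}^{\mathrm{Inac}}_{k} \cup \mathcal{L}^{\mathrm{Incon}}_{k} \cup \mathcal{L}^{\top}_{k} \cup \mathcal{L}^{\smile}_{k} \cup \mathcal{L}^{\mathrm{tests}_{B,\bar{B}}}_{k} \cup \mathcal{L}^{\mathrm{noms}_{L}}_{k}$, finished by \Cref{proposition: 2AFA PSPACE}. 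Your closing remark that the normalization steps preserve the \kl{intersection width} is a point the paper leaves implicit, but it is accurate and does not change the argument.
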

\begin{proof}
    ($\mathrm{EXPSPACE}$-hardness):
    By \Cref{corollary: KL EXPSPACE-complete}.
    (In $\mathrm{EXPSPACE}$):
    Given \kl[PCoR* terms with tests and nominals]{terms} $\term[1]''$ and $\term[2]''$
    (where $B$ and $\bar{B}$ are used for \kl{tests} and $L$ is used for \kl{nominals}),
    let $\term[1]'$ and $\term[2]'$ be the \kl{PCoR* terms} obtained by applying the translations of \Cref{section: encode tests},
    and let $\term[1]$ and $\term[2]$ be the \kl{KL terms} obtained by applying the translations of \Cref{section: encode top,section: encode converse}, respectively.
    Similar to \Cref{corollary: KL EXPSPACE-complete}, we can reduce into the \kl{inclusion problem} for \kl{2AFAs}, as follows:
    $\REL^{\mathrm{tests}_{B}, \mathrm{noms}_{L}} \models \term[1]'' \le \term[2]''$ iff
    $\REL^{\mathrm{tests}_{B, \bar{B}}, \mathrm{noms}_{L}} \models \term[1]' \le \term[2]'$ (\Cref{section: encode tests}) iff
    $\REL^{\mathrm{tests}_{B, \bar{B}}, \mathrm{noms}_{L}}_{\pw \le \iw(\term[1]) + \# L} \models \term[1]' \le \term[2]'$ (\Cref{proposition: bounded pw property for nominals}) iff
    \[\ljump{\automaton_{k}^{\aterm[3]_{\top}^* {l} \term[1] {r} \aterm[3]_{\top}^*}} \;\subseteq\; \ljump{\automaton_{k}^{\aterm[3]_{\top}^* {l} \term[2] {r} \aterm[3]_{\top}^*}} \cup \mathcal{L}^{\mathrm{Inac}}_{k} \cup \mathcal{L}^{\mathrm{Incon}}_{k} \cup \mathcal{L}^{\top}_{k} \cup \mathcal{L}^{\smile}_{k} \cup \mathcal{L}^{\mathrm{tests}_{B, \bar{B}}}_{k} \cup \mathcal{L}^{\mathrm{noms}_{L}}_{k}\]
    where $k \defeq \iw(\term) + \#L + 1$.
    Because these \kl{2AFAs} have exponential sizes, this completes the proof by \Cref{proposition: 2AFA PSPACE}.
\end{proof}

\subsection{Reducing the alphabet size for PSPACE upper bound}\label{section: alphabet size}
In this subsection, we note that we can reduce the alphabet size from $\# \STR_{k} = 2^{\mathcal{O}(\#\vsig \times k^2)}$ into $\#\vsig \times 2^{\mathcal{O}(k^2)}$,
by reducing the number of \kl{variables} occurring at the same time.
For instance, if $\struc = \left(\begin{tikzpicture}[baseline = -.5ex]
    \graph[grow right = 1.cm, branch down = 6ex, nodes={mynode}]{
    {1/{$1$}[draw, circle]}-!-{2/{$2$}[draw, circle]}-!-{3/{$3$}[draw, circle]}
    };
    \graph[use existing nodes, edges={color=black, pos = .5, earrow}, edge quotes={fill=white, inner sep=1pt,font= \scriptsize}]{
    1 ->["$\aterm[1]$", bend left] 2  ->["$\aterm[2]$", bend left] 3;
    3 ->["$\aterm[3]$", bend left] 2  ->["$\aterm[2]$", bend left] 1;
    };
\end{tikzpicture} \right)$, then by letting
$\vec{\struc[2]} =  \left(\begin{tikzpicture}[baseline = -.5ex]
    \graph[grow right = 1.cm, branch down = 6ex, nodes={mynode}]{
    {1/{$1$}[draw, circle]}-!-{2/{$2$}[draw, circle]}-!-{3/{$3$}[draw, circle]}
    };
    \graph[use existing nodes, edges={color=black, pos = .5, earrow}, edge quotes={fill=white, inner sep=1pt,font= \scriptsize}]{
    1 ->["$\aterm[1]$", bend left] 2;
    };
\end{tikzpicture} \right)
\left(\begin{tikzpicture}[baseline = -.5ex]
    \graph[grow right = 1.cm, branch down = 6ex, nodes={mynode}]{
    {1/{$1$}[draw, circle]}-!-{2/{$2$}[draw, circle]}-!-{3/{$3$}[draw, circle]}
    };
    \graph[use existing nodes, edges={color=black, pos = .5, earrow}, edge quotes={fill=white, inner sep=1pt,font= \scriptsize}]{
 2  ->["$\aterm[2]$", bend left] 3;
 2  ->["$\aterm[2]$", bend left] 1;
 };
\end{tikzpicture} \right)
\left(\begin{tikzpicture}[baseline = -.5ex]
    \graph[grow right = 1.cm, branch down = 6ex, nodes={mynode}]{
    {1/{$1$}[draw, circle]}-!-{2/{$2$}[draw, circle]}-!-{3/{$3$}[draw, circle]}
    };
    \graph[use existing nodes, edges={color=black, pos = .5, earrow}, edge quotes={fill=white, inner sep=1pt,font= \scriptsize}]{
    3 ->["$\aterm[3]$", bend left] 2;
    };
\end{tikzpicture} \right)$, we have $\struc \cong \odot \vec{\struc[2]}$.
In the sequel, we consider partitions of $\vsig$ given as follows:
\[\mbox{$\set{b, \bar{b}, \widebreve{b}, \widebreve{\bar{b}}}$ for $b \in B$,} \qquad
\mbox{$\set{l, \widebreve{l}}$ for $l \in L$,} \qquad
\mbox{$\set{a, \widebreve{a}}$ for other \kl{variables} $a$}.\]
We enumerate them as a sequence $\vsig_0, \dots, \vsig_{m-1}$.
Let $\mathsf{supp}(\struc) \defeq \set{a \in \vsig \mid a^{\struc} \neq \emptyset}$ and 
let $\STR'_{k} \defeq \set{\struc \in \STR_{k} \mid \exists i < m, \mathsf{supp}(\struc) \subseteq \vsig_i}$.
As above, to enumerate all \kl{structures} of \kl{pathwidth} at most $k-1$, it suffices to consider $\STR'_{k}$.
Since $\#\vsig_i$ is bounded by a fixed number ($\# \vsig_i \le 4$), we have $\# \STR'_{k} = \#\vsig \times 2^{\mathcal{O}(k^2)}$.
However, this decomposition breaks the condition of $\mathcal{L}^{\mathrm{Incon}}_{k}$.
For that reason, we consider the following well-aligned universe:
\[\mathcal{L}^{\mathrm{U}'}_{k} \;\defeq\; \set{\struc_{0} \dots \struc_{m-1} \mid \univ{\struc_0} = \dots = \univ{\struc_{m-1}} \land \forall i < m, \mathsf{supp}(\struc_i) \subseteq \vsig_i}^{+},\]
and then we replace $\mathcal{L}^{\mathrm{Incon}}_{k}$ with the \kl{language} $\mathcal{L}^{\mathrm{Incon}'}_{k}$ defined by:
\begin{align*}
    &(\STR'_{k})^* \compo \bigcup_{a \in \vsig} \set*{
    \struc_0 \dots \struc_{m} \in (\STR'_{k})^{m+1} \;\middle|\;
    \begin{aligned}
    & a^{\struc_{0}} \cap (\univ{\struc_{0}} \cap \univ{\struc_{m}})^{2}\\
    & \neq a^{\struc_{m}} \cap (\univ{\struc_{0}} \cap \univ{\struc_{m}})^{2}
    \end{aligned}} \compo (\STR'_{k})^*.
\end{align*}
According to this, similarly for the others, we construct them as follows:
\begingroup%
\allowdisplaybreaks
\begin{align*}
    \mathcal{L}^{\mathrm{Inac}'}_{k} &\;\defeq\; (\STR'_{k})^* \compo \set{\struc_1 \struc_2 \in (\STR'_{k})^{2} \mid \univ{\struc_1} \cap \univ{\struc_2} = \emptyset} \compo (\STR'_{k})^*,\\
    \mathcal{L}^{\top'}_{k} &\;\defeq\; ((\STR'_{k})^{m})^{*} \compo (\STR'_{k})^{i_{\top}-1} \compo \set{\struc \in \STR'_{k} \mid  \aterm[3]_{\top}^{\struc} \neq \univ{\struc}^2} \compo (\STR'_{k})^{*} \tag*{where $i_{\top}$ is such that  $\aterm[3]_{\top} \in \vsig_{i_{\top}}$,}\\
    \mathcal{L}^{\smile'}_{k} &\;\defeq\; 
(\STR'_{k})^{*} \compo
\set{\struc \in \STR'_{k} \mid \widebreve{a}^{\struc} \neq (a^{\struc})^{\smile}} \compo 
(\STR'_{k})^{*},\\
    \mathcal{L}^{\mathrm{tests}_{B, \bar{B}}'}_{k} &\;\defeq\;
    ((\STR'_{k})^{m})^{*} \compo 
    (\bigcup_{b \in B}
    (\STR'_{k})^{i_b-1} \compo \set{\struc \in \STR'_{k} \mid b^{\struc} \uplus \bar{b}^{\struc} \neq \triangle_{\univ{\struc}}}) \compo (\STR'_{k})^{*} \tag*{where $i_b$ is such that  $b \in \vsig_{i_b}$,}\\
    \mathcal{L}^{\mathrm{noms}_{L}'}_{k} &\;\defeq\;
    \bigcup_{l \in L}
    \left(\begin{aligned}
        & \set{\struc \in \STR'_{k} \mid {l}^{\struc} = \emptyset}^{+} \cup {}\\
        & ((\STR'_{k})^{*} \compo \set{\struc \in \STR'_{k} \mid  {l}^{\struc} \not\subseteq \triangle_{\univ{\struc}}} \compo (\STR'_{k})^{*}) \cup {} \\
        & (\STR'_{k})^{*} \compo
        \set*{
            \begin{aligned}
                &\struc_1 \dots \struc_{n} \in (\STR'_{k})^{+} \mid \exists x, \exists y, \\
                & \tuple{x, x} \in l^{\struc_1} \land \tuple{y, y} \in l^{\struc_n}\\
                &  \land ({x \neq y} \lor ({x = y} \land \exists j \in \range{n}, x \not\in \univ{\struc_{j}}))
            \end{aligned}
        } \compo
        (\STR'_{k})^{*}
    \end{aligned}\right).
\end{align*}%
\endgroup
Each of them is a regular language,
and there exists a \kl{1NFA} of $\mathcal{O}(2^{k} m)$ states for $\mathcal{L}^{\mathrm{U}'}_{k}$,
$\mathcal{O}(2^{k} m)$ for $\mathcal{L}^{\mathrm{Incon}'}_{k}$,
$\mathcal{O}(2^{k})$ for $\# \mathcal{L}^{\mathrm{Inac}'}_{k}$,
$\mathcal{O}(m)$ for $\# \mathcal{L}^{\top'}_{k}$,
$\mathcal{O}(1)$ for $\# \mathcal{L}^{\smile'}_{k}$,
$\mathcal{O}(m)$ for $\# \mathcal{L}^{\mathrm{tests}_{B, \bar{B}}'}_{k}$, and
$\mathcal{O}(k \# L)$ for $\# \mathcal{L}^{\mathrm{noms}_{L}'}_{k}$.
Thus, assuming $k$ is fixed, 
for each language, there is a \kl{1NFA} such that the number of states is $\mathcal{O}(\|\term\|)$.
The alphabet size $\STR'_{k}$ is also $\mathcal{O}(\|\term\|)$.

\subsubsection{The bounded intersection width fragment}\label{section: bounded intersection width}
We moreover show that when the \kl{intersection width} \cite{gollerPDLIntersectionConverse2009} is fixed, the \kl{equational theory} is $\mathrm{PSPACE}$-complete.
\begin{cor}\label{corollary: PCoR* with tests v iw fixed PSPACE-complete}
    Let $k \ge 1$.
    The \kl{equational theory} of \kl{PCoR* terms with tests and nominals} of \kl{intersection width} at most $k$ is $\mathrm{PSPACE}$-complete.
\end{cor}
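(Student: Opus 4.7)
The plan is to re-run the construction of \Cref{corollary: PCoR* with tests EXPSPACE-complete} using the reduced alphabet of \Cref{section: alphabet size}, and to observe that every size blow-up collapses to a polynomial once the intersection width is bounded by the constant $k$. PSPACE-hardness is inherited from the equational theory of \kl{KA terms}, which already requires PSPACE \cite{meyerEquivalenceProblemRegular1972} and sits inside the \kl{intersection width}-$1$ fragment.

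For the upper bound, I would fix the pathwidth bound $k' \defeq \max(\iw(\term[1]), \iw(\term[2])) + \#L + 1$ coming from \Cref{proposition: bounded pw property for nominals}; since $\iw(\term[1]), \iw(\term[2]) \le k$ is constant and $\#L \le \|\term[1]\| + \|\term[2]\|$, $k'$ is polynomial in the input. Then the closure bound of \Cref{proposition: closure size} yields $\# \cl_{\range{k'} \cup \set{\bullet}}(\term[1]) \le (2(k'+1)\|\term[1]\|)^{k}$, polynomial under the fixed intersection width; hence so is $\# \univ{\automaton_{k'}^{\term[1]}}$ from \Cref{definition: 2AFA}, and so is the size of every positive boolean formula $\delta^{\automaton_{k'}^{\term[1]}}(q,a)$ (whose potentially large factors come only from the enumeration over $\range{k'}^2$ in rule $\mathrm{i}_{x,y}$ and over closure members in rule T), and likewise for $\term[2]$.

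The step that I expect to require the most care is handling the alphabet: even after the reduction of \Cref{section: alphabet size}, the alphabet $\mathsf{STR}'_{k'}$ still has cardinality $\#\vsig \cdot 2^{\mathcal{O}(k'^2)}$, which is exponential in the input, so \Cref{proposition: 2AFA PSPACE} cannot be applied as a black box since its complexity measure includes the full transition table over all letters. The remedy is to observe that each letter of $\mathsf{STR}'_{k'}$ admits a polynomial-size encoding ($\mathcal{O}(k'^2)$ bits, since each component constrains at most four variables on at most $k'$ vertices), and that $\delta^{\automaton_{k'}^{\term[1]}}(q,a)$, $\delta^{\automaton_{k'}^{\term[2]}}(q,a)$, and the transitions of the polynomially-sized auxiliary 1DFAs/1NFAs for $\mathcal{L}^{\mathrm{U}'}_{k'}, \mathcal{L}^{\mathrm{Inac}'}_{k'}, \mathcal{L}^{\mathrm{Incon}'}_{k'}, \mathcal{L}^{\top'}_{k'}, \mathcal{L}^{\smile'}_{k'}, \mathcal{L}^{\mathrm{tests}_{B}'}_{k'}, \mathcal{L}^{\mathrm{noms}_{L}'}_{k'}$ are all polynomial-space computable given such an encoding. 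With this in hand, the proof of \Cref{proposition: 2AFA PSPACE}---which converts the 2AFAs into a 1NFA of states $2^{\mathcal{O}(n \log n)}$ and a co-1NFA of states $2^{\mathcal{O}(n)}$ and then runs an on-the-fly nondeterministic non-emptiness check on their product, closed under Savitch---transfers verbatim, because the on-the-fly check only inspects one letter at a time and so succinct letter encodings suffice. Combining this succinct-alphabet variant with the same reduction used in \Cref{corollary: PCoR* with tests EXPSPACE-complete} yields the PSPACE upper bound.
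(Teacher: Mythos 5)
Your high-level strategy coincides with the paper's: apply the translations for $\top$, converse, and tests, reduce to an inclusion between the 2AFAs of \Cref{definition: 2AFA} over the reduced alphabet of \Cref{section: alphabet size}, and use \Cref{proposition: closure size} with $\iw \le k$ fixed to keep the number of states polynomial; the hardness argument is also the same. The real divergence is the choice of the pathwidth parameter, and it matters. The paper instantiates the construction at the \emph{constant} value $k+1$, so that $\#\mathsf{STR}'_{k+1} = \#\vsig \cdot 2^{\mathcal{O}(k^2)} = \mathcal{O}(\|\term\|)$ and every auxiliary automaton is genuinely of polynomial size; \Cref{proposition: 2AFA PSPACE} then applies as a black box and no succinct-encoding machinery is needed. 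You instead take $k' = \max(\iw(\term[1]),\iw(\term[2])) + \#L + 1$ --- which is what \Cref{proposition: bounded pw property for nominals} literally asks for, but which is polynomial and unbounded --- and this single choice is what forces the detour through a succinct-alphabet variant of \Cref{proposition: 2AFA PSPACE}. That variant is plausible (the 2AFA-to-1NFA conversions and the on-the-fly product inspect one letter at a time), but it is a genuine strengthening that would have to be stated and proved rather than something that ``transfers verbatim'': the paper's size measure $\|\automaton\|$ sums over the entire alphabet and is exponential under your parameters.

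There is also one concrete error. With $k'$ unbounded, the auxiliary automata are \emph{not} ``polynomially-sized'': the 1DFAs for $\mathcal{L}^{\mathrm{U}'}_{k'}$, $\mathcal{L}^{\mathrm{Inac}'}_{k'}$, and $\mathcal{L}^{\mathrm{Incon}'}_{k'}$ have on the order of $2^{k'}$ states, since they must remember the universe of the current component, a subset of $\range{k'}$; this is exponential in the input as soon as $\#L$ grows with it. The claim is therefore false as written, and the final appeal to polynomial-size automata in the union does not go through. It can be repaired by extending your succinct-state, on-the-fly treatment from the alphabet to these automata as well, but the cleanest fix is to adopt the paper's constant parameter $k+1$, after which the entire succinct-encoding apparatus (and this problem) disappears.
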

\begin{proof}
    ($\mathrm{PSPACE}$-hardness):
    Because the universality problem for regular expressions is $\mathrm{PSPACE}$-complete \cite{meyerEquivalenceProblemRegular1972}.
    (In $\mathrm{PSPACE}$):
    Similar to \Cref{corollary: PCoR* with tests EXPSPACE-complete},
    from given \kl{PCoR* terms with tests and nominals} $\term[1]'$ and $\term[2]'$,
    let $\term[1]$ and $\term[2]$ be the \kl{KL terms} obtained by applying the translations of \Cref{section: encode top,section: encode converse,section: encode tests}.
    By using the automata above with $\iw(\term) \le k$, we can reduce into the \kl{inclusion problem} for \kl{2AFAs}, as follows:
    $\REL \models \term[1]' \le \term[2]'$ iff
    \[\ljump{\automaton_{k + 1}^{\aterm[3]_{\top}^* {l} \term[1]' {r} \aterm[3]_{\top}^*}} \cap \mathcal{L}^{\mathrm{U}'}_{k + 1} \subseteq \ljump{\automaton_{k + 1}^{\aterm[3]_{\top}^* {l} \term[2]' {r} \aterm[3]_{\top}^*}} \cup \mathcal{L}^{\mathrm{Inac}'}_{k + 1} \cup \mathcal{L}^{\mathrm{Incon}'}_{k + 1} \cup \mathcal{L}^{\top'}_{k + 1} \cup \mathcal{L}^{\smile'}_{k + 1} \cup \mathcal{L}^{\mathrm{tests}'_{B, \bar{B}}}_{k + 1} \cup \mathcal{L}^{\mathrm{noms}'_{L}}_{k + 1}.\]
    Because these \kl{2AFAs} have polynomial sizes, this completes the proof by \Cref{proposition: 2AFA PSPACE}.
\end{proof}
Particularly when $k = 1$, we have the following,
which slightly extends the PSPACE upper bound of Kleene algebra with top and \kl{tests} \cite{pousCompletenessTheoremsKleene2024}
and of Kleene algebra with top and converse \cite{nakamuraExistentialCalculiRelations2023}.
\begin{cor}\label{corollary: KAtopconverse}
    The \kl{equational theory} for Kleene algebra with top, converse, \kl{tests}, and \kl{nominals} with respect to binary relations is $\mathrm{PSPACE}$-complete.
\end{cor}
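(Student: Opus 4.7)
The plan is to derive this as essentially an immediate specialization of \Cref{corollary: PCoR* with tests v iw fixed PSPACE-complete} with $k = 1$, together with a well-known PSPACE-hardness result. The core observation is that Kleene algebra terms, even when extended with $\top$, $\bl^{\smile}$, \kl{tests}, and \kl{nominals}, simply lack the intersection operator $\intersection$. By the definition of \kl{intersection width} in \Cref{section: PCOR*}, we have $\iw(\term) = 1$ for every such term (one verifies this by trivial induction: atomic terms have $\iw = 1$, and the clauses for $\compo$, $\union$, $\bl^{\smile}$, and $\bl^{*}$ all take the maximum of the subterms, which stays at $1$, while $\intersection$ never appears).

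For the PSPACE upper bound, I would first note that the encodings of $\top$, $\bl^{\smile}$, \kl{tests}, and \kl{nominals} given in \Cref{section: encode top,section: encode converse,section: encode tests,section: encode nominals} do not introduce any occurrence of $\intersection$: each normal-form rewriting step (e.g., $(\term[1] \compo \term[2])^{\smile} \leadsto \term[2]^{\smile} \compo \term[1]^{\smile}$, $\top \leadsto \aterm[3]_{\top}^*$, $b^{-} \leadsto \bar{b}$, and so on) rewrites to an intersection-free term whenever the input is intersection-free. Hence the translated \kl{KL term} also satisfies $\iw \le 1$, and \Cref{corollary: PCoR* with tests v iw fixed PSPACE-complete} applied with $k = 1$ yields a PSPACE decision procedure.

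For PSPACE-hardness, I would reduce from the universality problem for regular expressions, which is PSPACE-complete by \cite{meyerEquivalenceProblemRegular1972}. Since for \kl{Kleene algebra terms} $\term[1], \term[2]$ we have $\REL \models \term[1] \le \term[2]$ iff $\ljump{\term[1]}_{\vsig} \subseteq \ljump{\term[2]}_{\vsig}$ (\Cref{proposition: glang and lang}), the inclusion problem for regular expressions already embeds into the \kl{equational theory} of pure \kl{Kleene algebra terms}, which is a fragment of the system under consideration.

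There is no real obstacle here: both directions are routine consequences of machinery already established in the paper. The only verification worth stating explicitly is that the normalizing rewrites of \Cref{section: encoding variants} preserve intersection-freeness, so that the $k = 1$ instance of \Cref{corollary: PCoR* with tests v iw fixed PSPACE-complete} genuinely applies.
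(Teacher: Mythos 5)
Your proposal is correct and matches the paper's (implicit) argument exactly: the corollary is stated as the $k=1$ instance of \Cref{corollary: PCoR* with tests v iw fixed PSPACE-complete}, relying on the facts that intersection-free terms have intersection width $1$ and that the encodings of $\top$, $\bl^{\smile}$, tests, and nominals introduce no intersections, with hardness inherited from regular-expression universality. The only difference is that you spell out these verifications explicitly, whereas the paper leaves them implicit.
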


\section{Proof of {\Cref{theorem: decomposition}}: The Decomposition Theorem}\label{section: theorem: decomposition}
In this section, we prove \Cref{theorem: decomposition}, which is the only remaining part.
\decompositiontheorem*
We first show the decomposition theorem for a subclass of \kl{runs} (\Cref{lemma: decomposition lsubSPR} in \Cref{section: decomposition run}),
and then we extend this result for \kl{lKL terms}.

\subsection{Decomposition for local sub-SP runs}\label{section: decomposition run}
In this subsection,
we first show the decomposition theorem for \emph{\kl{local} \kl{sub-series-parallel runs}} (\kl{local} \kl{sub-SP runs}).
In \Cref{section: subSPR}, we define \kl{sub-SP runs}.
In \Cref{section: local}, we define \emph{\kl{local}} \kl{sub-SP runs}.
In \Cref{section: decomposition of lsubSPR}, we show this theorem.

\subsubsection{Sub-SP runs}\label{section: subSPR}
The set $\SPR$ of \intro*\kl[SP runs]{series-parallel (SP) runs} is the subclass of \kl[runs]{runs with $\tuple{1,1}$-interface} defined by:
\begin{align*}
    \graph[1], \graph[2], \graph[3] \in \SPR &\;\Coloneqq\; \id^{1} \mid a^{1}_{1}
    \mid \graph[1] \series \graph[2] \mid \fork^{1}_{1} \series (\graph[1] \parallel \graph[2]) \series \join^{1}_{1} 
    \tag{$a \in \vsig$}
\end{align*}
For \kl{runs} $\graph[1]$ and $\graph[1]'$,
we say that $\graph'$ is a \intro*\kl{sub-run} of $\graph$ if there are \kl{runs} $\graph[2]$ and $\graph[3]$ such that $\graph[2] \series \graph' \series \graph[3] = \graph$.
We write $\subSPR$ for the set of all \intro*\kl{sub-series-parallel runs} (\kl{sub-SP runs}):
\[\subSPR \;\defeq\; \set{\graph[1]' \mid \mbox{there are \kl{runs} $\graph[2]$, $\graph[3]$ and an \kl{SP} \kl{run} $\graph[1]$ such that $\graph[1] = \graph[2] \series \graph[1]' \series \graph[3]$}}.\]
Similarly, we define the sets of \intro*\kl{sub-SP-l runs} and \intro*\kl{sub-SP-r runs}, written $\subSPRl$ and $\subSPRr$, as follows:
\begin{align*}
    \subSPRl &\;\defeq\; \set{\graph[1]' \mid \mbox{there are a \kl{run} $\graph[2]$ and an \kl{SP} \kl{run} $\graph[1]$ such that $\graph[1] = \graph[2] \series \graph[1]'$}},\\
    \subSPRr &\;\defeq\; \set{\graph[1]' \mid \mbox{there are a \kl{run} $\graph[3]$ and an \kl{SP} \kl{run} $\graph[1]$ such that $\graph[1] = \graph[1]' \series \graph[3]$}}.
\end{align*}
Namely, \kl{sub-SP-l runs} are \kl{runs} obtained by \kl{left quotients} of \kl{SP runs}.
Similarly, \kl{sub-SP-r runs} are \kl{runs} obtained by ``right quotients'' of \kl{SP runs}.

Below, we present alternative definitions of $\subSPR$, $\subSPRl$, and $\subSPRr$.
\begin{align*}
    \graph[1], \graph[2], \graph[3] \in \subSPR'
    &\;\Coloneqq\;
    (\graph[2]^{\mathrm{l}} \series \graph[3]^{\mathrm{r}})
    \mid (\graph[2] \parallel \graph[3])
    \tag{$\graph[2]^{\mathrm{l}} \in \subSPRl'$, $\graph[3]^{\mathrm{r}} \in \subSPRr'$, $\graph[2], \graph[3] \in \subSPR'$}\\
    \graph[1]^{\mathrm{l}}, \graph[2]^{\mathrm{l}}, \graph[3]^{\mathrm{l}} \in \subSPRl'
    &\;\Coloneqq\; \graph[3] \mid (\graph[1]^{\mathrm{l}} \parallel \graph[2]^{\mathrm{l}}) \series \join^{1}_{1} \series \graph[3]
    \tag{$\graph[1]^{\mathrm{l}}, \graph[2]^{\mathrm{l}} \in \subSPRl'$, $\graph[3] \in \SPR$}\\
    \graph[1]^{\mathrm{r}}, \graph[2]^{\mathrm{r}}, \graph[3]^{\mathrm{r}} \in \subSPRr'
    &\;\Coloneqq\; \graph[3] \mid \graph[3] \series \fork^{1}_{1} \series (\graph[1]^{\mathrm{r}} \parallel \graph[2]^{\mathrm{r}})
    \tag{$\graph[1]^{\mathrm{r}}, \graph[2]^{\mathrm{r}} \in \subSPRr'$, $\graph[3] \in \SPR$}
\end{align*}
\begin{prop}[\Cref{section: proposition: subSPR equiv}]\label{proposition: subSPR equiv}
We have $\subSPRl = \subSPRl'$, $\subSPRr = \subSPRr'$, and $\subSPR = \subSPR'$.
\end{prop}
Thanks to this alternative definition,
we have that every \kl{sub-SP run} is obtained by the (multiple) \kl{parallel compositions} of \kl{SP runs} having some \kl{vertex} such that all \kl{sources} are reachable to the \kl{vertex} and all \kl{targets} are reachable from the \kl{vertex} (note that $\ty_2(\graph) = 1$ for every $\graph \in \subSPRl$ and $\ty_1(\graph) = 1$ for every $\graph \in \subSPRr$).
For instance,
let us consider the \kl{sub-SP run} illustrated as follows (where $\graph_i \in \SPR$ for each $i$):
\begin{center}
    $\left(\begin{tikzpicture}[baseline = -8.5ex, yscale=-1]
        \foreach \x/\y/\style in {
            0/0/,0/2/,0/5/,
            1/0/,1/2/,
            2/1/,
            3/1/,3/5/,
            4/3/,
            5/3/{minimum width = 1.5ex, line width = 1.5pt},
            6/3/,
            7/1/,7/5/,
            8/1/,8/5/,
            0/6/, 0/8/,
            3/6/, 3/8/,
            4/7/,
            5/7/{minimum width = 1.5ex, line width = 1.5pt},
            8/7/}{
           \node[mynode,draw,circle,apply style/.expand once = \style]  (\x X\y) at (1.1*\x,.3*\y) {};
        }
        \foreach \x/\y/\label in {
            1/1,
            3/3, 3/7,
            7/3}{
           \node[inner sep = 0, outer sep = 0, minimum width = 0] (\x X\y) at (1.1*\x,.3*\y) {};
        }
        \node[left = 4pt of 0X0](s1){\tiny $1$}; \path (s1) edge[earrow, ->] (0X0);
        \node[left = 4pt of 0X2](s2){\tiny $2$}; \path (s2) edge[earrow, ->] (0X2);
        \node[left = 4pt of 0X5](s3){\tiny $3$}; \path (s3) edge[earrow, ->] (0X5);
        \node[left = 4pt of 0X6](s4){\tiny $4$}; \path (s4) edge[earrow, ->] (0X6);
        \node[left = 4pt of 0X8](s5){\tiny $5$}; \path (s5) edge[earrow, ->] (0X8);
        \node[right = 4pt of 8X1](t1){\tiny $1$}; \path (8X1) edge[earrow, ->] (t1);
        \node[right = 4pt of 8X5](t2){\tiny $2$}; \path (8X5) edge[earrow, ->] (t2);
        \node[right = 4pt of 8X7](t3){\tiny $3$}; \path (8X7) edge[earrow, ->] (t3);
        \foreach \xp/\x/\y in {6/7/3}{
            \node  (\x X\y) at (1.1*\x,.3*\y) {};
            \path (\xp X\y) edge [opacity = 0] node[pos= .5, elabel](\xp X\y Xf){$\fork$}(\x X\y);
        }
        \foreach \xp/\x/\y in {1/2/1,3/4/3, 3/4/7}{
            \node (\xp X\y) at (1.1*\xp,.3*\y) {};
            \path (\xp X\y) edge [opacity = 0] node[pos= .5, elabel](\x X\y Xj){$\join$}(\x X\y);
        }
        \graph[use existing nodes, edges={color=black, pos = .5, earrow, line width = .5pt, 
        },edge quotes={fill=white, inner sep=1pt,font= \scriptsize}]{
        {0X0} ->["$\graph_1$"] {1X0};
        {0X2} ->["$\graph_2$"] {1X2};
        {1X0} --["$1$"{auto}] {2X1Xj};
        {1X2} --["$2$"{auto, below right= .3pt}] {2X1Xj};
        {2X1Xj} -> {2X1};
        {2X1} ->["$\graph_3$"] {3X1};
        {0X5} ->["$\graph_4$"] {3X5};
        {3X1} --["$1$"{auto}] {4X3Xj};
        {3X5} --["$2$"{auto, below right= .3pt}] {4X3Xj};
        {4X3Xj} ->{4X3};
        {4X3} ->["$\graph_5$"] {5X3};
        {5X3} ->["$\graph_{5'}$"] {6X3};
        {6X3} -- {6X3Xf};
        {6X3Xf} ->["$1$"{auto}]{7X1};
        {6X3Xf} ->["$2$"{auto, below left= .3pt}]{7X5};
        {7X1} ->["$\graph_{3'}$"] {8X1};
        {7X5} ->["$\graph_{4'}$"] {8X5};
        {0X6} ->["$\graph_{6}$"] {3X6};
        {0X8} ->["$\graph_{7}$"] {3X8};
        {3X6} --["$1$"{auto}] {4X7Xj};
        {3X8} --["$2$"{auto, below right= .3pt}] {4X7Xj};
        {4X7Xj} ->{4X7};
        {4X7} ->["$\graph_{8}$"] {5X7};
        {5X7} ->["$\graph_{8'}$"] {8X7};
        };
    \end{tikzpicture}\right)$.
\end{center}
This is clearly a \kl{sub-SP run} by filling up both the left-hand side and the right-hand side, appropriately.
This \kl{sub-SP run} is decomposed to the \kl{parallel compositions} of \kl{runs} $\graph[1] \series \graph[2]$ where $\graph[1] \in \subSPRl$ and $\graph[2] \in \subSPRr$, as follows.
\begin{center}
    $\left(\begin{gathered}
        \left(\left(\begin{tikzpicture}[baseline = -5.5ex, yscale=-1]
        \foreach \x/\y/\style in {
            0/0/,0/2/,0/5/,
            1/0/,1/2/,
            2/1/,
            3/1/,3/5/,
            4/3/,
            5/3/{minimum width = 1.5ex, line width = 1.5pt}}{
           \node[mynode,draw,circle,apply style/.expand once = \style]  (\x X\y) at (1.1*\x,.3*\y) {};
        }
        \foreach \x/\y/\label in {
            1/1,
            3/3}{
           \node[inner sep = 0, outer sep = 0, minimum width = 0] (\x X\y) at (1.1*\x,.3*\y) {};
        }
        \node[left = 4pt of 0X0](s1){\tiny $1$}; \path (s1) edge[earrow, ->] (0X0);
        \node[left = 4pt of 0X2](s2){\tiny $2$}; \path (s2) edge[earrow, ->] (0X2);
        \node[left = 4pt of 0X5](s3){\tiny $3$}; \path (s3) edge[earrow, ->] (0X5);
        \node[right = 4pt of 5X3](t1){}; \path (5X3) edge[earrow, ->] (t1);
        \foreach \xp/\x/\y in {1/2/1,3/4/3}{
            \node (\xp X\y) at (1.1*\xp,.3*\y) {};
            \path (\xp X\y) edge [opacity = 0] node[pos= .5, elabel](\x X\y Xj){$\join$}(\x X\y);
        }
        \graph[use existing nodes, edges={color=black, pos = .5, earrow, line width = .5pt, 
        },edge quotes={fill=white, inner sep=1pt,font= \scriptsize}]{
        {0X0} ->["$\graph_1$"] {1X0};
        {0X2} ->["$\graph_2$"] {1X2};
        {1X0} --["$1$"{auto}] {2X1Xj};
        {1X2} --["$2$"{auto, below right= .3pt}] {2X1Xj};
        {2X1Xj} -> {2X1};
        {2X1} ->["$\graph_3$"] {3X1};
        {0X5} ->["$\graph_4$"] {3X5};
        {3X1} --["$1$"{auto}] {4X3Xj};
        {3X5} --["$2$"{auto, below right= .3pt}] {4X3Xj};
        {4X3Xj} ->{4X3};
        {4X3} ->["$\graph_5$"] {5X3};
        };
    \end{tikzpicture}\right) \series \left(\begin{tikzpicture}[baseline = -5.5ex, yscale=-1]
        \foreach \x/\y/\style in {
            5/3/{minimum width = 1.5ex, line width = 1.5pt},
            6/3/,
            7/1/,7/5/,
            8/1/,8/5/}{
           \node[mynode,draw,circle,apply style/.expand once = \style]  (\x X\y) at (1.1*\x,.3*\y) {};
        }
        \foreach \x/\y/\label in {
            7/3}{
           \node[inner sep = 0, outer sep = 0, minimum width = 0] (\x X\y) at (1.1*\x,.3*\y) {};
        }
        \node[left = 4pt of 5X3](s1){}; \path (s1) edge[earrow, ->] (5X3);
        \node[right = 4pt of 8X1](t1){\tiny $1$}; \path (8X1) edge[earrow, ->] (t1);
        \node[right = 4pt of 8X5](t2){\tiny $2$}; \path (8X5) edge[earrow, ->] (t2);
        \foreach \xp/\x/\y in {6/7/3}{
            \node  (\x X\y) at (1.1*\x,.3*\y) {};
            \path (\xp X\y) edge [opacity = 0] node[pos= .5, elabel](\xp X\y Xf){$\fork$}(\x X\y);
        }
        \graph[use existing nodes, edges={color=black, pos = .5, earrow, line width = .5pt, 
        },edge quotes={fill=white, inner sep=1pt,font= \scriptsize}]{
        {5X3} ->["$\graph_{5'}$"] {6X3};
        {6X3} -- {6X3Xf};
        {6X3Xf} ->["$1$"{auto}]{7X1};
        {6X3Xf} ->["$2$"{auto, below left= .3pt}]{7X5};
        {7X1} ->["$\graph_{3'}$"] {8X1};
        {7X5} ->["$\graph_{4'}$"] {8X5};
        };
    \end{tikzpicture}\right)\right)\\
    \parallel\\  
    \left(\left(\begin{tikzpicture}[baseline = -2.5ex, yscale=-1]
        \foreach \x/\y/\style in {
            0/6/, 0/8/,
            3/6/, 3/8/,
            4/7/,
            5/7/{minimum width = 1.5ex, line width = 1.5pt}}{
           \node[mynode,draw,circle,apply style/.expand once = \style]  (\x X\y) at (1.1*\x,.3*\y - 1.8) {};
        }
        \node[left = 4pt of 0X6](s4){\tiny $1$}; \path (s4) edge[earrow, ->] (0X6);
        \node[left = 4pt of 0X8](s5){\tiny $2$}; \path (s5) edge[earrow, ->] (0X8);
        \node[right = 4pt of 5X7](t1){}; \path (5X7) edge[earrow, ->] (t1);
        \foreach \xp/\x/\y in {3/4/7}{
            \node (\xp X\y) at (1.1*\xp,.3*\y - 1.8) {};
            \path (\xp X\y) edge [opacity = 0] node[pos= .5, elabel](\x X\y Xj){$\join$}(\x X\y);
        }
        \graph[use existing nodes, edges={color=black, pos = .5, earrow, line width = .5pt, 
        },edge quotes={fill=white, inner sep=1pt,font= \scriptsize}]{
        {0X6} ->["$\graph_{6}$"] {3X6};
        {0X8} ->["$\graph_{7}$"] {3X8};
        {3X6} --["$1$"{auto}] {4X7Xj};
        {3X8} --["$2$"{auto, below right= .3pt}] {4X7Xj};
        {4X7Xj} ->{4X7};
        {4X7} ->["$\graph_{8}$"] {5X7};
        };
    \end{tikzpicture}\right) \series \left(\begin{tikzpicture}[baseline = -2.5ex, yscale=-1]
        \foreach \x/\y/\style in {
            5/7/{minimum width = 1.5ex, line width = 1.5pt},
            8/7/}{
           \node[mynode,draw,circle,apply style/.expand once = \style]  (\x X\y) at (1.1*\x,.3*\y - 1.8) {};
        }
        \node[left = 4pt of 5X7](s1){}; \path (s1) edge[earrow, ->] (5X7);
        \node[right = 4pt of 8X7](t3){\tiny $1$}; \path (8X7) edge[earrow, ->] (t3);
        \graph[use existing nodes, edges={color=black, pos = .5, earrow, line width = .5pt, 
        },edge quotes={fill=white, inner sep=1pt,font= \scriptsize}]{
        {5X7} ->["$\graph_{8'}$"] {8X7};
        };
    \end{tikzpicture}\right)\right)
    \end{gathered}\right).$
\end{center}

\begin{rem}\label{remark: N-free}
    Every \kl[sub-SP run]{(sub-)SP run} does not contain the four pairwise distinct \kl{vertices} of the form
    $\left(\begin{tikzpicture}[baseline = -1.5ex]
        \graph[grow right = .8cm, branch down = 1.ex, nodes={}]{
        {L1/{}[mynode,draw,circle], /, L2/{}[mynode,draw,circle]} -!- {R1/{}[mynode,draw,circle], /, R2/{}[mynode,draw,circle]} };
        \node[left = 4pt of L1](s1){}; \path (s1) edge[earrow, ->] (L1);
        \node[left = 4pt of L2](s2){}; \path (s2) edge[earrow, ->] (L2);
        \node[right = 4pt of R1](t1){}; \path (R1) edge[earrow, ->] (t1);
        \node[right = 4pt of R2](t2){}; \path (R2) edge[earrow, ->] (t2);
        \graph[use existing nodes, edges={color=black, pos = .5, earrow, line width = .5pt,
        decorate, decoration={snake, segment length=2mm, amplitude=.2mm}
        },
        edge quotes={fill=white, inner sep=1pt,font= \scriptsize}]{
            {L1,L2} -> {R1,R2};
            {L2} -> {R1};
        };
    \end{tikzpicture}\right)$, \cf\ ``N-free'' \cite[p.\ 17]{valdesRecognitionSeriesParallel1979}\cite{lodayaSeriesparallelPosetsAlgebra1998}.
    Here,  
    $\begin{tikzpicture}[baseline = -0.5ex]
        \graph[grow right = 1.cm, branch down = 4.ex, nodes={}, edges={color=black, pos = .5, earrow, line width = .5pt,
        decorate, decoration={snake, segment length=2mm, amplitude=.2mm}
        }]{{L/{}[mynode,draw,circle]} -> {R/{}[mynode,draw,circle]}};
    \end{tikzpicture}$
    expresses that the target \kl{vertex} is reachable from the source \kl{vertex}.
\end{rem}

\subsubsection{Labeled KL terms and local sub-SP runs}\label{section: local}
We first note the following proposition.
\begin{prop}\label{proposition: subSP}
    Let $\struc$ be a \kl{structure}.
    \begin{itemize}
        \item For all \kl{KL terms} $\term$,
        every \kl{$\struc$-run} of $\term$ is \kl{SP}.
        \item For all \kl{lKL terms} $\lterm$,
        every \kl{$\struc$-run} of $\lterm$ is \kl{sub-SP-l}.
    \end{itemize}
\end{prop}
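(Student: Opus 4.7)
The plan is to prove both parts by structural induction on the term, reading off each case directly from the run language definitions (\Cref{definition: run language} for \kl{KL terms} and \Cref{definition: run lab} for \kl{lKL terms}) and matching them against the grammars of $\SPR$ and $\subSPR$.

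For the first part, on \kl{KL terms}, every case aligns with an $\SPR$ production essentially verbatim. The cases $a$ and $\eps$ yield the atomic SP runs $a^{1}_{1}$ and $\eps^{1}$ (the case $\emp$ is vacuous). The cases $\compo$ and $\bl^{*}$ give series compositions of $\SPR$ runs, which are $\SPR$ by the series production. Union is handled by a case distinction on which summand produced the \kl{$\struc$-run}. The only slightly nontrivial case, $\term[1] \intersection \term[2]$, produces $\fork^{1}_{1} \series (\graph[1] \parallel \graph[2]) \series \join^{1}_{1}$ with $\graph[1], \graph[2] \in \SPR$ by IH, which is exactly the parallel production of $\SPR$.

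For the second part, on \kl{lKL terms}, I would strengthen the inductive hypothesis to \emph{every \kl{$\struc$-run} of $\lterm$ lies in $\subSPRl$}, which is contained in $\subSPR$ by \Cref{proposition: SSPR}. The base case $@\lab.\term$ reduces to the first part, since $\SPR \subseteq \subSPRl$. The case $\lterm[1] \intersection_{1} \lterm[2]$ unfolds to $(\run^{\struc}_{\lab[3]}(\lterm[1]) \parallel \run^{\struc}_{\lab[3]}(\lterm[2])) \series \join^{\lab[3]}_{1}$; taking the trailing $\SPR$ component of the second $\subSPRl$ production to be $\eps^{1}$ and observing that the underlying run of the atomic $\join^{\lab[3]}_{1}$ is $\join^{1}_{1}$, this matches the production exactly. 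For $\lterm[1] \compo_{1} \term[2]$, I would first establish a small closure lemma: if $\graph \in \subSPRl$ and $\graph' \in \SPR$, then $\graph \series \graph' \in \subSPRl$. The proof is by induction on the $\subSPRl$ derivation of $\graph$: in the base case $\graph \in \SPR$, the series $\graph \series \graph' \in \SPR \subseteq \subSPRl$; in the inductive case $\graph = (\graph_{1} \parallel \graph_{2}) \series \join^{1}_{1} \series \graph_{3}$, we reassociate to $(\graph_{1} \parallel \graph_{2}) \series \join^{1}_{1} \series (\graph_{3} \series \graph')$ and absorb $\graph'$ into the trailing $\SPR$ slot.

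The only point requiring foresight is the strengthening of the inductive hypothesis from ``sub-SP'' to ``$\subSPRl$''. A naive induction with plain $\subSPR$ stumbles at the $\compo_{1}$ case: witnessing $\graph \in \subSPR$ by some decomposition $\graph_{a} \series \graph \series \graph_{b} \in \SPR$ does not obviously let one append an SP run on the right of $\graph$, because the new suffix sits on the wrong side of $\graph_{b}$. Using $\subSPRl$---the subclass with its explicit right-trailing $\SPR$ slot---sidesteps this obstacle cleanly, and the remaining verification is routine bookkeeping.
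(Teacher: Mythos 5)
Your proof is correct and follows essentially the same route as the paper, which simply states ``by easy induction on terms'' and leaves the details implicit. The one point you make explicit that the paper glosses over---strengthening the inductive invariant for \kl{lKL terms} to membership in $\subSPRl$ so that the $\compo_{1}$ case can absorb a trailing \kl{SP run}---is exactly the bookkeeping needed to make that easy induction go through.
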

\begin{proof}
    By easy induction on terms.
\end{proof}
Moreover, each \kl[$\struc$-run]{$(\odot \vec{\struc})_{\bullet}$-run} of \kl{lKL terms} is \emph{\kl{local}} in the following sense.
\begin{defi}\label{definition: local}
    Let $\vec{\struc} = \struc_1 \dots \struc_n$.
    We say that an \kl[$\struc$-run]{$(\odot \vec{\struc})_{\bullet}$-run} $\trace \colon \graph \homo (\odot\vec{\struc})_{\bullet}$ is \intro*\kl{local} if
    for all $\tuple{v_1, \dots, v_k} \in a^{\graph}$,
    there is some $i \in \range{n}$ s.t.\  $\set{\trace(v_1), \dots, \trace(v_k)} \subseteq
    \set{[\tuple{i, \lab}]_{\sim} \mid \lab \in \univ{\struc_i}}$.
\end{defi}
\begin{prop}\label{proposition: local}
    Let $\lterm$ be an \kl{lKL term}.
    Every \kl[$\struc$-run]{$(\odot \vec{\struc})_{\bullet}$-run} of $\lterm$ is \kl{local} \kl{sub-SP-l}.
\end{prop}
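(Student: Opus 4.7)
My proof plan splits the statement into two independent claims: sub-SP and local.

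The sub-SP part follows immediately from \Cref{proposition: subSP}, which establishes that every \kl{$\struc$-run} of an \kl{lKL term} is a \kl{sub-SP run}. That result applies to an arbitrary \kl{structure}, so specialising to $\struc = (\odot \vec{\struc})_{\bullet}$ delivers the first half with no extra work.

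For locality my plan is to invoke \Cref{proposition: struc-run atomic} to decompose a given homomorphism $\trace \colon \graph \homo (\odot\vec{\struc})_{\bullet}$ into \kl{atomic $\struc$-runs} under \kl(srun){series composition}, and then verify the condition of \Cref{definition: local} by case analysis on the label $a \in \vsig_{\fork, \join}$ of each edge contributed by one of those atomic pieces. For an edge coming from a piece $\fork^{\vec{\lab[3]}}_{i}$ or $\join^{\vec{\lab[3]}}_{i}$, the three incident vertices of the atomic piece all carry the same vertex label of $(\odot\vec{\struc})_{\bullet}$, so $\trace$ sends them to a single vertex; since any vertex of $(\odot\vec{\struc})_{\bullet}$ is of the form $[\tuple{i, \lab}]_{\sim}$ for some $i$, the locality condition is satisfied trivially. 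For an edge coming from an atomic piece $a^{\vec{\lab[3]}, \lab[3]'}_{i}$ with $a \in \vsig$ and $\tuple{v_1, v_2} \in a^{\graph}$, homomorphicity gives $\tuple{\trace(v_1), \trace(v_2)} \in a^{(\odot\vec{\struc})_{\bullet}}$; by \Cref{definition: gluing structures} combined with the definition of the \kl{quotient graph}, this edge must lift to an edge of the pre-quotient \kl{disjoint union} $\bigsqcup_{i} (\struc_i)_{\bullet}$. Since that disjoint union contains no edges between distinct components, the lift lies in a single $a^{(\struc_i)_{\bullet}}$. Using $a^{(\struc_i)_{\bullet}} = a^{\struc_i}$ (so neither endpoint is $\bullet$), I obtain $x, y \in \univ{\struc_i}$ with $\trace(v_1) = [\tuple{i, x}]_{\sim}$ and $\trace(v_2) = [\tuple{i, y}]_{\sim}$, matching the requirement of \Cref{definition: local}.

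The only possible obstacle, and it is mild, is justifying that the quotient operation cannot create edges that span two distinct $\struc_i$'s. This follows from the fact that $\sim$ in \Cref{definition: gluing structures} is generated purely by \emph{vertex} identifications between adjacent components, while the \kl{quotient graph} produces an edge only if a representative edge already exists in the pre-quotient graph; no new edges are introduced by the quotient. With that single observation in hand, both cases of the case analysis become one-line arguments and the proposition follows.
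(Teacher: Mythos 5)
Your proposal is correct and follows essentially the same route as the paper: the sub-SP half is exactly the paper's appeal to \Cref{proposition: subSP}, and your locality argument is just an explicit unpacking of what the paper dismisses as ``clearly local by definition of $(\odot \vec{\struc})_{\bullet}$-run'' --- namely that quotienting by $\sim$ in \Cref{definition: gluing structures} only identifies vertices and never manufactures edges spanning two components, while $\fork$/$\join$ edges collapse to a single vertex of the target. The detour through \Cref{proposition: struc-run atomic} is unnecessary (locality is checked edge by edge directly from homomorphicity) but harmless.
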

\begin{proof}
    For ``\kl{sub-SP-l}'':
    By \Cref{proposition: subSP}.
    For ``\kl{local}'':
    By definition of \kl[$\struc$-run]{$(\odot \vec{\struc})_{\bullet}$-run}.
\end{proof}

\subsubsection{Decomposition of local sub-SP runs}\label{section: decomposition of lsubSPR}
Let $\vec{\struc} = \struc_1 \dots \struc_n \in \STR^{+}$.
Let $\lsubSPR^{(\odot \vec{\struc})_{\bullet}}$ be the set of \kl{local} \kl{sub-SP} \kl[$\struc$-runs]{$(\odot \vec{\struc})_{\bullet}$-runs}.
Similarly, we use $\lSPR^{(\odot \vec{\struc})_{\bullet}}$, $\lsubSPRl^{(\odot \vec{\struc})_{\bullet}}$, and $\lsubSPRr^{(\odot \vec{\struc})_{\bullet}}$.
We now define the decompositions of \kl{local} \kl{sub-SP} \kl[$\struc$-runs]{$(\odot \vec{\struc})_{\bullet}$-runs}.
For $i \in \range{n}$,
we let $\lsubSPR^{\vec{\struc}}_{(i)} \defeq \set{\trace \in \lsubSPR^{(\odot \vec{\struc})_{\bullet}} \mid 
\ty_1(\trace)\ty_2(\trace) \in \set{[\tuple{i, \lab}]_{\sim} \mid \lab \in \univ{(\struc_i)_{\bullet}}}^*}$.
For $i \in \range{n}$ and an \kl[$\struc$-run]{$(\struc_i)_{\bullet}$-run} $\trace$,
we write $\trace_{(i)}$ for the \kl[$\struc$-run]{$(\odot \vec{\struc})_{\bullet}$-run} $\trace$ in which each label $\lab$ has been replaced with $[\tuple{i, \lab}]_{\sim}$.
We then consider the following decomposition of \kl[$\struc$-run]{$(\odot \vec{\struc})_{\bullet}$-run}.
\begin{defi}\label{definition: decomposition lsubSPR}
    Let $\vec{\struc} = \struc_1 \dots \struc_n \in \STR^{+}$.
    The set $\odot_{i = 1}^{n} \lsubSPR^{(\struc_i)_{\bullet}}$
    is defined as the smallest subset of $\bigcup_{i = 1}^{n} \lsubSPR^{\vec{\struc}}_{(i)}$ closed under the following rules:
    \begin{gather*}
        \begin{prooftree}
            \hypo{\trace[1] \in \lsubSPR^{\struc_j}}
            \infer1[D]{\trace[1]_{(j)} \in \odot_{i = 1}^{n} \lsubSPR^{(\struc_i)_{\bullet}}}
        \end{prooftree}
        \hspace{3em}
        \begin{prooftree}
            \hypo{\trace[1][\bullet/\tuple{l, r}] \in \odot_{i = 1}^{n} \lsubSPR^{(\struc_i)_{\bullet}}}
            \infer1[L]{\trace[1][\lab/\tuple{l, r}] \in \odot_{i = 1}^{n} \lsubSPR^{(\struc_i)_{\bullet}}}
        \end{prooftree}\\
        \begin{prooftree}
            \hypo{\trace[1] \in \odot_{i = 1}^{n} \lsubSPR^{(\struc_i)_{\bullet}}}
            \hypo{\trace[2] \in \odot_{i = 1}^{n} \lsubSPR^{(\struc_i)_{\bullet}}}
            \infer2[T]{\trace[1] \series \trace[2] \in \odot_{i = 1}^{n} \lsubSPR^{(\struc_i)_{\bullet}}}
        \end{prooftree}
    \end{gather*}
    Here, $\trace[1][\lab[2]/\tuple{l, r}]$ denotes the \kl[$\struc$-run]{$(\odot \vec{\struc})_{\bullet}$-run} $\trace[1]$ in which $\trace[1](\src^{\trace[1]}_{l})$ and $\trace[1](\tgt^{\trace[1]}_{r})$
    has been replaced with $\lab[2]$.
\end{defi}
We then have that every \kl[$\struc$-run]{$(\odot \vec{\struc})_{\bullet}$-run} of $\bigcup_{i = 1}^{n} \lsubSPR^{\vec{\struc}}_{(i)}$
can be derived using the rules of \Cref{definition: decomposition lsubSPR} (\Cref{lemma: decomposition lsubSPR}).
Note that each \kl[$\struc$-run]{$(\odot \vec{\struc})_{\bullet}$-run} occurring in the derivation tree should be in $\lsubSPR^{\vec{\struc}}_{(i)}$ for some $i$.
\begin{exa}\label{example: definition: decomposition lsubSPR}
    We recall the sequence $\vec{\struc} = \textcolor{blue}{\struc_1} \textcolor{red!60}{\struc_2}$ and the following \kl[$\struc$-run]{$(\odot \vec{\struc})_{\bullet}$-run} $\trace \in \lsubSPR^{\vec{\struc}}_{(1)}$ in \Cref{example: decomposition}:
    \begin{center}
        \begin{tikzpicture}[baseline = -.5ex, yscale=-1]
            \foreach \x/\y/\label/\style/\dc in {
                0/2/{$1, 1$}//{blue},2/1/{$1, 2$}/{thick}/,
                5/0/{$1, 2$}/{thick}/,6/2/{$2, 3$}//{red!60},6/3/{$1, 1$}//{blue},
                7/0/{$1, 2$}/{thick}/,10/1/{$1, 2$}/{thick}/,
                12/2/{$1, 1$}//{blue}}{
               \node[mynode,draw,circle, \style, draw = \dc, inner sep = .5pt]  (\x X\y) at (.6*\x,.6*\y) {\tiny \label};
            }
            \foreach \x/\y/\label in {
                1/1/{$1, 1$},1/3/{},3/1/{$2, 3$},
                4/0/{$2, 3$},4/2/{}, 6/0/{$1, 1$},
                8/0/{$2, 3$},8/2/{}, 9/1/{$2, 3$},11/3/{},11/1/{$1, 1$}}{
               \node[inner sep = 0, outer sep = 0, minimum width = 0] (\x X\y) at (.6*\x,.6*\y) {};
            }
            \node[left = 4pt of 0X2](s){}; \path (s) edge[earrow, ->] (0X2);
            \node[right = 4pt of 12X2](t){}; \path (12X2) edge[earrow, ->] (t);
            \foreach \xp/\x/\y/\col in {0/1/2/blue,3/4/1/red!60}{
                \node  (\x X\y) at (.6*\x,.6*\y) {};
                \path (\xp X\y) edge [opacity = 0] node[pos= .9, elabel, color = \col, inner sep = 0, outer sep = 0, minimum width = 0](\xp X\y Xf){}(\x X\y);
            }
            \foreach \xp/\x/\y/\col in {8/9/1/red!60,11/12/2/blue}{
                \node (\xp X\y) at (.6*\xp,.6*\y) {};
                \path (\xp X\y) edge [opacity = 0] node[pos= .1, elabel, color = \col, inner sep = 0, outer sep = 0, minimum width = 0](\x X\y Xj){}(\x X\y);
            }
            \graph[use existing nodes, edges={color=black, pos = .5, earrow, line width = .5pt, 
            decorate, decoration={snake, segment length=2mm, amplitude=.2mm}
            },edge quotes={fill=white, inner sep=1pt,font= \scriptsize}]{
            {0X2} --[color = blue] {0X2Xf} --[color = blue] {1X1, 1X3}; {1X3} ->[color = blue] {6X3};
            {1X1} ->[ color = blue] {2X1} --[ color = red!60] {3X1};
            {3X1} --[color = red!60] {3X1Xf} -- [color = red!60] {4X0, 4X2}; {4X2} ->[color=red!60] {6X2};
            {4X0} ->[ color = red!60] {5X0} --[ color = blue] {6X0}; {6X3} --[color=blue] {11X3};
            {6X0} ->[ color = blue] {7X0} --[ color = red!60] {8X0}; {6X2} --[color=red!60] {8X2};
            {8X0, 8X2} --[color = red!60] {9X1Xj} -- [color = red!60] {9X1};
            {9X1} ->[ color = red!60] {10X1} --[ color = blue] {11X1};
            {11X1, 11X3} --[color = blue] {12X2Xj} -> [color = blue] {12X2};
            };
        \end{tikzpicture}
    \end{center}
    We then also have $\tau \in \odot_{i = 1}^{n} \lsubSPR^{(\struc_i)_{\bullet}}$ by the following derivation tree (\cf\ \Cref{example: decomposition}):
    \begin{center}
        \begin{prooftree}
            \hypo{\mathstrut}
            \infer1{\begin{tikzpicture}[baseline = -.5ex, yscale=-1]
                \foreach \x/\y/\style/\dc in {
                    0/2//{blue},2/1/{thick}/,6/3//{blue}}{
                   \node[mynode,draw = \dc,circle, \style]  (\x X\y) at (.25*\x,.25*\y) {};
                }
                \foreach \x/\y in {
                    1/1,1/3,3/1,4/0,4/2}{
                   \node[inner sep = 0, outer sep = 0, minimum width = 0] (\x X\y) at (.25*\x,.25*\y) {};
                }
                \node[left = 4pt of 0X2](s){}; \path (s) edge[earrow, ->] (0X2);
                \node[right = 4pt of 2X1](t1){}; \path (2X1) edge[earrow, ->] (t1);
                \node[right = 4pt of 6X3](t2){}; \path (6X3) edge[earrow, ->] (t2);
                \foreach \xp/\x/\y/\col in {0/1/2/blue}{
                    \node  (\x X\y) at (.25*\x+.1,.25*\y) {};
                    \path (\xp X\y) edge [opacity = 0] node[pos= .9, elabel, color = \col, inner sep = 0, outer sep = 0, minimum width = 0](\xp X\y Xf){}(\x X\y);
                }
                \graph[use existing nodes, edges={color=black, pos = .5, earrow, line width = .5pt, 
                decorate, decoration={snake, segment length=2mm, amplitude=.2mm}
                },edge quotes={fill=white, inner sep=1pt,font= \scriptsize}]{
                {0X2} --[color = blue] {0X2Xf} --[color = blue] {1X1, 1X3}; {1X1} --[ color = blue] {2X1};
                {1X3} --[color=blue] {6X3};
                };
            \end{tikzpicture}}
            \hypo{\mathstrut}
        \infer1{\begin{tikzpicture}[baseline = -.5ex, yscale=-1]
            \foreach \x/\y/\col/\style/\dc in {
                2/1/gray!20/{thick}/,5/0/gray!20/{thick}/,6/2/gray!20//{red!60},6/3/black//}{
               \node[mynode, draw = \dc, circle, fill = \col, \style]  (\x X\y) at (.25*\x,.25*\y) {};
            }
            \foreach \x/\y in {
                3/1,4/0,4/2}{
               \node[inner sep = 0, outer sep = 0, minimum width = 0] (\x X\y) at (.25*\x,.25*\y) {};
            }
            \node[left = 4pt of 2X1](s1){}; \path (s1) edge[earrow, ->] (2X1);
            \node[left = 4pt of 6X3](s2){}; \path (s2) edge[earrow, ->] (6X3);
            \node[right = 4pt of 5X0](t1){}; \path (5X0) edge[earrow, ->] (t1);
            \node[right = 4pt of 6X2](t2){}; \path (6X2) edge[earrow, ->] (t2);
            \node[right = 4pt of 6X3](t3){}; \path (6X3) edge[earrow, ->] (t3);
            \foreach \xp/\x/\y/\col in {3/4/1/red!60}{
                \node  (\x X\y) at (.25*\x+.1,.25*\y) {};
                \path (\xp X\y) edge [opacity = 0] node[pos= .9, elabel, color = \col, inner sep = 0, outer sep = 0, minimum width = 0](\xp X\y Xf){}(\x X\y);
            }
            \graph[use existing nodes, edges={color=black, pos = .5, earrow, line width = .5pt, 
            decorate, decoration={snake, segment length=2mm, amplitude=.2mm}
            },edge quotes={fill=white, inner sep=1pt,font= \scriptsize}]{
            {2X1} --[ color = red!60] {3X1};
            {3X1} --[color = red!60] {3X1Xf} -- [color = red!60] {4X0, 4X2}; {4X2} --[ color = red!60] {6X2};
            {4X0} ->[ color = red!60] {5X0};
            };
        \end{tikzpicture}}
        \hypo{\mathstrut}
        \infer1{
            \begin{tikzpicture}[baseline = -.5ex, yscale=-1]
                \foreach \x/\y/\col/\style/\dc in {
                    5/0/gray!20/{thick}/,6/2/black//,6/3/black//,7/0/gray!20/{thick}/}{
                   \node[mynode,draw = \dc,circle, fill = \col, \style]  (\x X\y) at (.25*\x,.25*\y) {};
                }
                \foreach \x/\y in {
                    6/0}{
                   \node[inner sep = 0, outer sep = 0, minimum width = 0] (\x X\y) at (.25*\x,.25*\y) {};
                }
                \node[left = 4pt of 5X0](s1){}; \path (s1) edge[earrow, ->] (5X0);
                \node[left = 4pt of 6X2](s2){}; \path (s2) edge[earrow, ->] (6X2);
                \node[left = 4pt of 6X3](s3){}; \path (s3) edge[earrow, ->] (6X3);
                \node[right = 4pt of 7X0](t1){}; \path (7X0) edge[earrow, ->] (t1);
                \node[right = 4pt of 6X2](t2){}; \path (6X2) edge[earrow, ->] (t2);
                \node[right = 4pt of 6X3](t3){}; \path (6X3) edge[earrow, ->] (t3);
                \graph[use existing nodes, edges={color=black, pos = .5, earrow, line width = .5pt, 
                decorate, decoration={snake, segment length=2mm, amplitude=.2mm}
                },edge quotes={fill=white, inner sep=1pt,font= \scriptsize}]{
                {5X0} --[ color = blue] {6X0};
                {6X0} ->[ color = blue] {7X0};
                };
            \end{tikzpicture}}
        \infer1[L]{\begin{tikzpicture}[baseline = -.5ex, yscale=-1]
            \foreach \x/\y/\col/\style/\dc in {
                5/0/gray!20/{thick}/,6/2/gray!20//{red!60},6/3/black//,7/0/gray!20/{thick}/}{
                \node[mynode,draw = \dc,circle, fill = \col, \style]  (\x X\y) at (.25*\x,.25*\y) {};
            }
            \foreach \x/\y in {
                6/0}{
               \node[inner sep = 0, outer sep = 0, minimum width = 0] (\x X\y) at (.25*\x,.25*\y) {};
            }
            \node[left = 4pt of 5X0](s1){}; \path (s1) edge[earrow, ->] (5X0);
            \node[left = 4pt of 6X2](s2){}; \path (s2) edge[earrow, ->] (6X2);
            \node[left = 4pt of 6X3](s3){}; \path (s3) edge[earrow, ->] (6X3);
            \node[right = 4pt of 7X0](t1){}; \path (7X0) edge[earrow, ->] (t1);
            \node[right = 4pt of 6X2](t2){}; \path (6X2) edge[earrow, ->] (t2);
            \node[right = 4pt of 6X3](t3){}; \path (6X3) edge[earrow, ->] (t3);
            \graph[use existing nodes, edges={color=black, pos = .5, earrow, line width = .5pt, 
            decorate, decoration={snake, segment length=2mm, amplitude=.2mm}
            },edge quotes={fill=white, inner sep=1pt,font= \scriptsize}]{
            {5X0} --[ color = blue] {6X0};
            {6X0} ->[ color = blue] {7X0};
            };
        \end{tikzpicture}}
        \hypo{\mathstrut}
        \infer1{\begin{tikzpicture}[baseline = -.5ex, yscale=-1]
                \foreach \x/\y/\col/\style/\dc in {
                    6/2/gray!20//{red!60},6/3/black//,7/0/gray!20/{thick}/,10/1/gray!20/{thick}/}{
                   \node[mynode,draw = \dc,circle, fill = \col, \style]  (\x X\y) at (.25*\x,.25*\y) {};
                }
                \foreach \x/\y in {
                    8/0,8/2,9/1}{
                   \node[inner sep = 0, outer sep = 0, minimum width = 0] (\x X\y) at (.25*\x,.25*\y) {};
                }
                \node[left = 4pt of 7X0](s1){}; \path (s1) edge[earrow, ->] (7X0);
                \node[left = 4pt of 6X2](s2){}; \path (s2) edge[earrow, ->] (6X2);
                \node[left = 4pt of 6X3](s3){}; \path (s3) edge[earrow, ->] (6X3);
                \node[right = 4pt of 10X1](t1){}; \path (10X1) edge[earrow, ->] (t1);
                \node[right = 4pt of 6X3](t2){}; \path (6X3) edge[earrow, ->] (t2);
                \foreach \xp/\x/\y/\col in {8/9/1/red}{
                    \node (\xp X\y) at (.25*\xp,.25*\y) {};
                    \path (\xp X\y) edge [opacity = 0] node[pos= .1, elabel, color = \col, inner sep = 0, outer sep = 0, minimum width = 0](\x X\y Xj){}(\x X\y);
                }
                \graph[use existing nodes, edges={color=black, pos = .5, earrow, line width = .5pt, 
                decorate, decoration={snake, segment length=2mm, amplitude=.2mm}
                },edge quotes={fill=white, inner sep=1pt,font= \scriptsize}]{
                {7X0} --[ color = red!60] {8X0}; {6X2} --[ color = red!60] {8X2};
                {8X0,8X2} --[color = red!60] {9X1Xj} -- [color = red!60] {9X1};
                {9X1} ->[ color = red!60] {10X1};
                };
            \end{tikzpicture}
        }
        \infer[double]3[T]{
              \begin{tikzpicture}[baseline = -.5ex, yscale=-1]
                \foreach \x/\y/\col/\style/\dc in {
                    2/1/gray!20/{thick}/,5/0/gray!20/{thick}/,6/2/gray!20//{red!60},6/3/black//,7/0/gray!20/{thick}/,10/1/gray!20/{thick}/}{
                   \node[mynode,draw = \dc,circle, fill = \col, \style]  (\x X\y) at (.25*\x,.25*\y) {};
                }
                \foreach \x/\y in {
                    1/1,1/3,3/1,4/0,4/2,6/0,8/0,8/2,9/1,11/3,11/1}{
                   \node[inner sep = 0, outer sep = 0, minimum width = 0] (\x X\y) at (.25*\x,.25*\y) {};
                }
                \node[left = 4pt of 2X1](s1){}; \path (s1) edge[earrow, ->] (2X1);
                \node[left = 4pt of 6X3](s2){}; \path (s2) edge[earrow, ->] (6X3);
                \node[right = 4pt of 10X1](t1){}; \path (10X1) edge[earrow, ->] (t1);
                \node[right = 4pt of 6X3](t2){}; \path (6X3) edge[earrow, ->] (t2);
                \foreach \xp/\x/\y/\col in {3/4/1/red!60}{
                    \node  (\x X\y) at (.25*\x+.1,.25*\y) {};
                    \path (\xp X\y) edge [opacity = 0] node[pos= .9, elabel, color = \col, inner sep = 0, outer sep = 0, minimum width = 0](\xp X\y Xf){}(\x X\y);
                }
                \foreach \xp/\x/\y/\col in {8/9/1/red!60}{
                    \node (\xp X\y) at (.25*\xp-.1,.25*\y) {};
                    \path (\xp X\y) edge [opacity = 0] node[pos= .1, elabel, color = \col, inner sep = 0, outer sep = 0, minimum width = 0](\x X\y Xj){}(\x X\y);
                }
                \graph[use existing nodes, edges={color=black, pos = .5, earrow, line width = .5pt, 
                decorate, decoration={snake, segment length=2mm, amplitude=.2mm}
                },edge quotes={fill=white, inner sep=1pt,font= \scriptsize}]{
                {2X1} --[ color = red!60] {3X1};
                {3X1} --[color = red!60] {3X1Xf} -- [color = red!60] {4X0, 4X2}; {4X2} --[ color = red!60] {6X2};
                {4X0} ->[ color = red!60] {5X0} --[ color = blue] {6X0};
                {6X0} ->[ color = blue] {7X0} --[ color = red!60] {8X0}; {6X2} --[ color = red!60] {8X2};
                {8X0,8X2} --[color = red!60] {9X1Xj} -- [color = red!60] {9X1};
                {9X1} ->[ color = red!60] {10X1};
                };
            \end{tikzpicture}
        }
        \infer1[L]{  \begin{tikzpicture}[baseline = -.5ex, yscale=-1]
            \foreach \x/\y/\col/\style/\dc in {
                2/1/gray!20/{thick}/,5/0/gray!20/{thick}/,6/2/gray!20//{red!60},6/3/gray!20//{blue},7/0/gray!20/{thick}/,10/1/gray!20/{thick}/}{
                \node[mynode,draw = \dc,circle, fill = \col, \style]  (\x X\y) at (.25*\x,.25*\y) {};
            }
            \foreach \x/\y in {
                1/1,1/3,3/1,4/0,4/2,6/0,8/0,8/2,9/1,11/3,11/1}{
               \node[inner sep = 0, outer sep = 0, minimum width = 0] (\x X\y) at (.25*\x,.25*\y) {};
            }
            \node[left = 4pt of 2X1](s1){}; \path (s1) edge[earrow, ->] (2X1);
            \node[left = 4pt of 6X3](s2){}; \path (s2) edge[earrow, ->] (6X3);
            \node[right = 4pt of 10X1](t1){}; \path (10X1) edge[earrow, ->] (t1);
            \node[right = 4pt of 6X3](t2){}; \path (6X3) edge[earrow, ->] (t2);
            \foreach \xp/\x/\y/\col in {3/4/1/red!60}{
                \node  (\x X\y) at (.25*\x+.1,.25*\y) {};
                \path (\xp X\y) edge [opacity = 0] node[pos= .9, elabel, color = \col, inner sep = 0, outer sep = 0, minimum width = 0](\xp X\y Xf){}(\x X\y);
            }
            \foreach \xp/\x/\y/\col in {8/9/1/red!60}{
                \node (\xp X\y) at (.25*\xp-.1,.25*\y) {};
                \path (\xp X\y) edge [opacity = 0] node[pos= .1, elabel, color = \col, inner sep = 0, outer sep = 0, minimum width = 0](\x X\y Xj){}(\x X\y);
            }
            \graph[use existing nodes, edges={color=black, pos = .5, earrow, line width = .5pt, 
            decorate, decoration={snake, segment length=2mm, amplitude=.2mm}
            },edge quotes={fill=white, inner sep=1pt,font= \scriptsize}]{
            {2X1} --[ color = red!60] {3X1};
            {3X1} --[color = red!60] {3X1Xf} -- [color = red!60] {4X0, 4X2}; {4X2} --[ color = red!60] {6X2};
            {4X0} ->[ color = red!60] {5X0} --[ color = blue] {6X0};
            {6X0} ->[ color = blue] {7X0} --[ color = red!60] {8X0}; {6X2} --[ color = red!60] {8X2};
            {8X0,8X2} --[color = red!60] {9X1Xj} -- [color = red!60] {9X1};
            {9X1} ->[ color = red!60] {10X1};
            };
        \end{tikzpicture}}
        \hypo{\mathstrut}
        \infer1{ \begin{tikzpicture}[baseline = -.5ex, yscale=-1]
            \foreach \x/\y/\col/\style/\dc in {
               6/3/gray!20//{blue},10/1/gray!20/{thick}/,12/2/gray!20//{blue}}{
               \node[mynode,draw = \dc,circle, fill = \col, \style]  (\x X\y) at (.25*\x,.25*\y) {};
            }
            \foreach \x/\y in {
               11/3,11/1}{
               \node[inner sep = 0, outer sep = 0, minimum width = 0] (\x X\y) at (.25*\x,.25*\y) {};
            }
            \node[left = 4pt of 10X1](s1){}; \path (s1) edge[earrow, ->] (10X1);
            \node[left = 4pt of 6X3](s2){}; \path (s2) edge[earrow, ->] (6X3);
            \node[right = 4pt of 12X2](t){}; \path (12X2) edge[earrow, ->] (t);
            \foreach \xp/\x/\y/\col in {11/12/2/blue}{
                \node (\xp X\y) at (.25*\xp-.1,.25*\y) {};
                \path (\xp X\y) edge [opacity = 0] node[pos= .1, elabel, color = \col, inner sep = 0, outer sep = 0, minimum width = 0](\x X\y Xj){}(\x X\y);
            }
            \graph[use existing nodes, edges={color=black, pos = .5, earrow, line width = .5pt, 
            decorate, decoration={snake, segment length=2mm, amplitude=.2mm}
            },edge quotes={fill=white, inner sep=1pt,font= \scriptsize}]{
            {10X1} --[ color = blue] {11X1}; {6X3} --[color=blue] {11X3};
            {11X1,11X3} --[color = blue] {12X2Xj} -> [color = blue] {12X2};
            };
        \end{tikzpicture}}
        \infer[double]3[T]{
            \begin{tikzpicture}[baseline = -.5ex, yscale=-1]
                \foreach \x/\y/\col/\style/\dc in {
                    0/2/gray!20//{blue},2/1/gray!20/{thick}/,5/0/gray!20/{thick}/,6/2/gray!20//{red!60},6/3/gray!20//{blue},7/0/gray!20/{thick}/,10/1/gray!20/{thick}/,12/2/gray!20//{blue}}{
                   \node[mynode,draw = \dc,circle, fill = \col, \style]  (\x X\y) at (.25*\x,.25*\y) {};
                }
                \foreach \x/\y in {
                    1/1,1/3,3/1,4/0,4/2,6/0,8/0,8/2,9/1,11/3,11/1}{
                   \node[inner sep = 0, outer sep = 0, minimum width = 0] (\x X\y) at (.25*\x,.25*\y) {};
                }
                \node[left = 4pt of 0X2](s){}; \path (s) edge[earrow, ->] (0X2);
                \node[right = 4pt of 12X2](t){}; \path (12X2) edge[earrow, ->] (t);
                \foreach \xp/\x/\y/\col in {0/1/2/blue,3/4/1/red!60}{
                    \node  (\x X\y) at (.25*\x+.1,.25*\y) {};
                    \path (\xp X\y) edge [opacity = 0] node[pos= .9, elabel, color = \col, inner sep = 0, outer sep = 0, minimum width = 0](\xp X\y Xf){}(\x X\y);
                }
                \foreach \xp/\x/\y/\col in {8/9/1/red!60,11/12/2/blue}{
                    \node (\xp X\y) at (.25*\xp-.1,.25*\y) {};
                    \path (\xp X\y) edge [opacity = 0] node[pos= .1, elabel, color = \col, inner sep = 0, outer sep = 0, minimum width = 0](\x X\y Xj){}(\x X\y);
                }
                \graph[use existing nodes, edges={color=black, pos = .5, earrow, line width = .5pt, 
                decorate, decoration={snake, segment length=2mm, amplitude=.2mm}
                },edge quotes={fill=white, inner sep=1pt,font= \scriptsize}]{
                {0X2} --[color = blue] {0X2Xf} --[color = blue] {1X1, 1X3}; {1X3} --[color=blue] {6X3};
                {1X1} ->[ color = blue] {2X1} --[ color = red!60] {3X1};
                {3X1} --[color = red!60] {3X1Xf} -- [color = red!60] {4X0, 4X2}; {4X2} --[ color = red!60] {6X2};
                {4X0} ->[ color = red!60] {5X0} --[ color = blue] {6X0};
                {6X0} ->[ color = blue] {7X0} --[ color = red!60] {8X0}; {6X2} --[ color = red!60] {8X2};
                {8X0,8X2} --[color = red!60] {9X1Xj} -- [color = red!60] {9X1};
                {9X1} ->[ color = red!60] {10X1} --[ color = blue] {11X1}; {6X3} --[color=blue] {11X3};
                {11X1,11X3} --[color = blue] {12X2Xj} -> [color = blue] {12X2};
                };
            \end{tikzpicture}
        }
    \end{prooftree}
    \end{center}
\end{exa}
The most technical point in \Cref{example: definition: decomposition lsubSPR} is to find an appropriate interpolation.
In the following, we show that we can always take an appropriate interpolation (\Cref{lemma: decomposition lsubSPR}).

For $\vec{\struc} = \struc_1 \dots \struc_n \in \STR^{+}$ and $i \in \range{n}$,
we consider the following three disjoint sets:
\begin{align*}
    \univ{\vec{\struc}}_{(i)} &\defeq \set{[\tuple{i, \lab}]_{\sim} \mid \lab \in \univ{\struc_i}}, &
    \univ{\vec{\struc}}_{(<i)} &\defeq (\bigcup_{j < i} \univ{\vec{\struc}}_{(j)}) \setminus \univ{\vec{\struc}}_{(i)}, &
    \univ{\vec{\struc}}_{(>i)} &\defeq (\bigcup_{j > i} \univ{\vec{\struc}}_{(j)}) \setminus \univ{\vec{\struc}}_{(i)}.
\end{align*}
The following is an illustration of the three sets:
\begin{center}
    \begin{tikzpicture}
        \filldraw[fill=blue!40, color = blue!40] (-5.0,0) circle (.55cm); \node at (-5.0,0) {$\struc_1$};
        \filldraw[fill=blue!40, color = blue!40] (-4.0,0) circle (.55cm);
        \node at (-3.0,0) {$\dots$};
        \filldraw[fill=blue!40, color = blue!40] (-2.0,0) circle (.55cm);
        \filldraw[fill=blue!40, color = blue!40] (-1.0,0) circle (.55cm); \node at (-1.0,0) {$\struc_{i-1}$};
        \filldraw[fill=red!20, color = red!20] (5.0,0) circle (.55cm); \node at (5.0,0) {$\struc_{n}$};
        \filldraw[fill=red!20, color = red!20] (4.0,0) circle (.55cm);
        \node at (3.0,0) {$\dots$};
        \filldraw[fill=red!20, color = red!20] (2.0,0) circle (.55cm);
        \filldraw[fill=red!20, color = red!20] (1.0,0) circle (.55cm); \node at (1.0,0) {$\struc_{i+1}$};
        \filldraw[fill=gray!20] (0,0) circle (.55cm); \node at (0,0) {$\struc_i$};
        \draw [decorate,decoration={brace,amplitude=5pt,mirror,raise=4ex}](-5.5,0) -- (-.7,0) node[midway,yshift=-3em]{$\univ{\vec{\struc}}_{(<i)}$};
        \draw [decorate,decoration={brace,amplitude=5pt,mirror,raise=4ex}](+.7,0) -- (5.5,0) node[midway,yshift=-3em]{$\univ{\vec{\struc}}_{(>i)}$};
        \draw [decorate,decoration={brace,amplitude=5pt,mirror,raise=4ex}](-.5,0) -- (+.5,0) node[midway,yshift=-3em]{$\univ{\vec{\struc}}_{(i)}$};
    \end{tikzpicture}
\end{center}

For $\trace[1] \in \lsubSPR^{(\odot \vec{\struc})_{\bullet}}$,
we say that $\trace[1]$ is \intro*\kl{$i$-split} if
there are some $x \in \univ{\vec{\struc}}_{(<i)}$ and $y \in \univ{\vec{\struc}}_{(>i)}$ occurring in the sequence $\ty_1(\trace[1])\ty_2(\trace[1])$.
We first show that
we can decompose each \kl{$i$-split} $\lSPR^{(\odot \vec{\struc})_{\bullet}}$ \kl[$\struc$-run]{run} into
two $\lSPR^{(\odot \vec{\struc})_{\bullet}}$ \kl[$\struc$-runs]{runs} such that each glued \kl{vertex} is $\univ{\vec{\struc}}_{(i)}$-labeled.
\begin{lem}\label{lemma: interpolation SPR}
    Let $\vec{\struc} = \struc_1 \dots \struc_n \in \STR^{+}$ and $i \in \range{n}$.
    For all \kl{$i$-split} $\trace[1] \in \lSPR^{(\odot \vec{\struc})_{\bullet}}$,
    there are non-empty $\trace[2]$ and $\trace[3]$ such that
    $\trace[1] = \trace[2] \series \trace[3]$ and 
    for each $k$,
    we have
    $\trace[2](\tgt^{\trace[2]}_k) \in \univ{\vec{\struc}}_{(i)}$.
\end{lem}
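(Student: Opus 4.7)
I will argue by structural induction on $\trace[1]$ following the grammar of $\SPR$. Since $\trace[1]$ has $\tuple{1,1}$-interface, write $v_0$ and $v_n$ for its source and target. The $i$-split hypothesis gives that the two labels $\trace[1](v_0)$ and $\trace[1](v_n)$ lie on opposite sides of the partition $\univ{\vec{\struc}}_{(<i)} \uplus \univ{\vec{\struc}}_{(i)} \uplus \univ{\vec{\struc}}_{(>i)}$ of the vertices of $(\odot\vec{\struc})_{\bullet}$; without loss of generality $\trace[1](v_0) \in \univ{\vec{\struc}}_{(<i)}$ and $\trace[1](v_n) \in \univ{\vec{\struc}}_{(>i)}$.

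The base cases $\trace[1] = \eps^1$ and $\trace[1] = a^1_1$ are vacuous: for $\eps^1$ source and target coincide, and for $a^1_1$ locality forces both endpoints to lie in a single $\univ{\struc_j}$, whose image under the quotient map is contained in $\univ{\vec{\struc}}_{(\le i)}$ (if $j \le i$) or in $\univ{\vec{\struc}}_{(\ge i)}$ (if $j \ge i$), so $\trace[1]$ cannot be $i$-split. This also records the key fact I will reuse: a single local edge cannot cross directly from $\univ{\vec{\struc}}_{(<i)}$ to $\univ{\vec{\struc}}_{(>i)}$ without the equivalence class of some endpoint passing through $\univ{\vec{\struc}}_{(i)}$.

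For the inductive step there are two cases. In the series case $\trace[1] = \trace[1]_1 \series \trace[1]_2$ with cut vertex $u$, if $\trace[1](u) \in \univ{\vec{\struc}}_{(i)}$ we simply set $\trace[2] \defeq \trace[1]_1$ and $\trace[3] \defeq \trace[1]_2$; if $\trace[1](u) \in \univ{\vec{\struc}}_{(<i)}$ then $\trace[1]_2$ is itself $i$-split, so the induction hypothesis yields $\trace[1]_2 = \trace[2]' \series \trace[3]'$ with all cut labels in $\univ{\vec{\struc}}_{(i)}$ and we take $\trace[2] \defeq \trace[1]_1 \series \trace[2]'$, $\trace[3] \defeq \trace[3]'$; the remaining subcase $\trace[1](u) \in \univ{\vec{\struc}}_{(>i)}$ is symmetric. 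In the parallel case $\trace[1] = \fork^1_1 \series (\graph_a \parallel \graph_b) \series \join^1_1$, the fork and join edges are forced by the definition of $(\odot\vec{\struc})_{\bullet}$ to map to self-loops, so the source (resp.\ target) of each branch $\graph_c$ carries the same label as $v_0$ (resp.\ $v_n$). Hence each $\graph_c \in \lSPR^{(\odot\vec{\struc})_{\bullet}}$ is $i$-split, and the induction hypothesis gives decompositions $\graph_c = \trace[2]_c \series \trace[3]_c$ with cut labels in $\univ{\vec{\struc}}_{(i)}$; setting
\[
\trace[2] \defeq \fork^1_1 \series (\trace[2]_a \parallel \trace[2]_b), \qquad \trace[3] \defeq (\trace[3]_a \parallel \trace[3]_b) \series \join^1_1,
\]
and invoking the interchange law $(\trace[2]_a \series \trace[3]_a) \parallel (\trace[2]_b \series \trace[3]_b) = (\trace[2]_a \parallel \trace[2]_b) \series (\trace[3]_a \parallel \trace[3]_b)$ yields $\trace[1] = \trace[2] \series \trace[3]$ with every target vertex of $\trace[2]$ labeled in $\univ{\vec{\struc}}_{(i)}$.

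The subtle point is the parallel case: a single scalar cut of $\trace[1]$ need not exist, so we cannot hope for $\trace[2], \trace[3]$ with $\tuple{1,1}$-interface. The statement only requires $\trace[1] = \trace[2] \series \trace[3]$ as runs, which permits the multi-vertex cut produced above; the interchange law is exactly what lets us assemble the per-branch interpolations obtained from the induction hypothesis into one consistent cut across all parallel branches.
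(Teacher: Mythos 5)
Your proof is correct and follows essentially the same route as the paper's: induction on the $\SPR$ derivation, with the base cases excluded by locality, the series case split according to whether the cut vertex is labeled in $\univ{\vec{\struc}}_{(i)}$ (otherwise pushing the $i$-split into one factor and invoking the induction hypothesis), and the parallel case handled by interpolating each branch separately and reassembling via $\fork^1_1 \series (\trace[2]_a \parallel \trace[2]_b)$ and $(\trace[3]_a \parallel \trace[3]_b) \series \join^1_1$. Your explicit remarks on why a single edge cannot cross from $\univ{\vec{\struc}}_{(<i)}$ to $\univ{\vec{\struc}}_{(>i)}$ and on the interchange law only make explicit what the paper leaves implicit.
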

\begin{proof}
    By induction on the derivation tree of $\graph \in \SPR$ where $\trace[1] \colon \graph \homo (\odot \vec{\struc})_{\bullet}$.
    \begin{itemize}
        \item Case $\trace[1] = a^{x,y}_{1}$, Case $\trace[1] = \id^{x}$:
        Then $\trace[1]$ is \kl{local}.
        However, as this contradicts that $\trace[1]$ is \kl{$i$-split}, this case does not occur.
        \item Case $\trace[1] = 
        \left(\begin{tikzpicture}[baseline = -.5ex]
            \graph[grow right = .6cm, branch down = 1.5ex, nodes={}]{
            {, L'12/[mynode,draw,circle]} -!-
            {LC'12/} -!-
            {C'12/{$z$}[mynode,draw,circle]} -!-
            {CR'12/} -!-
            {R'12/[mynode,draw,circle]}};
            \node[left = 4pt of L'12](s1){}; \path (s1) edge[earrow, ->] (L'12);
            \node[right = 4pt of R'12](t1){}; \path (R'12) edge[earrow, ->] (t1);
            \graph[use existing nodes, edges={color=black, pos = .5, earrow, line width = .5pt, }, edge quotes={fill=white, inner sep=1pt,font= \scriptsize}]{
            {L'12} --[color = black] {LC'12};
            {LC'12} ->[color = black] {C'12} --[color = black] {CR'12};
            {CR'12} ->[color = black] {R'12};
            {L'12} ->["$\trace[1]'$"] {C'12};
            {C'12} ->["$\trace[1]''$"] {R'12};
            };
        \end{tikzpicture}\right)$:
        We distinguish the following cases:
        \begin{itemize}
            \item Case $z \in \univ{\vec{\struc}}_{(i)}$:
            By letting $\trace[2] = \trace[1]'$ and $\trace[3] = \trace[1]''$, this case has been proved.
            \item Otherwise:
            Because $z$ is $\univ{\vec{\struc}}_{(<i)}$- or $\univ{\vec{\struc}}_{(>i)}$-labeled \kl{vertex},
            either $\trace[1]'$ or $\trace[1]''$ is \kl{$i$-split}.
            \begin{itemize}
                \item Case $\trace[1]'$ is \kl{$i$-split}:
                Let $\trace[2]'$ and $\trace[3]'$ be the ones obtained by IH with respect to $\trace[1]'$.
                Then by letting $\trace[2] = \trace[2]'$ and $\trace[3] = \trace[3]' \series \trace[1]''$, this case has been proved.
                \item Case $\trace[1]''$ is \kl{$i$-split}:
               Similarly, let $\trace[2]''$ and $\trace[3]''$ be the ones obtained by IH with respect to $\trace[1]''$.
               Then by letting $\trace[2] = \trace[1]' \series \trace[2]''$ and $\trace[3] = \trace[3]''$, this case has been proved.
           \end{itemize}
        \end{itemize}
        
        \item Case $\trace[1] = \fork^{\lab}_{1} \series (\trace[1]' \parallel \trace[1]'') \series \join^{\lab[2]}_{1}$:
        Since both $\trace[1]'$ and $\trace[1]''$ are \kl{$i$-split},
        let $\trace[2]'$ and $\trace[3]'$ be the ones obtained by IH with respect to $\trace[1]'$
        and let $\trace[2]''$ and $\trace[3]''$ be the ones obtained by IH with respect to $\trace[1]''$.
        Then by letting $\trace[2] = \fork^{\lab}_{1} \series (\trace[2]' \parallel \trace[2]'')$ and $\trace[3] = (\trace[3]' \parallel \trace[3]'') \series \join^{\lab[2]}_{1}$, this case has been proved.
    \end{itemize}
    Hence, this completes the proof.
\end{proof}
We then extend the lemma above for (slightly specialized) $\lsubSPR$.
This is the most crucial lemma for \Cref{lemma: decomposition lsubSPR}.
\begin{lem}[Interpolation lemma]\label{lemma: interpolation lsubSPR 2}
    Let $\vec{\struc} = \struc_1 \dots \struc_n \in \STR^{+}$ and let $i \in \range{n}$.
    For all $\trace[1] \in \lsubSPR^{\vec{\struc}}_{(i)}$ of the form
    $\left(\begin{tikzpicture}[baseline = -4.5ex]
        \graph[grow right = 1.5cm, branch down = 3.5ex, nodes={}]{
        {L1/{$x_1$}[mynode,draw,circle, inner sep = .5pt], L12/{\tiny $\vdots$}[yshift = .5ex], L2/{$x_n$}[mynode,draw,circle, inner sep = .5pt]} -!-
        {L'1/{$x_1'$}[mynode,draw,circle, inner sep = .5pt], L'12/{\tiny $\vdots$}[yshift = .5ex], L'2/{$x_n'$}[mynode,draw,circle, inner sep = .5pt]} -!-
        {/,/,/} -!-
        {R'1/{$y_1'$}[mynode,draw,circle, inner sep = .5pt], R'12/{\tiny $\vdots$}[yshift = .5ex], R'2/{$y_m'$}[mynode,draw,circle, inner sep = .5pt]} -!-
        {R1/{$y_1$}[mynode,draw,circle, inner sep = .5pt], R12/{\tiny $\vdots$}[yshift = .5ex], R2/{$y_m$}[mynode,draw,circle, inner sep = .5pt]} -!-
        };
        \node[left = 4pt of L1](s1){\tiny $1$}; \path (s1) edge[earrow, ->] (L1);
        \node[left = 4pt of L2](s2){\tiny $n$}; \path (s2) edge[earrow, ->] (L2);
        \node[right = 4pt of R1](t1){\tiny $1$}; \path (R1) edge[earrow, ->] (t1);
        \node[right = 4pt of R2](t2){\tiny $m$}; \path (R2) edge[earrow, ->] (t2);
        \graph[use existing nodes, edges={color=black, pos = .5, earrow, line width = .5pt, }, edge quotes={fill=white, inner sep=1pt,font= \scriptsize}]{
        {L1} ->["$a_1$", color = black] {L'1} -> {R'1} ->["$b_1$", color = black] {R1};
        {L2} ->["$a_n$", color = black] {L'2} -> {R'2} ->["$b_m$", color = black] {R2};
        };
        \draw[fill = white] ($(L'1)+(.5,.15)$) -- ($(L'2)+(.5,-.15)$) -- ($(R'2)+(-.5,-.15)$) -- ($(R'1)+(-.5,.15)$) -- cycle;
        \path (L'12) edge[pos =.5, opacity = 0] node[opacity = 1]{\scriptsize  $\trace_0$} (R'12);
    \end{tikzpicture}\right)$
    such that $\trace_0$ is \kl{$i$-split} and \emph{\kl{connected}} (i.e., $\trace_0 \in \lsubSPRl^{(\odot\vec{\struc})_{\bullet}} \series \lsubSPRr^{(\odot\vec{\struc})_{\bullet}}$ by \Cref{proposition: subSPR equiv}),
    there are non-empty $\trace[2]$ and $\trace[3]$ such that
    $\trace[1] = \trace[2] \series \trace[3]$ and
    $\trace[2],\trace[3] \in \lsubSPR^{\vec{\struc}}_{(i)}$.
\end{lem}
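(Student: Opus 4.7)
My plan is to proceed by strong induction on the number of vertices in $\trace_0$, exploiting the structural characterization of sub-SP runs from \Cref{proposition: SSPR}: $\trace_0$ is either a parallel composition $\trace_0 = \trace_0^A \parallel \trace_0^B$ of two sub-SP runs, or a sequential composition $\trace_0 = \trace_0^L \series \trace_0^R$ through a single glue vertex $z$ with $\trace_0^L \in \subSPRl$, $\trace_0^R \in \subSPRr$. A key enabling observation is that the locality of $\trace[1]$ (\Cref{definition: local}), combined with the contiguity property inherent in path decompositions (each label $\lab$ appears in a contiguous interval of $\struc_j$'s), implies that no single edge of $\trace[1]$ can directly connect a $\univ{\vec{\struc}}_{(<i)}$-labeled vertex to a $\univ{\vec{\struc}}_{(>i)}$-labeled vertex: any path crossing from one side to the other must traverse a $\univ{\vec{\struc}}_{(i)}$-labeled vertex.

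In the parallel case, the boundary edges of $\trace[1]$ partition according to which factor they meet, yielding $\trace[1] = \trace[1]^A \parallel \trace[1]^B$ with both pieces in $\lsubSPR^{\vec{\struc}}_{(i)}$. If some factor, say $\trace_0^A$, is itself $i$-split, I apply the inductive hypothesis to $\trace[1]^A$ to obtain $\trace[1]^A = \trace[2]^A \series \trace[3]^A$, and use the interchange law of the monoidal structure to shift $\trace[1]^B$ entirely onto one side via $\trace[1] = (\trace[2]^A \parallel \trace[1]^B) \series (\trace[3]^A \parallel \eps^{\vec{y}^B})$ (noting that $\vec{y}^B$, the right-interface of $\trace[1]^B$, lies in $\univ{\vec{\struc}}_{(i)}^*$). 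If instead neither factor is individually $i$-split yet $\trace_0$ is, the two witnesses are split across the components: WLOG $\trace_0^A$'s interface lies in $\univ{\vec{\struc}}_{(\le i)}^*$ and $\trace_0^B$'s in $\univ{\vec{\struc}}_{(\ge i)}^*$, so I place $\trace[1]^A$ entirely on the left and $\trace[1]^B$ entirely on the right via $\trace[1] = (\trace[1]^A \parallel \eps^{\vec{x}^B}) \series (\eps^{\vec{y}^A} \parallel \trace[1]^B)$.

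In the sequential case with glue vertex $z$: if $z \in \univ{\vec{\struc}}_{(i)}$, I cut directly at $z$, yielding $\trace[2] = (a_1 \parallel \cdots \parallel a_n) \series \trace_0^L$ and $\trace[3] = \trace_0^R \series (b_1 \parallel \cdots \parallel b_m)$, both in $\lsubSPR^{\vec{\struc}}_{(i)}$. If $z \notin \univ{\vec{\struc}}_{(i)}$, say $z \in \univ{\vec{\struc}}_{(<i)}$, then the $\univ{\vec{\struc}}_{(>i)}$-witness must lie in $\trace_0^R$'s right-interface (since $\trace_0^L$'s single-vertex target is $z$). I then unfold $\trace_0^R \in \subSPRr$ according to its grammar $\graph[3] \series \fork^{1}_{1} \series (\trace_0^{R,1} \parallel \trace_0^{R,2})$, which re-exposes parallel structure after an SP prefix; this reduces to the parallel case on a suitably rearranged $\trace[1]$, where the prefix $(a_1 \parallel \cdots) \series \trace_0^L \series \graph[3] \series \fork^{1}_{1}$ gets absorbed into boundary edges feeding the two parallel pieces. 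The symmetric argument handles $z \in \univ{\vec{\struc}}_{(>i)}$ by unfolding $\trace_0^L \in \subSPRl$.

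The main technical obstacle is precisely this sequential case with $z \notin \univ{\vec{\struc}}_{(i)}$: because $\trace_0^L, \trace_0^R$ live in $\subSPRl, \subSPRr$ rather than $\subSPR$, the inductive hypothesis does not apply to them directly, and one must either strengthen the statement to allow non-$\univ{\vec{\struc}}_{(i)}$ interface labels on the cut side or repeatedly unwind the $\subSPRl/\subSPRr$ grammar until parallel structure surfaces. The locality observation above is what guarantees that such unwinding eventually produces a $\univ{\vec{\struc}}_{(i)}$-labeled cut, because any path from $z \in \univ{\vec{\struc}}_{(<i)}$ to the $\univ{\vec{\struc}}_{(>i)}$-witness must cross $\univ{\vec{\struc}}_{(i)}$; management of interface permutations when invoking the interchange law is otherwise routine.
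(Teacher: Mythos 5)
Your proposal has the right skeleton---recurse over the sub-SP forest, cut at $\univ{\vec{\struc}}_{(i)}$-labeled glue vertices, and use locality to forbid edges jumping directly from $\univ{\vec{\struc}}_{(<i)}$ to $\univ{\vec{\struc}}_{(>i)}$---but the step you yourself flag as ``the main technical obstacle'' is where the real content of the lemma lives, and you do not resolve it. When no glue vertex of the $\subSPRl/\subSPRr$ forest is $\univ{\vec{\struc}}_{(i)}$-labeled, the crossing from the $(<i)$-witness to the $(>i)$-witness happens \emph{inside} one of the SP components of the forest, and there the cut is not a single vertex but a multi-vertex frontier that must pass through every parallel branch of that component simultaneously. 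The paper isolates exactly this as a separate lemma (\Cref{lemma: interpolation SPR}): every $i$-split local SP run factors as $\trace[2] \series \trace[3]$ with \emph{all} of $\trace[2]$'s targets in $\univ{\vec{\struc}}_{(i)}$. Its proof is an induction on the SP grammar whose crucial case is $\fork^{\lab}_{1} \series (\trace' \parallel \trace'') \series \join^{\lab[2]}_{1}$: since the whole run shares its unique source and target with both branches, being $i$-split forces \emph{both} $\trace'$ and $\trace''$ to be $i$-split, so each can be cut at an $(i)$-frontier independently and the two cuts assemble into one coherent frontier. Your ``key enabling observation'' (no single edge crosses) is only the base case of that induction; the assertion that ``unwinding eventually produces a $\univ{\vec{\struc}}_{(i)}$-labeled cut'' is precisely the statement that needs this parallel-case propagation, and without it the claim is unjustified. (\Cref{remark: non-sub-SP runs} shows the conclusion genuinely fails once the SP discipline is dropped, so some such induction over the SP structure is unavoidable.)

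Two further concrete problems. First, your reduction of the sequential case to the parallel case does not match your own induction hypothesis: after unfolding $\trace_0^{R} = \graph[3] \series \fork^{1}_{1} \series (\trace_0^{R,1} \parallel \trace_0^{R,2})$, the prefix $(a_1 \parallel \dots \parallel a_n) \series \trace_0^{L} \series \graph[3] \series \fork^{1}_{1}$ is not a parallel composition of single atomic boundary edges, so the rearranged run is not an instance of the form to which the IH applies; this is the place where one ``must either strengthen the statement or unwind the grammar,'' and neither option is carried out. Second, in the sequential case with glue vertex $z \in \univ{\vec{\struc}}_{(<i)}$, your claim that the $(>i)$-witness must lie in $\trace_0^{R}$'s right interface is false: it may be one of $\trace_0^{L}$'s sources, in which case the $i$-split SP component sits inside $\trace_0^{L}$. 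The paper's proof avoids both issues by not recursing on the lemma itself: it shears off parallel factors directly (their cut interface consists of outer interface labels, already in $\univ{\vec{\struc}}_{(i)}$), views $\trace_0^{L} \series \trace_0^{R}$ as a tree of SP runs, and either cuts at an $(i)$-labeled tree vertex or locates an $i$-split SP run on the tree path between the two witnesses and applies \Cref{lemma: interpolation SPR} to it.
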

\begin{proof}
    Let $\trace_0 = (\trace[2]_0 \series \trace[3]_0)$
    where $\trace[2]_0 \in \lsubSPRl^{(\odot\vec{\struc})_{\bullet}}$ and $\trace[3]_0 \in \lsubSPRr^{(\odot\vec{\struc})_{\bullet}}$.
    Note that $\trace[2]_0$ and $\trace[3]_0$ can be viewed as \kl{trees} of $\lSPR^{(\odot\vec{\struc})_{\bullet}}$ runs.
    Let $v_k$ be such \kl{vertices} in the \kl{tree} and let $\lab[3]_k$ be the \kl{label} of $v_k$.
    Let $\textcolor{red!60}{\lab[3]_{<}}$ and $\textcolor{blue}{\lab[3]_{>}}$ be 
    an $\univ{\vec{\struc}}_{(<i)}$-labeled \kl{vertex} and an $\univ{\vec{\struc}}_{(>i)}$-labeled \kl{vertex} 
    in \kl{source} or \kl{target} \kl{vertices} of $\trace[1]$ (as $\trace[1]$ is \kl{$i$-split}).
    For instance, we consider the following $\trace[2]_0$ and $\trace[3]_0$ (where $\textcolor{red!60}{\lab[3]_{<}} = \textcolor{red!60}{\lab[3]_{2}}$ and $\textcolor{blue}{\lab[3]_{>}} = \textcolor{blue}{\lab[3]_{15}}$):
    \begin{center}
        $\trace[2]_0 = \left(\hspace{-.4em}\begin{tikzpicture}[baseline = -5.5ex, yscale=-1]
            \foreach \x/\y/\col/\nlab in {
                0/0/black/$\lab[3]_1$,0/2/red!60/$\lab[3]_2$,0/5/black/$\lab[3]_3$,
                1/0/black/$\lab[3]_4$,1/2/red!60/$\lab[3]_5$,
                2/1/red!60/$\lab[3]_6$,
                3/1/blue/$\lab[3]_7$,3/5/black/$\lab[3]_8$,
                4/3/blue/$\lab[3]_9$,
                5/3/blue/$\lab[3]_{10}$}{
               \node[mynode,draw = \col,text = \col, circle]  (\x X\y) at (1.2*\x,.3*\y) {\tiny \nlab};
            }
            \foreach \x/\y/\label in {
                1/1,
                3/3}{
               \node[inner sep = 0, outer sep = 0, minimum width = 0] (\x X\y) at (1.2*\x,.3*\y) {};
            }
            \foreach \x/\y/\col/\nlab in {
                0/2/red!60/$\lab[3]_2$,1/2/red!60/$\lab[3]_5$,2/1/red!60/$\lab[3]_6$}{
               \node[below right = -0.6ex of \x X\y, text = \col](\x X\y p) {\tiny $<$};
            }
            \foreach \x/\y/\col/\nlab in {
                4/3/blue/$\lab[3]_9$,5/3/blue/$\lab[3]_{10}$}{
               \node[below right = -0.6ex of \x X\y,text = \col](\x X\y p) {\tiny $>$};
            }
            \foreach \x/\y/\col/\nlab in {
                3/1/blue/$\lab[3]_7$}{
               \node[above right = -0.6ex of \x X\y,text = \col](\x X\y p) {\tiny $>$};
            }
            \node[left = 4pt of 0X0](s1){\tiny $1$}; \path (s1) edge[earrow, ->] (0X0);
            \node[left = 4pt of 0X2](s2){\tiny $2$}; \path (s2) edge[earrow, ->] (0X2);
            \node[left = 4pt of 0X5](s3){\tiny $3$}; \path (s3) edge[earrow, ->] (0X5);
            \node[right = 4pt of 5X3](t1){}; \path (5X3) edge[earrow, ->] (t1);
            \foreach \xp/\x/\y in {1/2/1,3/4/3}{
                \node (\xp X\y) at (1.2*\xp,.3*\y) {};
                \path (\xp X\y) edge [opacity = 0] node[pos= .5, elabel](\x X\y Xj){$\join$}(\x X\y);
            }
            \graph[use existing nodes, edges={color=black, pos = .5, earrow, line width = .5pt, 
            },edge quotes={fill=white, inner sep=1pt,font= \scriptsize}]{
            {0X0} ->["$\trace_1$"] {1X0};
            {0X2} ->["$\trace_2$", line width = 1.2pt] {1X2};
            {1X0} --["$1$"{auto}]{2X1Xj};
            {1X2} --["$2$"{auto, below right= .3pt}, line width = 1.2pt] {2X1Xj};
            {2X1Xj} ->[line width = 1.2pt] {2X1};
            {2X1} ->["$\trace_3$", line width = 1.2pt] {3X1};
            {0X5} ->["$\trace_4$"] {3X5};
            {3X1} --["$1$"{auto}, line width = 1.2pt] {4X3Xj};
            {3X5} --["$2$"{auto, below right= .3pt}] {4X3Xj} ->[line width = 1.2pt]{4X3};
            {4X3} ->["$\trace_5$", line width = 1.2pt] {5X3};
            };
        \end{tikzpicture}\hspace{-.7em}\right)$,
        $\trace[3]_0 = \left(\hspace{-.9em}\begin{tikzpicture}[baseline = -6.5ex, yscale=-1]
            \foreach \x/\y/\col/\nlab in {
                5/3/blue/$\lab[3]_{10}$,
                6/3/blue/$\lab[3]_{11}$,
                7/1/black/$\lab[3]_{12}$,7/5/blue/$\lab[3]_{13}$,
                8/1/black/$\lab[3]_{14}$,8/5/blue/$\lab[3]_{15}$}{
               \node[mynode,draw = \col,text = \col, circle]  (\x X\y) at (1.2*\x,.3*\y) {\scriptsize \nlab};
            }
            \foreach \x/\y/\label in {
                7/3}{
               \node[inner sep = 0, outer sep = 0, minimum width = 0] (\x X\y) at (1.2*\x,.3*\y) {};
            }
            \foreach \x/\y/\col/\nlab in {
                5/3/blue/$\lab[3]_{10}$, 6/3/blue/$\lab[3]_{11}$,
                7/5/blue/$\lab[3]_{13}$, 8/5/blue/$\lab[3]_{15}$
            }{
               \node[below right = -0.6ex of \x X\y,text = \col](\x X\y p) {\tiny $>$};
            }
            \node[left = 4pt of 5X3](s1){}; \path (s1) edge[earrow, ->] (5X3);
            \node[right = 4pt of 8X1](t1){\tiny $1$}; \path (8X1) edge[earrow, ->] (t1);
            \node[right = 4pt of 8X5](t2){\tiny $2$}; \path (8X5) edge[earrow, ->] (t2);
            \foreach \xp/\x/\y in {6/7/3}{
                \node  (\x X\y) at (1.22*\x,.3*\y) {};
                \path (\xp X\y) edge [opacity = 0] node[pos= .5, elabel](\xp X\y Xf){$\fork$}(\x X\y);
            }
            \graph[use existing nodes, edges={color=black, pos = .5, earrow, line width = .5pt, 
            },edge quotes={fill=white, inner sep=1pt,font= \scriptsize}]{
            {5X3} ->["$\trace_{6}$", pos = .4, line width = 1.2pt] {6X3};
            {6X3} --[line width = 1.2pt] {6X3Xf} ->["$1$"{auto}]{7X1};
            {6X3Xf} ->["$2$"{auto, below left= .3pt}, line width = 1.2pt] {7X5};
            {7X1} ->["$\trace_{7}$", pos = .4, ] {8X1};
            {7X5} ->["$\trace_{8}$", pos = .4, line width = 1.2pt] {8X5};
            };
        \end{tikzpicture}\hspace{-.4em}\right)$.
    \end{center}
    We distinguish the following cases:
    \begin{itemize}
        \item Case $\lab[3]_k \in \univ{\vec{\struc}}_{(i)}$ for some $v_k$ on $\trace[2]_0$:
        By letting $\trace[2]$ and $\trace[3]$ be such that $\trace[1] = \trace[2] \series \trace[3]$ and $\trace[2]$ consists of the edges on the left of $\lab[3]_k$, this case has been proved.
        For instance, if $\lab[3]_6 \in \univ{\vec{\struc}}_{(i)}$ in the instance above, then $\trace[2] = \left(\hspace{-.3em}\begin{tikzpicture}[baseline = -4.4ex, yscale=-1]
            \foreach \x/\y/\col/\nlab in {
                0/0/black/$\lab[3]_1$,0/2/black/$\lab[3]_2$, 
                1/0/black/$\lab[3]_4$,1/2/black/$\lab[3]_5$,
                2/1/black/$\lab[3]_6$}{
               \node[mynode,draw = \col,text = \col, circle]  (\x X\y) at (1.2*\x,.3*\y) {\tiny \nlab};
            }
            \foreach \x/\y/\label in {
                1/1}{
               \node[inner sep = 0, outer sep = 0, minimum width = 0] (\x X\y) at (1.2*\x,.3*\y) {};
            }
            \node[mynode,draw = black,text = black, circle] (a1) at (-1.5,0) {\tiny $x_1$};
            \node[mynode,draw = black,text = black, circle] (a2) at (-1.5,0.6) {\tiny $x_2$};
            \node[mynode,draw = black,text = black, circle] (a3) at (-1.5,1.2) {\tiny $x_3$};
            \node[left = 4pt of a1](s1){\tiny $1$}; \path (s1) edge[earrow, ->] (a1);
            \node[left = 4pt of a2](s2){\tiny $2$}; \path (s2) edge[earrow, ->] (a2);
            \node[left = 4pt of a3](s3){\tiny $3$}; \path (s3) edge[earrow, ->] (a3);
            \node[right = 4pt of 2X1](t1){\tiny $1$}; \path (2X1) edge[earrow, ->] (t1);
            \node[right = 4pt of a3](t2){\tiny $2$}; \path (a3) edge[earrow, ->] (t2);
            \foreach \xp/\x/\y in {1/2/1}{
                \node (\xp X\y) at (1.2*\xp,.3*\y) {};
                \path (\xp X\y) edge [opacity = 0] node[pos= .5, elabel](\x X\y Xj){$\join$}(\x X\y);
            }
            \graph[use existing nodes, edges={color=black, pos = .5, earrow, line width = .5pt, 
            },edge quotes={fill=white, inner sep=1pt,font= \scriptsize}]{
            {a1} ->["$a_1$"] {0X0};
            {a2} ->["$a_2$"] {0X2};
            {0X0} ->["$\trace_1$"] {1X0};
            {0X2} ->["$\trace_2$"] {1X2};
            {1X0} --["$1$"{auto}] {2X1Xj};
            {1X2} --["$2$"{auto, below right= .3pt}] {2X1Xj} -> {2X1};
            };
        \end{tikzpicture}\hspace{-.5em}\right)$.
        \item Case $\lab[3]_k \in \univ{\vec{\struc}}_{(i)}$ for some $v_k$ on $\trace[3]_0$:
        In the same way as above.
        \item Otherwise, every $\lab[3]_k$ is either $\univ{\vec{\struc}}_{(<i)}$- or $\univ{\vec{\struc}}_{(>i)}$-labeled.
        We consider a path between $\textcolor{red!60}{\lab[3]_{<}}$ and $\textcolor{blue}{\lab[3]_{>}}$ (the bold edges in the above).
        Then on the path, there is some $k$ such that 
        there are an $\univ{\vec{\struc}}_{(<i)}$-labeled \kl{vertex}
        and an $\univ{\vec{\struc}}_{(>i)}$-labeled \kl{vertex} in \kl{source} or \kl{target} \kl{vertices} of $\trace_k$
        (\eg, in the instance above, $\trace_{3}$ is an expected $\lSPR^{(\odot\vec{\struc})_{\bullet}}$ run where each $\textcolor{red!60}{<}$ and $\textcolor{blue}{>}$ indicates $\univ{\vec{\struc}}_{(<i)}$-labeled and $\univ{\vec{\struc}}_{(>i)}$-labeled, respectively).
        We distinguish the following cases:
        \begin{itemize}
            \item Case $\trace_k$ is on $\trace[2]_0$:
            Let $\trace[2]_k$ and $\trace[3]_k$ be the ones obtained from $\trace[1]_k$ by \Cref{lemma: interpolation SPR}.
            Let $\trace[2]',\trace[3]', \vec{z}_1, \vec{z}_2$ be such that
            $\trace[2]_0 = \trace[2]' \series (\id^{\vec{z}_1} \parallel \trace_k \parallel \id^{\vec{z}_2}) \series \trace[3]'$ and
            $\trace[2]'$ consists of the edges on the left of the \kl{source} of $\trace_k$.
            Then by letting $\trace[2] = \trace[2]' \series (\id^{\vec{z}_1} \parallel \trace[2]_k \parallel \id^{\vec{z}_2})$ and $\trace[3] = (\id^{\vec{z}_1} \parallel \trace[3]_k \parallel \id^{\vec{z}_2}) \series \trace[3]' \series \trace[3]_0$, this case has been proved.
            \item Otherwise (Case $\trace_k$ is on $\trace[3]_0$):
            In the same way as above.
        \end{itemize}
    \end{itemize}
    Hence, this completes the proof.
\end{proof}
By \Cref{lemma: interpolation lsubSPR 2}, we can show the decomposition theorem for \kl{local} \kl{sub-SP runs}, as follows.
\begin{lem}[Decomposition theorem for local sub-SP runs]\label{lemma: decomposition lsubSPR}
    Let $\vec{\struc} = \struc_1 \dots \struc_n \in \STR^{+}$.
    \[\bigcup_{i = 1}^{n} \lsubSPR^{\vec{\struc}}_{(i)}  \;=\; \odot_{i = 1}^{n} \lsubSPR^{\struc_i}.\]
\end{lem}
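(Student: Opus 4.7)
The inclusion $\odot_{i = 1}^{n} \lsubSPR^{(\struc_i)_\bullet} \subseteq \bigcup_{i=1}^{n} \lsubSPR^{\vec{\struc}}_{(i)}$ is immediate from \Cref{definition: decomposition lsubSPR}, which defines the left-hand side as the minimal subset of the right-hand side closed under the rules (D), (T), (L). The real work is the reverse inclusion, which I would prove by strong induction on the number of edges of $\trace \in \lsubSPR^{\vec{\struc}}_{(i)}$. The central tool is \Cref{lemma: interpolation lsubSPR 2}: whenever a run's interior is $i$-split, that lemma provides a factorisation $\trace = \trace[2] \series \trace[3]$ with both factors back in $\lsubSPR^{\vec{\struc}}_{(i)}$ and glued at an interface labelled in $\univ{\vec{\struc}}_{(i)}$.

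In the base case $\trace = \eps^{\vec{\lab}}$ with $\vec{\lab} \in \univ{\vec{\struc}}_{(i)}^{+}$, one lifts directly via (D) from the analogous identity in $\lsubSPR^{(\struc_i)_\bullet}$. For the inductive step, first suppose all edges of $\trace$ belong to a single $\struc_j$. By locality every edge-endpoint carries a label in $\univ{\vec{\struc}}_{(j)}$, so the only interface labels not in $\univ{\struc_j}$ sit on pure identity vertices (appearing both at the source and at the target with no incident edge). Replace each such foreign label by $\bullet$ to obtain a run $\trace' \in \lsubSPR^{(\struc_j)_\bullet}$, derive $\trace'_{(j)}$ via (D), and then apply (L) one position at a time to restore each $\bullet$ to its original label in $\univ{\vec{\struc}}_{(i)}$. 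Each intermediate run is legal because $\bullet \in \univ{\vec{\struc}}_{(k)}$ for every $k$, so every partially-restored interface still fits in $\univ{\vec{\struc}}_{(i)}$.

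The remaining case --- $\trace$ uses edges from two distinct $\struc_j$'s --- is the main obstacle. My plan is to factor $\trace$ in the bookend form required by \Cref{lemma: interpolation lsubSPR 2} (padding with identity edges if necessary) and exhibit an interior component $\trace_0$ that is $i'$-split for a suitable $i'$. If the edge-support of $\trace$ already contains indices both strictly below and strictly above $i$, one may take $i' = i$, as the interior then necessarily visits labels in $\univ{\vec{\struc}}_{(<i)}$ and $\univ{\vec{\struc}}_{(>i)}$; otherwise the edge-support lies strictly to one side of $i$, and one picks $i'$ inside that support, viewing $\trace$ as sitting in $\lsubSPR^{\vec{\struc}}_{(i')}$ after using rule (L) to send its foreign interface labels to $\bullet$. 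Applying the interpolation lemma then yields $\trace = \trace[2] \series \trace[3]$ with both factors in $\lsubSPR^{\vec{\struc}}_{(i')}$ and strictly fewer edges, after which the induction hypothesis and rule (T) finish the derivation. The delicate point is locating the required $i'$-split component: this rests on the forest-like, N-free structure of sub-SP runs (\Cref{proposition: SSPR} together with \Cref{remark: N-free}), which forces any two edges drawn from distinct $\struc_j$'s to be joined by a chain of intermediate vertices whose labels realise the desired split across the path decomposition.
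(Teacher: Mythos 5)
Your overall strategy --- the trivial $\supseteq$ direction, induction on the number of edges, and \Cref{lemma: interpolation lsubSPR 2} as the source of the split --- matches the paper, and your base case and single-structure case are sound. The gap is in the multi-structure case, and it is not a matter of polish: your case split is the wrong one. Being $i$-split is a property of the \emph{interface labels} of the interior block $\trace_0$ (the inner endpoints $x_1,\dots,x_n,y_1,\dots,y_m$ of the bookend edges), not of where the edges of $\trace$ come from, so ``the edge-support straddles $i$'' does not imply the hypothesis of the interpolation lemma. For instance, if every $\struc_j$ has universe $\set{1,2}$, the edge $a\colon 1\to 2$ lives only in $\struc_1$ and $c\colon 2\to 1$ only in $\struc_3$, then the three-vertex run $1 \to^{a} 2 \to^{c} 1$ lies in $\lsubSPR^{\vec{\struc}}_{(2)}$ and its support straddles $i=2$, yet $\univ{\vec{\struc}}_{(<2)} = \univ{\vec{\struc}}_{(>2)} = \emptyset$, so nothing is $2$-split and neither of your two branches applies. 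What is missing is the paper's first pair of inductive cases: when a boundary-adjacent edge (or fork/join) can be executed with all its labels staying in $\univ{\vec{\struc}}_{(i)}$, peel it off with (D) and (T) and recurse. Only once no such edge remains are all inner endpoints outside $\univ{\vec{\struc}}_{(i)}$, which is what makes the trichotomy ``all inner labels left of $i$ / all right of $i$ / mixed'' exhaustive, with the interpolation lemma needed only in the mixed case.

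Your one-sided branch is also incorrect as stated: after re-centering at an index $i'$ inside the support there is still no reason for the interior to be $i'$-split, so the interpolation lemma need not apply there either. The paper handles this case by a different argument that does not use the interpolation lemma at all: take $j$ to be the extremal index among the structures contributing the bookend edges, split the bookend layers $\trace'$ and $\trace'''$ at the edges of $\struc_j$, and use the convexity of the path decomposition (maximality of $j$) to place the middle piece in $\lsubSPR^{\vec{\struc}}_{(j)}$, after which (T) and the induction hypothesis finish. Finally, ``padding with identity edges'' is not an available move: a pure identity thread has no edge to offer, and the paper instead extracts such threads as a parallel $\eps^{\bullet\dots\bullet}$ component via rule (L) before assembling the bookend form. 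So the skeleton is right, but the inductive step needs the peeling cases and a genuinely different treatment of the one-sided situation before the proof goes through.
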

\begin{proof}
    ($\supseteq$):
    Trivial by the definition of them.
    ($\subseteq$):
    We show the following by induction on the number of edges of $\trace$:
    for all $i \in \range{n}$,
    if $\trace \in \lsubSPR^{\vec{\struc}}_{(i)}$, then $\trace \in \odot_{i = 1}^{n} \lsubSPR^{\struc_i}$.
    We distinguish the following cases.
    \begin{itemize}
        \item Case $\trace$ has no edges:
        Then $\tau = \id^{\vec{z}}$.
        By the rule (R), this case has been proved.

        \item Case there is an $a$-labeled edge of $\struc_i$ adjacent to a \kl{source} \kl{vertex}:
        \begin{itemize}
            \item Case $a \in \vsig$:
            Then $\trace = a^{\vec{z},z'}_{k} \series \trace'$ where $\vec{z}z' \in \univ{\vec{\struc}}_{(i)}^{+}$.
            We then have:
            \begin{center}
            \begin{prooftree}
            \hypo{\mathstrut}
            \infer1[D]{a^{\vec{z},z'}_{k}}
            \hypo{\mathstrut}
            \infer[double]1[(IH)]{\trace'}
            \infer2[T]{a^{\vec{z}, z'}_{k} \series \trace'}
            \end{prooftree}.            
            \end{center}
            
            \item Case $a = \fork$:
            In the same way as above.

            \item Case $a = \join$:
            Then $\trace = ((\id^{\vec{z}_1} \parallel (\id^{z} \parallel \rho) \parallel \id^{\vec{z}_2}) \series \join^{\vec{z}_1 z\vec{z}_2}_{k}) \series \trace'$
            or $\trace = ((\id^{\vec{z}_1} \parallel (\rho \parallel \id^{z}) \parallel \id^{\vec{z}_2}) \series \join^{\vec{z}_1 z\vec{z}_2}_{k}) \series \trace'$
            where 
            $\rho \in \lsubSPRl^{\vec{\struc}}_{(i)}$, $\vec{z}_1 z\vec{z}_2 \in \univ{\vec{\struc}}_{(i)}^{+}$, and $k = \|\vec{z}_1\| + 1$.
            Then we have:
            \begin{center}
                \begin{prooftree}
                    \hypo{\mathstrut}
                    \infer[double]1[(IH)]{\id^{\vec{z}_1} \parallel \rho' \parallel \id^{\vec{z}_2}}
                    \hypo{\mathstrut}
                \infer1[D]{\join^{\vec{z}_1 z\vec{z}_2}_{k}}
                \hypo{\mathstrut}
                \infer[double]1[(IH)]{\trace'}
                \infer[double]3[T]{((\id^{\vec{z}_1} \parallel \rho' \parallel \id^{\vec{z}_2}) \series \join^{\vec{z}_1 z\vec{z}_2}_{k}) \series \trace'}
                \end{prooftree}
                where $\rho' = \id^{z} \parallel \rho$ or $\rho' = \rho \parallel \id^{z}$.
            \end{center}
        \end{itemize}

        \item Case there is an edge of $\struc_i$ adjacent to a \kl{target} \kl{vertex}:
        Similar to the case above.

        \item Case $\trace$ has two or more weakly connected components, that is, $\trace = \trace_1 \parallel \dots \parallel \trace[1]_p$,
        where
        $p \ge 1$,
        $\trace[1]_1, \dots, \trace[1]_p \in \lsubSPRl^{(\odot\vec{\struc})_{\bullet}} \series \lsubSPRr^{(\odot\vec{\struc})_{\bullet}}$, and
        $\trace_i$ is non-empty for two or more $i \in \range{p}$:
        Then we can easily take some non-empty $\trace'$ and $\trace''$ such that $\trace = \trace' \parallel \trace''$ holds.
        By letting $\trace[2] = \trace' \parallel \id^{\ty_1(\trace'')}$ and $\trace[3] = \id^{\ty_2(\trace')} \parallel \trace''$,
        we have:
        \begin{center}
            \begin{prooftree}
            \hypo{\mathstrut}
            \infer[double]1[(IH)]{\trace[2]}
            \hypo{\mathstrut}
            \infer[double]1[(IH)]{\trace[3]}
            \infer2[T]{\trace[2] \series \trace[3]}
            \end{prooftree}. 
        \end{center}

        \item %FIXED Adjusted Spacing
        Otherwise, after applying the rule (L) for each \kl{vertex} on both \kl{source} and \kl{target} \kl{vertices},
        $\trace = \id^{\bullet \dots \bullet} \parallel (\trace' \series \trace'' \series \trace''') \parallel \id^{\bullet \dots \bullet}$ holds (by the fourth case) where 
        \[\trace' \!=\! \left(\!\hspace{-.5em}\begin{tikzpicture}[baseline = -4.5ex]
            \graph[grow right = 1.5cm, branch down = 3.5ex, nodes={}]{
            {L1/{$x_1'$}[mynode,draw,circle], L12/{\tiny $\vdots$}[yshift=.5ex], L2/{$x_n'$}[mynode,draw,circle]} -!-
            {L'1/{$x_1$}[mynode,draw,circle, inner sep =0.5pt], L'12/{\tiny $\vdots$}[yshift=.5ex], L'2/{$x_n$}[mynode,draw,circle, inner sep =0.5pt]} -!-
            };
            \node[left = 4pt of L1](s1){}; \path (s1) edge[earrow, ->] (L1);
            \node[left = 4pt of L2](s2){}; \path (s2) edge[earrow, ->] (L2);
            \node[right = 4pt of L'1](t1){}; \path (L'1) edge[earrow, ->] (t1);
            \node[right = 4pt of L'2](t2){}; \path (L'2) edge[earrow, ->] (t2);
            \graph[use existing nodes, edges={color=black, pos = .5, earrow, line width = .5pt, }, edge quotes={fill=white, inner sep=1pt,font= \scriptsize}]{
            {L1} ->["$a_1$", color = black] {L'1};
            {L2} ->["$a_n$", color = black] {L'2};
            };
        \end{tikzpicture}\!\right)\!,
        \trace'' \!=\! \left(\!\begin{tikzpicture}[baseline = -4.5ex]
            \graph[grow right = 1.5cm, branch down = 3.5ex, nodes={}]{
            {L'1/{$x_1$}[mynode,draw,circle, inner sep =0.5pt], L'12/{\tiny $\vdots$}[yshift=.5ex], L'2/{$x_n$}[mynode,draw,circle, inner sep =0.5pt]} -!-
            {/,/,/} -!-
            {R'1/{$y_1$}[mynode,draw,circle, inner sep =0.5pt], R'12/{\tiny $\vdots$}[yshift=.5ex], R'2/{$y_m$}[mynode,draw,circle, inner sep =0.5pt]} -!-
            };
            \node[left = 4pt of L'1](s1){}; \path (s1) edge[earrow, ->] (L'1);
            \node[left = 4pt of L'2](s2){}; \path (s2) edge[earrow, ->] (L'2);
            \node[right = 4pt of R'1](t1){}; \path (R'1) edge[earrow, ->] (t1);
            \node[right = 4pt of R'2](t2){}; \path (R'2) edge[earrow, ->] (t2);
            \graph[use existing nodes, edges={color=black, pos = .5, earrow, line width = .5pt, }, edge quotes={fill=white, inner sep=1pt,font= \scriptsize}]{
            {L'1} -> {R'1};
            {L'2} -> {R'2};
            };
            \draw[fill = white] ($(L'1)+(.5,.15)$) -- ($(L'2)+(.5,-.15)$) -- ($(R'2)+(-.5,-.15)$) -- ($(R'1)+(-.5,.15)$) -- cycle;
            \path (L'12) edge[pos =.5, opacity = 0] node[opacity = 1]{\scriptsize  $\trace''$} (R'12);
        \end{tikzpicture}\!\right)\!,
        \trace''' \!=\! \left(\!\begin{tikzpicture}[baseline = -4.5ex]
            \graph[grow right = 1.5cm, branch down = 3.5ex, nodes={}]{
            {R'1/{$y_1$}[mynode,draw,circle, inner sep =0.5pt], R'12/{\tiny $\vdots$}[yshift=.5ex], R'2/{$y_m$}[mynode,draw,circle, inner sep =0.5pt]} -!-
            {R1/{$y_1'$}[mynode,draw,circle], R12/{\tiny $\vdots$}[yshift=.5ex], R2/{$y_m'$}[mynode,draw,circle]} -!-
            };
            \node[left = 4pt of R'1](s1){}; \path (s1) edge[earrow, ->] (R'1);
            \node[left = 4pt of R'2](s2){}; \path (s2) edge[earrow, ->] (R'2);
            \node[right = 4pt of R1](t1){}; \path (R1) edge[earrow, ->] (t1);
            \node[right = 4pt of R2](t2){}; \path (R2) edge[earrow, ->] (t2);
            \graph[use existing nodes, edges={color=black, pos = .5, earrow, line width = .5pt, }, edge quotes={fill=white, inner sep=1pt,font= \scriptsize}]{
            {R'1} ->["$b_1$", color = black] {R1};
            {R'2} ->["$b_m$", color = black] {R2};
            };
        \end{tikzpicture}\hspace{-.5em}\!\right)\!.\]
        Here, $n, m \ge 1$ (as $\trace[1]$ has some edge by the first case) and 
        $x_1, \dots, x_n, y_1, \dots, y_m \in \univ{\vec{\struc}}_{(<i)} \cup \univ{\vec{\struc}}_{(>i)}$ (by the second and third cases).
        We then distinguish the following cases:
        \begin{itemize}
            \item Case $x_1, \dots, x_n, y_1, \dots, y_m \in \univ{\vec{\struc}}_{(<i)}$:
            Let $j < i$ be the maximum $j$ such that some edge of $a_1, \dots, a_n, b_1, \dots, b_m$ occurs in $\struc_{j}$.
            Let $\trace[2]'$ and $\trace[3]'$ be such that $\trace[1]' = \trace[2]' \series \trace[3]'$
            and $\trace[2]'$ contains edges of $\struc_{j}$.
            Similarly, let $\trace[2]'''$ and $\trace[3]'''$ be such that $\trace[1]''' = \trace[2]''' \series \trace[3]'''$
            and $\trace[3]'''$ contains edges of $\struc_{j}$.
            Then either $\trace[2]'$ or $\trace[3]'''$ is not empty and
            $\trace[2]', \trace[3]' \series \trace[1]'' \series \trace[2]''', \trace[3]''' \in \lsubSPR^{\vec{\struc}}_{(j)}$ ($x_1, \dots, x_n, y_1, \dots, y_m \in \univ{\vec{\struc}}_{(j)}$ holds by the maximality of $j$).
            Then we have:
            \begin{center}
                \begin{prooftree}
                    \hypo{\mathstrut}
                    \infer[double]1[(IH)]{\id^{\bullet \dots \bullet} \parallel \trace[2]' \parallel \id^{\bullet \dots \bullet}}
                \hypo{\mathstrut}
                \infer[double]1[(IH)]{\id^{\bullet \dots \bullet} \parallel (\trace[3]' \series \trace[1]'' \series \trace[2]''') \parallel \id^{\bullet \dots \bullet}}
                \hypo{\mathstrut}
                \infer[double]1[(IH)]{\id^{\bullet \dots \bullet} \parallel \trace[3]''' \parallel \id^{\bullet \dots \bullet}}
                \infer[double]3[T]{\id^{\bullet \dots \bullet} \parallel (\trace[2]' \series (\trace[3]' \series \trace[1]'' \series \trace[2]''') \series \trace[3]''') \parallel \id^{\bullet \dots \bullet}}
                \end{prooftree}.
            \end{center}
            \item Case $x_1, \dots, x_n, y_1, \dots, y_m \in \univ{\vec{\struc}}_{(> i)}$:
            In the same way as above.
            
            \item Otherwise,
            by the interpolation lemma (\Cref{lemma: interpolation lsubSPR 2}),
            there are non-empty $\trace[2]$ and $\trace[3]$ such that
            $\trace[1] = \id^{\bullet \dots \bullet} \parallel (\trace[2] \series \trace[3]) \parallel \id^{\bullet \dots \bullet}$ and
            $\trace[2],\trace[3] \in \lsubSPR^{\vec{\struc}}_{(i)}$.
            Thus we have:
            \begin{center}
                \begin{prooftree}
                \hypo{\mathstrut}
                \infer[double]1[(IH)]{\id^{\bullet \dots \bullet} \parallel \trace[2] \parallel \id^{\bullet \dots \bullet}}
                \hypo{\mathstrut}
                \infer[double]1[(IH)]{\id^{\bullet \dots \bullet} \parallel \trace[3] \parallel \id^{\bullet \dots \bullet}}
                \infer2[T]{\id^{\bullet \dots \bullet} \parallel (\trace[2] \series \trace[3]) \parallel \id^{\bullet \dots \bullet}}
                \end{prooftree}.
            \end{center}
        \end{itemize}
    \end{itemize}
    Hence, this completes the proof.
\end{proof}

\begin{rem}[On non-sub-SP runs]\label{remark: non-sub-SP runs}
    If a \kl{local} \kl[$\struc$-run]{$\odot \vec{\struc}$-run} $\trace$ is not \kl{sub-SP},
    we may not be able to decompose $\trace$ (and hence, \Cref{lemma: decomposition lsubSPR} fails for non-sub-SP \kl{runs}).
    For instance, we cannot decompose for the following $\trace$:
     \begin{tikzpicture}[baseline = -2.5ex, yscale=-1]
            \foreach \x/\y/\col/\nlab/\style/\m in {
                0/0/black/{}/{thick}/, 0/2/black/{}/{thick}/,
                2/1/black/{}/{thick}/, 
                4/0/red/{}, 4/2/blue/{}/,
                6/1/black/{}/{thick}/, 
                8/0/red/{}, 8/2/blue/{}/,
                10/1/black/{}/{thick}/, 
                12/0/black/{}/{thick}/, 12/2/black/{}/{thick}/
                }{
               \node[mynode,draw = \col,text = \col, circle, \style]  (\x X\y) at (0.5*\x,.3*\y) {\tiny \nlab};
            }
            \foreach \x/\y/\label in {
                1/0/,
                3/2/,
                5/2/,
                7/0/,
                9/0/,
                11/2/
                }{
               \node[inner sep = 0, outer sep = 0, minimum width = 0] (\x X\y) at (0.5*\x,.3*\y) {};
            }
            \node[left = 4pt of 0X0](s1){\tiny $1$}; \path (s1) edge[earrow, ->] (0X0);
            \node[left = 4pt of 0X2](s2){\tiny $2$}; \path (s2) edge[earrow, ->] (0X2);
            \node[right = 4pt of 12X0](t1){\tiny $1$}; \path (12X0) edge[earrow, ->] (t1);
            \node[right = 4pt of 12X2](t2){\tiny $2$}; \path (12X2) edge[earrow, ->] (t2);
            \graph[use existing nodes, edges={color=black, pos = .5, earrow, line width = .5pt, 
            decorate, decoration={snake, segment length=2mm, amplitude=.2mm}
            },edge quotes={fill=white, inner sep=1pt,font= \scriptsize}]{
            {0X0} --[color = red!60] {1X0} ->[color=red!60] {4X0,2X1};
            {4X0,6X1} --[color = red!60] {7X0} --[color = red!60] {8X0} --[color=red!60] {9X0} ->[color=red!60] {10X1, 12X0};
            {0X2,2X1} --[color = blue] {3X2} ->[color=blue] {4X2};
            {4X2} --[color = blue] {5X2} ->[color= blue] {6X1, 8X2}; 
            {10X1,8X2} --[color = blue] {11X2} ->[color= blue] {12X2};
            };
        \end{tikzpicture}.
\end{rem}

\subsection{Properties of derivatives}\label{section: derivative properties}
In this subsection, we get back to \kl{derivatives}, and we show some properties of \kl{derivatives}.
To prove \Cref{theorem: decomposition}, we will use them for each rule: \Cref{lemma: i} for (D), \Cref{lemma: L} for (L), and \Cref{lemma: T} for (T), respectively.

\begin{lem}\label{lemma: i}
    Let $\vec{\struc} = \struc_1 \dots \struc_n \in \STR^{+}$ and $j \in \range{n}$.
    For all \kl{lKL terms} $\lterm[1],\lterm[2]$ and  \kl[$\struc$-runs]{$(\struc_j)_{\bullet}$-runs} $\trace[1]$,
    if $\lterm[1]_{(j)} \longrightarrow^{(\odot \vec{\struc})_{\bullet}}_{\trace[1]_{(j)}} \lterm[2]_{(j)}$,
    then $\lterm[1] \longrightarrow^{(\struc[1]_{j})_{\bullet}}_{\trace[1]} \lterm[2]$.
\end{lem}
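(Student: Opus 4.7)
The plan is to proceed by induction on the derivation tree of $\lterm[1]_{(j)} \longrightarrow^{(\odot \vec{\struc})_\bullet}_{\trace[1]_{(j)}} \lterm[2]_{(j)}$. Before tackling the cases, I will first establish a label invariant: throughout any subderivation, every label appearing in every term (including intermediate $\lterm[3]$ introduced by the transitivity rule (T)) is of the form $[\tuple{j, x}]_\sim$ for some $x \in \univ{(\struc_j)_\bullet}$, i.e., lies in the image of the injective translation $(\cdot)_{(j)}$. The endpoints satisfy this by hypothesis, and the only rules in \Cref{definition: derivative} that can introduce new labels into terms are the base rule for $@\lab.a$ (which adds the target $\lab_2$ of an atomic run $a^{\lab_1, \lab_2}_1$) and the fork rule for $\intersection$ (which reuses the existing label); in both cases any fresh label comes from an atomic piece of $\trace[1]_{(j)}$, and since $\trace[1]$ is an $(\struc_j)_\bullet$-run, all its atomic pieces live on $\univ{(\struc_j)_\bullet}$ and translate to labels in the image of $(\cdot)_{(j)}$.

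Having the invariant, the idea is to replay the derivation on $(\struc_j)_\bullet$ by untranslating every label $[\tuple{j, x}]_\sim \mapsto x$. The only rule whose applicability depends non-trivially on the ambient structure is the base rule $@\lab.a \longrightarrow^{\struc_\bullet}_{a^{\lab_1, \lab_2}_1} @\lab_2.\eps$, which requires $\tuple{\lab_1, \lab_2} \in \jump{a}_\struc$. Here the key observation is that the atomic run $a^{\lab_1, \lab_2}_1$ used in the derivation arises as the $(\cdot)_{(j)}$-translation of some atomic piece $a^{x_1, x_2}_1$ of $\trace[1]$, so $\tuple{x_1, x_2} \in a^{(\struc_j)_\bullet}$ and the rule applies on $(\struc_j)_\bullet$ with the untranslated labels. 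All remaining rules are syntactic: they depend only on the term shape, on $\EPS_{\lab}$ (defined syntactically, oblivious to the ambient structure), or on $\overrightarrow{\mathsf{lab}}$; untranslation commutes with each of these, so every such rule lifts directly to $(\struc_j)_\bullet$.

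The main obstacle I anticipate is justifying carefully that the atomic runs appearing anywhere in the derivation are exactly the $(\cdot)_{(j)}$-translations of atomic pieces of $\trace[1]$, rather than arbitrary atomic $(\odot\vec{\struc})_\bullet$-runs that might use labels outside $\univ{\vec{\struc}}_{(j)}$. This will follow by observing that the (T) rule decomposes the overall run additively via series composition, while every non-(T), non-(R) rule consumes exactly one atomic run; an induction on derivation depth then shows that each such atomic run must lie in a series decomposition of $\trace[1]_{(j)}$, and since series composition commutes with $(\cdot)_{(j)}$, each atomic piece is the image of a unique atomic piece of $\trace[1]$. Once that bookkeeping is in place, the case analysis over the rules becomes routine.
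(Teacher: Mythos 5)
Your proposal is correct and follows essentially the same route as the paper, which disposes of this lemma by observing that $\tuple{\lterm[1]_{(j)}, \trace[1]_{(j)}, \lterm[2]_{(j)}}$ and $\tuple{\lterm[1], \trace[1], \lterm[2]}$ coincide up to renaming of vertex labels and then doing an easy induction on the derivation tree. Your label invariant and the observation that the only structure-dependent rule (the base rule for variables) is validated because its atomic run is a piece of the $(\struc_j)_{\bullet}$-run $\trace[1]$ are exactly the details the paper leaves implicit.
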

\begin{proof}
    By easy induction on the derivation tree with respect to \Cref{definition: derivative}.
    Note that $\tuple{\lterm[1]_{(j)}, \trace[1]_{(j)}, \lterm[2]_{(j)}}$ and $\tuple{\lterm[1], \trace[1], \lterm[2]}$ are the same up to changing the names of \kl{vertices}' \kl{labels}.
\end{proof}
\begin{lem}\label{lemma: L}
    Let $\struc$ be a \kl{structure}.
    For all \kl{lKL terms} $\lterm[1]$ and $\lterm[2]$, \kl[$\struc$-runs]{$\struc_{\bullet}$-runs} $\trace[1]$, $l$ and $r$ such that $\src^{\trace[1]}_{l} = \tgt^{\trace[1]}_{r}$,
    and $\lab[1], \lab[2] \in \univ{\struc_{\bullet}}$,
    if $\lterm[1][\lab[1]/l] \longrightarrow^{\struc_{\bullet}}_{\trace[1][\lab[1]/\tuple{l, r}]} \lterm[2][\lab[1]/r]$,
    then $\lterm[1][\lab[2]/l] \longrightarrow^{\struc_{\bullet}}_{\trace[1][\lab[2]/\tuple{l, r}]} \lterm[2][\lab[2]/r]$.
\end{lem}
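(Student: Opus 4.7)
The plan is to proceed by induction on the derivation of $\lterm[1][\lab[1]/l] \longrightarrow^{\struc_{\bullet}}_{\trace[1][\lab[1]/\tuple{l,r}]} \lterm[2][\lab[1]/r]$ using the rules of \Cref{definition: derivative}, treating the claim as uniform $\alpha$-renaming invariance of the derivative relation at a distinguished interface position.

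The structural fact driving the induction is the following isolation property: under the hypothesis $\src^{\trace[1]}_{l} = \tgt^{\trace[1]}_{r}$, the single vertex $v \defeq \src^{\trace[1]}_{l} = \tgt^{\trace[1]}_{r}$ has no internal incident edges in the underlying run graph. Indeed, by the run axioms $\mathsf{in}(v) = 1 = \mathsf{out}(v)$, and the contributions of $v$ being the $l$-th source and the $r$-th target already saturate both degrees; hence no other source/target index points to $v$ and no internal edge touches $v$. Therefore the value $\trace[1](v)$ is a ``free'' interface label that can be replaced from $\lab[1]$ by any $\lab[2] \in \univ{\struc_{\bullet}}$ without breaking the graph homomorphism into $\struc_{\bullet}$, regardless of the relational neighborhood of $\lab[1]$ or $\lab[2]$ in $\struc_{\bullet}$.

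Armed with this, I would walk through the rules of \Cref{definition: derivative} and verify that each commutes with the substitution $[\lab[1] \mapsto \lab[2]]$ performed simultaneously at positions $l$, $r$, and $v$. The reflexivity rule $\mathrm{R}$ is immediate. For the atomic rule on $@\lab.a$, the fork rule for $\intersection$, and the join rule for $\intersection_{1}$, the isolation property guarantees that $v$ is never one of the two endpoints acted upon by the rule, so the substitution is transparent and side conditions such as $\tuple{\lab,\lab'} \in \jump{a}_{\struc}$ never involve the renamed label. For the propagation rules of $\compo$, $\compo_{1}$, $\union$, ${}^{*}$, and the two propagations of $\intersection_{1}$, the IH applies directly after pushing the substitution into the premise, and the auxiliary $\EPS_{\cdot}$ side conditions are preserved because $\EPS$ only tests equality between two labels that are substituted in lockstep (cf.\ \Cref{lemma: eps}). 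For the transitive rule $\mathrm{T}$ with a decomposition $\trace[1] = \trace[3] \series \trace[3]'$, a case analysis on where $v$ can sit shows that $v$ cannot lie exclusively in either factor without contradicting its source/target roles, so $v$ must be a middle interface vertex of the decomposition; in each factor $v$ remains isolated and plays the role of an interface vertex, and so the IH applies to the two sub-derivations with interface indices appropriately relocated, renaming the intermediate term $\lterm[3]$ at the middle position shared by $\trace[3]$ and $\trace[3]'$.

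The main obstacle I expect is bookkeeping rather than anything conceptual: tracking the interface indices $(l, r)$ through the structural rules that alter the label sequence via $\overrightarrow{\mathsf{lab}}$ (notably $\intersection_{1}$ together with its $\eps$-paddings) and, in the $\mathrm{T}$ case, correctly identifying the middle index $k$ at which $v$ sits before invoking the IH. Once these conventions are fixed, each inductive step reduces to a routine symbolic verification.
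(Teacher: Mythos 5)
Your proposal is correct and follows essentially the same route as the paper: the paper also argues by induction on the derivation tree of \Cref{definition: derivative}, with the key observation that $\src^{\trace[1]}_{l} = \tgt^{\trace[1]}_{r}$ forces that vertex to be edge-free (its in/out degrees are already saturated by the interface roles), so its label is never inspected by any rule and can be renamed freely. Your write-up merely makes explicit the degree argument and the per-rule bookkeeping that the paper compresses into two sentences.
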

\begin{proof}
    By easy induction on the derivation tree with respect to \Cref{definition: derivative}.
    Since $\src^{\trace[1]}_{l} = \tgt^{\trace[1]}_{r} = v$ for some $v$, the \kl{label} $\lab[1]$ is not used on the derivation tree of $\longrightarrow^{\struc_{\bullet}}_{\trace[1][\lab[1]/\tuple{l, r}]}$,
    thus we can replace the \kl{label} of $v$ with any \kl{label} (in particular, with $\bullet$).
\end{proof}

\begin{lem}\label{lemma: T}
    Let $\struc$ be a \kl{structure}.
    For all \kl{lKL terms} $\lterm[1]$ and $\lterm[2]$ and \kl{$\struc$-runs} $\trace[1]$ and $\trace[2]$,
    if $\lterm[1] \longrightarrow^{\struc}_{\trace[1] \series \trace[2]} \lterm[2]$,
    then there exists some \kl{lKL term} $\lterm[3]$ such that
    $\lterm[1] \longrightarrow^{\struc}_{\trace[1]} \lterm[3]$
    and 
    $\lterm[3] \longrightarrow^{\struc}_{\trace[2]} \lterm[2]$.
\end{lem}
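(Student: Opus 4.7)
The plan is to prove Lemma \ref{lemma: T} by induction on the number of edges of $\trace[1]$, where the inductive step is powered by a sub-lemma that handles a single atomic step; the sub-lemma itself is proved by induction on the derivation tree. Throughout, I rely on Proposition \ref{proposition: struc-run atomic} to factor runs into atomic components.

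For the outer induction on $|\trace[1]|$: the base case $|\trace[1]|=0$ means $\trace[1]=\eps^{\vec{\lab}}$, so $\trace[1] \series \trace[2] = \trace[2]$; taking $\lterm[3]=\lterm[1]$ and applying rule $(\mathrm{R})$ gives $\lterm[1] \longrightarrow^{\struc}_{\trace[1]} \lterm[1]$, and $\lterm[3] \longrightarrow^{\struc}_{\trace[2]} \lterm[2]$ is the hypothesis. For the inductive step, I use Proposition \ref{proposition: struc-run atomic} to write $\trace[1] = a \series \trace[1]'$ with $a$ atomic and $|\trace[1]'|=|\trace[1]|-1$, then invoke the following atomic sub-lemma on $\lterm[1] \longrightarrow^{\struc}_{a \series (\trace[1]' \series \trace[2])} \lterm[2]$ to produce some $\lterm^\bullet$ with $\lterm[1] \longrightarrow^{\struc}_a \lterm^\bullet$ and $\lterm^\bullet \longrightarrow^{\struc}_{\trace[1]' \series \trace[2]} \lterm[2]$; the induction hypothesis applied to the latter yields the desired $\lterm[3]$, and rule $(\mathrm{T})$ combines the two atomic-level conclusions into $\lterm[1] \longrightarrow^{\struc}_{a \series \trace[1]'} \lterm[3]$.

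The atomic sub-lemma states: if $a$ is an atomic $\struc$-run and $\lterm[1] \longrightarrow^{\struc}_{a \series \trace} \lterm[2]$, then there is $\lterm^\bullet$ with $\lterm[1] \longrightarrow^{\struc}_a \lterm^\bullet$ and $\lterm^\bullet \longrightarrow^{\struc}_{\trace} \lterm[2]$. I prove it by structural induction on the derivation tree $D$. Rule $(\mathrm{R})$ at the root is impossible since $a \series \trace$ has at least one edge. If $D$ ends with one of the atomic-derivation rules of Definition \ref{definition: derivative} (e.g. the rule for $@\lab.a$, or the $\fork$/$\join$ rules for $\intersection$/$\intersection_1$), then the produced trace $a \series \trace$ is itself a single atomic run, forcing $\trace$ to be an identity; then $\lterm^\bullet = \lterm[2]$ works, with the atomic rule giving $\lterm[1] \longrightarrow^{\struc}_a \lterm[2]$ and rule $(\mathrm{R})$ closing $\lterm[2] \longrightarrow^{\struc}_{\trace} \lterm[2]$. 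If $D$ ends with $(\mathrm{T})$ via $\lterm[1] \longrightarrow^{\struc}_{\trace'} \lterm'$ and $\lterm' \longrightarrow^{\struc}_{\trace''} \lterm[2]$ with $\trace' \series \trace'' = a \series \trace$, I locate the unique edge of $a$ in either $\trace'$ or $\trace''$. If the edge lies in $\trace'$, the atomic structure of $a$ (whose edge is adjacent to a source of the composite) forces $\trace' = a \series \rho$ for some $\rho$, whence $\trace = \rho \series \trace''$; the IH on the smaller sub-derivation for $\trace' = a \series \rho$ yields $\lterm^\bullet$ with $\lterm[1] \longrightarrow^{\struc}_a \lterm^\bullet$ and $\lterm^\bullet \longrightarrow^{\struc}_\rho \lterm'$, and rule $(\mathrm{T})$ composes this with $\lterm' \longrightarrow^{\struc}_{\trace''} \lterm[2]$ to give $\lterm^\bullet \longrightarrow^{\struc}_{\trace} \lterm[2]$. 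If the edge lies in $\trace''$, then $\trace'$ contains no edge of $a \series \trace$ at all; combined with the interface-matching invariant of $\series$, this forces $\trace'$ to be an identity, so that $\lterm' = \lterm[1]$ and $\trace'' = a \series \trace$, and the IH on $D_2$ gives the conclusion directly.

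The main obstacle is the $(\mathrm{T})$ case of the sub-lemma, specifically the claim that the single edge of $a$ localizes cleanly into $\trace'$ or $\trace''$, with the ``other side'' being an identity precisely when the edge is not there. This rests on the observation that in a series composition $\trace' \series \trace''$, edges partition disjointly between the two factors, and that the atomic runs $a^{\vec{\lab},\lab'}_i$, $\fork^{\vec{\lab}}_i$, $\join^{\vec{\lab}}_i$ all place their unique edge directly adjacent to the source side of the composite, so any series split containing that edge must begin with $a$ as a prefix. A secondary but necessary ingredient is the invariant that a derivation with an identity trace is itself an identity on terms (up to the interface of the $@$-labels), which propagates by a straightforward induction over the rules of Definition \ref{definition: derivative}.
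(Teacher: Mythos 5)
Your outer induction (peeling one atomic run off the left of $\trace[1]$) is fine, and so are the $(\mathrm{R})$ and atomic-rule cases of your sub-lemma. The genuine gap is in the $(\mathrm{T})$ case of the sub-lemma, in the sub-case where the edge of the atomic run $a$ lies in $\trace''$. You claim that then ``$\trace'$ contains no edge of $a \series \trace$ at all'' and is forced to be an identity. This is false: $\trace'$ can contain edges of $\trace$. Concretely, take $\univ{\struc} = \set{x,y}$ with $a^{\struc} = c^{\struc} = \set{\tuple{x,y}}$ and $\lterm[1] = (@x.a) \intersection_{1} (@x.c)$. Using the two parallel congruence rules and $(\mathrm{T})$ one derives $\lterm[1] \longrightarrow^{\struc}_{\trace' \series \trace''} (@y.\eps) \intersection_{1} (@y.\eps)$ with $\trace' = a^{xx,y}_{1}$ and $\trace'' = c^{yx,y}_{2}$. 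Since the two edges are independent, $\trace' \series \trace'' = c^{xx,y}_{2} \series a^{xy,y}_{1}$ (the paper notes exactly this non-uniqueness of atomic factorizations after \Cref{proposition: struc-run atomic}). Applying your sub-lemma to this derivation with the atomic run $c^{xx,y}_{2}$ on the left, the $c$-edge sits in $\trace''$ while $\trace'$ is not an identity, so your recipe produces nothing. The desired interpolant $(@x.a) \intersection_{1} (@y.\eps)$ does exist, but obtaining it requires \emph{commuting} the two derivation steps, not just re-associating them; your induction never performs such a reordering, and a naive attempt to recurse on the second premise re-creates the same problem with the atomic run now at the wrong end. This is precisely the difficulty the paper sidesteps by first proving transitivity elimination (\Cref{proposition: transitivity elimination}): it replaces $\longrightarrow^{\struc}$ by a $(\mathrm{T})$-free big-step relation $\longleadsto^{\struc}$ whose rules each fix the shape of the emitted trace, and then proves \Cref{lemma: T} by induction on that derivation, case-splitting on where the cut $\trace[1] \series \trace[2]$ falls inside each rule's trace.

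Two smaller points. First, your case analysis omits the congruence rules of \Cref{definition: derivative} (the $\compo$, $\compo_{1}$, $\union$, $\bl^*$ and the two $\intersection_{1}$-parallel rules); these traces are neither atomic nor produced by $(\mathrm{R})$ or $(\mathrm{T})$. They are routine (apply the IH to the premise and re-apply the rule), except that the parallel rules emit $\trace[3] \parallel \eps^{\overrightarrow{\mathsf{lab}}(\cdot)}$ and need a short argument that an atomic left factor of such a run restricts to an atomic left factor of $\trace[3]$. Second, your ``secondary ingredient'' --- that a derivation with an identity trace is an identity on terms --- is also false: $@x.(\eps \compo b) \longrightarrow^{\struc}_{\eps^{x}} @x.b$ via the $\EPS$-rule for $\compo$ followed by $(\mathrm{R})$. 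You do not actually need that invariant (composing with an identity-trace step via $(\mathrm{T})$ suffices where you invoke it), but as stated it is incorrect.
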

To prove \Cref{lemma: T}, we use the following alternative definition of $(\longrightarrow^{\struc}_{\trace})$.
Intuitively, this is the ``big-step'' version of $(\longrightarrow^{\struc}_{\trace})$, obtained by eliminating the transitivity rule (T).
\begin{defi}\label{definition: multistep derivative}
    Let $\struc$ be a \kl{structure}.
    The \kl{derivative} relation $\lterm[1] \longleadsto^{\struc}_{\trace[3]} \lterm[2]$,
    where $\lterm[1]$ and $\lterm[2]$ are \kl{lKL terms} and $\trace[3]$ is an \kl{$\struc$-run},
    is defined as the smallest relation closed under the rules:
    \begin{gather*}
        \begin{prooftree}
            \hypo{\tuple{\lab[1], \lab[2]} \in a^{\struc}}
            \infer1{@\lab. a \longleadsto^{\struc}_{a^{x, y}_{1}} @\lab[2].\id}
        \end{prooftree} \mbox{ for $a \in \vsig$}
        \quad
        \begin{prooftree}
            \hypo{@\lab.\term[1] \longleadsto^{\struc}_{\trace[3]} \lterm[1]'}
            \infer1{@\lab.\term[1] \compo \term[2] \longleadsto^{\struc}_{\trace[3]} \lterm[1]' \compo_{1} \term[2]}
        \end{prooftree}  
        \quad
        \begin{prooftree}
            \hypo{@\lab[1].\term[1] \longleadsto^{\struc}_{\trace[3]} \lterm[1]'}
            \hypo{\EPS_{\lab[3]}(\lterm[1]')}
            \hypo{@\lab[3].\term[2] \longleadsto^{\struc}_{\trace[3]'} \lterm[2]'}
            \infer3{@\lab[1].\term[1] \compo \term[2] \longleadsto^{\struc}_{\trace[3] \series \trace[3]'} \lterm[2]'}
        \end{prooftree}
        \\
        \begin{prooftree}
            \hypo{@\lab[1]_0.\term[1] \longleadsto^{\struc}_{\trace[3]_0} \lterm[1]_1'}
            \hypo{\EPS_{\lab[1]_1}(\lterm[1]_1')}
            \hypo{\dots}
            \hypo{@\lab_{n-1}.\term[1] \longleadsto^{\struc}_{\trace[3]_{n-1}} \lterm[1]_n'}
            \hypo{\EPS_{\lab[1]_n}(\lterm[1]_n')}
            \hypo{@\lab[1]_{n}.\term[1] \longleadsto^{\struc}_{\trace[3]_{n}} \lterm[1]_{n+1}'}
            \infer6{ @\lab[1]_0.\term[1]^* \longleadsto^{\struc}_{\trace[3]_0 \series \dots \series \trace[3]_n} \lterm[1]_{n+1}' \compo_{1} \term[1]^*}
        \end{prooftree}
        \\
        \begin{prooftree}
            \hypo{@\lab[1]_0.\term[1] \longleadsto^{\struc}_{\trace[3]_0} \lterm[1]_1'}
            \hypo{\EPS_{\lab[1]_1}(\lterm[1]_1')}
            \hypo{\dots}
            \hypo{@\lab_{n-1}.\term[1] \longleadsto^{\struc}_{\trace[3]_{n-1}} \lterm[1]_n'}
            \hypo{\EPS_{\lab[1]_n}(\lterm[1]_n')}
            \infer5{ @\lab[1]_0.\term[1]^* \longleadsto^{\struc}_{\trace[3]_0 \series \dots \series \trace[3]_{n-1}} @\lab[1]_{n}. \term[1]^*}
        \end{prooftree}
        \\
        \begin{prooftree}
            \hypo{@\lab.\term[1] \longleadsto^{\struc}_{\trace[3]} \lterm[1]'}
            \infer1{@\lab.\term[1] \union \term[2] \longleadsto^{\struc}_{\trace[3]} \lterm[1]'}
        \end{prooftree}
        \quad
        \begin{prooftree}
            \hypo{@\lab.\term[2] \longleadsto^{\struc}_{\trace[3]} \lterm[2]'}
            \infer1{@\lab.\term[1] \union \term[2] \longleadsto^{\struc}_{\trace[3]} \lterm[2]'}
        \end{prooftree}
        \quad                         
        \begin{prooftree}
            \hypo{@\lab.\term[1] \longleadsto^{\struc}_{\trace[3]} \lterm[1]'}
            \hypo{\EPS_{\lab[3]}(\lterm[1]')}
            \hypo{@\lab.\term[2] \longleadsto^{\struc}_{\trace[3]'} \lterm[2]'}
            \hypo{\EPS_{\lab[3]}(\lterm[2]')}
            \infer4{@\lab.\term[1] \intersection \term[2] \longleadsto^{\struc}_{\fork^{x}_{1} \series (\trace[3] \parallel \trace[3]') \series \join^{\lab[3]}_{1}}  @\lab[3].\id}
        \end{prooftree}\\               
        \begin{prooftree}
            \hypo{@\lab.\term[1] \longleadsto^{\struc}_{\trace[3]} \lterm[1]'}
            \hypo{@\lab.\term[2] \longleadsto^{\struc}_{\trace[3]'} \lterm[2]'}
            \infer2{@\lab.\term[1] \intersection \term[2] \longleadsto^{\struc}_{\fork^{x}_{1} \series (\trace[3] \parallel \trace[3]')} \lterm[1]' \intersection_{1} \lterm[2]'}
        \end{prooftree}
        \quad                         
        \begin{prooftree}
            \hypo{\lterm[1] \longleadsto^{\struc}_{\trace[3]} \lterm[1]' \; \EPS_{\lab[3]}(\lterm[1]')}
            \hypo{\lterm[2] \longleadsto^{\struc}_{\trace[3]'} \lterm[2]' \; \EPS_{\lab[3]}(\lterm[2]')}
            \infer2{\lterm[1] \intersection_{1} \lterm[2] \longleadsto^{\struc}_{(\trace[3] \parallel \trace[3]') \series \join^{\lab[3]}_{1}}  @\lab[3].\id}
        \end{prooftree}
        \quad
        \begin{prooftree}
            \hypo{\lterm[1] \longleadsto^{\struc}_{\trace[3]} \lterm[1]'}
            \hypo{\lterm[2] \longleadsto^{\struc}_{\trace[3]'} \lterm[2]'}
            \infer2{\lterm[1] \intersection_{1} \lterm[2] \longleadsto^{\struc}_{(\trace[3] \parallel \trace[3]')} \lterm[1]' \intersection_{1} \lterm[2]'}
        \end{prooftree}
        \\            %FIXED Minor spacing changes
        \begin{prooftree}[center=false]
            \hypo{\lterm[1]  \longleadsto^{\struc}_{\trace[3]} \lterm[1]'}
            \infer1{\lterm[1] \compo_{1} \term[2] \longleadsto^{\struc}_{\trace[3]} \lterm[1]' \compo_{1} \term[2]}
        \end{prooftree}
        \quad
        \begin{prooftree}[center=false]
            \hypo{\lterm[1] \longleadsto^{\struc}_{\trace[3]} \lterm[1]'}
            \hypo{\EPS_{\lab[3]}(\lterm[1]')}
            \hypo{@ \lab[3]. \term[2] \longleadsto^{\struc}_{\trace[3]'} \lterm[2]}
            \infer3{\lterm[1] \compo_{1} \term[2] \longleadsto^{\struc}_{\trace[3] \series \trace[3]'} \lterm[2]}
        \end{prooftree}
        \quad
        \begin{prooftree}[center=false]
            \hypo{\mathstrut}
            \infer1[R]{\lterm[1] \longleadsto^{\struc}_{\id^{\overrightarrow{\mathsf{lab}}(\lterm[1])}} \lterm[1]}
        \end{prooftree}
    \end{gather*}
\end{defi}

\begin{prop}[\Cref{section: proposition: transitivity elimination}]\label{proposition: transitivity elimination}
    For all \kl{$\struc$-runs} $\trace$, we have:
    $(\longrightarrow^{\struc}_{\trace}) = (\longleadsto^{\struc}_{\trace})$.
\end{prop}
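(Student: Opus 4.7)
The proof decomposes into the two inclusions, proved independently.

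For the inclusion $(\longleadsto^{\struc}_{\trace}) \subseteq (\longrightarrow^{\struc}_{\trace})$ (the easy direction), I would induct on the derivation of $\longleadsto^{\struc}_{\trace}$. Every axiom and every unary rule of $\longleadsto$ is either a rule of $\longrightarrow$ outright or the conclusion of such a rule. The rules of $\longleadsto$ that build a composite trace $\trace[1] \series \trace[2]$ from two (or more) subderivations---namely the second $\compo$-rule, the two $\bl^{*}$-rules, the two $\intersection$-rules, the second $\compo_{1}$-rule, and the first $\intersection_{1}$-rule---are simulated by applying the corresponding structural rule of $\longrightarrow$ to each subderivation and combining them with the transitivity rule $(T)$; the $\EPS$ side conditions transfer unchanged. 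The reflexivity rule $(R)$ is common to both systems.

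For the reverse inclusion $(\longrightarrow^{\struc}_{\trace}) \subseteq (\longleadsto^{\struc}_{\trace})$, I would induct on the derivation of $\longrightarrow^{\struc}_{\trace}$. The rule $(R)$ is immediate. Every other structural rule of $\longrightarrow$ matches a single rule of $\longleadsto$, sometimes after padding an auxiliary premise with an $(R)$-derivation $\lterm \longleadsto^{\struc}_{\eps^{\overrightarrow{\mathsf{lab}}(\lterm)}} \lterm$; for instance, the small-step $\compo$-rule with an $\EPS$ side condition on the left is simulated by the big-step $\compo$-rule whose first premise is filled by $(R)$, using that $\eps^{\lab[1]} \series \trace[3] = \trace[3]$. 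The only delicate case is the transitivity rule $(T)$, which reduces the whole inclusion to the following key lemma.

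\smallskip
\noindent\textbf{Transitivity lemma for $\longleadsto$.} \emph{If $\lterm[1] \longleadsto^{\struc}_{\trace[1]} \lterm[3]$ and $\lterm[3] \longleadsto^{\struc}_{\trace[2]} \lterm[2]$, then $\lterm[1] \longleadsto^{\struc}_{\trace[1] \series \trace[2]} \lterm[2]$.}
\smallskip

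I would prove this lemma by induction on the derivation of the first premise. Because every $\longleadsto$ rule pins down the top-level syntactic shape of its conclusion, the derivation of $\lterm[3] \longleadsto^{\struc}_{\trace[2]} \lterm[2]$ is restricted to a short list of possible last rules in each case; the inductive hypothesis then lets the two traces be re-merged into a single $\longleadsto$-conclusion. The base case $(R)$ on the first premise is trivial since $\eps^{\overrightarrow{\mathsf{lab}}(\lterm[1])} \series \trace[2] = \trace[2]$, and symmetrically when the second premise is $(R)$.

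The main obstacle is this transitivity lemma. The subtle subcases are the $\bl^{*}$-rules, whose open variant produces a residual $@\lab.\term[1]^*$ that must then be concatenated with an arbitrary further unfolding of $\term[1]^*$ (so the $n$-fold $\bl^{*}$-rule must be recognised as an instance of itself after concatenation), and the $\intersection$/$\intersection_{1}$-rules: when the first derivation ends at an intermediate $\lterm[1]' \intersection_{1} \lterm[2]'$ and the second derivation continues through either the capping $\join$-rule or the parallel-step rule, the merged trace must re-assemble into the $\fork \series (\cdot \parallel \cdot) \series \join$ or $\fork \series (\cdot \parallel \cdot)$ pattern associated with one of the $\longleadsto$-conclusions for $\intersection$ or $\intersection_{1}$. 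Checking that this reassembly is always available is a mechanical but careful case analysis, relying on the associativity of $\series$, the identities $\eps^{\vec{x}} \series \trace = \trace = \trace \series \eps^{\vec{y}}$, and the bifunctoriality $(\trace[1] \parallel \trace[2]) \series (\trace[1]' \parallel \trace[2]') = (\trace[1] \series \trace[1]') \parallel (\trace[2] \series \trace[2]')$ of $\parallel$ and $\series$.
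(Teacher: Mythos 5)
Your proposal is correct and follows essentially the same route as the paper: one direction by showing each big-step rule of $\longleadsto^{\struc}$ is admissible in $\longrightarrow^{\struc}$ via the rule (T), and the other by reducing everything to the admissibility of transitivity in $\longleadsto^{\struc}$, proved by induction on the derivations with a case analysis whose delicate spots (the $\bl^{*}$-unfoldings and the reassembly of the $\fork \series (\cdot \parallel \cdot) \series \join$ pattern via the interchange law) are exactly the ones the paper works through. The only cosmetic difference is the induction measure for the transitivity lemma (you use the first derivation, the paper uses the sum of the sizes of both derivation trees); both measures decrease in every case, so either works.
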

\begin{proof}[Proof Sketch]
    ($\supseteq$):
    Using the transitivity rule (T), we have that each rule of $\longleadsto^{\struc}_{\trace}$ is admissible in $\longrightarrow^{\struc}_{\trace}$.
    ($\subseteq$):
    We have that the transitivity rule $\begin{prooftree}
        \hypo{\lterm[1] \longleadsto^{\struc}_{\trace[1]} \lterm[3]}
        \hypo{\lterm[3] \longleadsto^{\struc}_{\trace[2]} \lterm[2]}
        \infer2[T]{\lterm[1] \longleadsto^{\struc}_{\trace[1] \series \trace[2]} \lterm[2]}
    \end{prooftree}$ is admissible in $\longleadsto^{\struc}_{\trace}$,
    which is shown by induction on the size of the derivation tree.
\end{proof}
Using \Cref{proposition: transitivity elimination}, we can show \Cref{lemma: T}, as follows.\footnote{Our method can be viewed as an analog of Maehara's method \cite[p.\ 237]{maeharaCraigsInterpolationTheorem1961} in that, after eliminating the rule (T) (which corresponds to the cut rule), we construct an interpolant by induction on derivation trees.}
\begin{proof}[Proof of \Cref{lemma: T}]
    By induction on the derivation tree of $\longleadsto^{\struc}_{\trace[3]}$.

    \begin{itemize}
        \item Case $\begin{prooftree}
            \hypo{\tuple{\lab[1], \lab[2]} \in a^{\struc}}
            \infer1{@\lab. a \longleadsto^{\struc}_{\trace[3]} @\lab[2].\id}
        \end{prooftree}$:
        Then $\trace[3] = \trace[1] \series \trace[2] = a^{\lab[1], \lab[2]}_{1}$.
        We distinguish the following sub-cases:
        \begin{itemize}
            \item Case $\trace[1] = a^{\lab[1], \lab[2]}_{1}$ and $\trace[2] = \id^{\lab[2]}$:
            By letting $\lterm[3] = @\lab[2].\id$.
            \item Case $\trace[1] = \id^{\lab[1]}$ and $\trace[2] = a^{\lab[1], \lab[2]}_{1}$:
            By letting $\lterm[3] = @ \lab[1]. a$.
        \end{itemize}
        
        \item Case $\begin{prooftree}
            \hypo{@\lab.\term[1] \longleadsto^{\struc}_{\trace[3]} \lterm[1]'}
            \infer1{@\lab.\term[1] \union \term[2] \longleadsto^{\struc}_{\trace[3]} \lterm[1]'}
            \end{prooftree}$:
            By IH, there is some $\lterm[1]''$ such that  $@\lab.\term[1] \longleadsto^{\struc}_{\trace[1]} \lterm[1]''$ and $\lterm[1]'' \longleadsto^{\struc}_{\trace[2]} \lterm[1]'$.
            We then have \begin{prooftree}
                    \hypo{@\lab.\term[1] \longleadsto^{\struc}_{\trace[1]} \lterm[1]''}
                    \infer1{@\lab.\term[1] \union \term[2] \longleadsto^{\struc}_{\trace[1]} \lterm[1]''}
                    \end{prooftree}.
            Hence by letting $\lterm[3] = \lterm[1]''$, this case has been proved.
        \item Case $\begin{prooftree}
            \hypo{@\lab.\term[2] \longleadsto^{\struc}_{\trace[3]} \lterm[2]'}
            \infer1{@\lab.\term[1] \union \term[2] \longleadsto^{\struc}_{\trace[3]} \lterm[2]'}
        \end{prooftree}$: 
        Similar to the case above.
        
        \item Case $\begin{prooftree}
            \hypo{@\lab.\term[1] \longleadsto^{\struc}_{\trace[3]} \lterm[1]'}
            \infer1{@\lab.\term[1] \compo \term[2] \longleadsto^{\struc}_{\trace[3]} \lterm[1]' \compo_{1} \term[2]}
        \end{prooftree}$:
        By IH, there is some $\lterm[1]''$ such that $@\lab.\term[1] \longleadsto^{\struc}_{\trace[1]} \lterm[1]''$ and $\lterm[1]'' \longleadsto^{\struc}_{\trace[2]} \lterm[1]'$.
        We then have:
        \begin{gather*}
            \begin{prooftree}
                \hypo{@\lab.\term[1] \longleadsto^{\struc}_{\trace[1]} \lterm[1]''}
                \infer1{@\lab.\term[1] \compo \term[2] \longleadsto^{\struc}_{\trace[1]} \lterm[1]'' \compo_{1} \term[2]}
            \end{prooftree} 
            \quad\mbox{ and }\quad
            \begin{prooftree}
                \hypo{\lterm[1]'' \longleadsto^{\struc}_{\trace[2]} \lterm[1]'}
                \infer1{\lterm[1]'' \compo_{1} \term[2] \longleadsto^{\struc}_{\trace[2]} \lterm[1]' \compo_{1} \term[2]}
            \end{prooftree}.
        \end{gather*}
        Thus by letting $\lterm[3] = \lterm[1]'' \compo_{1} \term[2]$, this case has been proved.
        \item Case $\begin{prooftree}
            \hypo{@\lab[1].\term[1] \longleadsto^{\struc}_{\trace[3]} \lterm[1]'}
            \hypo{\EPS_{\lab[3]}(\lterm[1]')}
            \hypo{@\lab[3].\term[2] \longleadsto^{\struc}_{\trace[3]'} \lterm[2]'}
            \infer3{@\lab[1].\term[1] \compo \term[2] \longleadsto^{\struc}_{\trace[3] \series \trace[3]'} \lterm[2]'}
        \end{prooftree}$:
        We distinguish the following cases:
        \begin{itemize}
            \item Case $\trace[3] = \trace[1] \series \trace[3]''$ (and $\trace[2] = \trace[3]'' \series \trace[3]'$):
            By IH, there is $\lterm[1]''$ such that $@\lab[1].\term[1] \longleadsto^{\struc}_{\trace[1]} \lterm[1]''$ and $\lterm[1]'' \longleadsto^{\struc}_{\trace[3]''} \lterm[1]'$.
            Then,
            \begin{prooftree}[]
                    \hypo{\lterm[1]'' \longleadsto^{\struc}_{\trace[3]''} \lterm[1]'}
                    \hypo{\EPS_{\lab[3]}(\lterm[1]')}
                    \hypo{@\lab[3].\term[2] \longleadsto^{\struc}_{\trace[3]'} \lterm[2]'}
                    \infer3{\lterm[1]'' \longleadsto^{\struc}_{\trace[3]'' \series \trace[3]'} \lterm[2]'}
                \end{prooftree}.
            By letting $\lterm[3] = \lterm[1]''$, this case has been proved.
    
            \item Case $\trace[3]' = \trace[3]'' \series \trace[2]$ (and $\trace[1] = \trace[3] \series \trace[3]''$):
            By IH, there is $\lterm[2]''$ such that $@\lab[3].\term[2] \longleadsto^{\struc}_{\trace[3]''} \lterm[2]''$ and $\lterm[2]'' \longleadsto^{\struc}_{\trace[2]} \lterm[2]'$.
            Then,
            \begin{prooftree}[]
                \hypo{@\lab[1].\term[1] \longleadsto^{\struc}_{\trace[3]} \lterm[1]'}
                \hypo{\EPS_{\lab[3]}(\lterm[1]')}
                \hypo{@\lab[3].\term[2] \longleadsto^{\struc}_{\trace[3]''} \lterm[2]''}
                \infer3{@\lab[1].\term[1] \longleadsto^{\struc}_{\trace[3] \series \trace[3]''} \lterm[2]''}
            \end{prooftree}.
            By letting $\lterm[3] = \lterm[2]''$, this case has been proved.
        \end{itemize}

        \item Case $\begin{prooftree}
            \hypo{\lterm[1]  \longleadsto^{\struc}_{\trace[3]} \lterm[1]'}
            \infer1{\lterm[1] \compo_{1} \term[3] \longleadsto^{\struc}_{\trace[3]} \lterm[1]' \compo_{1} \term[3]}
        \end{prooftree}$, Case $\begin{prooftree}
            \hypo{\lterm[1] \longleadsto^{\struc}_{\trace[3]} \lterm[1]'}
            \hypo{\EPS_{\lab[3]}(\lterm[1]')}
            \hypo{@ \lab[3]. \term[2] \longleadsto^{\struc}_{\trace[3]'} \lterm[2]'}
            \infer3{\lterm[1] \compo_{1} \term[2] \longleadsto^{\struc}_{\trace[3] \series \trace[3]'} \lterm[2]'}
        \end{prooftree}$:
        Similar to the two cases above.
    
        \item Case $\begin{prooftree}[separation=0.8em]
            \hypo{@\lab[1]_0.\term[1] \longleadsto^{\struc}_{\trace[3]_0} \lterm[1]_1'}
            \hypo{\EPS_{\lab[1]_1}(\lterm[1]_1')}
            \hypo{\dots}
            \hypo{@\lab_{n-1}.\term[1] \longleadsto^{\struc}_{\trace[3]_{n-1}} \lterm[1]_n'}
            \hypo{\EPS_{\lab[1]_n}(\lterm[1]_n')}
            \hypo{@\lab[1]_{n}.\term[1] \longleadsto^{\struc}_{\trace[3]_{n}} \lterm[1]_{n+1}'}
            \infer6{ @\lab[1]_0.\term[1]^* \longleadsto^{\struc}_{\trace[3]_0 \series \dots \series \trace[3]_n} \lterm[1]_{n+1}' \compo_{1} \term[1]^*}
        \end{prooftree}$:\\
        Then there are some $i \in \range{0, n}$,
        $\trace[3]_i'$, and $\trace[3]_i''$ such that
        $\trace[3]_i = \trace[3]_i' \series \trace[3]_i''$,
        $\trace[1] = \trace[3]_0 \series \dots \series \trace[3]_{i-1} \series \trace[3]_i'$, and 
        $\trace[2] = \trace[3]_i'' \series \trace[3]_{i+1} \series \dots \series \trace[3]_{n}$.
        By IH, there is some $\lterm[3]'$ such that  $@\lab_{i}.\term[1]_{i} \longleadsto^{\struc}_{\trace[3]_{i}'} \lterm[3]'$ and $\lterm[3]' \longleadsto^{\struc}_{\trace[3]_{i}''} \lterm[1]_{i+1}'$.
        We then have:
        \begin{gather*}
            \begin{prooftree}[separation=0.8em]
                \hypo{@\lab[1]_0.\term[1] \longleadsto^{\struc}_{\trace[3]_0} \lterm[1]_1'}
                \hypo{\EPS_{\lab[1]_1}(\lterm[1]_1')}
                \hypo{\dots}
                \hypo{@\lab_{i-1}.\term[1] \longleadsto^{\struc}_{\trace[3]_{i-1}} \lterm[1]_{i}'}
                \hypo{\EPS_{\lab[1]_{i}}(\lterm[1]_{i}')}
                \hypo{@\lab[1]_{i}.\term[1] \longleadsto^{\struc}_{\trace[3]_{i}'} \lterm[3]'}
                \infer6{ @\lab[1]_0.\term[1]^* \longleadsto^{\struc}_{\trace[3]_0 \series \dots \series \trace[3]_{i-1} \series \trace[3]_i'} \lterm[3]' \compo_{1} \term[1]^*}
            \end{prooftree} \mbox{ and } 
            \\
            \begin{prooftree}[separation=0.7em]
                \hypo{\lterm[3]' \longleadsto^{\struc}_{\trace[3]_i''} \lterm[1]_{i+1}'}
                \hypo{\EPS_{\lab[1]_{i+1}}(\lterm[1]_{i+1}')}
                \hypo{@\lab_{i+1}.\term[1] \longleadsto^{\struc}_{\trace[3]_{i+1}} \lterm[1]_{i+2}'}
                \hypo{\EPS_{\lab[1]_{i+2}}(\lterm[1]_{i+2}')}
                \hypo{\dots}
                \hypo{\EPS_{\lab[1]_n}(\lterm[1]_n')}
                \hypo{@\lab[1]_{n}.\term[1] \longleadsto^{\struc}_{\trace[3]_n} \lterm[1]_{n+1}'}
                \infer5{@\lab[1]_{i+1}.\term[1]^* \longleadsto^{\struc}_{\trace[3]_{i+1} \series \dots \series \trace[3]_n} \lterm[1]_{n+1}' \compo_{1} \term[1]^*}
                \infer3{\lterm[3]' \compo_{1} \term[1]^* \longleadsto^{\struc}_{\trace[3]_{i}'' \series \trace[3]_{i+1} \series \dots \series \trace[3]_n} \lterm[1]_{n+1}' \compo_{1} \term[1]^*}
            \end{prooftree}
        \end{gather*}
        Hence by letting $\lterm[3] = \lterm[3]' \compo_{1} \term[1]^*$, this case has been proved.
    
        \item Case \begin{prooftree}[separation = .5em]
            \hypo{@\lab[1]_0.\term[1] \longleadsto^{\struc}_{\trace[3]_0} \lterm[1]_1'}
            \hypo{\EPS_{\lab[1]_1}(\lterm[1]_1')}
            \hypo{\dots}
            \hypo{@\lab_{n-1}.\term[1] \longleadsto^{\struc}_{\trace[3]_{n-1}} \lterm[1]_n'}
            \hypo{\EPS_{\lab[1]_n}(\lterm[1]_n')}
            \infer5{ @\lab[1]_0.\term[1]^* \longleadsto^{\struc}_{\trace[3]_0 \series \dots \series \trace[3]_{n-1}} @\lab[1]_{n}. \term[1]^*}
        \end{prooftree}:
        Similar to the case above.

        \item Case $\begin{prooftree}
            \hypo{@\lab.\term[1] \longleadsto^{\struc}_{\trace[3]'} \lterm[1]'}
            \hypo{\EPS_{\lab[3]}(\lterm[1]')}
            \hypo{@\lab.\term[2] \longleadsto^{\struc}_{\trace[3]''} \lterm[2]'}
            \hypo{\EPS_{\lab[3]}(\lterm[2]')}
            \infer4{@\lab.\term[1] \intersection \term[2] \longleadsto^{\struc}_{\fork^{\lab[1]}_{1} \series (\trace[3]' \parallel \trace[3]'') \series \join^{\lab[3]}_{1} }  @\lab[3].\id}
        \end{prooftree}$:
        We distinguish the following cases:
        \begin{itemize}
            \item Case $\trace[1] = \fork^{\lab[1]}_{1} \series (\trace[3]' \parallel \trace[3]'') \series \join^{\lab[3]}_{1}$ and $\trace[2] = \id^{\lab[3]}$:
            By $@\lab.\term[1] \intersection \term[2] \longleadsto^{\struc}_{\trace[1]} @\lab[3].\id$ and $@\lab[3].\id \longleadsto^{\struc}_{\trace[2]} @\lab[3].\id$.

            \item Case $\trace[1] = \id^{\lab[1]}$ and $\trace[2] = \fork^{\lab[1]}_{1} \series (\trace[3]' \parallel \trace[3]'') \series \join^{\lab[3]}_{1}$:
            Similar to the case above.

            \item Case $\trace[1] = \fork^{\lab[1]}_{1} \series (\trace[1]' \parallel \trace[1]'')$ and $\trace[2] = (\trace[2]' \parallel \trace[2]'') \series \join^{\lab[3]}_{1}$
            (then, $\trace[3]' = \trace[1]' \series \trace[2]'$ and $\trace[3]'' = \trace[1]'' \series \trace[2]''$):
            By IH with $\trace[3]' = \trace[1]' \series \trace[2]'$,
            there is some $\lterm[1]''$ such that $@\lab.\term[1] \longleadsto^{\struc}_{\trace[1]'} \lterm[1]''$ and $\lterm[1]'' \longleadsto^{\struc}_{\trace[2]'} \lterm[1]'$.
            By IH with $\trace[3]'' = \trace[1]'' \series \trace[2]''$,
            there is some $\lterm[2]''$ such that $@\lab.\term[2] \longleadsto^{\struc}_{\trace[1]''} \lterm[2]''$ and $\lterm[2]'' \longleadsto^{\struc}_{\trace[2]''} \lterm[2]'$.
            Then we have:
            \begin{gather*}
                \begin{prooftree}
                    \hypo{@\lab.\term[1] \longleadsto^{\struc}_{\trace[1]'} \lterm[1]''}
                    \hypo{@\lab.\term[2] \longleadsto^{\struc}_{\trace[1]''} \lterm[2]''}
                    \infer2{@\lab.\term[1] \intersection \term[2] \longleadsto^{\struc}_{\fork^{\lab}_{1} \series (\trace[1]' \parallel \trace[1]'')} \lterm[1]'' \intersection_{1} \lterm[2]''}
                \end{prooftree}
                \quad\mbox{and}\quad
                \begin{prooftree}
                    \hypo{\lterm[1]'' \longleadsto^{\struc}_{\trace[2]'} \lterm[1]' \; \EPS_{\lab[3]}(\lterm[1]')}
                    \hypo{\lterm[2]'' \longleadsto^{\struc}_{\trace[2]''} \lterm[2]' \; \EPS_{\lab[3]}(\lterm[2]')}
                    \infer2{\lterm[1]'' \intersection_{1} \lterm[2]'' \longleadsto^{\struc}_{(\trace[2]' \parallel \trace[2]'') \series \join^{\lab[3]}_{1} }  @\lab[3].\id}
                \end{prooftree}
            \end{gather*}
            Hence by letting $\lterm[3] = \lterm[1]'' \intersection_{1} \lterm[2]''$, this case has been proved.
        \end{itemize}

        \item Case $\begin{prooftree}
            \hypo{@\lab.\term[1] \longleadsto^{\struc}_{\trace[3]'} \lterm[1]'}
            \hypo{@\lab.\term[2] \longleadsto^{\struc}_{\trace[3]''} \lterm[2]'}
            \infer2{@\lab.\term[1] \intersection \term[2] \longleadsto^{\struc}_{\fork^{\lab[1]}_{1} \series (\trace[3]' \parallel \trace[3]'')} \lterm[1]' \intersection_{1} \lterm[2]'}
        \end{prooftree}$, Case $\begin{prooftree}
            \hypo{\lterm[1] \longleadsto^{\struc}_{\trace[3]'} \lterm[1]' \hspace{1.em} \EPS_{\lab[3]}(\lterm[1]')}
            \hypo{\lterm[2] \longleadsto^{\struc}_{\trace[3]''} \lterm[2]' \hspace{1.em} \EPS_{\lab[3]}(\lterm[2]')}
            \infer2{\lterm[1] \intersection_{1} \lterm[2] \longleadsto^{\struc}_{(\trace[3]' \parallel \trace[3]'') \series \join^{\lab[3]}_{1}}  @\lab[3].\id}
        \end{prooftree}$, and Case\\ %FIXED Added Indentation
        \hspace{2em}$\begin{prooftree}
            \hypo{\lterm[1] \longleadsto^{\struc}_{\trace[3]'} \lterm[1]'}
            \hypo{\lterm[2] \longleadsto^{\struc}_{\trace[3]''} \lterm[2]'}
            \infer2{\lterm[1] \intersection_{1} \lterm[2] \longleadsto^{\struc}_{(\trace[3]' \parallel \trace[3]'')} \lterm[1]' \intersection_{1} \lterm[2]'}
        \end{prooftree}$:
        Similar to the case above.
        
        \item Case $\begin{prooftree}
            \hypo{\mathstrut}
            \infer1{\lterm[1] \longleadsto^{\struc}_{\id^{\overrightarrow{\mathsf{lab}}(\lterm)}} \lterm[1]}
        \end{prooftree}$:
        Then $\trace[1] = \trace[2] = \id^{\overrightarrow{\mathsf{lab}}(\lterm)}$.
        This case has been proved by the assumption.
    \end{itemize}\noindent %FIXED indent
    Hence, this completes the proof.
\end{proof}

\subsection{Proof of \Cref{theorem: decomposition}: decomposition of derivatives}
We now prove \Cref{theorem: decomposition}.
First, we extend \Cref{definition: decomposition} with annotations $(-)_{\trace}$ where $\trace$ is a \kl[$\struc$-run]{$(\odot \vec{\struc})_{\bullet}$-run}.
\begin{defi}\label{definition: decomposition'}
    Let $\vec{\struc} = \struc_1 \dots \struc_n \in \STR^{+}$.
    The relation $\lterm[1] \mathrel{(\odot_{i = 1}^{n} \longrightarrow^{(\struc_{i})_{\bullet}})}_{\trace} \lterm[2]$,
    where $\tuple{\lterm[1], \trace, \lterm[2]} \in \bigcup_{j = 1}^{n} (\LT^{\vec{\struc}}_{(j)} \times \lsubSPR^{\vec{\struc}}_{(j)} \times \LT^{\vec{\struc}}_{(j)})$,
    is defined as the smallest relation (of tuples of $\lterm[1]$, $\trace$, and $\lterm[2]$) closed under the following rules:
    \begin{gather*}
        \begin{prooftree}
            \hypo{\lterm[1] \longrightarrow^{(\struc_{j})_{\bullet}}_{\trace[1]} \lterm[2]}
            \infer1[D]{\lterm[1]_{(j)} \mathrel{(\odot_{i = 1}^{n} \longrightarrow^{(\struc_{i})_{\bullet}})}_{\trace[1]_{(j)}} \lterm[2]_{(j)}}
        \end{prooftree}
        \hspace{3em}
        \begin{prooftree}
            \hypo{\lterm[1][\bullet/l] \mathrel{(\odot_{i = 1}^{n} \longrightarrow^{(\struc_{i})_{\bullet}})}_{\trace[1][\bullet/\tuple{l, r}]} \lterm[2][\bullet/r]}
            \infer1[L]{\lterm[1][\lab/l] \mathrel{(\odot_{i = 1}^{n} \longrightarrow^{(\struc_{i})_{\bullet}})}_{\trace[1][\lab[1]/\tuple{l, r}]} \lterm[2][\lab/r]}
        \end{prooftree}
        \\
        \begin{prooftree}
            \hypo{\lterm[1] \mathrel{(\odot_{i = 1}^{n} \longrightarrow^{(\struc_{i})_{\bullet}})}_{\trace[1]} \lterm[3]}
            \hypo{\lterm[3] \mathrel{(\odot_{i = 1}^{n} \longrightarrow^{(\struc_{i})_{\bullet}})}_{\trace[2]} \lterm[2]}
            \infer2[T]{\lterm[1] \mathrel{(\odot_{i = 1}^{n} \longrightarrow^{(\struc_{i})_{\bullet}})}_{\trace[1] \series \trace[2]} \lterm[2]}
        \end{prooftree}.
    \end{gather*}
\end{defi}
The rules of \Cref{definition: decomposition'} $\mathrel{(\odot_{i = 1}^{n} \longrightarrow^{(\struc_{i})_{\bullet}})}_{\trace}$ are the same as those of \Cref{definition: decomposition} $\mathrel{(\odot_{i = 1}^{n} \longrightarrow^{(\struc_{i})_{\bullet}})}$ except that an \kl[$\struc$-run]{$(\odot \vec{\struc})_{\bullet}$-run} is annotated.
Thus, the following immediately holds:
\begin{prop}\label{proposition: decomposition'}
    Let $\vec{\struc} = \struc_1 \dots \struc_n \in \STR^{+}$.
    For all $j \in \range{n}$ and $\lterm[1], \lterm[2] \in \LT^{\vec{\struc}}_{(j)}$,
    we have:
    \[\lterm[1] \mathrel{(\odot_{i = 1}^{n} \longrightarrow^{(\struc_{i})_{\bullet}})} \lterm[2] \;\iff\; \exists \trace[1] \in \lsubSPR^{\vec{\struc}}_{(j)},\ \lterm[1] \mathrel{(\odot_{i = 1}^{n} \longrightarrow^{(\struc_{i})_{\bullet}})}_{\trace[1]} \lterm[2].\]
\end{prop}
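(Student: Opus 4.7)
The plan is to prove both directions by straightforward structural induction on the derivation trees, exploiting the fact that Definitions 5.4 and 6.9 share the same rule structure and differ only in whether an annotating $(\odot\vec{\struc})_{\bullet}$-run is recorded. Thus Proposition 6.10 is essentially a renaming lemma that repackages the annotated derivation system as an existential over the unannotated one.

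For ($\Leftarrow$), I would induct on the derivation witnessing $\lterm[1] \mathrel{(\odot_{i=1}^{n} \longrightarrow^{(\struc_i)_{\bullet}})}_{\trace[1]} \lterm[2]$, discarding the annotation at each step and invoking the matching unannotated rule of Definition 5.4. For the D-case, using that $\longrightarrow^{(\struc_j)_{\bullet}} = \bigcup_{\trace}\longrightarrow^{(\struc_j)_{\bullet}}_{\trace}$, the annotated premise directly yields the unannotated premise. The T- and L-cases translate one-for-one without any additional work.

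For ($\Rightarrow$), I would induct on the derivation witnessing $\lterm[1] \mathrel{(\odot_{i=1}^{n} \longrightarrow^{(\struc_i)_{\bullet}})} \lterm[2]$ and synthesize a suitable annotation along the way. In the D-case, Definition 4.4 supplies some $(\struc_j)_{\bullet}$-run $\trace[1]$ with $\lterm[1] \longrightarrow^{(\struc_j)_{\bullet}}_{\trace[1]} \lterm[2]$, so we apply the annotated D-rule with $\trace[1]_{(j)}$. In the T-case, the two inductively supplied annotations glue via series composition into a single $\trace$ that serves as the annotation of the combined derivation. In the L-case, the inductively supplied annotation $\trace[1][\bullet/\tuple{l,r}]$ directly produces $\trace[1][\lab/\tuple{l,r}]$ under the annotated L-rule.

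The only real bookkeeping is verifying that each synthesized $\trace[1]$ actually lies in $\lsubSPR^{\vec{\struc}}_{(j)}$, which I expect to be the single (very mild) obstacle. Every $\trace[1]$ arising in Definition 4.4 over $(\struc_j)_{\bullet}$ is sub-SP, which I would prove by a side induction on the derivation rules, since every base rule introduces an atomic sub-SP block (namely $a^{\lab,\lab'}_1$, $\fork^{\lab}_1$, $\join^{\lab}_1$, or $\eps^{\vec{\lab}}$), the series composition appearing in the T-rule preserves sub-SP, and the parallel composition implicit in the intersection rules preserves sub-SP as well. After relabeling via $(-)_{(j)}$, all vertex labels lie in $\univ{\vec{\struc}}_{(j)} \cup \set{\bullet}$, so the resulting run is \kl{local}; the interface condition $\ty_1(\trace[1])\ty_2(\trace[1]) \in (\univ{\vec{\struc}}_{(j)} \cup \set{\bullet})^*$ holds by inspection since the interface labels of $\trace[1]$ are already in $\univ{(\struc_j)_{\bullet}}$. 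Both the series composition of the T-case and the label substitution of the L-case preserve the triple (\kl{local}, \kl{sub-SP}, interface in $\univ{\vec{\struc}}_{(j)} \cup \set{\bullet}$), so the induction carries through.
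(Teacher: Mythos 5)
Your overall strategy is the one the paper uses: the paper's entire proof of this proposition is ``Both directions are shown by easy induction on the derivation trees,'' and your backward direction (erase annotations) together with the skeleton of your forward direction (synthesize annotations rule by rule) is exactly that. The place where your write-up goes beyond the paper is also where it breaks. You justify carrier-membership of the synthesized annotations by claiming that the series composition in the T-rule preserves the sub-SP property, and that claim is false. Take $\trace[2] = \fork^{1}_{1} \series (\eps^{1} \parallel \fork^{1}_{1})$ and $\trace[3] = (\join^{1}_{1} \parallel \eps^{1}) \series \join^{1}_{1}$ (labelled by a single vertex of $\struc_j$ so that they become local runs with matching $\tuple{1,3}$- and $\tuple{3,1}$-interfaces). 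Both are sub-SP --- the first lies in $\subSPRr$ and the second in $\subSPRl$ by the grammar of \Cref{proposition: SSPR} --- but their series composition joins the left output of the outer fork with the \emph{first} output of the inner fork before closing the inner fork, which produces exactly the forbidden ``N'' configuration of \Cref{remark: N-free}; so the composite is not sub-SP. Consequently neither your side induction (whose T-case is precisely such a composition) showing that every annotation of the single-structure derivative relation is sub-SP, nor the T-case of your main induction, goes through as stated: the composed annotation may fall outside the carrier of \Cref{definition: decomposition'}, and then the annotated T-rule is simply not applicable.

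The statement you need is true, but for a different reason than a closure property of $\subSPR$ under $\series$: the derivative rules tie forks and joins to the $\intersection_{1}$-structure of the terms. A $\join^{\lab[3]}_{1}$ can only be emitted for an $\intersection_{1}$ both of whose components have already been reduced to terms satisfying $\EPS$, so joins are forced to close the \emph{innermost} pending fork first, and the crossed pattern above can never arise as the annotation of a derivable step. Making this precise requires carrying a stronger invariant through the induction (for instance, that the pending parallel components of the annotation correspond to the $\intersection_{1}$-spine of the target term), which is genuinely more than the ``very mild obstacle'' you describe. In short: same route as the paper, the backward direction is fine, but the forward direction as written rests on a false lemma and needs this repair.
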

\begin{proof}
    Both directions are shown by easy induction on the derivation trees.
\end{proof}
We also have the following by using the properties shown in \Cref{section: derivative properties}.
\begin{lem}\label{lemma: decomposition' 2}
    Let $\vec{\struc} = \struc_1 \dots \struc_n \in \STR^{+}$.
    For $\tuple{\lterm[1], \trace[1], \lterm[2]} \in \bigcup_{j = 1}^{n} \LT^{\vec{\struc}}_{(j)} \times \lsubSPR^{\vec{\struc}}_{(j)} \times \LT^{\vec{\struc}}_{(j)}$,
    \[\lterm[1] \longrightarrow^{(\odot \vec{\struc})_{\bullet}}_{\trace[1]} \lterm[2] \;\iff\; \lterm[1] \mathrel{(\odot_{i = 1}^{n} \longrightarrow^{(\struc_{i})_{\bullet}})}_{\trace[1]} \lterm[2].\]
\end{lem}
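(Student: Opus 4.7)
The plan is to prove the two directions separately, and in each case to induct on the derivation tree of the hypothesis, matching the three rules of \Cref{definition: decomposition'} against the structure of $\trace[1]$.

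For $(\Leftarrow)$, induct on the derivation of $\lterm[1] \mathrel{(\odot_{i = 1}^{n} \longrightarrow^{(\struc_{i})_{\bullet}})}_{\trace[1]} \lterm[2]$. The rule $(\mathrm{D})$ case produces, from a premise $\lterm[1] \longrightarrow^{(\struc_j)_\bullet}_{\trace[1]} \lterm[2]$, the required derivation on $(\odot \vec{\struc})_\bullet$ by uniformly applying the renaming $\lab \mapsto [\tuple{j,\lab}]_{\sim}$ to the entire derivation tree of \Cref{definition: derivative}; this is sound because every rule depends only on (a) equalities between labels, which are preserved under the injective renaming, (b) the underlying relations of the structure, which transfer by construction of $\odot\vec{\struc}$, and (c) the purely syntactic predicate $\EPS_{\lab}$, which is invariant under consistent renaming of labels. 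The rule $(\mathrm{T})$ case is immediate from the rule $(\mathrm{T})$ of \Cref{definition: derivative}, and the rule $(\mathrm{L})$ case is exactly \Cref{lemma: L}.

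For $(\Rightarrow)$, I first invoke \Cref{proposition: local} to conclude $\trace[1] \in \lsubSPR^{\vec{\struc}}_{(j)}$, and then \Cref{lemma: decomposition lsubSPR} to obtain a derivation of $\trace[1] \in \odot_{i=1}^{n} \lsubSPR^{(\struc_i)_\bullet}$ using the rules of \Cref{definition: decomposition lsubSPR}. I then induct on this derivation tree. In the $(\mathrm{D})$ case $\trace[1] = (\trace[1]')_{(j)}$ with $\trace[1]' \in \lsubSPR^{(\struc_j)_\bullet}$: since the interfaces of $\lterm[1]$ and $\lterm[2]$ must match those of $\trace[1]$, we may write $\lterm[1] = (\lterm[1]_0)_{(j)}$ and $\lterm[2] = (\lterm[2]_0)_{(j)}$; \Cref{lemma: i} then gives $\lterm[1]_0 \longrightarrow^{(\struc_j)_\bullet}_{\trace[1]'} \lterm[2]_0$, and rule $(\mathrm{D})$ of \Cref{definition: decomposition'} concludes. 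In the $(\mathrm{T})$ case $\trace[1] = \trace' \series \trace''$: \Cref{lemma: T} supplies an interpolant $\lterm[3]$ with $\lterm[1] \longrightarrow^{(\odot \vec{\struc})_\bullet}_{\trace'} \lterm[3]$ and $\lterm[3] \longrightarrow^{(\odot \vec{\struc})_\bullet}_{\trace''} \lterm[2]$; we apply IH to each half and conclude by rule $(\mathrm{T})$. In the $(\mathrm{L})$ case $\trace[1] = \trace'[\lab/\tuple{l,r}]$ with $\src_l^{\trace'} = \tgt_r^{\trace'}$: \Cref{lemma: L} converts the given derivation into one on $\trace'[\bullet/\tuple{l,r}]$ with $\lterm[1][\bullet/l]$ and $\lterm[2][\bullet/r]$, to which the IH applies, closing with rule $(\mathrm{L})$.

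The delicate step is the $(\mathrm{T})$ case of the completeness direction, where one must verify that the interpolant $\lterm[3]$ produced by \Cref{lemma: T} actually lies in some slice $\LT^{\vec{\struc}}_{(k)}$, so that the triple $(\lterm[1],\trace',\lterm[3])$ (and likewise $(\lterm[3],\trace'',\lterm[2])$) sits inside $\bigcup_{j=1}^{n} \LT^{\vec{\struc}}_{(j)} \times \lsubSPR^{\vec{\struc}}_{(j)} \times \LT^{\vec{\struc}}_{(j)}$, as required by \Cref{definition: decomposition'}. This is handled by the invariant $\overrightarrow{\mathsf{lab}}(\lterm[3]) = \ty_2(\trace') = \ty_1(\trace'')$ together with the crucial property (built into \Cref{lemma: decomposition lsubSPR} via the interpolation \Cref{lemma: interpolation lsubSPR 2}) that whenever $\trace[1]$ is split as $\trace' \series \trace''$ in the decomposition tree, the common interface labels already lie in $\univ{\vec{\struc}}_{(k)} \cup \set{\bullet}$ for some single $k$. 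Once this bookkeeping is in place the induction proceeds without further difficulty, and combining the resulting proofs of $(\Leftarrow)$ and $(\Rightarrow)$ yields the claimed equivalence.
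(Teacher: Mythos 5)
Your proposal is correct and follows essentially the same route as the paper: an easy induction for $(\Leftarrow)$, and for $(\Rightarrow)$ an induction on the derivation of $\trace[1] \in \odot_{i=1}^{n} \lsubSPR^{(\struc_i)_{\bullet}}$ obtained from \Cref{lemma: decomposition lsubSPR}, discharging the rules (D), (T), (L) via \Cref{lemma: i}, \Cref{lemma: T}, and \Cref{lemma: L} respectively. The only differences are cosmetic: the appeal to \Cref{proposition: local} is unnecessary since the hypothesis already places $\trace[1]$ in $\lsubSPR^{\vec{\struc}}_{(j)}$, and your bookkeeping remark about the interpolant's slice membership in the (T) case makes explicit a point the paper leaves implicit.
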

\begin{proof}
    ($\Longleftarrow$):
    By easy induction on the derivation tree of \Cref{definition: decomposition'}.
    ($\Longrightarrow$):
    Using \Cref{lemma: decomposition lsubSPR},
    we show this by induction on the derivation tree of $\trace[1] \in \odot_{i = 1}^{n} \lsubSPR^{(\struc_i)_{\bullet}}$ (of the rules in \Cref{definition: decomposition lsubSPR}).
    \begin{itemize}
        \item Case \begin{prooftree}
            \hypo{\trace[1] \in \lsubSPR^{(\struc_j)_{\bullet}}}
            \infer1[D]{\trace[1]_{(j)} \in \odot_{i = 1}^{n} \lsubSPR^{(\struc_i)_{\bullet}}}
        \end{prooftree}:
        By $\lterm[1] \longrightarrow^{(\odot \vec{\struc})_{\bullet}}_{\trace[1]_{(j)}} \lterm[2]$ (so $\lterm[1], \lterm[2] \in \LT^{\vec{\struc}}_{(j)}$),
        let $\lterm[1]'$ and $\lterm[2]'$ be such that $\lterm[1] = \lterm[1]'_{(j)}$ and $\lterm[2] = \lterm[2]'_{(j)}$.
        By $\lterm[1]'_{(j)} \longrightarrow^{(\odot \vec{\struc})_{\bullet}}_{\trace[1]_{(j)}} \lterm[2]'_{(j)}$,
        we have $\lterm[1]' \longrightarrow^{(\struc[1]_{j})_{\bullet}}_{\trace[1]} \lterm[2]'$ (\Cref{lemma: i}).
        Thus, by the rule (D), this case has been proved.
        
        \item Case \begin{prooftree}
            \hypo{\trace[1][\bullet/\tuple{l, r}] \in \odot_{i = 1}^{n} \lsubSPR^{(\struc_i)_{\bullet}}}
            \infer1[L]{\trace[1][\lab/\tuple{l, r}] \in \odot_{i = 1}^{n} \lsubSPR^{(\struc_i)_{\bullet}}}
        \end{prooftree}:
        By $\lterm[1] \longrightarrow^{(\odot \vec{\struc})_{\bullet}}_{\trace[1][\lab/\tuple{l, r}]} \lterm[2]$,
        let $\lterm[1]'$ and $\lterm[2]'$ be such that $\lterm[1] = \lterm[1]'[\lab/l]$ and  $\lterm[2] = \lterm[2]'[\lab/r]$.
        Then we have $\lterm[1]'[\bullet/l] \longrightarrow^{(\odot \vec{\struc})_{\bullet}}_{\trace[1][\bullet/\tuple{l, r}]} \lterm[2]'[\bullet/r]$ (\Cref{lemma: L}).
        Thus, by the rule (L) with IH, this case has been proved.

        \item Case \begin{prooftree}
            \hypo{\trace[1] \in \odot_{i = 1}^{n} \lsubSPR^{(\struc_i)_{\bullet}}}
            \hypo{\trace[2] \in \odot_{i = 1}^{n} \lsubSPR^{(\struc_i)_{\bullet}}}
            \infer2[T]{\trace[1] \series \trace[2] \in \odot_{i = 1}^{n} \lsubSPR^{(\struc_i)_{\bullet}}}
        \end{prooftree}:
        By $\lterm[1] \longrightarrow^{(\odot \vec{\struc})_{\bullet}}_{\trace[1] \series \trace[2]} \lterm[2]$,
        there exists some \kl{lKL term} $\lterm[3]$ such that 
        $\lterm[1] \longrightarrow^{(\odot \vec{\struc})_{\bullet}}_{\trace[1]} \lterm[3]$
        and
        $\lterm[3] \longrightarrow^{(\odot \vec{\struc})_{\bullet}}_{\trace[2]} \lterm[2]$ (\Cref{lemma: T}).
        Thus by the rule (T) with IH, this case has been proved.
   \end{itemize}\noindent %FIXED indent
    Hence, this completes the proof.
\end{proof}

\begin{proof}[Proof of \Cref{theorem: decomposition}]
    For all $\lterm[1], \lterm[2] \in \LT^{\vec{\struc}}_{(j)}$, we have:
    \begin{align*}
        \lterm[1] \longrightarrow^{(\odot \vec{\struc})_{\bullet}} \lterm[2]
        &\;\iff\; \exists \trace[1] \in \lsubSPR^{\vec{\struc}}_{(j)},\ \lterm[1] \longrightarrow^{(\odot \vec{\struc})_{\bullet}}_{\trace[1]} \lterm[2] \tag{\Cref{proposition: local}}\\
        &\;\iff\; \exists \trace[1] \in \lsubSPR^{\vec{\struc}}_{(j)},\ \lterm[1] \mathrel{(\odot_{i = 1}^{n} \longrightarrow^{(\struc_{i})_{\bullet}})}_{\trace[1]} \lterm[2] \tag{\Cref{lemma: decomposition' 2}}\\
        &\;\iff\; \lterm[1] \mathrel{(\odot_{i = 1}^{n} \longrightarrow^{(\struc_{i})_{\bullet}})} \lterm[2]. \tag{\Cref{proposition: decomposition'}}
    \end{align*}
    Hence, this completes the proof.
\end{proof}

\subsection{Incompleteness of the decomposition rules in \cite{nakamuraPartialDerivativesGraphs2017}}\label{section: incomplete}
In this subsection, we point out an error of the decomposition of \kl[$\struc$-runs]{$(\odot \vec{\struc})_{\bullet}$-runs} in \cite[Section IV (Lemma IV.10)]{nakamuraPartialDerivativesGraphs2017}.
In \cite{nakamuraPartialDerivativesGraphs2017}, \kl{left quotients} of \kl[$\struc$-runs]{$(\odot \vec{\struc})_{\bullet}$-runs} with respect to SGCPs are considered, but events of SGCPs are read in one-way.
This approach essentially corresponds to the following decomposition rules of \kl[$\struc$-runs]{$(\odot \vec{\struc})_{\bullet}$-runs}
where $\vec{\struc} = \struc_1 \dots \struc_n$ and each $X_j$ ($j \in \range{n}$) is the smallest subset of $\bigcup_{i = 1}^{n} \lsubSPR^{\vec{\struc}}_{(i)}$ closed under the following rules (\cf\ \Cref{definition: decomposition lsubSPR}):
\begin{gather*}
    \begin{prooftree}
        \hypo{\trace[1] \in \lsubSPR^{\struc_j}}
        \infer1[D]{\trace[1]_{(j)} \in X_j}
    \end{prooftree}\hspace{1.5em} 
    \begin{prooftree}
        \hypo{\trace[1][\bullet/\tuple{l, r}] \in X_j}
        \infer1[L]{\trace[1][\lab/\tuple{l, r}] \in X_j}
    \end{prooftree}\hspace{1.5em}
    \begin{prooftree}
        \hypo{\trace[1] \in X_j}
        \hypo{\trace[2] \in X_j}
        \infer2[T]{\trace[1] \series \trace[2] \in X_j}
    \end{prooftree}\hspace{1.5em}
    \begin{prooftree}
        \hypo{\trace[1] \in X_{j-1}}
        \infer1[$+1$]{\trace[1] \in X_j}
    \end{prooftree}
\end{gather*}
(In a nutshell, the rule ``$-1$'' does not exist.)
However, the set $\bigcup_{i = 1}^{n} X_i$ is incomplete, as the direction $\Longleftarrow$ of \Cref{lemma: decomposition lsubSPR} does not hold.
We recall \Cref{example: definition: decomposition lsubSPR} and the sequence $\vec{\struc} = \textcolor{blue}{\struc_1} \textcolor{red!60}{\struc_2}$.
For instance, the \kl[$\struc$-run]{$(\odot \vec{\struc})_{\bullet}$-run} of the form $\begin{tikzpicture}[baseline = -2.ex, yscale=-1]
    \foreach \x/\y/\col/\style/\dc in {
        2/1/gray!20//{red!60},5/0/gray!20/{thick}/,6/2/gray!20//{red!60},7/0/gray!20/{thick}/,10/1/gray!20//{red!60}}{
       \node[mynode,draw=\dc,circle, fill = \col, \style]  (\x X\y) at (.25*\x,.25*\y) {};
    }
    \foreach \x/\y in {
        1/1,1/3,3/1,4/0,4/2,6/0,8/0,8/2,9/1,11/3,11/1}{
       \node[inner sep = 0, outer sep = 0, minimum width = 0] (\x X\y) at (.25*\x,.25*\y) {};
    }
    \node[left = 4pt of 2X1](s1){}; \path (s1) edge[earrow, ->] (2X1);
    \node[right = 4pt of 10X1](t1){}; \path (10X1) edge[earrow, ->] (t1);
    \foreach \xp/\x/\y/\col in {3/4/1/red}{
        \node  (\x X\y) at (.25*\x+.1,.25*\y) {};
        \path (\xp X\y) edge [opacity = 0] node[pos= .9, elabel, color = \col, inner sep = 0, outer sep = 0, minimum width = 0](\xp X\y Xf){}(\x X\y);
    }
    \foreach \xp/\x/\y/\col in {8/9/1/red}{
        \node (\xp X\y) at (.25*\xp-.1,.25*\y) {};
        \path (\xp X\y) edge [opacity = 0] node[pos= .1, elabel, color = \col, inner sep = 0, outer sep = 0, minimum width = 0](\x X\y Xj){}(\x X\y);
    }
    \graph[use existing nodes, edges={color=black, pos = .5, earrow, line width = .5pt, 
    decorate, decoration={snake, segment length=2mm, amplitude=.2mm}
    },edge quotes={fill=white, inner sep=1pt,font= \scriptsize}]{
    {2X1} --[ color = red!60] {3X1};
    {3X1} --[color = red!60] {3X1Xf} -- [color = red!60] {4X0, 4X2}; {4X2} --[ color = red!60] {6X2};
    {4X0} ->[ color = red!60] {5X0} --[ color = blue] {6X0};
    {6X0} ->[ color = blue] {7X0} --[ color = red!60] {8X0}; {6X2} --[ color = red!60] {8X2};
    {8X0,8X2} --[color = red!60] {9X1Xj} -- [color = red!60] {9X1};
    {9X1} ->[ color = red!60] {10X1};
    };
\end{tikzpicture}$ is derived by:
\begin{center}
    \begin{prooftree}
        \hypo{\mathstrut}
        \infer1[D]{\begin{tikzpicture}[baseline = -2.ex, yscale=-1]
            \foreach \x/\y/\col/\style/\dc in {
                2/1/gray!20//{red!60},5/0/gray!20/{thick}/,6/2/gray!20//{red!60}}{
               \node[mynode,draw = \dc,circle, fill = \col, \style]  (\x X\y) at (.25*\x,.25*\y) {};
            }
            \foreach \x/\y in {
                3/1,4/0,4/2}{
               \node[inner sep = 0, outer sep = 0, minimum width = 0] (\x X\y) at (.25*\x,.25*\y) {};
            }
            \node[left = 4pt of 2X1](s1){}; \path (s1) edge[earrow, ->] (2X1);
            \node[right = 4pt of 5X0](t1){}; \path (5X0) edge[earrow, ->] (t1);
            \node[right = 4pt of 6X2](t2){}; \path (6X2) edge[earrow, ->] (t2);
            \foreach \xp/\x/\y/\col in {3/4/1/red!60}{
                \node  (\x X\y) at (.25*\x+.1,.25*\y) {};
                \path (\xp X\y) edge [opacity = 0] node[pos= .9, elabel, color = \col, inner sep = 0, outer sep = 0, minimum width = 0](\xp X\y Xf){}(\x X\y);
            }
            \graph[use existing nodes, edges={color=black, pos = .5, earrow, line width = .5pt, 
            decorate, decoration={snake, segment length=2mm, amplitude=.2mm}
            },edge quotes={fill=white, inner sep=1pt,font= \scriptsize}]{
            {2X1} --[ color = red!60] {3X1};
            {3X1} --[color = red!60] {3X1Xf} -- [color = red!60] {4X0, 4X2}; {4X2} --[ color = red!60] {6X2};
            {4X0} ->[ color = red!60] {5X0};
            };
        \end{tikzpicture} \in X_2}
        \hypo{\mathstrut}
        \infer1[D]{
            \begin{tikzpicture}[baseline = -2.ex, yscale=-1]
                \foreach \x/\y/\col/\style/\dc in {
                    5/0/gray!20/{thick}/,6/2/black//,7/0/gray!20/{thick}/}{
                   \node[mynode,draw = \dc,circle, fill = \col, \style]  (\x X\y) at (.25*\x,.25*\y) {};
                }
                \foreach \x/\y in {
                    6/0}{
                   \node[inner sep = 0, outer sep = 0, minimum width = 0] (\x X\y) at (.25*\x,.25*\y) {};
                }
                \node[left = 4pt of 5X0](s1){}; \path (s1) edge[earrow, ->] (5X0);
                \node[left = 4pt of 6X2](s2){}; \path (s2) edge[earrow, ->] (6X2);
                \node[right = 4pt of 7X0](t1){}; \path (7X0) edge[earrow, ->] (t1);
                \node[right = 4pt of 6X2](t2){}; \path (6X2) edge[earrow, ->] (t2);
                \graph[use existing nodes, edges={color=black, pos = .5, earrow, line width = .5pt, 
                decorate, decoration={snake, segment length=2mm, amplitude=.2mm}
                },edge quotes={fill=white, inner sep=1pt,font= \scriptsize}]{
                {5X0} --[ color = blue] {6X0};
                {6X0} ->[ color = blue] {7X0};
                };
            \end{tikzpicture}\in X_1}
            \infer1[+1]{
                \begin{tikzpicture}[baseline = -2.ex, yscale=-1]
                    \foreach \x/\y/\col/\style/\dc in {
                        5/0/gray!20/{thick}/,6/2/black//,7/0/gray!20/{thick}/}{
                       \node[mynode,draw = \dc,circle, fill = \col, \style]  (\x X\y) at (.25*\x,.25*\y) {};
                    }
                    \foreach \x/\y in {
                        6/0}{
                       \node[inner sep = 0, outer sep = 0, minimum width = 0] (\x X\y) at (.25*\x,.25*\y) {};
                    }
                    \node[left = 4pt of 5X0](s1){}; \path (s1) edge[earrow, ->] (5X0);
                    \node[left = 4pt of 6X2](s2){}; \path (s2) edge[earrow, ->] (6X2);
                    \node[right = 4pt of 7X0](t1){}; \path (7X0) edge[earrow, ->] (t1);
                    \node[right = 4pt of 6X2](t2){}; \path (6X2) edge[earrow, ->] (t2);
                    \graph[use existing nodes, edges={color=black, pos = .5, earrow, line width = .5pt, 
                    decorate, decoration={snake, segment length=2mm, amplitude=.2mm}
                    },edge quotes={fill=white, inner sep=1pt,font= \scriptsize}]{
                    {5X0} --[ color = blue] {6X0};
                    {6X0} ->[ color = blue] {7X0};
                    };
                \end{tikzpicture} \in X_2}
        \infer1[L]{\begin{tikzpicture}[baseline = -2.ex, yscale=-1]
            \foreach \x/\y/\col/\style/\dc in {
                5/0/gray!20/{thick}/,6/2/gray!20//{red!60},7/0/gray!20/{thick}/}{
               \node[mynode,draw =\dc,circle, fill = \col, \style]  (\x X\y) at (.25*\x,.25*\y) {};
            }
            \foreach \x/\y in {
                6/0}{
               \node[inner sep = 0, outer sep = 0, minimum width = 0] (\x X\y) at (.25*\x,.25*\y) {};
            }
            \node[left = 4pt of 5X0](s1){}; \path (s1) edge[earrow, ->] (5X0);
            \node[left = 4pt of 6X2](s2){}; \path (s2) edge[earrow, ->] (6X2);
            \node[right = 4pt of 7X0](t1){}; \path (7X0) edge[earrow, ->] (t1);
            \node[right = 4pt of 6X2](t2){}; \path (6X2) edge[earrow, ->] (t2);
            \graph[use existing nodes, edges={color=black, pos = .5, earrow, line width = .5pt, 
            decorate, decoration={snake, segment length=2mm, amplitude=.2mm}
            },edge quotes={fill=white, inner sep=1pt,font= \scriptsize}]{
            {5X0} --[ color = blue] {6X0};
            {6X0} ->[ color = blue] {7X0};
            };
        \end{tikzpicture} \in X_2}
        \hypo{\mathstrut}
        \infer1[D]{
            \begin{tikzpicture}[baseline = -2.ex, yscale=-1]
                \foreach \x/\y/\col/\style/\dc in {
                    6/2/gray!20//{red!60},7/0/gray!20/{thick}/,10/1/gray!20//{red!60}}{
                   \node[mynode,draw=\dc,circle, fill = \col, \style]  (\x X\y) at (.25*\x,.25*\y) {};
                }
                \foreach \x/\y in {
                    8/0,8/2,9/1}{
                   \node[inner sep = 0, outer sep = 0, minimum width = 0] (\x X\y) at (.25*\x,.25*\y) {};
                }
                \node[left = 4pt of 7X0](s1){}; \path (s1) edge[earrow, ->] (7X0);
                \node[left = 4pt of 6X2](s2){}; \path (s2) edge[earrow, ->] (6X2);
                \node[right = 4pt of 10X1](t1){}; \path (10X1) edge[earrow, ->] (t1);
                \foreach \xp/\x/\y/\col in {8/9/1/red!60}{
                    \node (\xp X\y) at (.25*\xp,.25*\y) {};
                    \path (\xp X\y) edge [opacity = 0] node[pos= .1, elabel, color = \col, inner sep = 0, outer sep = 0, minimum width = 0](\x X\y Xj){}(\x X\y);
                }
                \graph[use existing nodes, edges={color=black, pos = .5, earrow, line width = .5pt, 
                decorate, decoration={snake, segment length=2mm, amplitude=.2mm}
                },edge quotes={fill=white, inner sep=1pt,font= \scriptsize}]{
                {7X0} --[ color = red!60] {8X0}; {6X2} --[ color = red!60] {8X2};
                {8X0,8X2} --[color = red!60] {9X1Xj} -- [color = red!60] {9X1};
                {9X1} ->[ color = red!60] {10X1};
                };
            \end{tikzpicture} \in X_2
        }
        \infer[double]3[T]{
              \begin{tikzpicture}[baseline = -2.ex, yscale=-1]
                \foreach \x/\y/\col/\style/\dc in {
                    2/1/gray!20//{red!60},5/0/gray!20/{thick}/,6/2/gray!20//{red!60},7/0/gray!20/{thick}/,10/1/gray!20//{red!60}}{
                   \node[mynode,draw =\dc,circle, fill = \col, \style]  (\x X\y) at (.25*\x,.25*\y) {};
                }
                \foreach \x/\y in {
                    1/1,1/3,3/1,4/0,4/2,6/0,8/0,8/2,9/1,11/3,11/1}{
                   \node[inner sep = 0, outer sep = 0, minimum width = 0] (\x X\y) at (.25*\x,.25*\y) {};
                }
                \node[left = 4pt of 2X1](s1){}; \path (s1) edge[earrow, ->] (2X1);
                \node[right = 4pt of 10X1](t1){}; \path (10X1) edge[earrow, ->] (t1);
                \foreach \xp/\x/\y/\col in {3/4/1/red!60}{
                    \node  (\x X\y) at (.25*\x+.1,.25*\y) {};
                    \path (\xp X\y) edge [opacity = 0] node[pos= .9, elabel, color = \col, inner sep = 0, outer sep = 0, minimum width = 0](\xp X\y Xf){}(\x X\y);
                }
                \foreach \xp/\x/\y/\col in {8/9/1/red!60}{
                    \node (\xp X\y) at (.25*\xp-.1,.25*\y) {};
                    \path (\xp X\y) edge [opacity = 0] node[pos= .1, elabel, color = \col, inner sep = 0, outer sep = 0, minimum width = 0](\x X\y Xj){}(\x X\y);
                }
                \graph[use existing nodes, edges={color=black, pos = .5, earrow, line width = .5pt, 
                decorate, decoration={snake, segment length=2mm, amplitude=.2mm}
                },edge quotes={fill=white, inner sep=1pt,font= \scriptsize}]{
                {2X1} --[ color = red!60] {3X1};
                {3X1} --[color = red!60] {3X1Xf} -- [color = red!60] {4X0, 4X2}; {4X2} --[ color = red!60] {6X2};
                {4X0} ->[ color = red!60] {5X0} --[ color = blue] {6X0};
                {6X0} ->[ color = blue] {7X0} --[ color = red!60] {8X0}; {6X2} --[ color = red!60] {8X2};
                {8X0,8X2} --[color = red!60] {9X1Xj} -- [color = red!60] {9X1};
                {9X1} ->[ color = red!60] {10X1};
                };
            \end{tikzpicture} \in X_2
        }
    \end{prooftree}
\end{center}
(Here, blue-colored edges and red-colored edges are the edges induced from $\textcolor{blue}{\struc_1}$ and $\textcolor{red!60}{\struc_2}$, respectively.)
However, we cannot derive the \kl[$\struc$-run]{$(\odot \vec{\struc})_{\bullet}$-run} $\trace$ of the form \begin{tikzpicture}[baseline = -2.ex, yscale=-1]
                \foreach \x/\y/\col/\style/\dc in {
                    2/1/gray!20//{blue},5/0/gray!20/{thick}/,6/2/gray!20//{blue},7/0/gray!20/{thick}/,10/1/gray!20//{blue}}{
                   \node[mynode,draw = \dc,circle, fill = \col, \style]  (\x X\y) at (.25*\x,.25*\y) {};
                }
                \foreach \x/\y in {
                    1/1,1/3,3/1,4/0,4/2,6/0,8/0,8/2,9/1,11/3,11/1}{
                   \node[inner sep = 0, outer sep = 0, minimum width = 0] (\x X\y) at (.25*\x,.25*\y) {};
                }
                \node[left = 4pt of 2X1](s1){}; \path (s1) edge[earrow, ->] (2X1);
                \node[right = 4pt of 10X1](t1){}; \path (10X1) edge[earrow, ->] (t1);
                \foreach \xp/\x/\y/\col in {3/4/1/blue}{
                    \node  (\x X\y) at (.25*\x+.1,.25*\y) {};
                    \path (\xp X\y) edge [opacity = 0] node[pos= .9, elabel, color = \col, inner sep = 0, outer sep = 0, minimum width = 0](\xp X\y Xf){}(\x X\y);
                }
                \foreach \xp/\x/\y/\col in {8/9/1/blue}{
                    \node (\xp X\y) at (.25*\xp-.1,.25*\y) {};
                    \path (\xp X\y) edge [opacity = 0] node[pos= .1, elabel, color = \col, inner sep = 0, outer sep = 0, minimum width = 0](\x X\y Xj){}(\x X\y);
                }
                \graph[use existing nodes, edges={color=black, pos = .5, earrow, line width = .5pt, 
                decorate, decoration={snake, segment length=2mm, amplitude=.2mm}
                },edge quotes={fill=white, inner sep=1pt,font= \scriptsize}]{
                {2X1} --[ color = blue] {3X1};
                {3X1} --[color = blue] {3X1Xf} -- [color = blue] {4X0, 4X2}; {4X2} --[ color = blue] {6X2};
                {4X0} ->[ color = blue] {5X0} --[ color = red!60] {6X0};
                {6X0} ->[ color = red!60] {7X0} --[ color = blue] {8X0}; {6X2} --[ color = blue] {8X2};
                {8X0,8X2} --[color = blue] {9X1Xj} -- [color = blue] {9X1};
                {9X1} ->[ color = blue] {10X1};
                };
            \end{tikzpicture} in this system because the rule ``$-1$'' does not exist.
    To avoid this problem, instead of one-way automata, we use \kl{2AFAs}.

\section{Conclusion} \label{section: conclusion}
We have presented \kl{derivatives} on \kl{structures}/\kl{graphs} and have shown a decomposition theorem of the \kl{derivatives} (\Cref{theorem: decomposition}).
Consequently, we have shown that the \kl{equational theory} of the positive calculus of relations with transitive closure (\kl[PCoR* terms]{PCoR*}) is decidable and EXPSPACE-complete (\Cref{corollary: PCoR* EXPSPACE-complete}).

\subsection*{Related and future work}
In the following, we present related and future work.

\paragraph*{Identity-free Kleene lattices}
Brunet and Pous have shown that the equational theory of \kl{identity-free Kleene lattices} $\set{\emp, \compo, \union, \intersection, \bl^{+}}$ is EXPSPACE-complete \cite{brunetPetriAutomataKleene2015,brunetPetriAutomata2017}.
They presented an algorithm for comparing downward closed sets of \kl{runs} (denoting acyclic and connected \kl{graphs}) with respect to \kl{graph homomorphisms} using \kl{Petri automata}.
However, their approach is problematic when terms have non-acyclic or non-connected \kl{graphs} (\eg, when the identity $\id$, the converse $\smile$, or the universality $\top$ occurs) \cite[Section 9.3]{brunetPetriAutomata2017}.
In our approach, based on the linearly bounded \kl{pathwidth} model property (\Cref{proposition: bounded pw property}),
we consider decomposing \kl{$\struc$-runs} (instead of \kl{runs}) on \kl{path decompositions}.
It would also be possible to connect our \kl{derivatives} to \kl{branching automata} \cite{lodayaSeriesparallelPosetsAlgebra1998,lodayaSeriesParallelLanguages2000,lodayaRationalityAlgebrasSeries2001} or \kl{Petri automata} \cite{brunetPetriAutomata2017}.

\paragraph*{PDL with intersection}
Propositional dynamic logic with intersection (IPDL) \cite{daneckiNondeterministicPropositionalDynamic1984} is propositional dynamic logic (PDL) of regular programs with intersection.
The theory of IPDL is decidable and 2EXPTIME-complete \cite{daneckiNondeterministicPropositionalDynamic1984,lange2ExpTimeLowerBounds2005,gollerPDLIntersectionConverse2009}.
The known algorithms \cite{daneckiNondeterministicPropositionalDynamic1984,lutzPDLIntersectionConverse2005,gollerPDLIntersectionConverse2009} are based on the treewidth at most $2$ model property of IPDL (\cf\ \Cref{proposition: tw 2 property}).
In \cite{gollerPDLIntersectionConverse2009}, G{\"o}eller, Lohrey, and Lutz presented a polynomial-time reduction from the theory of IPDL with converse (ICPDL) into that over $\omega$-regular trees and presented a reduction to the universality problem of two-way alternating parity tree automata.
To express relational intersection ($\intersection$),
their approach is based on the product construction of automata (the product construction naively fails, but by adding transitions so that there always exists the shortest run if a run exists between two \kl{vertices} on an input tree, the product construction works on trees \cite[p.\ 288]{gollerPDLIntersectionConverse2009}),
while our approach uses branches of \kl[sub-run]{sub} \kl{SP} \kl{runs}.
Recently by extending their approach, the elementary decidability result of ICPDL has been extended to CPDL+ \cite{figueiraPDLSteroidsExpressive2023}.
Note that we can naturally reduce the \kl{equational theory} of \kl{KL terms} (with respect to relations) into the theory of IPDL formulas because $\REL \models \term[1] = \term[2]$ iff the IPDL formula $\langle\term[1]\rangle p \leftrightarrow \langle\term[2]\rangle p$ is valid (as with \cite[p.\ 209]{fischerPropositionalDynamicLogic1979} for PDL) where $p$ is a propositional variable (disjoint from \kl{variables}) and $\langle - \rangle$ denotes the diamond modality.
Thus, the 2EXPTIME upper bound of the \kl{equational theory} of \kl{KL terms} can be obtained via this reduction.
To obtain the EXPSPACE upper bound using this approach, we need to give a reduction from the theory of \kl{KL terms} over linearly bounded \kl{pathwidth} \kl{structures} into that over words,
\cf\  the \kl{inclusion problem} is in $\mathrm{PSPACE}$ for (\emph{word}) \kl{2AFAs} (\Cref{proposition: 2AFA PSPACE}) whereas the \kl{inclusion problem} is $\mathrm{EXPTIME}$-hard for \emph{tree} (two-way alternating) automata \cite[Theorem 1.7.7]{comonTreeAutomataTechniques2007}.
This approach would be possible by modifying the encoding of tree decompositions but is not immediate,
because the reduction \cite{lutzPDLIntersectionConverse2005,gollerPDLIntersectionConverse2009} always generates trees (not words) even when a given \kl{tree decomposition} is a \kl{path decomposition}.
Conversely, it would also be possible to extend our decomposition approach to the extensions of PDL above, but we leave it as a future work.

\paragraph*{Regular queries}
Regular queries \cite{reutterRegularQueriesGraph2017} are binary non-recursive positive Datalog programs extended with \kl{Kleene star}.
The containment problem for regular queries is decidable and 2EXPSPACE-complete \cite{reutterRegularQueriesGraph2017}.
Because we can naturally translate \kl{KL terms} (with respect to relations) and also \kl{PCoR* terms} into regular queries, we can also reduce the \kl{equational theory} of \kl{PCoR* terms} into the containment problem for regular queries.

\paragraph*{The existential calculus of relations with transitive closure}
The existential calculus of relations with transitive closure (ECoR*) \cite{nakamuraExistentialCalculiRelations2023} is PCoR* with variable and constant complements (i.e., the complement operator only applies to variables or constants).
ECoR* has the same expressive power as $3$-variable existential first-order logic with variable-confined monodic transitive closure \cite{nakamuraExpressivePowerSuccinctness2020,nakamuraExpressivePowerSuccinctness2022}.
This extension is also related to boolean modal logic \cite{gargovNoteBooleanModal1990} and (I)PDL with negation of atomic programs \cite{lutzPDLNegationAtomic2005,gollerPDLIntersectionConverse2009}.
The \kl{equational theory} of PCoR* with variable complements is $\Pi^{0}_{1}$-complete \cite{nakamuraExistentialCalculiRelations2023} (an open problem left in the conference version \cite{nakamuraPartialDerivativesGraphs2017}) and
the \kl{equational theory} of PCoR* with constant complements (i.e., with the difference constant) is $\Pi^{0}_{1}$-complete \cite{nakamuraUndecidabilityPositiveCalculus2024}.
Nevertheless, the \kl{equational theory} is decidable for the intersection-free fragment of ECoR* \cite{nakamuraExistentialCalculiRelations2023}.

\paragraph*{On Kleene algebra with loop}
The (graph) loop operator $\term^{\lop}$ \cite{daneckiPropositionalDynamicLogic1984} is the operator defined by: $\term^{\lop} \defeq \term \intersection \id$.
We leave it open whether the \kl{equational theory} of Kleene algebras with loop (with respect to relations) is PSPACE-complete.
Related to this, PDL with loop (loop-PDL) \cite{daneckiPropositionalDynamicLogic1984,gollerPDLIntersectionConverse2009} is EXPTIME-complete, whereas PDL with intersection is 2EXPSPACE-complete.
For loop-PDL, the formulas can be translated into ICPDL formulas of intersection width at most $2$, so the EXPTIME upper bound of loop-PDL is a corollary of the fact that the bounded intersection width fragment of ICPDL is in EXPTIME \cite[Corollary 4.9]{gollerPDLIntersectionConverse2009}.
However, this translation essentially needs the diamond modality as a primitive, so \Cref{corollary: PCoR* with tests v iw fixed PSPACE-complete} does not imply the PSPACE decidability, immediately;
the closure size (with respect to the closure function of \Cref{definition: closure}) is still exponential because the number of \kl{labels} occurring in Kleene algebra terms with loop is not bounded.
A related problem is posed by Sedl{\'a}r \cite[p.\ 16]{sedlarKleeneAlgebraDynamic2023} for the complexity of the \kl{equational theory} of Kleene algebra with the domain operator (dKA) with respect to relations.
Here, the domain operator $\mathsf{d}(\term)$ is the operator $\mathsf{d}(\term) \defeq (\term \top) \intersection \id$.
Note that, because $\REL \models \mathsf{d}(\term) = (\term \top)^{\lop}$, we can reduce the \kl{equational theory} of dKA into that of Kleene algebras with loop and top.

\paragraph*{On axiomatization}
Another interesting question is the axiomatizability.
Doumane and Pous presented a finite axiomatization for the \kl{equational theory} of \kl{identity-free Kleene lattices} \cite{doumaneCompletenessIdentityfreeKleene2018}.
However, the finite axiomatizability is still an open problem for the \kl{equational theory} of \kl{Kleene lattices} (with additional operators) with respect to relations \cite[p.\ 15]{doumaneCompletenessIdentityfreeKleene2018}\cite{pousPositiveCalculusRelations2018}.

\section*{Acknowledgment}
We would like to thank anonymous referees for useful comments.
This work was supported by JSPS KAKENHI Grant Numbers JP16J08119, JP21K13828, JP25K14985.

\bibliographystyle{alphaurl}
\bibliography{main}

\newcommand{\etalchar}[1]{$^{#1}$}
\begin{thebibliography}{CDGLV02}

\bibitem[AB95]{andrekaEquationalTheoryUnionfree1995}
Hajnal Andréka and D.~A. Bredikhin.
\newblock The equational theory of union-free algebras of relations.
\newblock {\em Algebra Universalis}, 33(4):516--532, 1995.
\newblock \href {https://doi.org/10.1007/BF01225472}
  {\path{doi:10.1007/BF01225472}}.

\bibitem[AMN11]{andrekaEquationalTheoryKleene2011}
Hajnal Andréka, Szabolcs Mikulás, and István Németi.
\newblock The equational theory of {Kleene} lattices.
\newblock {\em Theoretical Computer Science}, 412(52):7099--7108, 2011.
\newblock \href {https://doi.org/10.1016/J.TCS.2011.09.024}
  {\path{doi:10.1016/J.TCS.2011.09.024}}.

\bibitem[Ant96]{antimirovPartialDerivativesRegular1996}
Valentin Antimirov.
\newblock Partial derivatives of regular expressions and finite automaton
  constructions.
\newblock {\em Theoretical Computer Science}, 155(2):291--319, 1996.
\newblock \href {https://doi.org/10.1016/0304-3975(95)00182-4}
  {\path{doi:10.1016/0304-3975(95)00182-4}}.

\bibitem[AtC07]{arecesHybridLogics2007}
Carlos Areces and Balder ten Cate.
\newblock Hybrid logics.
\newblock In {\em Handbook of Modal Logic}, volume~3 of {\em Studies in Logic
  and Practical Reasoning}, pages 821--868. Elsevier, 2007.
\newblock \href {https://doi.org/10.1016/S1570-2464(07)80017-6}
  {\path{doi:10.1016/S1570-2464(07)80017-6}}.

\bibitem[BBM{\etalchar{+}}16]{bastosStateComplexityPartial2016}
Rafaela Bastos, Sabine Broda, António Machiavelo, Nelma Moreira, and Rogério
  Reis.
\newblock On the state complexity of partial derivative automata for regular
  expressions with intersection.
\newblock In {\em DCFS}, volume 9777 of {\em LNTCS}, pages 45--59. Springer,
  2016.
\newblock \href {https://doi.org/10.1007/978-3-319-41114-9_4}
  {\path{doi:10.1007/978-3-319-41114-9_4}}.

\bibitem[B{\'E}S95]{bloomNotesEquationalTheories1995}
S.~L. Bloom, Z.~{\'E}sik, and Gh. Stefanescu.
\newblock Notes on equational theories of relations.
\newblock {\em algebra universalis}, 33(1):98--126, 1995.
\newblock \href {https://doi.org/10.1007/BF01190768}
  {\path{doi:10.1007/BF01190768}}.

\bibitem[BGK{\etalchar{+}}16]{bonchiRewritingModuloSymmetric2016}
Filippo Bonchi, Fabio Gadducci, Aleks Kissinger, Paweł Sobociński, and Fabio
  Zanasi.
\newblock Rewriting modulo symmetric monoidal structure.
\newblock In {\em LICS}, page 710–719. ACM, 2016.
\newblock \href {https://doi.org/10.1145/2933575.2935316}
  {\path{doi:10.1145/2933575.2935316}}.

\bibitem[Bod98]{bodlaenderPartialArboretumGraphs1998}
Hans~L. Bodlaender.
\newblock A partial $k$-arboretum of graphs with bounded treewidth.
\newblock {\em Theoretical Computer Science}, 209(1-2):1--45, 1998.
\newblock \href {https://doi.org/10.1016/S0304-3975(97)00228-4}
  {\path{doi:10.1016/S0304-3975(97)00228-4}}.

\bibitem[BP15]{brunetPetriAutomataKleene2015}
Paul Brunet and Damien Pous.
\newblock Petri automata for {Kleene} allegories.
\newblock In {\em LICS}, pages 68--79. IEEE, 2015.
\newblock \href {https://doi.org/10.1109/LICS.2015.17}
  {\path{doi:10.1109/LICS.2015.17}}.

\bibitem[BP16a]{bojanczykDefinabilityEqualsRecognizability2016}
Mikołaj Bojańczyk and Michał Pilipczuk.
\newblock Definability equals recognizability for graphs of bounded treewidth.
\newblock In {\em LICS}, pages 407--416. ACM, 2016.
\newblock \href {https://doi.org/10.1145/2933575.2934508}
  {\path{doi:10.1145/2933575.2934508}}.

\bibitem[BP16b]{brunetAlgorithmsKleeneAlgebra2016}
Paul Brunet and Damien Pous.
\newblock Algorithms for {Kleene} algebra with converse.
\newblock {\em Journal of Logical and Algebraic Methods in Programming},
  85(4):574--594, 2016.
\newblock \href {https://doi.org/10.1016/J.JLAMP.2015.07.005}
  {\path{doi:10.1016/J.JLAMP.2015.07.005}}.

\bibitem[BP17]{brunetPetriAutomata2017}
Paul Brunet and Damien Pous.
\newblock Petri automata.
\newblock {\em Logical Methods in Computer Science}, 13(3), 2017.
\newblock \href {https://doi.org/10.23638/LMCS-13(3:33)2017}
  {\path{doi:10.23638/LMCS-13(3:33)2017}}.

\bibitem[Bru17]{brunetReversibleKleeneLattices2017}
Paul Brunet.
\newblock Reversible {Kleene} lattices.
\newblock In {\em MFCS}, volume~83 of {\em LIPIcs}, pages 66:1--66:14. Schloss
  Dagstuhl, 2017.
\newblock \href {https://doi.org/10.4230/LIPICS.MFCS.2017.66}
  {\path{doi:10.4230/LIPICS.MFCS.2017.66}}.

\bibitem[Brz64]{brzozowskiDerivativesRegularExpressions1964}
Janusz~A. Brzozowski.
\newblock Derivatives of regular expressions.
\newblock {\em Journal of the ACM}, 11(4):481--494, 1964.
\newblock \href {https://doi.org/10.1145/321239.321249}
  {\path{doi:10.1145/321239.321249}}.

\bibitem[CDG{\etalchar{+}}07]{comonTreeAutomataTechniques2007}
Hubert Comon, Max Dauchet, R{\'e}mi Gilleron, Florent Jacquemard, Denis Lugiez,
  Christof L{\"o}ding, Sophie Tison, and Marc Tommasi.
\newblock {\em Tree Automata Techniques and Applications}.
\newblock 2007.
\newblock URL: \url{https://inria.hal.science/hal-03367725}.

\bibitem[CDGLV02]{calvaneseViewBasedQueryAnswering2002}
Diego Calvanese, Giuseppe De~Giacomo, Maurizio Lenzerini, and Moshe~Y. Vardi.
\newblock View-based query answering and query containment over semistructured
  data.
\newblock In {\em DBPL}, volume 2397 of {\em LNCS}, pages 40--61. Springer,
  2002.
\newblock \href {https://doi.org/10.1007/3-540-46093-4_3}
  {\path{doi:10.1007/3-540-46093-4_3}}.

\bibitem[CE12]{courcelleGraphStructureMonadic2012}
Bruno Courcelle and Joost Engelfriet.
\newblock {\em Graph Structure and Monadic Second-Order Logic}.
\newblock Number 138 in Encyclopedia of Mathematics and its Applications.
  Cambridge University Press, 2012.
\newblock \href {https://doi.org/10.1017/CBO9780511977619}
  {\path{doi:10.1017/CBO9780511977619}}.

\bibitem[CM77]{chandraOptimalImplementationConjunctive1977}
Ashok~K. Chandra and Philip~M. Merlin.
\newblock Optimal implementation of conjunctive queries in relational data
  bases.
\newblock In {\em STOC}, pages 77--90. ACM, 1977.
\newblock \href {https://doi.org/10.1145/800105.803397}
  {\path{doi:10.1145/800105.803397}}.

\bibitem[Con71]{conwayRegularAlgebraFinite1971}
John~H. Conway.
\newblock {\em Regular Algebra and Finite Machines}.
\newblock Chapman and Hall, 1971.

\bibitem[Cou88]{courcelleMonadicSecondorderLogic1988}
Bruno Courcelle.
\newblock The monadic second-order logic of graphs, {II}: Infinite graphs of
  bounded width.
\newblock {\em Mathematical Systems Theory}, 21(1):187--221, 1988.
\newblock \href {https://doi.org/10.1007/BF02088013}
  {\path{doi:10.1007/BF02088013}}.

\bibitem[Cou90]{courcelleMonadicSecondorderLogic1990}
Bruno Courcelle.
\newblock The monadic second-order logic of graphs. {I}. recognizable sets of
  finite graphs.
\newblock {\em Information and Computation}, 85(1):12--75, 1990.
\newblock \href {https://doi.org/10.1016/0890-5401(90)90043-H}
  {\path{doi:10.1016/0890-5401(90)90043-H}}.

\bibitem[Dan84a]{daneckiNondeterministicPropositionalDynamic1984}
Ryszard Danecki.
\newblock Nondeterministic propositional dynamic logic with intersection is
  decidable.
\newblock In {\em SCT}, volume 208 of {\em LNCS}, pages 34--53. Springer, 1984.
\newblock \href {https://doi.org/10.1007/3-540-16066-3_5}
  {\path{doi:10.1007/3-540-16066-3_5}}.

\bibitem[Dan84b]{daneckiPropositionalDynamicLogic1984}
Ryszard Danecki.
\newblock Propositional dynamic logic with strong loop predicate.
\newblock In {\em MFCS}, volume 176 of {\em LNCS}, pages 573--581. Springer,
  1984.
\newblock \href {https://doi.org/10.1007/BFb0030342}
  {\path{doi:10.1007/BFb0030342}}.

\bibitem[DP18]{doumaneCompletenessIdentityfreeKleene2018}
Amina Doumane and Damien Pous.
\newblock Completeness for identity-free {Kleene} lattices.
\newblock In {\em CONCUR}, volume 118 of {\em LIPIcs}, pages 18:1--18:17.
  Schloss Dagstuhl, 2018.
\newblock \href {https://doi.org/10.4230/LIPICS.CONCUR.2018.18}
  {\path{doi:10.4230/LIPICS.CONCUR.2018.18}}.

\bibitem[EF95]{ebbinghausFiniteModelTheory1995}
Heinz-Dieter Ebbinghaus and Jörg Flum.
\newblock {\em Finite Model Theory}.
\newblock Springer Monographs in Mathematics. Springer, 2 edition, 1995.
\newblock \href {https://doi.org/10.1007/3-540-28788-4}
  {\path{doi:10.1007/3-540-28788-4}}.

\bibitem[Epp92]{eppsteinParallelRecognitionSeriesparallel1992}
David Eppstein.
\newblock Parallel recognition of series-parallel graphs.
\newblock {\em Information and Computation}, 98(1):41--55, 1992.
\newblock \href {https://doi.org/10.1016/0890-5401(92)90041-D}
  {\path{doi:10.1016/0890-5401(92)90041-D}}.

\bibitem[FFP23]{figueiraPDLSteroidsExpressive2023}
Diego Figueira, Santiago Figueira, and Edwin Pin.
\newblock {PDL} on steroids: on expressive extensions of {PDL} with
  intersection and converse.
\newblock In {\em LICS}, pages 1--13. IEEE, 2023.
\newblock \href {https://doi.org/10.1109/LICS56636.2023.10175813}
  {\path{doi:10.1109/LICS56636.2023.10175813}}.

\bibitem[FL79]{fischerPropositionalDynamicLogic1979}
Michael~J. Fischer and Richard~E. Ladner.
\newblock Propositional dynamic logic of regular programs.
\newblock {\em Journal of Computer and System Sciences}, 18(2):194--211, 1979.
\newblock \href {https://doi.org/10.1016/0022-0000(79)90046-1}
  {\path{doi:10.1016/0022-0000(79)90046-1}}.

\bibitem[FS90]{freydCategoriesAllegories1990}
Peter~J. Freyd and Andre Scedrov.
\newblock {\em Categories, Allegories}, volume~39.
\newblock North-Holland, 1990.
\newblock \href {https://doi.org/10.1016/S0924-6509(08)70049-7}
  {\path{doi:10.1016/S0924-6509(08)70049-7}}.

\bibitem[Fü80]{furerComplexityInequivalenceProblem1980}
Martin Fürer.
\newblock The complexity of the inequivalence problem for regular expressions
  with intersection.
\newblock In {\em ICALP}, volume~85 of {\em LNCS}, pages 234--245. Springer,
  1980.
\newblock \href {https://doi.org/10.1007/3-540-10003-2_74}
  {\path{doi:10.1007/3-540-10003-2_74}}.

\bibitem[GLL09]{gollerPDLIntersectionConverse2009}
Stefan Göller, Markus Lohrey, and Carsten Lutz.
\newblock {PDL} with intersection and converse: satisfiability and
  infinite-state model checking.
\newblock {\em The Journal of Symbolic Logic}, 74(1):279--314, 2009.
\newblock \href {https://doi.org/10.2178/jsl/1231082313}
  {\path{doi:10.2178/jsl/1231082313}}.

\bibitem[GO14]{geffertTransformingTwoWayAlternating2014}
Viliam Geffert and Alexander Okhotin.
\newblock Transforming two-way alternating finite automata to one-way
  nondeterministic automata.
\newblock In {\em MFCS}, volume 8634 of {\em LNTCS}, pages 291--302. Springer,
  2014.
\newblock \href {https://doi.org/10.1007/978-3-662-44522-8_25}
  {\path{doi:10.1007/978-3-662-44522-8_25}}.

\bibitem[GP90]{gargovNoteBooleanModal1990}
George Gargov and Solomon Passy.
\newblock A note on boolean modal logic.
\newblock In {\em Mathematical Logic}, pages 299--309. Springer, 1990.
\newblock \href {https://doi.org/10.1007/978-1-4613-0609-2_21}
  {\path{doi:10.1007/978-1-4613-0609-2_21}}.

\bibitem[Hir18]{hirschDecidabilityEquationalTheories2018}
Robin Hirsch.
\newblock Decidability of equational theories for subsignatures of relation
  algebra.
\newblock In {\em RAMICS}, volume 11194 of {\em LNCS}, pages 87--96. Springer,
  2018.
\newblock \href {https://doi.org/10.1007/978-3-030-02149-8_6}
  {\path{doi:10.1007/978-3-030-02149-8_6}}.

\bibitem[Kle56]{kleeneRepresentationEventsNerve1956}
Stephen~C. Kleene.
\newblock Representation of events in nerve nets and finite automata.
\newblock In {\em Automata Studies. (AM-34)}, pages 3--42. Princeton University
  Press, 1956.
\newblock \href {https://doi.org/10.1515/9781400882618-002}
  {\path{doi:10.1515/9781400882618-002}}.

\bibitem[Koz91]{kozenCompletenessTheoremKleene1991}
Dexter Kozen.
\newblock A completeness theorem for {Kleene} algebras and the algebra of
  regular events.
\newblock In {\em LICS}, pages 214--225. IEEE, 1991.
\newblock \href {https://doi.org/10.1109/LICS.1991.151646}
  {\path{doi:10.1109/LICS.1991.151646}}.

\bibitem[Koz97]{kozenKleeneAlgebraTests1997}
Dexter Kozen.
\newblock {Kleene} algebra with tests.
\newblock {\em ACM Transactions on Programming Languages and Systems},
  19(3):427--443, 1997.
\newblock \href {https://doi.org/10.1145/256167.256195}
  {\path{doi:10.1145/256167.256195}}.

\bibitem[KS96]{kozenKleeneAlgebraTests1996}
Dexter Kozen and Frederick Smith.
\newblock {Kleene} algebra with tests: Completeness and decidability.
\newblock In {\em CSL}, volume 1258 of {\em LNCS}, pages 244--259. Springer,
  1996.
\newblock \href {https://doi.org/10.1007/3-540-63172-0_43}
  {\path{doi:10.1007/3-540-63172-0_43}}.

\bibitem[KZ21]{kapoutsisAlternationTwowayFinite2021}
Christos Kapoutsis and Mohammad Zakzok.
\newblock Alternation in two-way finite automata.
\newblock {\em Theoretical Computer Science}, 870:75--102, 2021.
\newblock \href {https://doi.org/10.1016/j.tcs.2020.12.011}
  {\path{doi:10.1016/j.tcs.2020.12.011}}.

\bibitem[LL05]{lange2ExpTimeLowerBounds2005}
Martin Lange and Carsten Lutz.
\newblock {2-ExpTime} lower bounds for propositional dynamic logics with
  intersection.
\newblock {\em The Journal of Symbolic Logic}, 70(04):1072--1086, 2005.
\newblock \href {https://doi.org/10.2178/jsl/1129642115}
  {\path{doi:10.2178/jsl/1129642115}}.

\bibitem[LLS78]{ladnerAlternatingPushdownAutomata1978}
Richard~E. Ladner, Richard~J. Lipton, and Larry~J. Stockmeyer.
\newblock Alternating pushdown automata.
\newblock In {\em SFCS}, pages 92--106. IEEE, 1978.
\newblock \href {https://doi.org/10.1109/SFCS.1978.6}
  {\path{doi:10.1109/SFCS.1978.6}}.

\bibitem[LLS84]{ladnerAlternatingPushdownStack1984}
Richard~E. Ladner, Richard~J. Lipton, and Larry~J. Stockmeyer.
\newblock Alternating pushdown and stack automata.
\newblock {\em SIAM Journal on Computing}, 13(1):135--155, 1984.
\newblock \href {https://doi.org/10.1137/0213010} {\path{doi:10.1137/0213010}}.

\bibitem[Lut05]{lutzPDLIntersectionConverse2005}
Carsten Lutz.
\newblock {PDL} with intersection and converse is decidable.
\newblock In {\em CSL}, volume 3634 of {\em LNTCS}, pages 413--427. Springer,
  2005.
\newblock \href {https://doi.org/10.1007/11538363_29}
  {\path{doi:10.1007/11538363_29}}.

\bibitem[LW98]{lodayaSeriesparallelPosetsAlgebra1998}
K.~Lodaya and P.~Weil.
\newblock Series-parallel posets: Algebra, automata and languages.
\newblock In {\em STACS}, volume 1373 of {\em LNCS}, pages 555--565. Springer,
  1998.
\newblock \href {https://doi.org/10.1007/BFb0028590}
  {\path{doi:10.1007/BFb0028590}}.

\bibitem[LW00]{lodayaSeriesParallelLanguages2000}
K.~Lodaya and P.~Weil.
\newblock Series–parallel languages and the bounded-width property.
\newblock {\em Theoretical Computer Science}, 237(1):347--380, 2000.
\newblock \href {https://doi.org/10.1016/S0304-3975(00)00031-1}
  {\path{doi:10.1016/S0304-3975(00)00031-1}}.

\bibitem[LW01]{lodayaRationalityAlgebrasSeries2001}
K.~Lodaya and P.~Weil.
\newblock Rationality in algebras with a series operation.
\newblock {\em Information and Computation}, 171(2):269--293, 2001.
\newblock \href {https://doi.org/10.1006/inco.2001.3077}
  {\path{doi:10.1006/inco.2001.3077}}.

\bibitem[LW05]{lutzPDLNegationAtomic2005}
Carsten Lutz and Dirk Walther.
\newblock {PDL} with negation of atomic programs.
\newblock {\em Journal of Applied Non-Classical Logics}, 15(2):189--213, 2005.
\newblock \href {https://doi.org/10.3166/jancl.15.189-213}
  {\path{doi:10.3166/jancl.15.189-213}}.

\bibitem[Mae61]{maeharaCraigsInterpolationTheorem1961}
Shoji Maehara.
\newblock Craig's interpolation theorem (in {Japanese}).
\newblock {\em Sugaku}, 12(4):235--237, 1961.
\newblock \href {https://doi.org/10.11429/sugaku1947.12.235}
  {\path{doi:10.11429/sugaku1947.12.235}}.

\bibitem[Mak04]{makowskyAlgorithmicUsesFeferman2004}
J.~A. Makowsky.
\newblock Algorithmic uses of the feferman–vaught theorem.
\newblock {\em Annals of Pure and Applied Logic}, 126(1):159--213, 2004.
\newblock \href {https://doi.org/10.1016/j.apal.2003.11.002}
  {\path{doi:10.1016/j.apal.2003.11.002}}.

\bibitem[MS72]{meyerEquivalenceProblemRegular1972}
A.~R. Meyer and L.~J. Stockmeyer.
\newblock The equivalence problem for regular expressions with squaring
  requires exponential space.
\newblock In {\em SWAT}, pages 125--129. IEEE, 1972.
\newblock \href {https://doi.org/10.1109/SWAT.1972.29}
  {\path{doi:10.1109/SWAT.1972.29}}.

\bibitem[Nak17]{nakamuraPartialDerivativesGraphs2017}
Yoshiki Nakamura.
\newblock Partial derivatives on graphs for {Kleene} allegories.
\newblock In {\em LICS}, pages 1--12. IEEE, 2017.
\newblock \href {https://doi.org/10.1109/LICS.2017.8005132}
  {\path{doi:10.1109/LICS.2017.8005132}}.

\bibitem[Nak19]{nakamuraUndecidabilityFO3Calculus2019}
Yoshiki Nakamura.
\newblock The undecidability of {FO3} and the calculus of relations with just
  one binary relation.
\newblock In {\em ICLA}, volume 11600 of {\em LNTCS}, pages 108--120. Springer,
  2019.
\newblock \href {https://doi.org/10.1007/978-3-662-58771-3_11}
  {\path{doi:10.1007/978-3-662-58771-3_11}}.

\bibitem[Nak20]{nakamuraExpressivePowerSuccinctness2020}
Yoshiki Nakamura.
\newblock Expressive power and succinctness of the positive calculus of
  relations.
\newblock In {\em RAMICS}, volume 12062 of {\em LNTCS}, pages 204--220.
  Springer, 2020.
\newblock \href {https://doi.org/10.1007/978-3-030-43520-2_13}
  {\path{doi:10.1007/978-3-030-43520-2_13}}.

\bibitem[Nak22]{nakamuraExpressivePowerSuccinctness2022}
Yoshiki Nakamura.
\newblock Expressive power and succinctness of the positive calculus of binary
  relations.
\newblock {\em Journal of Logical and Algebraic Methods in Programming},
  127:100760, 2022.
\newblock \href {https://doi.org/10.1016/j.jlamp.2022.100760}
  {\path{doi:10.1016/j.jlamp.2022.100760}}.

\bibitem[Nak23]{nakamuraExistentialCalculiRelations2023}
Yoshiki Nakamura.
\newblock Existential calculi of relations with transitive closure: Complexity
  and edge saturations.
\newblock In {\em LICS}, pages 1--13. IEEE, 2023.
\newblock \href {https://doi.org/10.1109/LICS56636.2023.10175811}
  {\path{doi:10.1109/LICS56636.2023.10175811}}.

\bibitem[Nak24]{nakamuraUndecidabilityPositiveCalculus2024}
Yoshiki Nakamura.
\newblock Undecidability of the positive calculus of relations with transitive
  closure and difference: Hypothesis elimination using graph loops.
\newblock In {\em RAMICS}, volume 14787 of {\em LNTCS}, pages 207--224.
  Springer, 2024.
\newblock \href {https://doi.org/10.1007/978-3-031-68279-7_13}
  {\path{doi:10.1007/978-3-031-68279-7_13}}.

\bibitem[Nak25]{nakamuraFiniteRelationalSemantics2025}
Yoshiki Nakamura.
\newblock Finite relational semantics for language {Kleene} algebra with
  complement.
\newblock In {\em CSL}, volume 326 of {\em LIPIcs}, page 37:1–37:23. Schloss
  Dagstuhl, 2025.
\newblock \href {https://doi.org/10.4230/LIPIcs.CSL.2025.37}
  {\path{doi:10.4230/LIPIcs.CSL.2025.37}}.

\bibitem[Ng84]{ngRelationAlgebrasTransitive1984}
Kan~Ching Ng.
\newblock {\em Relation algebras with transitive closure}.
\newblock PhD thesis, University of California, 1984.

\bibitem[Pou18]{pousPositiveCalculusRelations2018}
Damien Pous.
\newblock On the positive calculus of relations with transitive closure.
\newblock In {\em STACS}, volume~96 of {\em LIPIcs}, pages 3:1--3:16. Schloss
  Dagstuhl, 2018.
\newblock \href {https://doi.org/10.4230/LIPICS.STACS.2018.3}
  {\path{doi:10.4230/LIPICS.STACS.2018.3}}.

\bibitem[PV18]{pousAllegoriesDecidabilityGraph2018}
Damien Pous and Valeria Vignudelli.
\newblock Allegories: decidability and graph homomorphisms.
\newblock In {\em LICS}, pages 829--838. ACM, 2018.
\newblock \href {https://doi.org/10.1145/3209108.3209172}
  {\path{doi:10.1145/3209108.3209172}}.

\bibitem[PW24]{pousCompletenessTheoremsKleene2024}
Damien Pous and Jana Wagemaker.
\newblock Completeness theorems for kleene algebra with tests and top.
\newblock {\em Logical Methods in Computer Science}, Volume 20, Issue 3, 2024.
\newblock \href {https://doi.org/10.46298/lmcs-20(3:27)2024}
  {\path{doi:10.46298/lmcs-20(3:27)2024}}.

\bibitem[RRV17]{reutterRegularQueriesGraph2017}
Juan~L. Reutter, Miguel Romero, and Moshe~Y. Vardi.
\newblock Regular queries on graph databases.
\newblock {\em Theory of Computing Systems}, 61(1):31--83, 2017.
\newblock \href {https://doi.org/10.1007/s00224-016-9676-2}
  {\path{doi:10.1007/s00224-016-9676-2}}.

\bibitem[RS83]{robertsonGraphMinorsExcluding1983}
Neil Robertson and Paul~D. Seymour.
\newblock Graph minors. {I}. excluding a forest.
\newblock {\em Journal of Combinatorial Theory, Series B}, 35(1):39--61, 1983.
\newblock \href {https://doi.org/10.1016/0095-8956(83)90079-5}
  {\path{doi:10.1016/0095-8956(83)90079-5}}.

\bibitem[RS86]{robertsonGraphMinorsII1986}
Neil Robertson and Paul~D. Seymour.
\newblock Graph minors. {II}. algorithmic aspects of tree-width.
\newblock {\em Journal of Algorithms}, 7(3):309--322, 1986.
\newblock \href {https://doi.org/10.1016/0196-6774(86)90023-4}
  {\path{doi:10.1016/0196-6774(86)90023-4}}.

\bibitem[Sak09]{sakarovitchElementsAutomataTheory2009}
Jacques Sakarovitch.
\newblock {\em Elements of Automata Theory}.
\newblock Cambridge University Press, 2009.
\newblock \href {https://doi.org/10.1017/CBO9781139195218}
  {\path{doi:10.1017/CBO9781139195218}}.

\bibitem[Sav70]{savitchRelationshipsNondeterministicDeterministic1970}
Walter Savitch.
\newblock Relationships between nondeterministic and deterministic tape
  complexities.
\newblock {\em Journal of Computer and System Sciences}, 4(2):177--192, 1970.
\newblock \href {https://doi.org/10.1016/S0022-0000(70)80006-X}
  {\path{doi:10.1016/S0022-0000(70)80006-X}}.

\bibitem[Sed23]{sedlarKleeneAlgebraDynamic2023}
Igor Sedlár.
\newblock Kleene algebra with dynamic tests: Completeness and complexity, 2023.
\newblock \href {https://doi.org/10.48550/arXiv.2311.06937}
  {\path{doi:10.48550/arXiv.2311.06937}}.

\bibitem[Tar41]{tarskiCalculusRelations1941}
Alfred Tarski.
\newblock On the calculus of relations.
\newblock {\em The Journal of Symbolic Logic}, 6(3):73--89, 1941.
\newblock \href {https://doi.org/10.2307/2268577} {\path{doi:10.2307/2268577}}.

\bibitem[TG87]{tarskiFormalizationSetTheory1987}
Alfred Tarski and Steven Givant.
\newblock {\em A Formalization of Set Theory without Variables}, volume~41.
\newblock American Mathematical Society, 1987.
\newblock \href {https://doi.org/10.1090/coll/041}
  {\path{doi:10.1090/coll/041}}.

\bibitem[VTL79]{valdesRecognitionSeriesParallel1979}
Jacobo Valdes, Robert~E. Tarjan, and Eugene~L. Lawler.
\newblock The recognition of series parallel digraphs.
\newblock In {\em STOC}, page 1–12. ACM, 1979.
\newblock \href {https://doi.org/10.1145/800135.804393}
  {\path{doi:10.1145/800135.804393}}.

\end{thebibliography}

\appendix

\section{Proof of {\Cref{proposition: decomposition and 2AFA}}}\label{section: proposition: decomposition and 2AFA}
\begin{proof}
    Both directions are shown by easy induction on the derivative trees, respectively.

    ($\Longrightarrow$):
    By induction on the derivation tree.
    \begin{itemize}
        \item 
        Case \begin{prooftree}
            \hypo{\lterm[1] \longrightarrow^{(\struc_{j})_{\bullet}} \lterm[2]}
            \infer1[D]{\lterm[1]_{(j')} \mathrel{(\odot_{i = 1}^{n} \longrightarrow^{(\struc_i)_{\bullet}})} \lterm[2]_{(j')}}
        \end{prooftree}
        where $j'$ satisfies
        $\lterm[1]_{(j')} = \lterm[1]_{(j)}$ and
        $\lterm[2]_{(j')} = \lterm[2]_{(j)}$:
        Then,
        \[\begin{prooftree}
            \hypo{\mathstrut}
            \infer1[D]{\tuple{\tuple{\lterm[1], \lterm[2]}_{\checkmark}, j'} \in S^{{\automaton_{k}^{\term[3]}}}_{\triangleright \struc_1 \dots \struc_{n} \triangleleft}}
            \infer[double]1[$-1$, $+1$, $\checkmark$]{\tuple{\tuple{\lterm[1], \lterm[2]}_{\checkmark}, j} \in S^{{\automaton_{k}^{\term[3]}}}_{\triangleright \struc_1 \dots \struc_{n} \triangleleft}}
        \end{prooftree}.\]
        \item
        Case \begin{prooftree}
            \hypo{\lterm[1]_{(j')} \mathrel{(\odot_{i = 1}^{n} \longrightarrow^{(\struc_i)_{\bullet}})} \lterm[3]'_{(j')}}
            \hypo{\lterm[3]'_{(j')} \mathrel{(\odot_{i = 1}^{n} \longrightarrow^{(\struc_i)_{\bullet}})} \lterm[2]_{(j')}}
            \infer2[T]{\lterm[1]_{(j')} \mathrel{(\odot_{i = 1}^{n} \longrightarrow^{(\struc_i)_{\bullet}})} \lterm[2]_{(j')}}
        \end{prooftree} where $j'$ satisfies
        $\lterm[1]_{(j')} = \lterm[1]_{(j)}$ and
        $\lterm[2]_{(j')} = \lterm[2]_{(j)}$:
        Then we have:
        \[\begin{prooftree}
            \hypo{\mathstrut}
            \infer[double]1[(IH)]{\tuple{\tuple{\lterm[1], \lterm[3]'}_{\checkmark}, j'} \in S^{{\automaton_{k}^{\term[3]}}}_{\triangleright \struc_1 \dots \struc_{n} \triangleleft}}
            \infer1[$\checkmark$]{\tuple{\tuple{\lterm[1], \lterm[3]'}_{?}, j'} \in S^{{\automaton_{k}^{\term[3]}}}_{\triangleright \struc_1 \dots \struc_{n} \triangleleft}}
            \hypo{\mathstrut}
            \infer[double]1[(IH)]{\tuple{\tuple{\lterm[3]', \lterm[2]}_{\checkmark}, j'} \in S^{{\automaton_{k}^{\term[3]}}}_{\triangleright \struc_1 \dots \struc_{n} \triangleleft}}
            \infer1[$\checkmark$]{\tuple{\tuple{\lterm[3]', \lterm[2]}_{?}, j'} \in S^{{\automaton_{k}^{\term[3]}}}_{\triangleright \struc_1 \dots \struc_{n} \triangleleft}}
            \infer2[T]{\tuple{\tuple{\lterm[1], \lterm[2]}_{\checkmark}, j'} \in S^{{\automaton_{k}^{\term[3]}}}_{\triangleright \struc_1 \dots \struc_{n} \triangleleft}}
            \infer[double]1[$-1$, $+1$, $\checkmark$]{\tuple{\tuple{\lterm[1], \lterm[2]}_{\checkmark}, j} \in S^{{\automaton_{k}^{\term[3]}}}_{\triangleright \struc_1 \dots \struc_{n} \triangleleft}}
        \end{prooftree}.\]
        \item
        Case \begin{prooftree}
            \hypo{\lterm[1][\bullet/l]_{(j')} \mathrel{(\odot_{i = 1}^{n} \longrightarrow^{(\struc_i)_{\bullet}})} \lterm[2][\bullet/r]_{(j')}}
            \infer1[L]{\lterm[1][\lab/l]_{(j')} \mathrel{(\odot_{i = 1}^{n} \longrightarrow^{(\struc_i)_{\bullet}})} \lterm[2][\lab/r]_{(j')}}
        \end{prooftree} where $j'$ satisfies that
        $\lterm[1][\lab/l]_{(j')} = \lterm[1][\lab/l]_{(j)}$ and
        $\lterm[2][\lab/r]_{(j')} = \lterm[2][\lab/r]_{(j)}$:
        Then we have:
        \[\begin{prooftree}
            \hypo{\mathstrut}
            \infer[double]1[(IH)]{\tuple{\tuple{\lterm[1][\bullet/l], \lterm[2][\bullet/r]}_{\checkmark}, j'} \in S^{{\automaton_{k}^{\term[3]}}}_{\triangleright \struc_1 \dots \struc_{n} \triangleleft}}
            \infer1[$\checkmark$]{\tuple{\tuple{\lterm[1][\bullet/l], \lterm[2][\bullet/r]}_{?}, j'} \in S^{{\automaton_{k}^{\term[3]}}}_{\triangleright \struc_1 \dots \struc_{n} \triangleleft}}
            \infer1[L]{\tuple{\tuple{\lterm[1][\lab/l], \lterm[2][\lab/r]}_{\checkmark}, j'} \in S^{{\automaton_{k}^{\term[3]}}}_{\triangleright \struc_1 \dots \struc_{n} \triangleleft}}
            \infer[double]1[$-1$, $+1$, $\checkmark$]{\tuple{\tuple{\lterm[1][\lab/l], \lterm[2][\lab/r]}_{\checkmark}, j} \in S^{{\automaton_{k}^{\term[3]}}}_{\triangleright \struc_1 \dots \struc_{n} \triangleleft}}
        \end{prooftree}.\]
    \end{itemize}\noindent %FIXED indent

    ($\Longleftarrow$):
    By induction on the derivation tree.
    \begin{itemize}
        \item
        Case \begin{prooftree}
            \hypo{\mathstrut}
            \infer1[D]{\tuple{\tuple{\lterm[1], \lterm[2]}_{\checkmark}, j} \in S^{{\automaton_{k}^{\term[3]}}}_{\triangleright \struc_1 \dots \struc_{n} \triangleleft}}
        \end{prooftree} where $\lterm[1] \longrightarrow^{\struc} \lterm[2]$:
        Then, $\begin{prooftree}
                \hypo{\lterm[1] \longrightarrow^{(\struc_{j})_{\bullet}} \lterm[2]}
                \infer1[D]{\lterm[1]_{(j)} \mathrel{(\odot_{i = 1}^{n} \longrightarrow^{(\struc_i)_{\bullet}})} \lterm[2]_{(j)}}
            \end{prooftree}$.
        \vskip0.5\baselineskip %FIXED Added spacing for clarification
        \item
        Case 
        \begin{prooftree}
            \hypo{\tuple{\tuple{\lterm[1], \lterm[2]}_{\checkmark}, j-1} \in S^{{\automaton_{k}^{\term[3]}}}_{\triangleright \struc_1 \dots \struc_{n} \triangleleft}}
            \infer1[$\checkmark$]{\tuple{\tuple{\lterm[1], \lterm[2]}_{?}, j-1} \in S^{{\automaton_{k}^{\term[3]}}}_{\triangleright \struc_1 \dots \struc_{n} \triangleleft}}
            \infer1[$-1$]{\tuple{\tuple{\lterm[1], \lterm[2]}_{\checkmark}, j} \in S^{{\automaton_{k}^{\term[3]}}}_{\triangleright \struc_1 \dots \struc_{n} \triangleleft}}
        \end{prooftree}:
        (Note that $\checkmark$ is the only rule that can apply to $\tuple{\bl,\bl}_{?}$.)
        We then have:
        \[\begin{prooftree}
            \hypo{\mathstrut}
            \infer[double]1[(IH)]{\lterm[1]_{(j-1)} \mathrel{(\odot_{i = 1}^{n} \longrightarrow^{(\struc_i)_{\bullet}})} \lterm[2]_{(j-1)}}
         \end{prooftree}.\]
        By
        $\overrightarrow{\mathsf{lab}}(\lterm[1]), \overrightarrow{\mathsf{lab}}(\lterm[2]) \in \univ{(\struc_{j-1})_{\bullet}}^{+} \cap \univ{(\struc_{j})_{\bullet}}^{+}$,
        we have $\lterm[1]_{(j-1)} = \lterm[1]_{(j)}$ and $\lterm[2]_{(j-1)} = \lterm[2]_{(j)}$, and thus this case has been proved.

        \item 
        Case
        \begin{prooftree}
            \hypo{\tuple{\tuple{\lterm[1], \lterm[2]}_{?}, j+1} \in S^{{\automaton_{k}^{\term[3]}}}_{\triangleright \struc_1 \dots \struc_{n} \triangleleft}}
            \infer1[$+1$]{\tuple{\tuple{\lterm[1], \lterm[2]}_{\checkmark}, j} \in S^{{\automaton_{k}^{\term[3]}}}_{\triangleright \struc_1 \dots \struc_{n} \triangleleft}}
        \end{prooftree}:
        In the same way as the case above.
        \vskip0.5\baselineskip %FIXED Added spacing for clarification
        \item
        Case \begin{prooftree}
            \hypo{\tuple{\tuple{\lterm[1], \lterm[3]'}_{\checkmark}, j}  \in S^{{\automaton_{k}^{\term[3]}}}_{\triangleright \struc_1 \dots \struc_{n} \triangleleft}}
            \infer1[$\checkmark$]{\tuple{\tuple{\lterm[1], \lterm[3]'}_{?}, j}  \in S^{{\automaton_{k}^{\term[3]}}}_{\triangleright \struc_1 \dots \struc_{n} \triangleleft}}
            \hypo{\tuple{\tuple{\lterm[3]', \lterm[2]}_{\checkmark}, j}  \in S^{{\automaton_{k}^{\term[3]}}}_{\triangleright \struc_1 \dots \struc_{n} \triangleleft}}    
            \infer1[$\checkmark$]{\tuple{\tuple{\lterm[3]', \lterm[2]}_{?}, j}  \in S^{{\automaton_{k}^{\term[3]}}}_{\triangleright \struc_1 \dots \struc_{n} \triangleleft}}    
            \infer2[T]{\tuple{\tuple{\lterm[1], \lterm[2]}_{\checkmark}, j}  \in S^{{\automaton_{k}^{\term[3]}}}_{\triangleright \struc_1 \dots \struc_{n} \triangleleft}}
        \end{prooftree}:
        Then we have:
        \[\begin{prooftree}
            \hypo{\mathstrut}
            \infer[double]1[(IH)]{\lterm[1]_{(j)} \mathrel{(\odot_{i = 1}^{n} \longrightarrow^{(\struc_i)_{\bullet}})} \lterm[3]'_{(j)}}
            \hypo{\mathstrut}
            \infer[double]1[(IH)]{\lterm[3]'_{(j)} \mathrel{(\odot_{i = 1}^{n} \longrightarrow^{(\struc_i)_{\bullet}})} \lterm[2]_{(j)}}
            \infer2[T]{\lterm[1]_{(j)} \mathrel{(\odot_{i = 1}^{n} \longrightarrow^{(\struc_i)_{\bullet}})} \lterm[2]_{(j)}}
         \end{prooftree}.\]
         \vskip0.5\baselineskip %FIXED Added spacing for clarification
        \item 
        Case 
        \begin{prooftree}
            \hypo{\tuple{\tuple{\lterm[1][\bullet/l], \lterm[2][\bullet/r]}_{\checkmark}, j} \in S^{{\automaton_{k}^{\term[3]}}}_{\triangleright \struc_1 \dots \struc_{n} \triangleleft}}
            \infer1[$\checkmark$]{\tuple{\tuple{\lterm[1][\bullet/l], \lterm[2][\bullet/r]}_{?}, j} \in S^{{\automaton_{k}^{\term[3]}}}_{\triangleright \struc_1 \dots \struc_{n} \triangleleft}}
            \infer1[L]{\tuple{\tuple{\lterm[1][x/l], \lterm[2][x/r]}_{\checkmark}, j} \in S^{{\automaton_{k}^{\term[3]}}}_{\triangleright \struc_1 \dots \struc_{n} \triangleleft}}
        \end{prooftree}:
        Then,
        $\begin{prooftree}
            \hypo{\mathstrut}
            \infer[double]1[(IH)]{\lterm[1][\bullet/l]_{(j)} \mathrel{(\odot_{i = 1}^{n} \longrightarrow^{(\struc_i)_{\bullet}})} \lterm[2][\bullet/r]_{(j)}}
            \infer1[L]{\lterm[1][x/l]_{(j)} \mathrel{(\odot_{i = 1}^{n} \longrightarrow^{(\struc_i)_{\bullet}})} \lterm[2][x/r]_{(j)}}
         \end{prooftree}$.
    \end{itemize}
    Hence, this completes the proof.
\end{proof}

\section{Proof of {\Cref{proposition: subSPR equiv}}}\label{section: proposition: subSPR equiv}
For a \kl{run} $\graph$, we let
\begin{align*}
    \subl(\graph) &\;\defeq\; \set{\graph[1]' \mid \mbox{there is a \kl{run} $\graph[2]$ such that $\graph[1] = \graph[2] \series \graph[1]'$}},\\
    \subr(\graph) &\;\defeq\; \set{\graph[1]' \mid \mbox{there is a \kl{run} $\graph[3]$ such that $\graph[1] = \graph[1]' \series \graph[3]$}},\\
    \sub(\graph) &\;\defeq\; \set{\graph[1]' \mid \mbox{there are \kl{runs} $\graph[2]$ and $\graph[3]$ such that $\graph[1] = \graph[2] \series \graph[1]' \series \graph[3]$}}.
\end{align*}
\begin{proof}[Proof of $\subSPRl = \subSPRl'$]
    ($\subSPRl' \subseteq \subSPRl$):
    By induction on the derivation tree of $\graph[1] \in \subSPRl'$,
    it suffices to show:
    for every $\graph[1] \in \subSPRl'$, there is some $\graph[1]' \in \subSPRr'$ such that $\graph[1]' \series \graph[1] \in \SPR$.
    We distinguish the following cases:
    \begin{itemize}
        \item Case $\graph[1] \in \SPR$:
        By letting $\graph[1]' = \id^{1}$, we have $\graph[1]' \series \graph[1] = \graph[1] \in \SPR$.
        \item Case $\graph[1] = (\graph[1]^{\mathrm{l}} \parallel \graph[2]^{\mathrm{l}}) \diamond \join^{1}_{1} \diamond \graph[3]$
        where $\graph[1]^{\mathrm{l}}, \graph[2]^{\mathrm{l}} \in \subSPRl'$ and $\graph[3] \in \SPR$:
        Let $\graph[1]^{\mathrm{r}}, \graph[2]^{\mathrm{r}} \in \subSPRr'$ be the ones obtained by IH with respect to $\graph[1]^{\mathrm{l}}$ and $\graph[2]^{\mathrm{l}}$, respectively.
        Then by letting $\graph[1]' = \fork^{1}_{1} \series (\graph[1]^{\mathrm{r}} \parallel \graph[2]^{\mathrm{r}})$,
        we have $\graph[1]' \series \graph[1] = (\fork^{1}_{1} \series ((\graph[1]^{\mathrm{r}} \series \graph[1]^{\mathrm{l}}) \parallel (\graph[2]^{\mathrm{r}} \series \graph[2]^{\mathrm{l}})) \series \join^{1}_{1}) \series \graph[3] \in \SPR$.
    \end{itemize}\noindent %FIXED indent
    Hence, this completes the proof.
    
    ($\subSPRl \subseteq \subSPRl'$):
    By induction on the derivation tree of $\graph_0 \in \SPR$,
    we show: for every $\graph_0 \in \SPR$ and $\graph \in \subl(\graph_0)$, we have $\graph \in \subSPRl'$.
    We distinguish the following cases:
    \begin{itemize}
        \item Case $\graph_0 = \id^{1}$, Case $\graph_0 = a^{1}_{1}$:
        Clear by $\subl(\graph_0) \subseteq \SPR \subseteq \subSPRl'$.
        \item Case $\graph_0 = \graph[2]_0 \series \graph[3]_0$ where $\graph[2]_0, \graph[3]_0 \in \SPR$:
         We distinguish the following cases:
         \begin{itemize}
            \item Case $\graph = \graph[2] \series \graph[3]_0$ where $\graph[2] \in \subl(\graph[2]_0)$:
            By IH, we have $\graph[2] \in \subSPRl'$.
            Then (1) $\graph[2] = \graph[3]'$ or (2) $\graph[2] = (\graph[1]^{\mathrm{l}} \parallel \graph[2]^{\mathrm{l}}) \series \join^{1}_{1} \series \graph[3]'$ holds where $\graph[1]^{\mathrm{l}}, \graph[2]^{\mathrm{l}} \in \subSPRl'$, $\graph[3]' \in \SPR$.
            For (1),
            we have $\graph[2] \series \graph[3]_0 = \graph[3]' \series \graph[3]_0 \in \SPR \subseteq \subSPRl'$.
            For (2),
            we have $\graph[2] \series \graph[3]_0 = (\graph[1]^{\mathrm{l}} \parallel \graph[2]^{\mathrm{l}}) \series \join^{1}_{1} \series (\graph[3]' \series \graph[3]_0) \in \subSPRl'$.
            Hence, in both cases, we have $\graph[2] \series \graph[3]_0 \in \subSPRl'$.
            \item Case $\graph \in \subl(\graph[3]_0)$: By IH.
        \end{itemize}

        \item Case $\graph_0 = \fork^{1}_{1} \series (\graph[2]_0 \parallel \graph[3]_0) \series \join^{1}_{1}$ where $\graph[2]_0, \graph[3]_0 \in \SPR$:
        We distinguish the following cases:
        \begin{itemize}
            \item Case $\graph = \graph_0$:
            Clear by $\graph_0 \in \SPR \subseteq \subSPRl'$.
            \item Case $\graph = (\graph[2] \parallel \graph[3]) \series \join^{1}_{1}$ where $\graph[2] \in \subl(\graph[2]_0)$ and $\graph[3] \in \subl(\graph[3]_0)$:
            By IH.
            \item Case $\graph = \id^{1}$:
            Clear.
        \end{itemize}
    \end{itemize}\noindent %FIXED indent
    Hence, this completes the proof.
\end{proof}

\begin{proof}[Proof of $\subSPRr = \subSPRr'$]
    Similar to the proof of $\subSPRl = \subSPRl'$.
\end{proof}

\begin{proof}[Proof of $\subSPR = \subSPR'$]
    ($\subSPR' \subseteq \subSPR$):
    By the proof of ($\subSPRl' \subseteq \subSPRl$), we have:
    \begin{itemize}
        \item for every $\graph[1] \in \subSPRl'$, there is some $\graph[1]' \in \subSPRr'$ such that $\graph[1]' \series \graph[1] \in \SPR$.
    \end{itemize}
    Similarly, we have:
    \begin{itemize}
        \item for every $\graph[1] \in \subSPRr'$, there is some $\graph[1]' \in \subSPRl'$ such that $\graph[1] \series \graph[1]' \in \SPR$.
    \end{itemize}
    By induction on the derivation tree of $\graph[1] \in \subSPR'$,
    we show:
    for every $\graph[1] \in \subSPR'$,
    there are some $\graph[2] \in \subSPRl'$ and $\graph[3] \in \subSPRr'$ such that $\graph[2] \series \graph[1] \series \graph[3] \in \SPR$.
    \begin{itemize}
        \item Case $\graph[1] = \graph[2]^{\mathrm{l}} \series \graph[3]^{\mathrm{r}}$
        where $\graph[2]^{\mathrm{l}} \in \subSPRl'$ and  $\graph[3]^{\mathrm{r}} \in \subSPRr'$:
        Let $\graph[2]^{\mathrm{r}} \in \subSPRr'$ be such that $\graph[2]^{\mathrm{r}} \series \graph[2]^{\mathrm{l}} \in \SPR$
        and let $\graph[3]^{\mathrm{l}} \in \subSPRl$ be such that $\graph[3]^{\mathrm{r}} \series \graph[3]^{\mathrm{l}} \in \SPR$.
        We then have $\graph[2]^{\mathrm{r}} \series (\graph[2]^{\mathrm{l}} \series \graph[3]^{\mathrm{r}}) \series \graph[3]^{\mathrm{l}} \in \SPR$.
        \item Case $\graph[1] = (\graph[1]_1 \parallel \graph[1]_2)$ where $\graph[1]_1, \graph[1]_2 \in \subSPR'$:
        By IH, there are some $\graph[2]_k \in \subSPRr', \graph[3]_k \in \subSPRl'$ such that $\graph[2]_k \series \graph[1]_k \series \graph[3]_k \in \SPR$ for each $k \in \range{1, 2}$.
        We then have
        \begin{align*}
            & (\fork^{1}_{1} \series (\graph[2]_1 \parallel \graph[2]_2))
            \series (\graph[1]_1 \parallel \graph[1]_2)
            \series ((\graph[3]_1 \parallel \graph[3]_2) \series \join^{1}_{1})\\
            &= 
            \fork^{1}_{1} \series (
                (\graph[2]_1 \series \graph[1]_1 \series \graph[3]_1)
                \parallel
                (\graph[2]_2 \series \graph[1]_2 \series \graph[3]_2)
            ) \series \join^{1}_{1} \in \SPR.
        \end{align*}
    \end{itemize}
    Hence, this completes the proof.

    ($\subSPR \subseteq \subSPR'$):
    By induction on the derivation tree of $\graph_0 \in \SPR$,
    we show:
    for every $\graph_0 \in \SPR$ and $\graph \in \sub(\graph_0)$, we have $\graph \in \subSPR'$.
    We distinguish the following cases:
    \begin{itemize}
        \item Case $\graph_0 = \id^{1}$, Case $\graph_0 = a^{1}_{1}$: Clear by $\sub(\graph_0) \subseteq \SPR \subseteq \subSPR'$.
        \item Case $\graph_0 = \graph[2]_0 \series \graph[3]_0$ where $\graph[2]_0, \graph[3]_0 \in \SPR$:
         We distinguish the following cases:
         \begin{itemize}
            \item Case $\graph \in \sub(\graph[2]_0)$ or $\graph \in \sub(\graph[3]_0)$: By IH.
            \item Case $\graph = \graph[2] \series \graph[3]$ where
            $\graph[2] \in \subl(\graph[2]_0)$ and $\graph[3] \in \subr(\graph[3]_0)$.
            From the above, we also have $\graph[2] \in \subSPRl', \graph[3] \in \subSPRr'$.
            Thus, $\graph[2] \series \graph[3] \in \subSPR'$.
        \end{itemize}

        \item Case $\graph_0 = \fork^{1}_{1} \series (\graph[2]_0 \parallel \graph[3]_0) \series \join^{1}_{1}$ where $\graph[2]_0, \graph[3]_0 \in \SPR$:
        We distinguish the following cases:
        \begin{itemize}
            \item Case $\graph = \graph_0$:
            Clear by $\graph_0 \in \SPR \subseteq \subSPR'$.
            \item Case $\graph = (\graph[2] \parallel \graph[3]) \series \join^{1}_{1}$ where $\graph[2] \in \subl(\graph[2]_0)$ and $\graph[3] \in \subl(\graph[3]_0)$:
            From the above, we also have $\graph[2] \in \subSPRl', \graph[3] \in \subSPRl'$.
            Thus, $\graph \in \subSPRl' \subseteq \subSPR'$.
   
            \item Case $\graph = \fork^{1}_{1} \series (\graph[2] \parallel \graph[3])$ where $\graph[2] \in \subr(\graph[2]_0)$ and $\graph[3] \in \subr(\graph[3]_0)$:
            Similarly.
            \item Case $\graph = (\graph[2] \parallel \graph[3])$ where $\graph[2] \in \sub(\graph[2]_0)$ and $\graph[3] \in \sub(\graph[3]_0)$:
            By IH, $\graph[2], \graph[3] \in \subSPR'$.
            We thus have $\graph \in \subSPR'$.
            \item Case $\graph = \id^{1}$:
            Clear.
        \end{itemize}
    \end{itemize}
    Hence, this completes the proof.
\end{proof}

\section{Proof of {\Cref{proposition: transitivity elimination}}}\label{section: proposition: transitivity elimination}

\begin{proof}
    ($\supseteq$):
    We show that each rule of $\longleadsto_{\trace}$ is admissible in $\longrightarrow_{\trace}$.
    \begin{itemize}
        \item Case  
        \begin{prooftree}
            \hypo{\tuple{\lab[1], \lab[2]} \in a^{\struc}}
            \infer[double]1{@\lab. a \longrightarrow^{\struc}_{a^{x, y}_{1}} @\lab[2].\id}
        \end{prooftree} \mbox{ for $a \in \vsig$},
        Case  
        \begin{prooftree}
            \hypo{@\lab.\term[1] \longrightarrow^{\struc}_{\trace[3]} \lterm[1]'}
            \infer[double]1{@\lab.\term[1] \union \term[2] \longrightarrow^{\struc}_{\trace[3]} \lterm[1]'}
        \end{prooftree},
        Case 
        \begin{prooftree}
            \hypo{@\lab.\term[2] \longrightarrow^{\struc}_{\trace[3]} \lterm[2]'}
            \infer[double]1{@\lab.\term[1] \union \term[2] \longrightarrow^{\struc}_{\trace[3]} \lterm[2]'}
        \end{prooftree},
        Case \begin{prooftree}
            \hypo{\mathstrut}
            \infer[double]1{\lterm[1] \longrightarrow^{\struc}_{\id^{\overrightarrow{\mathsf{lab}}(\lterm[1])}} \lterm[1]}
        \end{prooftree},
        Case  
        \begin{prooftree}
            \hypo{@\lab.\term[1] \longrightarrow^{\struc}_{\trace[3]} \lterm[1]'}
            \infer[double]1{@\lab.\term[1] \compo \term[2] \longrightarrow^{\struc}_{\trace[3]} \lterm[1]' \compo_{1} \term[2]}
        \end{prooftree},
        Case                  
        \begin{prooftree}
            \hypo{\lterm[1]  \longrightarrow^{\struc}_{\trace[3]} \lterm[2]}
            \infer[double]1{\lterm[1] \compo_{1} \term[3] \longrightarrow^{\struc}_{\trace[3]} \lterm[2] \compo_{1} \term[3]}
        \end{prooftree}:
        Trivial.

        \item Case 
        \begin{prooftree}
            \hypo{@\lab[1].\term[1] \longrightarrow^{\struc}_{\trace[3]} \lterm[1]'}
            \hypo{\EPS_{\lab[3]}(\lterm[1]')}
            \hypo{@\lab[3].\term[2] \longrightarrow^{\struc}_{\trace[3]'} \lterm[2]'}
            \infer[double]3{@\lab[1].\term[1] \compo \term[2] \longrightarrow^{\struc}_{\trace[3] \series \trace[3]'} \lterm[2]'}
        \end{prooftree}:
        By 
        \begin{prooftree}
            \hypo{@\lab[1].\term[1] \longrightarrow^{\struc}_{\trace[3]} \lterm[1]'}
            \hypo{\EPS_{\lab[3]}(\lterm[1]')}
            \hypo{@\lab[3].\term[2] \longrightarrow^{\struc}_{\trace[3]'} \lterm[2]'}
            \infer2{\lterm[1]' \longrightarrow^{\struc}_{\trace[3]'} \lterm[2]'}
            \infer2[T]{@\lab[1].\term[1] \compo \term[2] \longrightarrow^{\struc}_{\trace[3] \series \trace[3]'} \lterm[2]'}
        \end{prooftree}.
        \item Case 
        \begin{prooftree}
            \hypo{\lterm[1] \longrightarrow^{\struc}_{\trace[3]} \lterm[1]'}
            \hypo{\EPS_{\lab[3]}(\lterm[1]')}
            \hypo{@ \lab[3]. \term[2] \longrightarrow^{\struc}_{\trace[3]'} \lterm[2]}
            \infer[double]3{\lterm[1] \compo_{1} \term[2] \longrightarrow^{\struc}_{\trace[3] \series \trace[3]'} \lterm[2]}
        \end{prooftree}:
        Similar to the case above.

        \item Case  
        \begin{prooftree}[separation = .7em]
            \hypo{@\lab[1]_0.\term[1] \longrightarrow^{\struc}_{\trace[3]_0} \lterm[1]_1'}
            \hypo{\EPS_{\lab[1]_1}(\lterm[1]_1')}
            \hypo{\dots}
            \hypo{@\lab_{n-1}.\term[1] \longrightarrow^{\struc}_{\trace[3]_{n-1}} \lterm[1]_n'}
            \hypo{\EPS_{\lab[1]_n}(\lterm[1]_n')}
            \hypo{@\lab[1]_{n}.\term[1] \longrightarrow^{\struc}_{\trace[3]_{n}} \lterm[1]_{n+1}'}
            \infer[double]6{ @\lab[1]_0.\term[1]^* \longrightarrow^{\struc}_{\trace[3]_0 \series \dots \series \trace[3]_n} \lterm[1]_{n+1}' \compo_{1} \term[1]^*}
        \end{prooftree}:

        By the following derivation tree:

        \begin{prooftree}
            \hypo{@\lab[1]_0.\term[1] \longrightarrow^{\struc}_{\trace[3]_0} \lterm[1]_1'}
            \infer1{@\lab[1]_0.\term[1]^* \longrightarrow^{\struc}_{\trace[3]_0} \lterm[1]_1' \compo_1 \term[1]^*}
            \hypo{\dots}
            \hypo{\EPS_{\lab[1]_n}(\lterm[1]_n')}
            \hypo{@\lab[1]_{n}.\term[1] \longrightarrow^{\struc}_{\trace[3]_{n}} \lterm[1]_{n+1}'}
            \infer1{@\lab[1]_{n}.\term[1]^* \longrightarrow^{\struc}_{\trace[3]_{n}} \lterm[1]_{n+1}' \compo_1 \term[1]^*}
            \infer2{\lterm[1]_n' \compo_{1} \term[1]^* \longrightarrow^{\struc}_{\trace[3]_{n}} \lterm[1]_{n+1}' \compo_1 \term[1]^*}
            \infer[double]3[T]{ @\lab[1]_0.\term[1]^* \longrightarrow^{\struc}_{\trace[3]_0 \series \dots \series \trace[3]_n} \lterm[1]_{n+1}' \compo_{1} \term[1]^*}
        \end{prooftree}
        \vskip0.5\baselineskip %FIXED Added spacing for clarification
        \item Case                          
        \begin{prooftree}
            \hypo{@\lab.\term[1] \longrightarrow^{\struc}_{\trace[3]} \lterm[1]' \; \EPS_{\lab[3]}(\lterm[1]')}
            \hypo{@\lab.\term[2] \longrightarrow^{\struc}_{\trace[3]'} \lterm[2]' \; \EPS_{\lab[3]}(\lterm[2]')}
            \infer[double]2{@\lab.\term[1] \intersection \term[2] \longrightarrow^{\struc}_{\fork^{x}_{1} \series (\trace[3] \parallel \trace[3]') \series \join^{\lab[3]}_{1}}  @\lab[3].\id}
        \end{prooftree}:
        We have:
        \begin{gather*}
            \begin{prooftree}
                \hypo{\mathstrut}
                \infer1{@\lab.\term[1] \intersection \term[2] \longrightarrow^{\struc}_{\fork^{x}_{1}} (@\lab.\term[1]) \intersection_{1} (@\lab. \term[2])}
            \end{prooftree},
            \hspace{3em}
            \begin{prooftree}
                \hypo{@\lab.\term[1] \longrightarrow^{\struc}_{\trace[3]} \lterm[1]'}
                \infer1{(@\lab.\term[1]) \intersection_{1} (@\lab. \term[2]) \longrightarrow^{\struc}_{\trace[3]} \lterm[1]' \intersection_{1} (@\lab. \term[2])}
            \end{prooftree},
            \\
            \begin{prooftree}
                \hypo{@\lab.\term[2] \longrightarrow^{\struc}_{\trace[3]'} \lterm[2]'}
                \infer1{\lterm[1]' \intersection_{1} (@\lab. \term[2]) \longrightarrow^{\struc}_{\trace[3]'} \lterm[1]' \intersection_{1} \lterm[2]'}
            \end{prooftree}, \mbox{ and }
            \hspace{3em}
            \begin{prooftree}
                \hypo{\EPS_{\lab[3]}(\lterm[1]')}
                \hypo{\EPS_{\lab[3]}(\lterm[2]')}
                \infer2{\lterm[1]' \intersection_{1} \lterm[2]' \longrightarrow^{\struc}_{\join^{\lab[3]}_{1}} @\lab[3]. \id}
            \end{prooftree}.
        \end{gather*}
        Hence by (T), we have obtained $@\lab.\term[1] \intersection \term[2] \longrightarrow^{\struc}_{\fork^{x}_{1} \series (\trace[3] \parallel \trace[3]') \series \join^{\lab[3]}_{1}}  @\lab[3].\id$.
        \vskip0.5\baselineskip %FIXED Added spacing for clarification
        \item Case                 
        \begin{prooftree}
            \hypo{@\lab.\term[1] \longrightarrow^{\struc}_{\trace[3]} \lterm[1]'}
            \hypo{@\lab.\term[2] \longrightarrow^{\struc}_{\trace[3]'} \lterm[2]'}
            \infer[double]2{@\lab.\term[1] \intersection \term[2] \longrightarrow^{\struc}_{\fork^{x}_{1} \series (\trace[3] \parallel \trace[3]')} \lterm[1] \intersection_{1} \lterm[2]}
        \end{prooftree}, Case                          
        \begin{prooftree}
            \hypo{\lterm[1] \longrightarrow^{\struc}_{\trace[3]} \lterm[1]'}
            \hypo{\EPS_{\lab[3]}(\lterm[1]')}
            \hypo{\lterm[2] \longrightarrow^{\struc}_{\trace[3]'} \lterm[2]'}
            \hypo{\EPS_{\lab[3]}(\lterm[2]')}
            \infer[double]4{\lterm[1] \intersection_{1} \lterm[2] \longrightarrow^{\struc}_{(\trace[3] \parallel \trace[3]') \series \join^{\lab[3]}_{1}}  @\lab[3].\id}
        \end{prooftree},\\
        \vskip0.5\baselineskip %FIXED Added spacing for clarification
        Case 
        \begin{prooftree}
            \hypo{\lterm[1] \longrightarrow^{\struc}_{\trace[3]} \lterm[1]'}
            \hypo{\lterm[2] \longrightarrow^{\struc}_{\trace[3]'} \lterm[2]'}
            \infer[double]2{\lterm[1] \intersection_{1} \lterm[2] \longrightarrow^{\struc}_{(\trace[3] \parallel \trace[3]')} \lterm[1]' \intersection_{1} \lterm[2]'}
        \end{prooftree}:
        Similar to the case above.
    \end{itemize}
    Hence, this direction has been proved.

    ($\subseteq$):
    We show that the transitivity rule $\begin{prooftree}
        \hypo{\lterm[1] \longleadsto^{\struc}_{\trace[1]} \lterm[3]}
        \hypo{\lterm[3] \longleadsto^{\struc}_{\trace[2]} \lterm[2]}
        \infer2[T]{\lterm[1] \longleadsto^{\struc}_{\trace[1] \series \trace[2]} \lterm[2]}
    \end{prooftree}$ is admissible,
    by induction on the sum of the sizes of the derivation trees of $\lterm[1] \longleadsto^{\struc}_{\trace[1]} \lterm[3]$ and $\lterm[3] \longleadsto^{\struc}_{\trace[2]} \lterm[2]$.
    \begin{itemize}
        \item Case $\begin{prooftree}
            \hypo{\lterm[1] \longleadsto^{\struc}_{\trace[3]} \lterm[2]}
            \hypo{\mathstrut}
            \infer1[R]{\lterm[2] \longleadsto^{\struc}_{\id^{\overrightarrow{\mathsf{lab}}(\lterm[2])}} \lterm[2]}
            \infer2[T]{\lterm[1] \longleadsto^{\struc}_{\trace[3]} \lterm[2]}
        \end{prooftree}$, Case $\begin{prooftree}
            \hypo{\mathstrut}
            \infer1[R]{\lterm[1] \longleadsto^{\struc}_{\id^{\overrightarrow{\mathsf{lab}}(\lterm[1])}} \lterm[1]}
            \hypo{\lterm[1] \longleadsto^{\struc}_{\trace[3]} \lterm[2]}
            \infer2[T]{\lterm[1] \longleadsto^{\struc}_{\trace[3]} \lterm[2]}
        \end{prooftree}$ (i.e., when the immediate above rule contains (R)):
        Clear.

        \item Case $\begin{prooftree}
            \hypo{\mathstrut}
            \infer1{@ \lab[1]. a \longleadsto^{\struc}_{\trace[3]} @ \lab[2]. \id}
            \hypo{@ \lab[2]. \id \longleadsto^{\struc}_{\trace[3]'} \lterm[2]}
            \infer2[T]{@ \lab[1]. a \longleadsto^{\struc}_{\trace[3] \series \trace[3]'} \lterm[2]}
        \end{prooftree}$:
        Then the right-hand side rule should be the rule (R); thus, this case has already been proved.

        \item Case $\begin{prooftree}
            \hypo{@\lab.\term[1] \longleadsto^{\struc}_{\trace[3]} \lterm[1]''}
            \infer1{@\lab.\term[1] \compo \term[2] \longleadsto^{\struc}_{\trace[3]} \lterm[1]'' \compo_{1} \term[2]}
            \hypo{\lterm[1]'' \compo_{1} \term[2] \longleadsto^{\struc}_{\trace[3]'} \lterm[3]}
            \infer2[T]{@\lab.\term[1] \compo \term[2] \longleadsto^{\struc}_{\trace[3] \series \trace[3]'} \lterm[3]}
        \end{prooftree}$:
        For the form of the above of $\lterm[1]'' \compo_{1} \term[2] \longleadsto^{\struc}_{\trace[3]'} \lterm[3]$,
        we distinguish the following cases (except for the rule (R)):
        \begin{itemize}
            \item Case $\begin{prooftree}
                \hypo{\lterm[1]'' \longleadsto^{\struc}_{\trace[3]'} \lterm[1]'}
                \infer1{\lterm[1]'' \compo_{1} \term[2] \longleadsto^{\struc}_{\trace[3]'} \lterm[1]' \compo_{1} \term[2]}
            \end{prooftree}$:
            Then we have:
            $\begin{prooftree}
                \hypo{@\lab.\term[1] \longleadsto^{\struc}_{\trace[3]} \lterm[1]''}
                \hypo{\lterm[1]'' \longleadsto^{\struc}_{\trace[3]'} \lterm[1]'}
                \infer2[T]{@\lab.\term[1] \longleadsto^{\struc}_{\trace[3] \series \trace[3]'} \lterm[1]'}
                \infer1{@\lab.\term[1] \compo \term[2] \longleadsto^{\struc}_{\trace[3] \series \trace[3]'} \lterm[1]' \compo_{1} \term[2]}
            \end{prooftree}$.
            By IH, this case has been proved.
            \item Case $\begin{prooftree}
                \hypo{\lterm[1]'' \longleadsto^{\struc}_{\trace[3]_1'} \lterm[1]'}
                \hypo{\EPS_{\lab[3]}(\lterm[1]')}
                \hypo{@\lab[3].\term[2] \longleadsto^{\struc}_{\trace[3]_2'} \lterm[3]}
                \infer3{\lterm[1]'' \compo_{1} \term[2] \longleadsto^{\struc}_{\trace[3]_1' \series \trace[3]_2'} \lterm[3]}
            \end{prooftree}$:
            Then we have:\\
            $\begin{prooftree}
                \hypo{@\lab.\term[1] \longleadsto^{\struc}_{\trace[3]} \lterm[1]''}
                \hypo{\lterm[1] \longleadsto^{\struc}_{\trace[3]_1'} \lterm[1]'}
                \infer2[T]{@\lab.\term[1] \longleadsto^{\struc}_{\trace[3] \series \trace[3]_1'} \lterm[1]'}
                \hypo{\EPS_{\lab[3]}(\lterm[1]')}
                \hypo{@\lab[3].\term[2] \longleadsto^{\struc}_{\trace[3]_2'} \lterm[3]}
                \infer3{@\lab.\term[1] \compo \term[2] \longleadsto^{\struc}_{(\trace[3] \series \trace[3]_1') \series \trace[3]_2'} \lterm[3]}
            \end{prooftree}$.
            By IH, this case has been proved.
            \vskip0.5\baselineskip %FIXED Added spacing for clarification
        \end{itemize}
        \item Case $\begin{prooftree}
            \hypo{@\lab[1].\term[1] \longleadsto^{\struc}_{\trace[3]} \lterm[1]'}
            \hypo{\EPS_{\lab[3]}(\lterm[1]')}
            \hypo{@\lab[3].\term[2] \longleadsto^{\struc}_{\trace[3]'} \lterm[2]''}
            \infer3{@\lab[1].\term[1] \compo \term[2] \longleadsto^{\struc}_{\trace[3] \series \trace[3]'} \lterm[2]''}
            \hypo{\lterm[2]'' \longleadsto^{\struc}_{\trace[3]''} \lterm[2]'}
            \infer2[T]{@\lab[1].\term[1] \compo \term[2] \longleadsto^{\struc}_{(\trace[3] \series \trace[3]') \series \trace[3]''} \lterm[2]'}
        \end{prooftree}$:
        Then we have:
        
        $\begin{prooftree}[separation=.5em]
            \hypo{@\lab[1].\term[1] \longleadsto^{\struc}_{\trace[3]} \lterm[1]'}
            \hypo{\EPS_{\lab[3]}(\lterm[1]')}
            \hypo{@\lab[3].\term[2] \longleadsto^{\struc}_{\trace[3]'} \lterm[2]''}
            \hypo{\lterm[2]'' \longleadsto^{\struc}_{\trace[3]''} \lterm[2]'}
            \infer2[T]{@\lab[3].\term[2] \longleadsto^{\struc}_{\trace[3]' \series \trace[3]''} \lterm[2]'}
            \infer3{@\lab[1].\term[1] \compo \term[2] \longleadsto^{\struc}_{\trace[3] \compo (\trace[3]' \series \trace[3]'')} \lterm[2]'}
        \end{prooftree}$.
        By IH, this case has been proved.
        \vskip0.5\baselineskip %FIXED Added spacing for clarification
        \item Case \begin{prooftree}[separation=.5em]
            \hypo{\lterm[1] \longleadsto^{\struc}_{\trace[3]} \lterm[1]'}
            \infer1{\lterm[1] \compo_{1} \term[2] \longleadsto^{\struc}_{\trace[3]} \lterm[1]' \compo_{1} \term[2]}
            \hypo{ \lterm[1]' \compo_{1} \term[2]  \longleadsto^{\struc}_{\trace[3]'} \lterm[2]'}
            \infer2[T]{\lterm[1] \longleadsto^{\struc}_{\trace[3] \series \trace[3]'} \lterm[2]'}
        \end{prooftree},
        Case \begin{prooftree}[separation=.5em]
            \hypo{\lterm[1] \longleadsto^{\struc}_{\trace[3]} \lterm[1]'}
            \hypo{\EPS_{\lab[3]}(\lterm[1]')}
            \hypo{@\lab[3].\term[2] \longleadsto^{\struc}_{\trace[3]'} \lterm[2]''}
            \infer3{\lterm[1] \compo_{1} \term[2] \longleadsto^{\struc}_{\trace[3] \series \trace[3]'} \lterm[2]''}
            \hypo{\lterm[2]'' \longleadsto^{\struc}_{\trace[3]''} \lterm[2]'}
            \infer2[T]{\lterm[1] \compo_{1} \term[2] \longleadsto^{\struc}_{(\trace[3] \series \trace[3]') \series \trace[3]''} \lterm[2]'}
        \end{prooftree}:
        Similar to the two cases above.
    
        \item Case\\
        \begin{prooftree}
            \hypo{@\lab[1]_0.\term[1] \longleadsto^{\struc}_{\trace[3]_0} \lterm[1]_1'}
            \hypo{\EPS_{\lab[1]_1}(\lterm[1]_1')}
            \hypo{\dots}
            \hypo{\EPS_{\lab[1]_n}(\lterm[1]_n')}
            \hypo{@\lab[1]_{n}.\term[1] \longleadsto^{\struc}_{\trace[3]_{n}'} \lterm[1]_{n+1}''}
            \infer5{ @\lab[1]_0.\term[1]^* \longleadsto^{\struc}_{\trace[3]_0 \series \dots \series \trace[3]_{n-1} \series \trace[3]_n'} \lterm[1]_{n+1}'' \compo_{1} \term[1]^*}
            \hypo{ \lterm[1]_{n+1}'' \compo_{1} \term[1]^* \longleadsto^{\struc}_{\trace[3]''} \lterm[3]}
            \infer2[T]{@\lab[1]_0.\term[1]^* \longleadsto^{\struc}_{(\trace[3]_0 \series \dots \series \trace[3]_{n-1} \series \trace[3]_n') \series \trace[3]''} \lterm[3]}
        \end{prooftree}:

        For the form of the above of $\lterm[1]_{n+1}' \compo_{1} \term[1]^* \longleadsto^{\struc}_{\trace[3]''} \lterm[3]$,
        we distinguish the following cases (except for the rule (R)):
        \begin{itemize}
            \item Case \begin{prooftree}
                \hypo{\lterm[1]_{n+1}'' \longleadsto^{\struc}_{\trace[3]''} \lterm[1]_{n+1}'}
                \infer1{\lterm[1]_{n+1}'' \compo_{1} \term[1]^* \longleadsto^{\struc}_{\trace[3]''} \lterm[1]_{n+1}' \compo_{1} \term[1]^*}
            \end{prooftree}:
            Then we have:

            \begin{prooftree}
                \hypo{@\lab[1]_0.\term[1] \longleadsto^{\struc}_{\trace[3]_0} \lterm[1]_1'}
                \hypo{\EPS_{\lab[1]_1}(\lterm[1]_1')}
                \hypo{\dots}
                \hypo{\EPS_{\lab[1]_n}(\lterm[1]_n')}
                \hypo{@\lab[1]_{n}.\term[1] \longleadsto^{\struc}_{\trace[3]_{n}'} \lterm[1]_{n+1}''}
                \hypo{ \lterm[1]_{n+1}'' \longleadsto^{\struc}_{\trace[3]''} \lterm[1]_{n+1}'}
                \infer2[T]{@\lab[1]_{n}.\term[1] \longleadsto^{\struc}_{\trace[3]_n' \series \trace[3]''} \lterm[1]_{n+1}'}
                \infer5{ @\lab[1]_0.\term[1]^* \longleadsto^{\struc}_{\trace[3]_0 \series \dots \series (\trace[3]_n' \series \trace[3]'')} \lterm[1]_{n+1}' \compo_{1} \term[1]^*}
            \end{prooftree}.

            Thus by IH, this case has been proved.

            \item Case \begin{prooftree}
                \hypo{\lterm[1]_{n+1}'' \longleadsto^{\struc}_{\trace[3]_1''} \lterm[1]_{n+1}'}
                \hypo{\EPS_{\lab[1]_{n+1}}(\lterm[1]_{n+1}')}
                \hypo{@\lab[1]_{n+1}. \term[1]^* \longleadsto^{\struc}_{\trace[3]_2''} \lterm[3]}
                \infer3{\lterm[1]_{n+1}' \compo_{1} \term[1]^* \longleadsto^{\struc}_{\trace[3]_1'' \series \trace[3]_2''} \lterm[3]}
            \end{prooftree}:
            For the form of the above of $@\lab[1]_{n+1}. \term[1]^* \longleadsto^{\struc}_{\trace[3]_2''} \lterm[3]$,
            we distinguish the following cases:
            \begin{itemize}
                \item Case \begin{prooftree}
                    \hypo{\mathstrut}
                    \infer1[R]{@\lab[1]_{n+1}. \term[1]^* \longleadsto^{\struc}_{\id^{\lab[1]_{n+1}}} @\lab[1]_{n+1}. \term[1]^*}
                \end{prooftree} (then, $\trace[3]'' = \trace[3]_1''$):
                Then we have:\\
                \begin{prooftree}
                    \hypo{@\lab[1]_{0}.\term[1] \longleadsto^{\struc}_{\trace[3]_{0}} \lterm[1]_{1}' \; \EPS_{\lab[1]_{1}}(\lterm[1]_{1}') \; \dots}
                    \hypo{@\lab[1]_{n}. \term[1] \longleadsto^{\struc}_{\trace[3]_{n}'} \lterm[1]_{n+1}''}
                    \hypo{\lterm[1]_{n+1}'' \longleadsto^{\struc}_{\trace[3]''} \lterm[1]_{n+1}'}
                    \infer2[T]{@\lab[1]_{n}. \term[1] \longleadsto^{\struc}_{\trace[3]_{n}' \series \trace[3]''} \lterm[1]_{n+1}'}
                    \hypo{\EPS_{\lab[1]_{n+1}}(\lterm[1]_{n+1}')}
                    \infer3{@\lab[1]_{0}. \term[1]^* \longleadsto^{\struc}_{\trace[3]_{0} \series \dots \series \trace[3]_{n-1} \series \trace[3]_{n}' \series \trace[3]''} @ \lab[1]_{n+1}. \term[1]^*}
                \end{prooftree}.\\
                Thus by IH, this case has been proved.

                \item Case\\
                \begin{prooftree}
                    \hypo{@\lab[1]_{n+1}.\term[1] \longleadsto^{\struc}_{\trace[3]_{n+1}} \lterm[1]_{n+2}'}
                    \hypo{\EPS_{\lab[1]_{n+2}}(\lterm[1]_{n+2}') \quad\dots\quad \EPS_{\lab[1]_{m+1}}(\lterm[1]_{m+1}')}
                    \hypo{@\lab[1]_{m+1}.\term[1] \longleadsto^{\struc}_{\trace[3]_{m+1}} \lterm[1]_{m+2}'}
                    \infer3{@\lab[1]_{n+1}. \term[1]^* \longleadsto^{\struc}_{\trace[3]_{n+1} \series \dots \series \trace[3]_{m+1}} \lterm[1]_{m+2}' \compo_{1} \term[1]^*}
                \end{prooftree}:

                Then by letting $\trace[3]_{n} = \trace[3]_{n}' \series \trace[3]_1''$, we have:\\
                \begin{prooftree}
                    \hypo{@\lab[1]_{0}.\term[1] \longleadsto^{\struc}_{\trace[3]_{0}} \lterm[1]_{1}'}
                    \hypo{\EPS_{\lab[1]_{1}}(\lterm[1]_{1}')}
                    \hypo{\dots\quad\dots}
                    \hypo{\EPS_{\lab[1]_{m+1}}(\lterm[1]_{m+1}')}
                    \hypo{@\lab[1]_{m+1}.\term[1] \longleadsto^{\struc}_{\trace[3]_{m+1}} \lterm[1]_{m+2}'}
                    \infer5{@\lab[1]_{0}. \term[1]^* \longleadsto^{\struc}_{\trace[3]_{0} \series \dots \series \trace[3]_{m+1}} \lterm[1]_{m+2}' \compo_{1} \term[1]^*}
                \end{prooftree}.\\
                Thus by IH, this case has been proved.
            \end{itemize}
        \end{itemize}
        \vskip0.5\baselineskip %FIXED Added spacing for clarification
        \item Case
        \begin{prooftree}[separation=.5em]
            \hypo{@\lab[1]_0.\term[1] \longleadsto^{\struc}_{\trace[3]_0} \lterm[1]_1'}
            \hypo{\EPS_{\lab[1]_1}(\lterm[1]_1')}
            \hypo{\dots}
            \hypo{\EPS_{\lab[1]_n}(\lterm[1]_n')}
            \infer4{ @\lab[1]_0.\term[1]^* \longleadsto^{\struc}_{\trace[3]_0 \series \dots \series \trace[3]_{n-1}} @\lab[1]_{n} \term[1]^*}
            \hypo{ @\lab[1]_{n} \term[1]^* \compo_{1} \term[1]^* \longleadsto^{\struc}_{\trace[3]'} \lterm[3]}
            \infer2[T]{@\lab[1]_0.\term[1]^* \longleadsto^{\struc}_{\trace[3]_0 \series \dots \series \trace[3]_{n-1}} \lterm[3]}
        \end{prooftree}:
        Similar to the case above.

        \item Case 
        \begin{prooftree}
            \hypo{@\lab.\term[1] \longleadsto^{\struc}_{\trace[3]} \lterm[1]'}
            \infer1{@\lab.\term[1] \union \term[2] \longleadsto^{\struc}_{\trace[3]} \lterm[1]'}
            \hypo{\lterm[1]' \longleadsto^{\struc}_{\trace[3]'} \lterm[3]}
            \infer2[T]{@\lab[1].\term[1] \union \term[2] \longleadsto^{\struc}_{\trace[3] \series \trace[3]'} \lterm[3]}
        \end{prooftree}:
        We then have:
        \begin{prooftree}
            \hypo{@\lab.\term[1] \longleadsto^{\struc}_{\trace[3]} \lterm[1]'}
            \hypo{\lterm[1]' \longleadsto^{\struc}_{\trace[3]'} \lterm[3]}
            \infer2[T]{@\lab.\term[1] \longleadsto^{\struc}_{\trace[3] \series \trace[3]'} \lterm[3]}
            \infer1{@\lab[1].\term[1] \union \term[2] \longleadsto^{\struc}_{\trace[3] \series \trace[3]'} \lterm[3]}
        \end{prooftree}.
        Thus by IH, this case has been proved.

        \item Case 
        \begin{prooftree}
            \hypo{@\lab.\term[2] \longleadsto^{\struc}_{\trace[3]} \lterm[2]'}
            \infer1{@\lab.\term[1] \union \term[2] \longleadsto^{\struc}_{\trace[3]} \lterm[2]'}
            \hypo{\lterm[2]' \longleadsto^{\struc}_{\trace[3]'} \lterm[3]}
            \infer2[T]{@\lab[1].\term[1] \union \term[2] \longleadsto^{\struc}_{\trace[3] \series \trace[3]'} \lterm[3]}
        \end{prooftree}:
        Similar to the case above.
        \vskip0.5\baselineskip %FIXED Added spacing for clarification
        \item Case
        \begin{prooftree}
            \hypo{@\lab.\term[1] \longleadsto^{\struc}_{\trace[3]_1} \lterm[1]'}
            \hypo{\EPS_{\lab[3]}(\lterm[1]')}
            \hypo{@\lab.\term[2] \longleadsto^{\struc}_{\trace[3]_2} \lterm[2]'}
            \hypo{\EPS_{\lab[3]}(\lterm[2]')}
            \infer4{@\lab.\term[1] \intersection \term[2] \longleadsto^{\struc}_{\fork^{x}_{1} \series (\trace[3]_1 \parallel \trace[3]_2) \series \join^{\lab[3]}_{1}}  @\lab[3].\id}
            \hypo{@\lab[3].\id \longleadsto^{\struc}_{\trace[3]'} \lterm[3]}
            \infer2[T]{@\lab.\term[1] \intersection \term[2] \longleadsto^{\struc}_{(\fork^{x}_{1} \series (\trace[3]_1 \parallel \trace[3]_2) \series \join^{\lab[3]}_{1}) \series \trace[3]'} \lterm[3]}
        \end{prooftree},\\
        \vskip0.5\baselineskip %FIXED Added spacing for clarification
        Case 
        \begin{prooftree}
            \hypo{\lterm[1] \longleadsto^{\struc}_{\trace[3]_1} \lterm[1]'}
            \hypo{\EPS_{\lab[3]}(\lterm[1]')}
            \hypo{\lterm[2] \longleadsto^{\struc}_{\trace[3]_2} \lterm[2]'}
            \hypo{\EPS_{\lab[3]}(\lterm[2]')}
            \infer4{\lterm[1] \intersection_{1} \lterm[2] \longleadsto^{\struc}_{(\trace[3]_1 \parallel \trace[3]_2) \series \join^{\lab[3]}_{1}}  @\lab[3].\id}
            \hypo{@\lab[3].\id \longleadsto^{\struc}_{\trace[3]'} \lterm[3]}
            \infer2[T]{\lterm[1] \intersection_{1} \lterm[2] \longleadsto^{\struc}_{((\trace[3]_1 \parallel \trace[3]_2) \series \join^{\lab[3]}_{1}) \series \trace[3]'} \lterm[3]}
        \end{prooftree}:
        Then the right-hand side rule should be the rule (R); thus, this case has already been proved.
        \vskip0.5\baselineskip %FIXED Added spacing for clarification
        \item Case 
        \begin{prooftree}
            \hypo{@\lab.\term[1] \longleadsto^{\struc}_{\trace[3]_1} \lterm[1]''}
            \hypo{@\lab.\term[2] \longleadsto^{\struc}_{\trace[3]_2} \lterm[2]''}
            \infer2{@\lab.\term[1] \intersection \term[2] \longleadsto^{\struc}_{\fork^{x}_{1} \series (\trace[3]_1 \parallel \trace[3]_2)} \lterm[1]'' \intersection_{1} \lterm[2]''}
            \hypo{\lterm[1]'' \intersection_{1} \lterm[2]'' \longleadsto^{\struc}_{\trace[3]'} \lterm[3]}
            \infer2[T]{@\lab.\term[1] \intersection \term[2] \longleadsto^{\struc}_{(\fork^{x}_{1} \series (\trace[3]_1 \parallel \trace[3]_2)) \series \trace[3]'} \lterm[3]}
        \end{prooftree}:
        For the form of the above of $\lterm[1]'' \intersection_{1} \lterm[2]'' \longleadsto^{\struc}_{\trace[3]'} \lterm[3]$,
        we distinguish the following cases (except for the rule (R)):
        \begin{itemize}
            \item Case
            \begin{prooftree}
                \hypo{\lterm[1]'' \longleadsto^{\struc}_{\trace[3]'_1} \lterm[1]'}
                \hypo{\lterm[2]'' \longleadsto^{\struc}_{\trace[3]'_2} \lterm[2]'}
                \infer2{\lterm[1]'' \intersection_{1} \lterm[2]'' \longleadsto^{\struc}_{\trace[3]'_1 \parallel \trace[3]'_2} \lterm[1]' \intersection_{1} \lterm[2]'}
            \end{prooftree}:
            Then we have:

            \begin{prooftree}
                \hypo{@\lab.\term[1] \longleadsto^{\struc}_{\trace[3]_1} \lterm[1]''}
                \hypo{\lterm[1]'' \longleadsto^{\struc}_{\trace[3]'_1} \lterm[1]'}
                \infer2[T]{@\lab.\term[1] \longleadsto^{\struc}_{\trace[3]_1 \series \trace[3]'_1} \lterm[1]'}
                \hypo{@\lab.\term[2] \longleadsto^{\struc}_{\trace[3]_1} \lterm[2]''}
                \hypo{\lterm[2]'' \longleadsto^{\struc}_{\trace[3]'_1} \lterm[2]'}
                \infer2[T]{@\lab.\term[2] \longleadsto^{\struc}_{\trace[3]_2 \series \trace[3]'_2} \lterm[2]'}
                \infer2{@\lab.\term[1] \intersection \term[2] \longleadsto^{\struc}_{\fork^{x}_{1} \series ((\trace[3]_1 \series \trace[3]'_1) \parallel (\trace[3]_2 \series \trace[3]'_2))} \lterm[1]' \intersection_{1} \lterm[2]'}
            \end{prooftree}.

            Note that $\fork^{x}_{1} \series ((\trace[3]_1 \series \trace[3]'_1) \parallel (\trace[3]_2 \series \trace[3]'_2)) = (\fork^{x}_{1} \series (\trace[3]_1 \parallel \trace[3]_2)) \series (\trace[3]'_1 \parallel \trace[3]'_2)$.
            Thus by IH, this case has been proved.

            \item Case
            \begin{prooftree}
                \hypo{\lterm[1]'' \longleadsto^{\struc}_{\trace[3]'_1} \lterm[1]'}
                \hypo{\EPS_{\lab[3]}(\lterm[1]')}
                \hypo{\lterm[2]'' \longleadsto^{\struc}_{\trace[3]'_2} \lterm[2]'}
                \hypo{\EPS_{\lab[3]}(\lterm[2]')}
                \infer4{\lterm[1]'' \intersection_{1} \lterm[2]'' \longleadsto^{\struc}_{(\trace[3]'_1 \parallel \trace[3]'_2) \series \join^{\lab[3]}_{1}} @\lab[3]. \id}
            \end{prooftree}:
            Then we have:

            \begin{prooftree}
                \hypo{@\lab.\term[1] \longleadsto^{\struc}_{\trace[3]_1} \lterm[1]''}
                \hypo{\lterm[1]'' \longleadsto^{\struc}_{\trace[3]'_1} \lterm[1]'}
                \infer2[T]{@\lab.\term[1] \longleadsto^{\struc}_{\trace[3]_1 \series \trace[3]'_1} \lterm[1]'}
                \hypo{\EPS_{\lab[3]}(\lterm[1]')}
                \hypo{@\lab.\term[2] \longleadsto^{\struc}_{\trace[3]_1} \lterm[2]''}
                \hypo{\lterm[2]'' \longleadsto^{\struc}_{\trace[3]'_1} \lterm[2]'}
                \infer2[T]{@\lab.\term[2] \longleadsto^{\struc}_{\trace[3]_2 \series \trace[3]'_2} \lterm[2]'}
                \hypo{\EPS_{\lab[3]}(\lterm[2]')}
                \infer4{@\lab.\term[1] \intersection \term[2] \longleadsto^{\struc}_{\fork^{x}_{1} \series ((\trace[3]_1 \series \trace[3]'_1) \parallel (\trace[3]_2 \series \trace[3]'_2)) \series \join^{\lab[3]}_{1}} \lterm[1]' \intersection_{1} \lterm[2]'}
            \end{prooftree}.

            Note that $\fork^{x}_{1} \series ((\trace[3]_1 \series \trace[3]'_1) \parallel (\trace[3]_2 \series \trace[3]'_2)) \series \join^{\lab[3]}_{1}
             = (\fork^{x}_{1} \series (\trace[3]_1 \parallel \trace[3]_2)) \series ((\trace[3]'_1 \parallel \trace[3]'_2) \series \join^{\lab[3]}_{1})$.
            Thus by IH, this case has been proved.
        \end{itemize}
        \vskip0.5\baselineskip %FIXED Added spacing for clarification
        \item Case 
        \begin{prooftree}
            \hypo{\lterm[1] \longleadsto^{\struc}_{\trace[3]_1} \lterm[1]''}
            \hypo{\lterm[2] \longleadsto^{\struc}_{\trace[3]_2} \lterm[2]''}
            \infer2{\lterm[1] \intersection \term[2] \longleadsto^{\struc}_{\trace[3]_1 \parallel \trace[3]_2} \lterm[1]'' \intersection_{1} \lterm[2]''}
            \hypo{\lterm[1]'' \intersection_{1} \lterm[2]'' \longleadsto^{\struc}_{\trace[3]'} \lterm[3]}
            \infer2[T]{\lterm[1] \intersection_{1} \lterm[2] \longleadsto^{\struc}_{(\trace[3]_1 \parallel \trace[3]_2) \series \trace[3]'} \lterm[3]}
        \end{prooftree}:
        Similar to the case above.
    \end{itemize}

    Hence, this completes the proof.
\end{proof}
\end{document}